\newtheorem{theorem}{Theorem}
\newtheorem*{theorem*}{Theorem}
\newtheorem{definition}[theorem]{Definition}
\newtheorem*{definition*}{Definition}
\newtheorem*{proposition*}{Proposition}
\newtheorem{corollary}[theorem]{Corollary}
\newtheorem*{corollary*}{Corollary}
\newtheorem{lemma}[theorem]{Lemma}
\newtheorem*{lemma*}{Lemma}
\newtheorem*{claim*}{Claim}
\newcommand{\polylog}{\mathrm{polylog}}
\newcommand{\R}{\mathbb{R}}
\newcommand{\N}{\mathbb{N}}
\newcommand{\C}{\mathbb{C}}
\newcommand{\I}{\mathbb{I}}
\newcommand{\Ot}{\widetilde{O}}
\DeclareMathOperator*{\argmax}{arg\,max}
\DeclareMathOperator*{\argmin}{arg\,min}
\newcommand{\ceil}[1]{\lceil #1 \rceil}
\newcommand{\defeq}{\vcentcolon=}
\newcommand{\eqdef}{=\vcentcolon}
\newcommand{\quoted}[1]{``#1''}
\newcommand{\inner}[2]{\langle #1 , \; #2 \rangle}
\newcommand{\bracket}[2]{\langle #1 \; | \;  #2 \rangle}
\definecolor{darkgreen}{rgb}{0.0, 0.2, 0.13}
\definecolor{darklava}{rgb}{0.28, 0.24, 0.2}
\definecolor{darkolivegreen}{rgb}{0.33, 0.42, 0.18}
\definecolor{darkpastelred}{rgb}{0.76, 0.23, 0.13}
\definecolor{deepcarrotorange}{rgb}{0.91, 0.41, 0.17}
\definecolor{deepcarmine}{rgb}{0.66, 0.13, 0.24}
\definecolor{green(html/cssgreen)}{rgb}{0.0, 0.5, 0.0}
\definecolor{Arm}{rgb}{0.3,0.2,0.7}
\def\blfootnote{\gdef\@thefnmark{}\@footnotetext}
\begin{document}

\preprint{APS/123-QED}

\title{Quantum Sparse Recovery and Quantum Orthogonal Matching Pursuit}

\author{Armando Bellante}\email{armando.bellante@mpq.mpg.de}
\affiliation{Max-Planck-Institut f\"ur Quantenoptik, Hans-Kopfermann-Str. 1, 85748 Garching, Germany}
\affiliation{Munich Center for Quantum Science and Technology (MCQST), Schellingstr. 4, 80799 M\"unchen, Germany}
\affiliation{Politecnico di Milano, DEIB, Via Ponzio 34/5 – Building 20, Milan 20133, Italy.}

\author{Stefano Vanerio}
\affiliation{Politecnico di Milano, DEIB, Via Ponzio 34/5 – Building 20, Milan 20133, Italy.}

\author{Stefano Zanero}
\affiliation{Politecnico di Milano, DEIB, Via Ponzio 34/5 – Building 20, Milan 20133, Italy.}

\date{\today}
\begin{abstract}
We study quantum sparse recovery in non-orthogonal, overcomplete dictionaries: given coherent quantum access to a state and a dictionary of vectors, the goal is to reconstruct the state up to $\ell_2$ error using as few vectors as possible. 
We first show that the general recovery problem is NP-hard, ruling out efficient exact algorithms in full generality. 
To overcome this, we introduce Quantum Orthogonal Matching Pursuit (QOMP), the first quantum analogue of the classical OMP greedy algorithm. 
QOMP combines quantum subroutines for inner product estimation, maximum finding, and block-encoded projections with an error-resetting design that avoids iteration-to-iteration error accumulation. 
Under standard mutual incoherence and well-conditioned sparsity assumptions, QOMP provably recovers the exact support of a $K$-sparse state in polynomial time. 
As an application, we give the first framework for sparse quantum tomography with non-orthogonal dictionaries in $\ell_2$ norm, achieving query complexity $\widetilde{O}(\sqrt{N}/\epsilon)$ in favorable regimes and reducing tomography to estimating only $K$ coefficients instead of $N$ amplitudes. 
In particular, for pure-state tomography with $m=O(N)$ dictionary vectors and sparsity $K=\widetilde{O}(1)$ on a well-conditioned subdictionary, this circumvents the $\widetilde{\Omega}(N/\epsilon)$ lower bound that holds in the dense, orthonormal-dictionary setting, without contradiction, by leveraging sparsity together with non-orthogonality.
Beyond tomography, we analyze QOMP in the QRAM model, where it yields polynomial speedups over classical OMP implementations, and provide a quantum algorithm to estimate the mutual incoherence of a dictionary of $m$ vectors in $O(m/\epsilon)$ queries, improving over both deterministic and quantum-inspired classical methods.
\end{abstract}

\maketitle

\tableofcontents


\section{Introduction}
\label{sec: introduction}

Quantum tomography, the task of learning a classical description of an unknown quantum state, is one of the most important problems and fundamental primitives of quantum information.
It underlies diverse areas of quantum science, from verification of quantum devices~\cite{guhne2009entanglement,eisert2020quantum}, to the design of quantum algorithms~\cite{kerenidis2019qmeans,kerenidis2020interiorpoint, qadra}, and learning-theoretic studies on quantum systems and dynamics~\cite{aaronson2007learnability,aaronson2018shadow}.
Yet tomography is notoriously costly: for dense pure states in $N$ dimensions and target $\ell_2$-error $\epsilon$, the optimal algorithms require $\widetilde{\Theta}(N/\epsilon^2)$ copies of the state~\cite{kerenidis2020quantumIP,van2023quantum}, or equivalently $\widetilde{\Theta}(N/\epsilon)$ queries to a state-preparation unitary and its inverse~\cite{van2023quantum}.
These bounds are tight, ruling out further polynomial savings for arbitrary pure states.  
One natural question is therefore:
\begin{center}
    \emph{Can additional structural promises allow us to go beyond the $\widetilde{\Theta}(N/\epsilon)$ pure state tomography barrier?}    
\end{center}

In classical signal processing, the most successful such promise is probably \emph{sparsity}.
While the Shannon–Nyquist theorem dictates that reconstructing the frequency spectrum of a signal requires sampling at twice the highest frequency~\cite{shannon1949communication, nyquist1928certain}, compressed sensing shows that signals sparse in a known dictionary can be reconstructed from far fewer measurements~\cite{candes2006robust,donoho2006compressed}.
This principle has transformed modern signal processing, leading to sparsity-based methods for magnetic resonance imaging (MRI)~\cite{lustig2007sparse}, compression formats such as JPEG~\cite{wallace1991jpeg}, denoising~\cite{elad2006image}, and anomaly detection~\cite{adler2015sparse,luo2017revisit}.
Importantly, sparsity is often realized in overcomplete, non-orthogonal dictionaries~\cite{holger2008redundant}, where the number of dictionary elements $m$ exceeds the ambient dimension $N$.
As the dictionary size grows, more signals admit very sparse descriptions; in the limit of a dictionary that spans every direction, any vector becomes $1$-sparse.
The same redundancy that enables concise representations also makes identifying the sparsest representation harder, since the search space expands and many near alternatives arise.
From a learning perspective, such dictionaries enable concise and expressive representations; from an algorithmic perspective, they pose combinatorial challenges that are NP-hard even classically~\cite{natarajan1995sparse}.
On the quantum side, structural promises have already led to major efficiency gains in tomography: low-rank structure enables compressed-sensing methods for reconstructing density matrices~\cite{gross2010quantum,kalev2015quantum}, while stabilizer or Pauli structure admits specialized algorithms for learning and certification~\cite{aaronson2018shadow,montanaro2017learning}.
By contrast, sparsity in arbitrary non-orthogonal dictionaries has remained unexplored.
Bridging this gap is the goal of the present work.
Motivated by the role of sparsity in classical signal processing, we ask:
\begin{center}
    \emph{Can sparsity in arbitrary, possibly overcomplete dictionaries be harnessed to reduce the cost of quantum tomography?}
\end{center}

\medskip
\emph{This work.}~
We introduce and study \emph{quantum sparse recovery}, the problem of reconstructing a pure state that admits a sparse representation in a dictionary, given access to state-preparation unitaries for both the state and the dictionary \emph{atoms} (elements).
This access model is strictly stronger than copy access and matches the oracle assumptions in recent tight bounds of pure-state tomography~\cite{van2023quantum}.
Our contributions are as follows: (i) we introduce and formalize the problem of quantum sparse recovery with non-orthogonal dictionaries (Definitions~\ref{def:qpo},~\ref{def:qpoeps}); (ii) we prove that quantum sparse recovery is NP-hard in general (Theorem~\ref{theorem: Quantum approximate recovery is NP-Hard}); (iii) we design and analyze the Quantum Orthogonal Matching Pursuit (QOMP) algorithm, the first stable greedy quantum method for sparse recovery in non-orthogonal, overcomplete dictionaries; (Theorem~\ref{theorem: QOMP iteration cost}) (iv) we show that QOMP achieves provable guarantees for support identification and tomography under dictionary incoherence; (Corollary~\ref{coro: qomp incoherence condition omp}) (v) we find regimes that avoid the lower bound $\Omega(N/\epsilon)$, enabling tomography with $O(\sqrt{N}/\epsilon)$ queries to the state preparation unitaries. (Theorem~\ref{theorem: sparse recovery with QOMP}, Corollary~\ref{corollary: exact sparse recovery with QOMP})

\medskip
\emph{Outline.}~
Section~\ref{sec:sparse-recovery} introduces sparse recovery and its quantum analogue, providing an overview of the results.
Section~\ref{sec: quantum sparse recovery is np hard} proves NP-hardness via a reduction from \textsc{Exact Cover by $3$-Sets (X3C)}.
Section~\ref{sec:qomp} presents QOMP and its iteration cost, Section~\ref{sec:guarantees} establishes support-recovery guarantees, and Section~\ref{sec: learning sparse quantum states} applies them to tomography.
We conclude in Section~\ref{sec:conclusions} with implications and open directions.


\section{Notation}
\label{sec: notation}
We use $n$ and $N$ interchangeably for the \emph{ambient dimension} (the dimension of the space where the target vector, or state, lives). 
When discussing quantum tomography, we typically write $N$ to emphasize the Hilbert space dimension rather than the number of qubits (e.g., for $q$ qubits, $N=2^q$).
For an integer $n \in \mathbb{N}$, we use $[n]$ to denote the set $\{0,1, \dots, n-1\} \subset \N$. 
We use the soft-$O$ notation $\widetilde{O}(\cdot)$ to suppress all polylogarithmic factors; for example, $O(n \,\mathrm{polylog}(n,\epsilon^{-1},\delta^{-1})) = \widetilde{O}(n)$. 
Whenever we say that a randomized algorithm succeeds with high probability, we mean with some fixed constant probability strictly greater than $1/2$ (e.g., at least $2/3$); standard amplification arguments (see Section~\ref{section: amplification of success probabilities}) can increase this probability arbitrarily close to~$1$.  
For vectors $\vec{a},\vec{b}$, we denote the Euclidean inner product by $\inner{\vec{a}}{\vec{b}}$ and their cosine similarity by $\bracket{\vec{a}}{\vec{b}} := \inner{\tfrac{\vec{a}}{\|\vec{a}\|}}{\tfrac{\vec{b}}{\|\vec{b}\|}}$, so that $\inner{\vec{a}}{\vec{b}} = \|\vec{a}\|\|\vec{b}\|\bracket{\vec{a}}{\vec{b}}$. Unless otherwise specified, $\|\vec{a}\| = \|\vec{a}\|_2$ denotes the Euclidean norm.  
We also use the pseudonorm $\|\vec{x}\|_0$, which counts the number of nonzero entries of $\vec{x}$.  
Let $D$ be a matrix with $m$ columns, and let $\Lambda \subseteq [m]$. We define $D_\Lambda$ as the matrix obtained from $D$ by zeroing out all columns whose indices are not in $\Lambda$. 
Its complement is denoted $D_{\overline{\Lambda}}$, so that $D = D_\Lambda + D_{\overline{\Lambda}}$. For a general matrix $A$, we write its singular value decomposition as $A = U \Sigma V^\dagger$, where $U$ and $V$ are isometries and $\Sigma$ is diagonal with strictly positive real entries (the singular values). 
The number of entries of $\Sigma$ is the rank of $A$, and we write $\sigma_{\min}(A)$ and $\sigma_{\max}(A)$ for its smallest and largest singular values, respectively. 
In general $U$ and $V$ are not unitary, but isometries with a number of columns equal to the rank of $A$. 
We use the operator norm $\|A\| := \sigma_{\max}(A)$ and the Frobenius norm $\|A\|_F := \sqrt{\sum_{i \in [n]} \sum_{j \in [m]} |A_{ij}|^2} = \sqrt{\sum_{k \in \mathrm{rank}(A)} \sigma_k^2(A)}$, where $\sigma_k(A)$ denotes the singular values of $A$.  
For a classical bit string $x \in \{0,1\}^n$, we write $\ket{x}$ for the corresponding computational basis state; for example, if $x = 010010$, then $\ket{x} = \ket{010010}$. For a real vector $\vec{x}$, we write $\ket{\vec{x}}$ to denote the amplitude encoding of the normalized vector $\vec{x}/\|\vec{x}\|$ in the computational basis; i.e., $\ket{\vec{x}} = \frac{1}{\|\vec{x}\|} \sum_{i \in [n]} x_i \ket{i}$.


\section{Sparse Recovery}
\label{sec:sparse-recovery}
Sparse recovery is the task of representing a dense high-dimensional signal as a linear combination of as few vectors as possible.
The basic ingredients are a \emph{dictionary} $D = \{\vec{d}_1,\dots,\vec{d}_m\}$ of unit vectors, called \emph{atoms}, and a target \emph{signal} $\vec{s} \in \mathbb{C}^n$.
A sparse representation consists of a coefficient vector $\vec{x}$ supported on only $K \ll n$ atoms such that $D\vec{x} = \vec{s}$ (exact recovery) or $D\vec{x} \approx \vec{s}$ (approximate recovery).
The number of nonzero coefficients, $\|\vec{x}\|_0$, quantifies the sparsity.

\begin{figure}[t]
    \centering
    \includegraphics[width=0.3\linewidth]{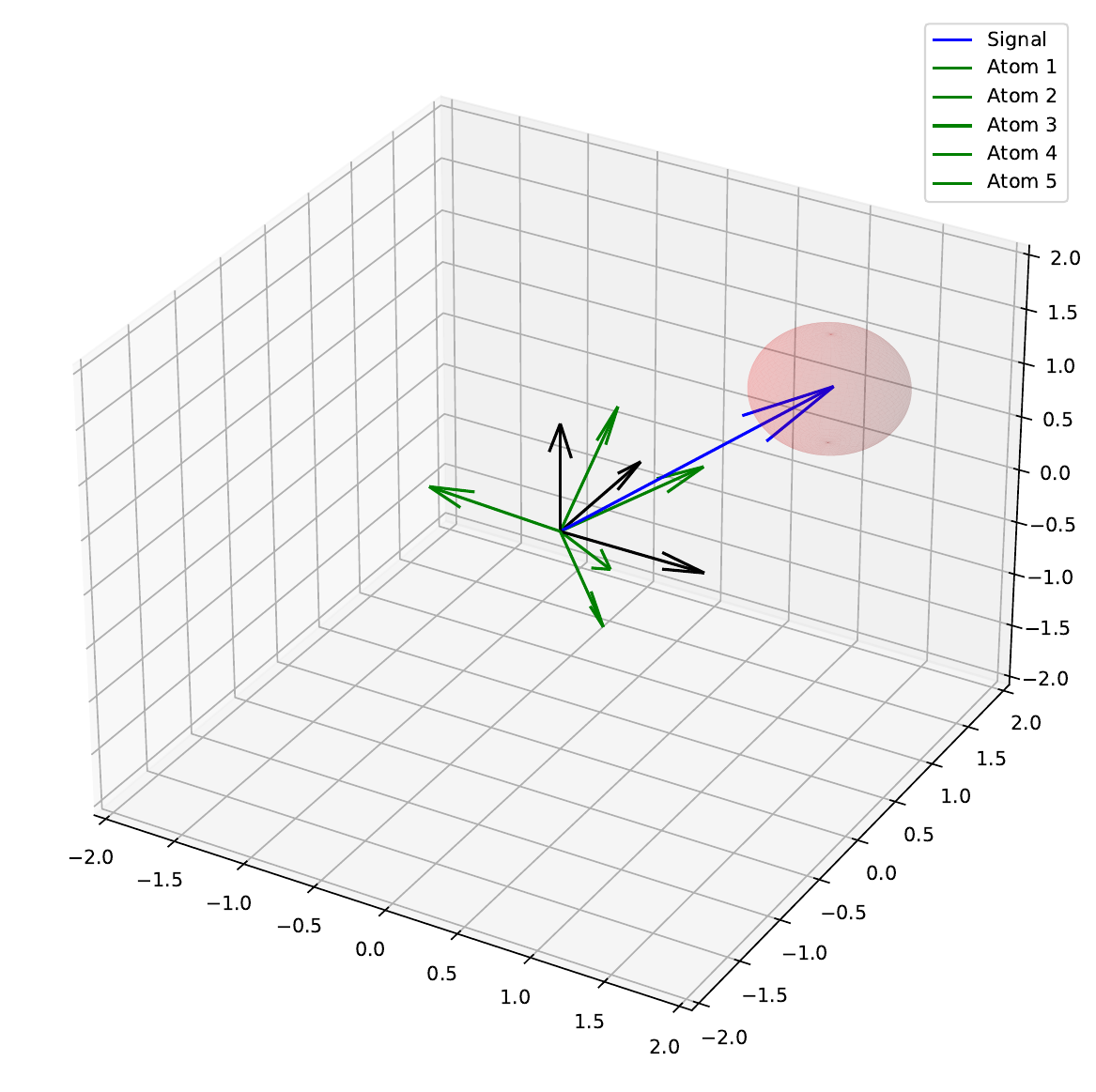}
    \caption{A sparse approximation problem in $\mathbb{R}^3$.
            The target signal (blue) is required to be reconstructed from a few atoms (green).
            Exact recovery corresponds to lying exactly on the span of the selected atoms, while approximate recovery allows $\epsilon$-error within the red ball.}
    \label{fig: sparse cube}
\end{figure}

Figure~\ref{fig: sparse cube} provides a geometric illustration in $\mathbb{R}^3$.
The green vectors are atoms from the dictionary, the blue vector is the target signal, and the red ball indicates an approximation threshold.
Exact recovery corresponds to reconstructing the blue signal from as few green atoms as possible; approximate recovery relaxes the requirement to any vector within the red ball.

Sparse recovery arises in many domains of data science and signal processing.
JPEG compression~\cite{wallace1991jpeg}, for instance, exploits that natural images are sparse in the discrete cosine transform basis; compressed sensing exploits sparsity in the Fourier domain to reduce the number of measurements needed for signal reconstruction~\cite{donoho2006compressed, candes2006robust, candes2006stable}.
In practice, sparsity is often realized not in orthogonal bases but in \emph{overcomplete, non-orthogonal, incoherent dictionaries}, where the number of atoms $m$ exceeds the ambient dimension $n$~\cite{holger2008redundant}.
This redundancy enables more flexible and compact representations but makes the sparsest support recovery problem combinatorial: one must identify the correct subset of atoms among exponentially many candidates.

Formally, the two central problems are the following.
\begin{definition}[Exact recovery, $\mathcal{P}_0$]
\label{def: p0}
    Given $\vec{s} \in \C^n$ and $D \in \C^{n\times m}$, find
    \begin{align}
        \argmin_{\vec{x} \in \C^m} \|\vec{x}\|_0 \quad\text{s.t.}\quad \vec{s} = D\vec{x}.
    \end{align}
\end{definition}

\begin{definition}[Approximate recovery, $\mathcal{P}_0^\epsilon$]
\label{def: p0eps}
    Given $\vec{s} \in \C^n$, $D \in \C^{n\times m}$, and error tolerance $\epsilon > 0$, find
    \begin{align}
        \argmin_{\vec{x} \in \C^m} \|\vec{x}\|_0 \quad\text{s.t.}\quad \|\vec{s} - D\vec{x}\|_2 \leq \epsilon.
    \end{align}
\end{definition}
Both problems are NP-hard in general~\cite{natarajan1995sparse}, with the hardness lying in identifying the optimal set of atoms spanning an exact or approximate representation of the target vector.
Nevertheless, sparse recovery is central because many real-world signals admit sparse or approximately sparse representations in natural dictionaries.
This tension between expressivity and computational tractability has motivated decades of classical algorithmic development.

Two broad strategies dominate the classical literature.
One is convex relaxation: $\ell_0$ minimization can be replaced by $\ell_1$ minimization (basis pursuit or LASSO), which under incoherence or restricted isometry conditions recovers the correct support efficiently~\cite{candes2006robust,donoho2006compressed}.
The other is greedy pursuit: algorithms such as Matching Pursuit~\cite{mallat1993matching} and Orthogonal Matching Pursuit (OMP)~\cite{pati1993orthogonal} iteratively select atoms with large correlations to the residual, refining the approximation step by step.
Despite being heuristic, these greedy algorithms come with provable polynomial time optimality guarantees under incoherence assumptions~\cite{tropp2004greed} and are widely used in applications where speed and interpretability are paramount.
Numerous refinements of OMP exist - including Regularized OMP~\cite{needell2009uniform}, CoSaMP~\cite{needell2010cosamp}, and StOMP~\cite{donoho2012sparse} - which improve robustness, stability, or scalability under extra assumptions.

On the quantum side, the landscape is far less mature.
There has been significant progress on quantum algorithms for regularized linear systems, such as ridge regression~\cite{chakraborty2022quantum} and LASSO~\cite{chen2021quantum, doriguello2025quantum}, which can sometimes act as convex surrogates for $\ell_0$ minimization.
However, these algorithms do not explicitly address sparse recovery.
Closer in spirit are greedy approaches: Quantum Matching Pursuit (QMP)~\cite{bellante2022mp} introduced a quantum analogue of the classical MP algorithm.
Yet, QMP relies on QRAM access to both the signal and its residual, effectively assuming that the target vector is available in classical memory.
This limitation makes QMP inapplicable to inherently quantum tasks such as tomography, where one only has oracle access to the state-preparation unitary.

In this work we develop a quantum analogue of Orthogonal Matching Pursuit, which we call QOMP.
Unlike QMP, QOMP does not require storing the signal or residual classically: it operates directly on quantum states, leveraging approximate quantum subroutines for inner product estimation, maximum finding, and projection.
The guiding question is whether such routines can enable quantum sparse recovery while retaining the recovery guarantees that have made OMP a cornerstone of compressed sensing and sparse approximation.

This naturally leads us to formalize the \emph{quantum sparse recovery problem}.
While the classical version assumes that the signal vector $\vec{s}$ is explicitly given, in the quantum setting the input may only be available through a state-preparation unitary $U_s$.
In such cases, one cannot simply run classical OMP on stored copies of $\vec{s}$: the algorithm must directly manipulate quantum states.
We therefore introduce the problems $\mathcal{QP}_0$ and $\mathcal{QP}_0^\epsilon$, quantum analogues of $\mathcal{P}_0$ and $\mathcal{P}_0^\epsilon$, and motivate them through their application to quantum tomography.
\subsection{Quantum sparse recovery}
We now introduce and formalize the problem of \emph{Quantum Sparse Recovery}, which is the central object of study in this work.
In the classical setting, sparse recovery assumes direct access to the signal $\vec{s}$ as a vector.
In the quantum setting, however, the natural and most powerful access model is through \emph{state-preparation unitaries}.
This is the model that underlies the strongest formulations of quantum tomography~\cite{van2023quantum} and much of quantum algorithm design~\cite{brassard2002quantum, tang2025amplitude, kothari2023mean, huggins2022nearly}.
Specifically, we assume access to:
\begin{itemize}
    \item A unitary $U_s$ such that $U_s: \ket{0} \to \ket{\vec{s}}$, preparing the target state, together with its inverse and controlled versions;
    \item A set of dictionary unitaries $\{U_j\}_{j \in [m]}$ that prepare atoms $\ket{d_j}$, or equivalently a single oracle $U_D : \ket{j}\ket{0} \to \ket{j}|\vec{d}_j\rangle$, with inverses and controlled versions.
\end{itemize}
This model is strictly stronger than having independent copies of $\ket{\vec{s}}$, since access to $U_s$ allows one to generate arbitrarily many copies, and it is flexible enough to capture realistic scenarios where both the state and the dictionary come from known preparation procedures.
Using these unitaries, we formalize the quantum counterparts of problems $\mathcal{P}_0$ and $\mathcal{P}_0^\epsilon$.

\begin{definition}[Quantum exact recovery, $\mathcal{QP}_0$]
\label{def:qpo}
Given access to a quantum state $\ket{\vec{s}} \in \C^N$ and a dictionary $D \in \C^{N \times m}$ via unitaries $U_s$ and $\{U_j\}_{j \in [m]}$ (or $U_D$), together with their inverses and controlled versions, find the smallest set $\Lambda \subseteq [m]$ such that, for some coefficients $\{x_j\}_{j \in \Lambda} \subset \C$,
\begin{align}
    \ket{\vec{s}} = \sum_{j \in \Lambda} x_j |\vec{d}_j\rangle.
\end{align}
\end{definition}

\begin{definition}[Quantum approximate recovery, $\mathcal{QP}_0^\epsilon$]
\label{def:qpoeps}
Given an error tolerance $\epsilon > 0$ and access to a quantum state $\ket{\vec{s}} \in \C^N$ and a dictionary $D \in \C^{N \times m}$ via unitaries $U_s$ and $\{U_j\}_{j \in [m]}$ (or $U_D$), together with their inverses and controlled versions, find the smallest set $\Lambda \subseteq [m]$ such that, for some coefficients $\{x_j\}_{j \in \Lambda} \subset \C$, 
\begin{align}
    \Big\| \ket{\vec{s}} - \sum_{j \in \Lambda} x_j |\vec{d}_j\rangle \Big\|_2 \leq \epsilon.
\end{align}
\end{definition}

Intuitively, the task is to identify the smallest set of dictionary atoms whose span contains (or nearly contains) the target state.
Once this support $\Lambda$ is identified, the problem of tomography reduces to estimating only the coefficients ${x_j : j \in \Lambda}$, rather than reconstructing all $N$ amplitudes in the computational basis.
This two-stage decomposition - first support recovery, then coefficient estimation - is what makes quantum sparse recovery a natural bridge between compressed sensing and efficient quantum tomography.

\subsection{Applications to pure state tomography}
Quantum tomography asks for a classical description of an unknown state $\ket{\vec{s}}$, given either copies of the state or oracle access to $U_s$, its inverse, and controlled versions.
In the absence of structure this task is intrinsically costly: reconstructing a pure state in $N$ dimensions requires $\widetilde{\Theta}(N/\epsilon^2)$ copies, or $\widetilde{\Theta}(N/\epsilon)$ queries to $U_s$ and $U_s^\dagger$ to achieve $\ell_2$-norm accuracy $\epsilon$~\cite{van2023quantum}.
These optimal bounds delineate the fundamental limits of tomography for arbitrary pure states.

Sparsity provides a way to break through this barrier.
If $\ket{\vec{s}}$ admits a $K$-sparse representation in an incoherent dictionary $D$, then tomography decomposes into two simpler stages:
\begin{enumerate}
\item \textbf{Support recovery:} Identify the small set $\Lambda \subseteq [m]$ of dictionary atoms whose span contains (or $\epsilon$-approximates) $\ket{\vec{s}}$.
\item \textbf{Coefficient estimation:} Once $\Lambda$ is known, estimate only the $K$ coefficients of $\ket{\vec{s}}$ in the subdictionary $D_\Lambda$, rather than all $N$ amplitudes in the computational basis.
\end{enumerate}

This perspective reframes tomography from an intrinsically high-dimensional reconstruction problem into a structured learning task.
The potential savings are dramatic: the cost of coefficient estimation scales only with $K$ and with the conditioning of the subdictionary $D_\Lambda$, rather than with the full ambient dimension $N$.
The key algorithmic challenge is therefore whether one can recover the sparse support itself with fewer than $\widetilde{O}(N/\epsilon)$ queries to $U_s$.
In this work, we answer this question in the affirmative for certain regimes.

More broadly, our framework is the first to address \emph{sparse tomography in non-orthogonal, overcomplete dictionaries}.
It complements previous structural promises that enabled efficient tomography, such as low rank~\cite{gross2010quantum,kalev2015quantum}, and stabilizer or Pauli structure~\cite{montanaro2017learning, leone2024learning, aaronson2018shadow}.
Here, sparsity plays the role that frequency locality plays in classical compressed sensing: it enables concise descriptions and efficient recovery in settings where naïve tomography would be infeasible.

Beyond its theoretical significance, sparse tomography has potential direct applications, among which:
\begin{itemize}
\item \emph{Approximate state preparation:} an $\epsilon$-close copy of $\ket{\vec{s}}$ can be prepared from a handful of dictionary atoms, potentially using simpler unitaries than those that generated $\ket{\vec{s}}$;
\item \emph{Compact communication:} two parties who agree on a dictionary can transmit only the sparse coefficient vector, akin to JPEG image compression algorithm in the discrete cosine transform basis;
\item \emph{Feature extraction:} sparse coefficients could serve as low-dimensional, interpretable features for downstream quantum or classical learning tasks.
\end{itemize}

In summary, quantum sparse recovery provides a principled route to efficient tomography by leveraging sparsity.
The remainder of this work is devoted to its algorithmic and complexity-theoretic foundations.
Before turning to our methods, we summarize our main results.


\subsection{Summary of the results}
Our contributions can be grouped into three main themes: a hardness result that delineates the limits of quantum sparse recovery, the design and analysis of the Quantum Orthogonal Matching Pursuit (QOMP) algorithm, and provable guarantees connecting sparse recovery to efficient tomography.

\emph{\textbf{Hardness.}}~
We begin by showing that \emph{quantum sparse recovery is intractable in full generality}. 
In Theorem~\ref{theorem: Quantum approximate recovery is NP-Hard} we prove that both the exact and approximate formulations, $\mathcal{QP}_0$ and $\mathcal{QP}_0^\epsilon$, are NP-hard for any $\epsilon < \sqrt{3/N}$. 
In particular, unless $\mathrm{NP} \subseteq \mathrm{BQP}$, no quantum algorithm can solve $\mathcal{QP}_0^\epsilon$ using $\mathrm{poly}(N)$ queries to $U_s$ and $U_D$. 
This motivates heuristics and algorithms that exploit additional structure to provide guarantees in identifiable regimes.

\emph{\textbf{The QOMP algorithm.}}~
Motivated by Orthogonal Matching Pursuit (OMP), we introduce Quantum Orthogonal Matching Pursuit (QOMP), the first greedy quantum algorithm for quantum sparse recovery in non-orthogonal, overcomplete dictionaries. 
QOMP mirrors the iterative structure of OMP: at each round it selects the atom with the largest correlation to the residual, updates the current span, projects the residual outside the span, and repeats until the residual norm is small or a certain sparsity threshold is exceeded. 
The challenge is to implement these steps directly on quantum states, where storing and updating the residual classically is not possible.
Our design uses quantum subroutines for inner product estimation, maximum finding, and projection, together with an \emph{error-resetting} strategy that prevents errors from compounding across iterations.

The resulting iteration complexity is captured by Theorem~\ref{theorem: QOMP iteration cost}, considering state preparation unitaries and other $1$- and $2$-qubit gates.
At the $k$-th iteration, QOMP selects the best atom and evaluates the exit condition using per-iteration query complexity
\begin{align} 
 \Ot \left(\left(\frac{\sqrt{m}}{\epsilon_i} + \frac{1}{\epsilon_f}\right)\frac{1}{\gamma}\right) \quad \text{to the target state,}\quad \text{ and } \quad  \quad \Ot \left(\left(\frac{\sqrt{m}}{\epsilon_i} + \frac{1}{\epsilon_f}\right)\frac{\sqrt{k}}{\gamma}\right) \quad \text{to the dictionary}
\end{align}
plus only polynomially many $1$- and $2$-qubit gates.
Here $\epsilon_i$ and $\epsilon_f$ are the accuracies of inner product and norm estimation, $k$ is the iteration counter, and $\gamma$ lower bounds the smallest singular value of the current subdictionary. 
Conceptually, this is the first greedy quantum algorithm that faithfully preserves the spirit of OMP while remaining stable under iteration.

\emph{\textbf{Sparse recovery and tomography.}}~
Our main recovery guarantee shows that QOMP achieves support identification under standard incoherence, \emph{and we quantify the query cost in the state-preparation oracle model}.
Theorem~\ref{theorem: sparse recovery with QOMP} states that if the target can be exactly represented with a $K$-sparse vector in a dictionary of mutual incoherence $\mu=\max_{i \neq j}|\bracket{\vec{d}_i}{\vec{d}_j}|$, and
\begin{align}
    K < \frac{1-\eta}{2-\eta}\left(1 + \frac{1}{\mu}\right),
\end{align}
then running QOMP for at most $K$ iterations (or until the residual norm is $\le \epsilon/2$) with $\epsilon_i\le \eta\gamma\epsilon/\sqrt{K}$ and $\epsilon_f=\epsilon/2$ returns a support $\Lambda\subseteq\Lambda_{\mathrm{opt}}$ of size $\le K$ whose span contains an $\epsilon$-approximation to $\ket{\vec{s}}$, with high probability.
The total number of queries is
\begin{align}
    \Ot\left(\frac{K^{3/2}}{\gamma \eta}\frac{\sqrt{m}}{\epsilon}\right) \quad \text{to $U_s, U_s^\dagger$} \quad \text{ and }\quad \Ot\left(\frac{K^{2}}{\gamma \eta}\frac{\sqrt{m}}{\epsilon}\right) \quad \text{to $U_D, U_D^\dagger$,}
\end{align}
plus polynomially many other resources.
Under the natural \emph{identifiability} condition that no smaller support yields an $\epsilon$-approximation, Corollary~\ref{corollary: exact sparse recovery with QOMP} shows that QOMP recovers the \emph{full} optimal support $\Lambda_{\mathrm{opt}}$, thereby \textbf{solving $\mathcal{QP}_0$ in polynomial time} in this regime.

These guarantees immediately translate into efficient tomography. 
Once the support $\Lambda$ has been recovered, tomography reduces to estimating only the coefficients of $\ket{\vec{s}}$ in the low-dimensional subdictionary $D_\Lambda$. 
Corollary~\ref{corollary: sparse coefficient tomography} shows that, with probability $\ge 1-\delta$, we can output a $\widetilde{O}(K)$-sparse classical vector $\vec{y}$ such that $\|\ket{\vec{s}} - \frac{D_\Lambda \vec{y}}{\|D_\Lambda \vec{y}\|}\| \leq \epsilon$ using
\begin{align}
    \Ot\left(\frac{K^2}{\gamma^2} \frac{1}{\epsilon} \mathrm{polylog}(1/\delta)\right)
\end{align}
queries to $U_s,U_D$ (and inverses/controlled).
In the sparse regime of main interest, where $K=\widetilde{O}(1)$, coefficient estimation is strictly lower-order, so the cost is dominated by support recovery.

When $m=O(N)$ and the optimal support is well conditioned (e.g., $\gamma\in\widetilde{\Omega}(\mathrm{polylog}(N)^{-1})$) with $K=\widetilde{O}(1)$, the support recovery costs $\widetilde{O}(\sqrt{N}/\epsilon)$ queries to $U_s$ (and a comparable number to $U_D$), while coefficients add only $\widetilde{O}(1/\epsilon)$.
This improves over the tight $\Theta(N/\epsilon)$ bound for \emph{general} pure-state tomography with state preparation unitaries~\cite{van2023quantum}, demonstrating that sparsity in incoherent dictionaries permits genuine polynomial savings.

\medskip
\emph{Additional results.}~Our primary results require \emph{no} QRAM; all guarantees are proved in an oracular model, using additional $1$- and $2$-qubit gates and classical computation.
For completeness, we also analyze QOMP under QRAM access (Corollary~\ref{corollary: iteration cost QRAM} and Table~\ref{table: iteration cost quantum}), showing per-iteration polynomial speedups against several classical OMP implementations.
Finally, we provide a quantum routine to estimate the mutual incoherence of a dictionary, achieving additive error $\epsilon$ in time $\widetilde{O}(T_D m/\epsilon)$, where $T_D$ is the dictionary-state preparation cost (Theorem~\ref{theorem: Estimating the mutual incoherence}).
This improves quadratically over the best known classical approximation methods.

\medskip
\emph{In summary}, our results delineate both the barriers and the opportunities of quantum sparse recovery: the problem is NP-hard in general, but in incoherent, well-conditioned regimes QOMP provides a polynomial-time, query-efficient path to both support identification and tomography.

\section{Quantum sparse recovery is NP-Hard}
\label{sec: quantum sparse recovery is np hard}
Hardness results serve as guideposts: they delineate the boundary between what is algorithmically feasible and what must rely on structure or heuristics.
In the classical setting, \citet{natarajan1995sparse} showed that sparse recovery is NP-hard via a reduction from \textsc{Exact Cover by 3-Sets (X3C)}.
While one might expect this result to lift directly to the quantum case, we need to take care of one subtlety: classical vectors can be rescaled arbitrarily, quantum states are constrained to have unit norm.
This normalization constraint changes how approximation errors must be handled and invalidates a naive port of the classical reduction.
Classically, one can absorb absolute approximation errors by scaling the target signal, but this freedom disappears for quantum states, where all errors are inherently relative to unit norm.
As we show below, careful control of this point is essential in order to preserve hardness.

We adapt Natarajan’s reduction to the quantum setting, constructing dictionary states that encode the sets in an \textsc{X3C} instance, and a target state that is a uniform superposition over the ground set.
The normalization constraint forces us to bound the allowable error $\epsilon$ explicitly.
We prove that $\mathcal{QP}_0$ and $\mathcal{QP}_0^\epsilon$ remain NP-hard for any $\epsilon < \sqrt{3/N}$, thereby showing that quantum sparse recovery is intractable even with powerful access via state-preparation unitaries.

\begin{theorem}[Quantum Approximate Sparse Recovery is NP-Hard]
\label{theorem: Quantum approximate recovery is NP-Hard}
        Both problems $\mathcal{QP}_0$ and $\mathcal{QP}_0^\epsilon$ are NP-hard for any $\epsilon < \sqrt{3/N}$.
        In particular, no quantum algorithm can solve $\mathcal{QP}_0^\epsilon$ for $\epsilon < \sqrt{3/N}$ using $\mathrm{poly}(N)$ queries to $U_s$ and $U_D$, and $\mathrm{poly}(N)$ additional quantum gates, unless $\mathrm{NP} \subseteq \mathrm{BQP}$.
\end{theorem}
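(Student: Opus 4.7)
The plan is to adapt Natarajan's classical reduction from \textsc{Exact Cover by $3$-Sets} (\textsc{X3C}) to the quantum state-preparation model, taking care of the unit-norm constraint on quantum states. Given an \textsc{X3C} instance with ground set $X=\{1,\dots,3q\}$ and size-$3$ subsets $C_1,\dots,C_m\subset X$, I would set the ambient dimension $N=3q$, the target $\ket{\vec{s}}=\tfrac{1}{\sqrt{N}}\sum_{i=1}^N\ket{i}$, and the atoms $\ket{\vec{d}_j}=\tfrac{1}{\sqrt{3}}\sum_{i\in C_j}\ket{i}$. The unitary $U_s$ is a standard uniform-superposition preparation, and $U_D$ loads each size-three atom controlled on its index using $O(\log N)$ gates; the whole reduction and its oracles therefore use only $\mathrm{poly}(m,\log N)$ classical and quantum resources.

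The reduction rests on two observations. First, an exact cover $\{C_{j_1},\dots,C_{j_q}\}$ yields an exact $q$-sparse representation $\ket{\vec{s}}=\sum_{\ell=1}^{q}\tfrac{1}{\sqrt{q}}\,\ket{\vec{d}_{j_\ell}}$, so the optimum of both $\mathcal{QP}_0$ and $\mathcal{QP}_0^\epsilon$ is at most $q$ on YES-instances. Second, any linear combination of $k$ atoms is supported on $S=\bigcup_\ell C_{j_\ell}$ with $|S|\leq 3k$, and orthogonal projection of $\ket{\vec{s}}$ onto $\C^S$ gives
\begin{align}
\Bigl\|\ket{\vec{s}}-\sum_\ell x_{j_\ell}\ket{\vec{d}_{j_\ell}}\Bigr\|_2 \;\geq\; \sqrt{\frac{N-|S|}{N}} \;\geq\; \sqrt{\frac{N-3k}{N}}.
\end{align}
For $k\leq q-1$ this bound is at least $\sqrt{3/N}$, so the hypothesis $\epsilon<\sqrt{3/N}$ rules out $k<q$ altogether. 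Combined with the elementary observation that $q$ size-three sets cover all $N=3q$ elements only when they are disjoint (hence an exact cover), NO-instances of \textsc{X3C} keep $\ket{\vec{s}}$ strictly outside the span of any $q$ atoms, so $\mathcal{QP}_0$ returns $>q$; sharpening the same projection bound using the uniform amplitude structure of the atoms then extends the separation to $\mathcal{QP}_0^\epsilon$ in the allowed error range.

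Putting these ingredients together, the output of either algorithm on the constructed instance decides \textsc{X3C} by comparison with $q$, which yields NP-hardness; the $\mathrm{BQP}$ statement follows because the whole construction is $\mathrm{poly}(N)$-implementable, so a $\mathrm{poly}(N)$-query quantum algorithm for $\mathcal{QP}_0^\epsilon$ would place \textsc{X3C}, and hence all of $\mathrm{NP}$, in $\mathrm{BQP}$. The main obstacle is precisely the unit-norm constraint: unlike the classical reduction, where arbitrary absolute errors can be absorbed by rescaling the signal, quantum normalisation fixes the error scale and one has to track projection and residual bounds carefully to obtain the correct threshold $\sqrt{3/N}$ rather than a weaker, scaling-dependent quantity. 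I expect this normalisation bookkeeping, and in particular verifying that no $q$-atom combination beats $\sqrt{3/N}$ accuracy without actually corresponding to an exact cover, to be the most delicate step of the proof.
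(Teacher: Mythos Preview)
Your construction, projection bound, and forward direction match the paper's proof exactly, and for $\mathcal{QP}_0$ your argument is complete and identical to the paper's.

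The step you correctly flag as ``most delicate''---showing that on NO instances no $q$-atom combination achieves error below $\sqrt{3/N}$---actually fails as stated. Take $N=6$ with $C_1=\{1,2,3\}$, $C_2=\{1,4,5\}$: this is a NO instance of \textsc{X3C}, yet the least-squares residual of $\ket{\vec{s}}$ onto $\mathrm{span}\{|\vec{d}_1\rangle,|\vec{d}_2\rangle\}$ is exactly $1/2<\sqrt{3/6}=1/\sqrt{2}$. More generally, $q$ atoms with a single pairwise overlap (and hence one uncovered coordinate) achieve minimum residual $\sqrt{3/(2N)}$, strictly below $\sqrt{3/N}$. The paper's backward direction glosses over the same point, asserting without justification that $|\Lambda|=N/3$ together with error $<\sqrt{3/N}$ forces an exact cover. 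The reduction \emph{does} go through cleanly for the smaller threshold $\epsilon<\sqrt{1/N}$: any $q$ atoms that do not form an exact cover miss at least one coordinate and therefore incur error at least $1/\sqrt{N}>\epsilon$. So NP-hardness of $\mathcal{QP}_0^\epsilon$ survives with a corrected constant, but you should not expect the ``uniform-amplitude sharpening'' you anticipate to recover the full range $\epsilon<\sqrt{3/N}$.
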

\begin{proof}
    We reduce from the NP-complete problem $\textsc{Exact Cover by 3-Sets (X3C)}$.

    \vspace{2mm}
    \noindent\textbf{Exact Cover by 3-Sets.}
    Given a ground set $B = \{b_1, b_2, \dots, b_N\}$, with $N$ divisible by $3$, and a collection $C = \{c_1, c_2, \dots, c_M\}$ of subsets of $B$, each of size exactly $3$, the task is to decide whether there exists a sub-collection $C' \subseteq C$ such that the sets in $C'$ are pairwise disjoint and collectively cover $B$. That is, every element of $B$ belongs to exactly one set in $C'$ (i.e., $C'$ is an exact cover of $B$).

    \vspace{2mm}
    \noindent\textbf{Reduction Construction.}
    Given an $\textsc{X3C}$ instance ($B$, $C$), we construct an instance of $\mathcal{QP}_0^\epsilon$ as follows:
    \begin{itemize}
        \item Define the target quantum state as the uniform superposition: $\ket{\vec{s}} = \frac{1}{\sqrt{N}} \sum_{i=1}^{N} \ket{i}.$
        \item For each set $c_i \in C$, define a dictionary state: $|\vec{d}_i\rangle = \frac{1}{\sqrt{3}} \sum_{j: b_j \in c_i} \ket{j}.$
    \end{itemize}
    Each $|\vec{d}_i\rangle$ is a unit vector with support on exactly three indices corresponding to the elements in $c_i$.
    The solution vector $\vec{x}$ picks the collections to include in the exact cover.

    The unitaries $U_s$ and $U_D$ that provide access to $\ket{\vec{s}}$ and the dictionary $D = \{ |\vec{d}_i\rangle \}_{i=1}^{M}$ can be implemented with $O(M\polylog(N,M))$ quantum gates and classical preprocessing:
    \begin{itemize}
    \item $U_s: \ket{0} \rightarrow \ket{\vec{s}}$ can be realized at $O(\polylog (N))$ cost.
    \item $U_D: \ket{i}\ket{0} \rightarrow \ket{i} |\vec{d}_i\rangle$ can be implemented by controlling $M$ unitaries $U_i: \ket{0} \rightarrow |\vec{d}_i\rangle$, each requiring $O(\mathrm{polylog}(M))$ cost~\cite{gleinig2021efficient}.
\end{itemize}

    \vspace{2mm}
    \noindent\textbf{Reduction Correctness.}
    We show that the given $\mathrm{X3C}$ instance has an exact cover if and only if the constructed $\mathcal{QP}_0^\epsilon$ instance admits a solution with $\|\vec{x}\|_0 \leq N/3$ and approximation error $< \sqrt{3/N}$.

    \emph{(1) $\textsc{X3C}$ solution $ \implies \mathcal{QP}_0^\epsilon$ solution with $\leq N/3$ entries and $\epsilon < \sqrt{3/N}$.}

    If an exact cover $C' \subseteq C$ exists, define $\vec{x} \in \C^M$ as
    $x_i = \begin{cases}
    \frac{\sqrt{3}}{\sqrt{N}} & \text{if } c_i \in C', \\
    0 & \text{otherwise}
    \end{cases}$.
    Since $C'$ is an exact cover, every element of $B$ appears exactly once among the $|\vec{d}_i\rangle$ with $x_i \neq 0$ and the linear combination becomes $\ket{\vec{s}} = \sum_{i \in [M]} x_i|\vec{d}_i\rangle$.
    Thus, $\vec{x}$ is a valid solution with $\|\vec{x}\|_0 = N/3$ and achieves zero approximation error ($\epsilon = 0 < \sqrt{3/N}$): $\mathcal{QP}^{\epsilon}_0$ can only admit sparser solutions.

    \emph{(2) $\mathcal{QP}_0^\epsilon$ solution with $\leq N/3$ entries  and $\epsilon < \sqrt{3/N}$ $\implies \textsc{X3C}$ solution.}

    Suppose $\mathcal{QP}_0^\epsilon$ admits a solution $\vec{x} \in \C^M$ with $\|\vec{x}\|_0 \leq N/3$ and approximation error $\|\ket{\vec{s}} - \sum_{i=1}^{M} x_i |\vec{d}_i\rangle \|_2 < \sqrt{3/N}$.

    Let $\Lambda = \mathrm{supp}(\vec{x})$ with $|\Lambda| \leq N/3$.
    Each dictionary element $|\vec{d}_i\rangle$ has support on exactly 3 indices.
    Thus, the combined support of $\{ |\vec{d}_i\rangle : i \in \Lambda \}$ covers at most $3 \cdot |\Lambda| \leq N$ indices.

    Since $\ket{\vec{s}}$ has support on all $N$ indices, with amplitude $1/\sqrt{N}$ on each of them, having $|\Lambda| < N/3$ introduces a total $\ell_2$ error of $\sqrt{3/N}$, violating the constraint $\epsilon < \sqrt{3/N}$.
    Therefore, $\Lambda$ must select exactly $N/3$ dictionary elements, whose supports are disjoint and collectively cover $B$.
    This corresponds to an exact cover in the original $\textsc{X3C}$ instance.

    \vspace{2mm}
    \noindent\textbf{Conclusion.}
    We have shown a polynomial-time reduction from X3C to $\mathcal{QP}_0^\epsilon$ for $\epsilon < \sqrt{3/N}$. Hence, $\mathcal{QP}_0^\epsilon$ is NP-hard. Since the reduction uses only polynomial-size quantum circuits for $U_s$ and $U_D$, no quantum algorithm with polynomially many queries and gates can solve $\mathcal{QP}_0^\epsilon$ in the worst case unless $\mathrm{NP} \subseteq \mathrm{BQP}$.
\end{proof}

Theorem~\ref{theorem: Quantum approximate recovery is NP-Hard} rules out efficient classical-quantum algorithms in full generality and motivates the study of \emph{structured regimes}, where additional promises (such as incoherence) permit efficient algorithms.
It is to such regimes that we now turn, introducing the algorithmic background behind the design of the Quantum Orthogonal Matching Pursuit (QOMP) algorithm.


\section{Quantum algorithms background}
We begin by formalizing the data access models that will be used throughout, both in the oracular-circuit setting and under QRAM assumptions, and by specifying how we measure query and gate complexity. 
We then review a set of standard quantum primitives, such as amplitude amplification and estimation, inner product estimation, quantum minimum/maximum finding, and block-encodings with singular value transformation. 
Although some of these tools are by now well established, our setting requires adapting their formulations and combining them in ways that ensure stability and efficiency across the iterative structure of QOMP. 
We include them here both for completeness and to keep the exposition self-contained. 
Together, these ingredients establish the background against which our contributions are developed.


\subsection{Data access and computational models}
We analyze algorithms in a hybrid setting where a classical computer controls a quantum device operating in the circuit model.
The classical machine stores variables, designs and schedules quantum circuits, and processes measurement outcomes to decide subsequent circuits.
The quantum computer always begins in the all-$\ket{0}$ state, executes a circuit, and measures in the computational basis.

We measure complexity in two complementary ways:
\begin{itemize}
    \item \emph{Gate complexity}, i.e., the asymptotic number of one- and two-qubit gates used across all circuit executions.
    \item \emph{Query complexity}, i.e., the number of calls to oracles implementing state preparation or dictionary access.
\end{itemize}
In line with common practice, we mostly suppress polylogarithmic factors, focusing on the leading polynomial dependencies. 
In the tomography setting in particular, our main resource of interest is the query complexity to the state-preparation and dictionary oracles, while ensuring that all other gates and classical operations remain polynomial in the problem parameters. 
The classical controller itself is assumed to run in the standard RAM model, where memory accesses and arithmetic operations take constant time.

Since data may originate either from classical descriptions or from physical quantum processes, we consider two input models:
\begin{itemize}
    \item \emph{Oracular-Circuit model.} The input consists of explicit state-preparation circuits provided to the algorithm. The classical controller can compile these into larger quantum circuits and invoke them as black-box oracles.
    \item \emph{QRAM model.} The input is loaded into a \emph{classically writable, quantum readable} random access memory (QRAM), which can be queried in superposition. Here, we also account for the classical preprocessing cost of updating QRAM contents during the algorithm.
\end{itemize}

For clarity, our main results assume \emph{exact} access to the state-preparation oracles for the target state and dictionary vectors. 
This idealization isolates the algorithmic ideas and avoids carrying additional technical overhead. 
In realistic settings, finite-precision descriptions or compilation (e.g., via Solovay–Kitaev) introduce small errors.
Since these can be suppressed with logarithmic overhead in the circuit size, one can expect the analysis to extend to this approximate-access regime with only minor modifications.

\subsubsection{The Oracular-Circuit model}
In the \emph{Oracular-Circuit} model we assume black-box access to the signal and dictionary through explicitly given unitary circuits.
That is, the classical controller knows the circuits, and can program the quantum computer to implement them together with their inverses and controlled versions.
In this model, we express algorithmic complexity by counting the number of $1$- and $2$-qubit gates required to run our algorithms and use symbolic variables to keep track of the costs associated to the oracles.
This abstraction is natural when input data is generated by a quantum process/algorithm rather than stored classically, and it provides a clean framework for analyzing query complexity before considering more specialized settings (such as QRAM).
We refer to the ability to implement such black-box circuits as \emph{quantum access} to the data.

Our first step is to formalize what quantum access means in the simplest case of vectors.
\begin{definition} [Quantum access to a vector]
\label{Def:efficient quantum access vector}
    Let $\vec{s} \in \C^n$. 
    We say we have quantum access to $\vec{s}$ if we can implement a unitary operator (controlled, and controlled inverse) that performs the mapping $U_s: \ket{0} \rightarrow \ket{\vec{s}} \defeq \frac{1}{\norm{\vec{s}}_2} \sum_{i \in [n]} s_i \ket{i}$ and the norm $\|\vec{s}\|$ is known.
\end{definition}

In words: given $\ceil{\log (n)}$ qubits, we can coherently load the normalized entries of $\vec{s} \in \C^n$ into amplitudes.
We allow $\vec{s}$ to be non–unit-norm, provided its norm is available as side information.
This notion extends naturally to matrices.

\begin{definition}[Quantum access to a matrix]
\label{def: efficient quantum access matrix} 
We say we have quantum access to a matrix $A \in \C^{n\times m}$ if we know the norm $\|A\|_F$ and can perform the following mappings (controlled, and controlled inverse):
\begin{itemize}
    \item $U:\ket{j}\ket{0} \rightarrow \ket{j} \ket{\Vec{a}_{j}}=\ket{j}\frac{1}{\norm{\Vec{a}_{j}}} \sum_{i \in [n]} A_{ij}\ket{i},$  for $j \in [m];$
    \item $V:\ket{0}\rightarrow\frac{1}{\|A\|_F}\sum_{j \in [m]}\|\Vec{a}_{j}\|\ket{j}.$
\end{itemize}
\end{definition}

Together, $U$ and $V$ allow one to prepare an amplitude encoding of the full matrix,
\begin{align}
    \ket{A} = \mathrm{SWAP}~U(V \otimes I)\ket{0}\ket{0} = \frac{1}{\norm{A}_F}\sum_{i \in [n]}\sum_{j \in [m]} A_{ij}\ket{i}\ket{j}.
\end{align} 
This generalizes vector access (obtained by considering a single column).

While this is the general definition of quantum access to a matrix, in this work, the main matrix of interest is the dictionary $D$. 
Its columns are normalized, so the access model simplifies: $V$ becomes just the uniform superposition $\frac{1}{\sqrt{m}}\sum_{j\in[m]} \ket{j}$, which can be implemented in polylogarithmic time.

\begin{definition}[Quantum access to the dictionary]
\label{def: qomp  quantum access dictionary}
    We define quantum access to a dictionary $D \in \C^{n \times m}$ as the ability to implement a unitary $U_D$, its inverse $U_D^\dagger$, and their controlled versions, in time $T_D$. The unitary acts as
    \begin{align}
        U_D\ket{j}\ket{0} = \ket{j}|\vec{d}_j\rangle
    \end{align}
    for all $j \in [m]$, where $|\vec{d}_j\rangle = \sum_{i \in [n]} D_{ij}\ket{i}.$
\end{definition}

Later we will also need to restrict the dictionary to a subset of columns. 
We can do so by changing the unitary $V$ that selects the columns.
For this, we formalize quantum access to sets of indices.

\begin{definition}[Quantum access to a set and its complement]
\label{def: qomp  quantum access sets}
    Let $\Lambda \subseteq [m]$ be a set. 
    We define quantum access to $\Lambda$ and its complement $\overline{\Lambda} = [m] \setminus \Lambda$ as the ability to implement unitaries $U_{\Lambda}, U_{\overline{\Lambda}}$, their inverses $U_{\Lambda}^\dagger, U_{\overline{\Lambda}}^\dagger$, and their controlled versions, in times $O(T_{\Lambda})$ and $O(T_{\overline{\Lambda}})$, respectively. The unitaries act as
    \begin{align}
        U_{\Lambda}\ket{0} = \frac{1}{\sqrt{\abs{\Lambda}}} \sum_{i \in \Lambda} \ket{i} \quad\text{ and }\quad U_{\overline{\Lambda}} \ket{0} = \frac{1}{\sqrt{m-\abs{\Lambda}}} \sum_{i \in [m] \setminus \Lambda} \ket{i}.
    \end{align}
    We use $T_U$ to denote the time needed by a classical algorithm to update the circuits upon insertion or deletion of one element in $\Lambda$.
    
    We call the access efficient if $T_{\Lambda}, T_{\overline{\Lambda}} \in O(\min(\abs{\Lambda}, \abs{\overline{\Lambda}})\mathrm{polylog}(m))$ and if $T_U \in O(\mathrm{polylog}(m))$. 
\end{definition}

While Definition~\ref{def: qomp  quantum access sets} introduces the access model abstractly, one may ask about its implementability.
In principle, it is always possible to construct unitaries $U_\Lambda$ and $U_{\overline{\Lambda}}$ using $\Ot(m)$ gates together and classical preprocessing.
Moreover, more careful constructions can achieve $O(\min(|\Lambda|,|\overline{\Lambda}|)\polylog (m))$ gate complexity, with classical updates supported in $O(\mathrm{poly}(\min(|\Lambda|,|\overline{\Lambda}|)))$ time per insertion or deletion.
Since these bounds are not the focus of this work, we keep the corresponding costs symbolic throughout the analysis.

These access primitives form the basis of the Oracular-Circuit model.
In particular, they will allow us to efficiently implement block encodings of $D$ and its subdictionaries $D_\Lambda$, which are the key ingredients enabling QOMP.

\subsubsection{The QRAM model}
\label{sec: qram}
A quantum random access memory (QRAM) is a device that, analogously to classical RAM, allows efficient storage and retrieval of bitstrings, but with the additional capability of being queried in superposition. Formally, given $N$ cells each storing a bitstring $x_i$ of length $p$, QRAM implements the unitary
\begin{align}
\label{eq: QRAM}
    U_{\mathrm{QRAM}}: \ket{i}\ket{0} \to \ket{i}\ket{x_i}, \qquad i \in [N]
\end{align}
where each $x_j$ is encoded in $p$ qubits as a corresponding computational basis state (e.g., $10010 \to \ket{10010}$)
We adopt the standard convention that QRAM is \emph{classically writable and quantum readable}, and that queries are unitary, with inverses and controlled versions available.
Following standard practice, we regard a QRAM call as taking $O(\polylog(N))=\Ot(1)$ time, in analogy to constant-time classical RAM access.

\begin{figure*}[th]
\centering
\begin{subfigure}{0.48\linewidth}
\begin{tikzpicture}
    \node(0){$\|A\|_F^2$}
        child{node{$\dots$}
            child{node{$\|\vec{a}_{0}\|^2 + \|\vec{a}_{1}\|^2$}
                child{node{$\vec{a}_{0}$}}
                child{node{$\vec{a}_{1}$}}
            }
            child{node{$\dots$}}
        }
        child{node{$\dots$}};
\end{tikzpicture}
\caption{Tree storing the column norms of $A$.}
\label{subfig:norm tree}
\end{subfigure}
\begin{subfigure}{0.48\linewidth}
\begin{tikzpicture}
    \node(0){$\|\vec{a}_j\|^2$}
        child{node{$\dots$}
            child{node{$|A_{0j}|^2+|A_{1j}|^2$}
                child{node{$A_{0j}$}}
                child{node{$A_{1j}$}}
            }
            child{node{$\dots$}}
        }
        child{node{$\dots$}};
\end{tikzpicture}
\caption{Tree storing the entries of the $j^\text{th}$ column.}
\label{subfig:column entries tree}
\end{subfigure}
\caption{Binary tree structures enabling efficient quantum access to a matrix $A \in \C^{n \times m}$. 
Each node stores the sum of squares of its children. 
A global tree (left) encodes the column norms, while one tree per column (right) encodes entry magnitudes. 
These structures allow efficient implementation of the $U$ and $V$ unitaries from Def.~\ref{def: efficient quantum access matrix} via QRAM.}
\label{fig:column tree}
\end{figure*}
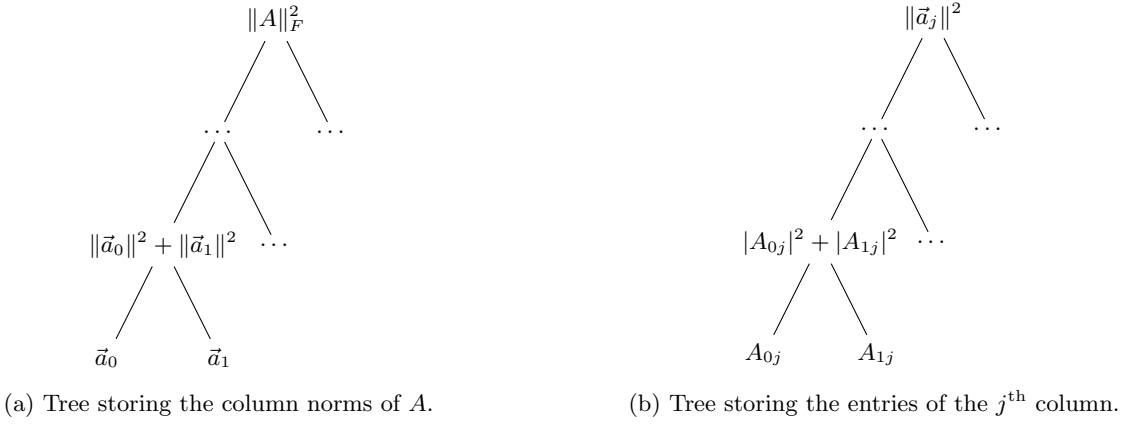

This assumption is debated. 
In principle, a multiplexer circuit can realize this mapping with $O(\log(N) \mathrm{poly}(p))$ qubits and depth $O(N\mathrm{poly}(p))$, while more advanced architectures use $O(N\mathrm{poly}(p))$ qubits and logarithmic depth~\cite{van2023quantum}. 
Hardware-oriented proposals such as bucket-brigade QRAM~\cite{giovannetti2008architectures} achieve polylogarithmic depth by parallelizing memory access, and recent results~\cite{hann2021resilience} show such architectures can be resilient even with error correction. 
At the same time, significant skepticism remains about scalability and integration with fault-tolerant hardware, suggesting potentially large overheads~\cite{jaques2023qram}. 
In this work, as is common in quantum algorithms and learning theory, we \emph{assume} the availability of such devices and focus on the algorithmic consequences. 
In this work we adopt the conventional $\Ot(1)$ abstraction, while stressing that our algorithms can also run in the \emph{Oracular-Circuit model} with costs proportional to the chosen data-loading schemes.

In practice, QRAM-based access to vectors and matrices is realized through hierarchical binary trees that store prefix sums of squared amplitudes, sometimes called \emph{KP-trees} after Kerenidis and Prakash~\cite{kerenidisP17recommendation, kerenidis2020gradient}.
Figure~\ref{fig:column tree} illustrates the trees: one global tree storing column norms, and one tree per column storing entry norms. 
Coherent QRAM access to these tree entries suffices to implement the unitaries $U$ and $V$ from Def.~\ref{def: efficient quantum access matrix}.

\begin{theorem}[Implementing quantum operators using an efficient data structure {\cite{kerenidisP17recommendation}}]
\label{thm:efficient data structure}
    Let $A \in \C^{n\times m}$.
    There exists a data structure to store the matrix $A$ with the following properties:
    \begin{enumerate}
        \item The size of the data structure is $O(nnz(A)\log^2(nm))$.
        \item The time to update/store a new entry $(i,j,A_{ij})$ is $O(\log(nm))$\footnote{The original proof, which can be found in the appendix of the referenced paper, considers time $O(\log^2(nm))$ because it considers that the entries are encoded in $\log(nm)$ bits. Similarly to \citet[Theorem 4]{chakraborty2019power}, we do not consider this overhead, as one might want to tune the number of bits to the required precision.
        Note that we generally omit logarithmic overheads due to the precision of binary encodings and hardware limitations.}.
        \item Provided coherent quantum access to this structure (Eq. (\ref{eq: QRAM})) there exists quantum algorithms that implement $U$ and $V$ as per Def.~\ref{def: efficient quantum access matrix} in time $O(\polylog(nm))$.
    \end{enumerate}
    Similarly, given a vector $\Vec{x} \in \C^n$ stored in this data structure, we can create access to $\ket{\Vec{x}} = \frac{1}{\norm{\Vec{x}}} \sum_{i=1}^n x_i \ket{i}$ in $O(\polylog(n))$ time and update/store a new entry in $O(\log(n))$.
\end{theorem}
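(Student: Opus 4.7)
The plan is to construct the binary-tree data structure sketched in Figure~\ref{fig:column tree}. For each column $j \in [m]$ I maintain a balanced binary tree of depth $\lceil \log n\rceil$ whose leaves are indexed by the rows; each leaf stores a nonzero entry $A_{ij}$ together with its sign or phase, and each internal node stores the sum of squared magnitudes of its descendant leaves. In addition, I maintain a global tree of depth $\lceil\log m\rceil$ over the columns whose leaf $j$ holds $\|\vec{a}_{j}\|^2$ and whose internal nodes aggregate squared sums. Only ancestors of nonzero entries need to be materialized, so classical storage is bounded by the number of nonzero leaves times the depth of the two trees, giving $O(\mathrm{nnz}(A)\log(nm))$ nodes; accounting for the $O(\log(nm))$ bits of precision per stored real number yields the claimed $O(\mathrm{nnz}(A)\log^2(nm))$ size. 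An update to $A_{ij}$ touches only the $O(\log n)$ ancestors of the affected leaf in the $j$-th column tree and the $O(\log m)$ ancestors of the $j$-th leaf in the global tree, so the update cost is $O(\log(nm))$.

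To realize $V$, I prepare $\lceil\log m\rceil$ qubits in $\ket{0}$ and walk down the global tree one level at a time: at level $\ell$, conditioned on the bits already written, I query the QRAM for the two values stored at the children of the currently addressed node and apply a controlled single-qubit $R_y$ rotation by the angle $\theta=\arctan\bigl(\sqrt{\mathrm{right}/\mathrm{left}}\bigr)$. After $\lceil\log m\rceil$ such level-by-level rotations the register holds $\frac{1}{\|A\|_F}\sum_{j\in[m]}\|\vec{a}_{j}\|\ket{j}$. The construction of $U$ is analogous, but the column index $j$ is retained as a control register and the rotations read the internal nodes of the $j$-th column tree rather than the global tree; a final phase correction at the leaf layer restores any sign or complex phase of $A_{ij}$ stored separately with its leaf. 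Each level uses a constant number of QRAM queries plus $O(1)$ reversible-arithmetic operations on $O(\log(nm))$-bit registers, so both $U$ and $V$ cost $O(\polylog(nm))$.

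The main technical obstacle is to argue that this recursive controlled-rotation scheme actually prepares the target amplitude-encoded state coherently and reversibly. Two points need care: (i) every ancilla used to hold intermediate QRAM readouts, square roots, and arctangents must be uncomputed after each rotation, so that no residual entanglement remains and the output register is in a pure state; and (ii) the angle $\arctan(\sqrt{\cdot})$ must be synthesized on finite-precision registers with error small enough that the produced amplitudes match $\|\vec{a}_{j}\|/\|A\|_F$ and $A_{ij}/\|\vec{a}_{j}\|$ to within any inverse-polynomial precision. Both are handled by standard reversible-arithmetic circuitry of cost $O(\polylog(nm))$, which is why the overhead is absorbed into the stated bound. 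The vector statement then follows as the special case $m=1$: only the column-style tree is needed, and its construction immediately yields $O(\log n)$ update time and $O(\polylog(n))$ preparation time for $\ket{\vec{x}}$.
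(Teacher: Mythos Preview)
Your proposal is correct and follows the standard Kerenidis--Prakash binary-tree construction, which is precisely what the paper points to via Figure~\ref{fig:column tree} and the citation; the paper itself does not give a proof of this background result but defers to the referenced work, so there is nothing further to compare.
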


For our QOMP, we require not only access to the full dictionary $D$ but also to dynamically updated subdictionaries $D_\Lambda$.
Doing so requires being able to modify the matrix access unitary $V$ selecting the columns.
The QRAM and KP-Trees framework naturally supports this: one can maintain auxiliary norm trees for $\Lambda$ and $\overline{\Lambda}$ and update them when adding a column to the active set. Each update costs $O(\log m)$ classical time, after which quantum access to the sets remains efficient, as per Def.~\ref{def: qomp  quantum access sets}.

In the standard KP-tree construction~\cite{kerenidisP17recommendation} (Figure~\ref{fig:column tree}), the global tree encodes $\|A\|_F^2$, so state preparation is normalized by $\|A\|_F$.  
This dependence propagates into block-encodings built from such access.  
Subsequent work~\cite{kerenidis2020gradient} showed that one can modify what is stored in QRAM to obtain a smaller normalization factor $\mu(A)$.  

\begin{definition}[Parameter $\mu_p(A)$]
\label{def:mu}
    Let $A \in \C^{n \times m}$. 
    Then, $\mu_p(A) =  \sqrt{s_{2p}(A)s_{2(1-p)}(A^T)}$, where $s_p(A) = \max_{i \in [n]} \norm{\Vec{a}_{i, \cdot}}_p^p$.
\end{definition}

This is achieved by factoring
\begin{align}
\frac{A}{\mu(A)} = P \circ Q,
\end{align}
where $\circ$ denotes entrywise multiplication and the rows of $P$ and columns of $Q$ are normalized in $\ell_2$. 
The corresponding QRAM trees store $P$ and $Q$, enabling amplitude encodings with normalization $\mu(A)$. 

Different values of $p$ yield different $\mu_p(A)$; in practice one may preprocess a constant set of values and select the best.  
This preprocessing is performed once when storing the dictionary and can be amortized over many signals.  
Both normalizations, $\|A\|_F$ and $\mu(A)$, extend naturally to subdictionaries $D_\Lambda$ by maintaining separate trees selecting the columns.

We emphasize that $\mu_p(A)$ here is a QRAM efficiency parameter and should not be confused with the \emph{mutual incoherence} $\mu$ of a dictionary, which appears in our recovery guarantees.
Both notations are standard in their respective literatures, and we retain them for continuity; the meaning will be clear from context.

\emph{In summary:} The QRAM model provides a unified framework for efficient quantum access: one can store a signal vector or a dictionary in KP-trees with linear-time preprocessing, prepare amplitude encodings in $\Ot(1)$ time, and maintain dynamic access to subdictionaries $D_\Lambda$ with logarithmic update costs.  
Normalization can be chosen between $\|A\|_F$ and $\mu(A)$ depending on preprocessing, and the same mechanism applies to both the target signal and the dictionary.  
Thus the QRAM abstraction supports all the access assumptions required for QOMP with polylogarithmic overhead in $n,m$.


\subsection{Algorithmic primitives}
We now turn to the algorithmic primitives that underpin QOMP. 
Our algorithm relies on variations of well-established quantum tools for boosting success probabilities, estimating overlaps, and performing linear-algebraic transformations. 
In particular, we make use of amplitude amplification and estimation, inner product estimation, quantum minimum/maximum finding, block-encodings combined with quantum singular value transformation (QSVT), sparse tomography in an orthonormal basis, and techniques for amplifying success probabilities or converting between Las Vegas and Monte Carlo algorithms. Additional background on QSVT and polynomial approximations is deferred to Appendix~\ref{appendix: singular value transformation}.


\subsubsection{Amplitude amplification and estimation}
We will use both amplitude amplification and estimation~\cite{brassard2002quantum}.
Suppose we have a unitary $U$ (with inverse and controlled implementations) such that
\begin{align}
\label{eq: amp amp or est}
    U\ket{0} = a\ket{\vec{x}, 1} + b|\vec{G}, 0\rangle.
\end{align}
where the ancilla qubit flags the \emph{good} subspace by $\ket{1}$.
Amplitude amplification prepares a state $|\vec{\psi} \rangle$ such that $||\vec{\psi}\rangle - \ket{\vec{x}}| \leq \epsilon$ with high probability, while amplitude estimation outputs a probability estimate $p$ satisfying $|p - |a|^2| \leq \epsilon$ with high probability.

Like Grover's algorithm, standard amplitude amplification alternates reflections about the \emph{initial state} and the \emph{bad} subspace.
If the initial success amplitude $a$ is unknown, naively iterating these reflections risks \emph{overshooting} the target state~\cite{grover1996fast, grover1997quantum}, drifting away from the \quoted{good} state that we want to prepare.
To avoid this, we use fixed-point amplification, which requires only a lower bound on $|a|$ and eliminates the risk of overshooting.
We follow the block-encoding formulation of \citet{gilyen2019quantum}; see also the original construction by \citet{yoder2014fixed}.

\begin{theorem}[Fixed-point amplitude amplification~{\cite[Theorem 27, arxiv]{gilyen2019quantum}}]
\label{thm: qomp  fixed point amp amp}
    Let $U$ be a unitary and $\Pi$ be an orthogonal projector such that $a|\vec{\psi}_G\rangle = \Pi U |\vec{\psi}_0\rangle$, and $a > \delta > 0$.
    There is a unitary circuit $\widetilde{U}$ such that $\|\vec{\psi}_G\rangle - \widetilde{U}|\vec{\psi}_0\rangle\|_2 \leq \epsilon$, which uses a single ancilla qubit and consists of $O\left(\frac{\log(1/\epsilon)}{\delta}\right)$ $U$, $U^\dagger$, $C_{\Pi}NOT$, $C_{\ketbra{\psi_0}{\psi_0}}NOT$ and $e^{i\phi \sigma_z}$ gates.
\end{theorem}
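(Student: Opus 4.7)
The plan is to recast fixed-point amplitude amplification as a special case of the Quantum Singular Value Transformation (QSVT), where applying a carefully chosen polynomial to the success amplitude $a$ pushes it from any value in $[\delta,1]$ up to $1-O(\epsilon)$, with no risk of overshoot. First I would identify the correct linear-algebraic object: consider the rank-one operator $A = \Pi U \ketbra{\vec{\psi}_0}{\vec{\psi}_0}$, whose unique nonzero singular value equals $a$, with right singular vector $\ket{\vec{\psi}_0}$ and left singular vector $\ket{\vec{\psi}_G}$. The goal then becomes applying a polynomial $P$ to this singular value so that $P(a)\approx 1$ uniformly for every $a\in[\delta,1]$.

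Second, I would invoke the QSVT framework: for any real odd polynomial $P$ of degree $d$ with $|P(x)|\le 1$ on $[-1,1]$, QSVT provides a quantum circuit that block-encodes $P(A)$, built by interleaving $d$ calls to $U$ and $U^\dagger$ with $O(d)$ projector-controlled-NOT gates for $\Pi$ and $\ketbra{\vec{\psi}_0}{\vec{\psi}_0}$ and $O(d)$ single-qubit $e^{i\phi\sigma_z}$ rotations, using a single ancilla qubit. The phase angles $\{\phi_k\}$ are determined by $P$ via a classical preprocessing step in $\mathrm{poly}(d)$ time. Applying this circuit to $\ket{\vec{\psi}_0}$ yields a state whose component along $\ket{\vec{\psi}_G}$ is exactly $P(a)$, with leakage at most $\sqrt{1-P(a)^2}$ into the orthogonal subspace.

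Third, the heart of the argument lies in constructing $P$, and this is where I expect the main obstacle. I would appeal to a standard approximation-theoretic fact: for any $\delta,\epsilon\in(0,1)$ there exists an odd real polynomial $P_{\delta,\epsilon}$ of degree $d = O(\delta^{-1}\log(1/\epsilon))$ such that $|P_{\delta,\epsilon}(x)-\mathrm{sign}(x)|\le\epsilon$ for $|x|\in[\delta,1]$ and $|P_{\delta,\epsilon}(x)|\le 1$ on $[-1,1]$. A convenient construction truncates the Chebyshev series of a smoothed sign function such as $\mathrm{erf}(x/\delta)$ and then rescales to enforce the uniform bound on $[-1,1]$; establishing both the approximation accuracy and the uniform bound simultaneously (without blowing up the degree) is the delicate step. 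The polynomial then delivers the fixed-point property: since $P_{\delta,\epsilon}(a)\ge 1-\epsilon$ uniformly for $a\in[\delta,1]$, amplification is robust to any uncertainty about the exact value of $a$, which is precisely what rules out overshoot.

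Finally, gluing the pieces together, the QSVT circuit $\widetilde{U}$ associated to $P_{\delta,\epsilon}$ satisfies $\|\ket{\vec{\psi}_G}-\widetilde{U}\ket{\vec{\psi}_0}\|\le\sqrt{2(1-P_{\delta,\epsilon}(a))}\le\sqrt{2\epsilon}$, so rescaling $\epsilon\mapsto\epsilon^2/2$ at the cost of only a constant factor in the degree yields the stated $\epsilon$-closeness. The total resource count inherits the polynomial degree, giving $O(\log(1/\epsilon)/\delta)$ uses each of $U$, $U^\dagger$, $C_{\Pi}\mathrm{NOT}$, $C_{\ketbra{\vec{\psi}_0}{\vec{\psi}_0}}\mathrm{NOT}$, and $e^{i\phi\sigma_z}$, matching the statement of the theorem.
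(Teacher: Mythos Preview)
The paper does not prove this theorem; it is quoted verbatim as a black-box primitive from \cite[Theorem 27, arxiv]{gilyen2019quantum} and used without proof. Your sketch is in fact a faithful outline of the original argument in Gily\'en et al.: view $\Pi U\ketbra{\vec{\psi}_0}{\vec{\psi}_0}$ as a rank-one matrix with singular value $a$, apply QSVT with an odd polynomial approximation of $\mathrm{sign}$ of degree $O(\delta^{-1}\log(1/\epsilon))$, and read off the resource count from the degree. So there is nothing to compare against in this paper, and your proposal matches the approach of the cited source.
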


In this formulation, the projector identifies the \quoted{good} subspace. 
For instance, in the setting of Eq.~(\ref{eq: amp amp or est}) one may take $\Pi= \ketbra{1}{1}$, so that $\Pi U |\vec{\psi}_0\rangle = a\ket{\vec{x}, 1}$.

We next turn to amplitude estimation. 
The textbook routine from \citet[Theorem 12]{brassard2002quantum} estimates $|a|^2$ without overshooting concerns.
However, in our applications, we require an estimate of $|a|$ itself.
A simple modification together with a stability bound for $\sin$ suffices.

\begin{lemma}[Error propagation $\sin(\theta)$]
\label{lem: qomp error sin}
    Let $a = \sin(\theta)$ and $\overline{a} = \sin(\overline{\theta})$ with $0 \leq \theta,\overline{\theta} \leq 2\pi$, then $\abs{\overline{\theta} - \theta} \leq \epsilon \implies \abs{a - \overline{a}} \leq \epsilon$.
\end{lemma}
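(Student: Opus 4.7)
The statement is the standard $1$-Lipschitz continuity of the sine function on the real line, so the proof is short and elementary. The plan is to reduce the bound on $|a-\overline a|$ to a bound on $|\theta-\overline\theta|$ using a pointwise identity, then invoke $|\sin x|\le |x|$ and $|\cos x|\le 1$.

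\textbf{Approach via sum-to-product.} First I would apply the standard identity
\begin{equation}
\sin(\theta)-\sin(\overline\theta)=2\cos\!\left(\frac{\theta+\overline\theta}{2}\right)\sin\!\left(\frac{\theta-\overline\theta}{2}\right).
\end{equation}
Taking absolute values and using $|\cos(\cdot)|\le 1$ together with the elementary inequality $|\sin x|\le |x|$ (valid for all real $x$, and in particular for $x=(\theta-\overline\theta)/2$), this gives
\begin{equation}
|a-\overline a|\le 2\cdot 1\cdot\left|\frac{\theta-\overline\theta}{2}\right|=|\theta-\overline\theta|\le\epsilon,
\end{equation}
which is the claimed bound. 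The hypothesis $0\le\theta,\overline\theta\le 2\pi$ is not actually needed for the inequality; it only serves to fix the convention in which the amplitude-estimation output $\overline\theta$ is interpreted.

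\textbf{Alternative via mean value theorem.} Equivalently, one can apply the mean value theorem to $\sin$ on the interval with endpoints $\theta,\overline\theta$, obtaining $\sin(\theta)-\sin(\overline\theta)=\cos(\xi)(\theta-\overline\theta)$ for some intermediate $\xi$, and then bound $|\cos(\xi)|\le 1$. Both routes are routine and use no structure beyond real analysis.

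\textbf{Expected difficulty.} There is essentially no obstacle: the lemma is a one-line consequence of the fact that $\sin$ is $1$-Lipschitz. The only mild care needed is to make sure the chosen identity does not introduce an extraneous factor (hence the preference for the sum-to-product form, which makes the constant $1$ explicit). The lemma's role in the paper is purely to justify that an additive-error estimate of the angle $\theta$ in amplitude estimation transfers, without loss, to an additive-error estimate of $a=\sin\theta$, which is then used inside the QOMP subroutines.
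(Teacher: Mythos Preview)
Your proposal is correct. The paper's proof uses precisely your alternative route via the mean value theorem (applied to $\sin$, bounding $|\cos(\xi)|\le 1$), so your second approach matches the paper exactly; your primary sum-to-product argument is an equally valid and arguably cleaner variant that avoids invoking MVT altogether.
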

\begin{proof}
    The Mean Value (or Lagrange) Theorem states that $f'(c) = \frac{f(b) - f(a)}{b-a}$, where $f' = \dv{f}{x}$ for some $c \in (a,b)$ and $f$ continuous in $[a,b]$, differentiable in $(a,b)$. 
    From this, we can write $\abs{\sin{\theta} - \sin{\overline{\theta}}} \leq \cos(c)\epsilon$, for $c \in (\theta-\epsilon, \theta+\epsilon)$. 
    Using $\cos(x) \leq 1$, we have $\abs{\overline{a} - a} \leq \epsilon$.
\end{proof}

\begin{theorem}[Absolute value amplitude estimation]
\label{thm: amplitude estimation amplitude}
There is a quantum algorithm which takes as input one copy of a quantum state $\ket{\varphi}$, a unitary transformation $U=2\ketbra{\vec{\varphi}}{\vec{\varphi}} - \I$, a unitary transformation $V=\I - 2P$ for some projector $P$, and an integer $t$. 
The algorithm outputs $\overline{a}$, an estimate of $a = \sqrt{\bra{\vec{\varphi}}P\ket{\vec{\varphi}}}$,
such that
    \begin{align}
        \abs{\overline{a} - a} &\leq \frac{\pi}{t}
    \end{align}
with probability at least $8/\pi^2$, using exactly $t$ evaluations of $U$ and $V$.
\end{theorem}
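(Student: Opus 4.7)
The plan is to mimic the standard amplitude estimation construction of Brassard, H{\o}yer, Mosca and Tapp, but to read off $a$ via $\sin$ at the end rather than going through $a^2$. The two given reflections $U=2\ketbra{\vec{\varphi}}{\vec{\varphi}}-I$ and $V=I-2P$ are exactly the ingredients of a Grover-type walk, so once the eigenphase is extracted by phase estimation, a single trigonometric step and the Lipschitz bound of Lemma~\ref{lem: qomp error sin} finish the job.

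First, I would set $G\defeq UV$ and verify the standard two-dimensional reduction. Writing $a=\|P\ket{\vec{\varphi}}\|$ and decomposing $\ket{\vec{\varphi}}=a\ket{\vec{\varphi}_G}+\sqrt{1-a^2}\ket{\vec{\varphi}_B}$, where $\ket{\vec{\varphi}_G}=P\ket{\vec{\varphi}}/a$ and $\ket{\vec{\varphi}_B}=(I-P)\ket{\vec{\varphi}}/\sqrt{1-a^2}$, the operators $V$ and $U$ preserve the plane $\mathrm{span}\{\ket{\vec{\varphi}_G},\ket{\vec{\varphi}_B}\}$ and act there as reflections. Their product $G$ is a rotation by $2\theta$ with $\sin\theta=a$, and its eigenvalues in this subspace are $e^{\pm 2i\theta}$. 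The input state $\ket{\vec{\varphi}}$ decomposes as an equal-weight superposition of the two eigenvectors.

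Second, I would invoke standard phase estimation with $t$ controlled applications of $G$ on $\ket{\vec{\varphi}}$. Using the inverse quantum Fourier transform and measuring, the textbook guarantee (\cite{brassard2002quantum}) yields an outcome $y\in\{0,\dots,t-1\}$ such that, with probability at least $8/\pi^2$, the angle $\overline{\phi}\defeq 2\pi y/t$ satisfies $\min\bigl(|\overline{\phi}-2\theta|,|\overline{\phi}-(2\pi-2\theta)|\bigr)\le 2\pi/t$. Folding back into $[0,\pi/2]$ and halving gives a classical estimate $\overline{\theta}$ with $|\overline{\theta}-\theta|\le \pi/t$. Since each invocation of $G$ consists of exactly one call to $U$ and one to $V$, the total count is $t$ evaluations of each, as required.

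Third, I would output $\overline{a}\defeq \sin(\overline{\theta})$. Lemma~\ref{lem: qomp error sin} immediately gives
\begin{align}
|\overline{a}-a|=|\sin(\overline{\theta})-\sin(\theta)|\le |\overline{\theta}-\theta|\le \pi/t,
\end{align}
which is the claimed bound with the same success probability $8/\pi^2$ inherited from phase estimation. The only mildly delicate points are the usual bookkeeping ones: the $\pm 2\theta$ ambiguity of the measured phase, which disappears after taking $|\cdot|$ in $[0,\pi]$ and halving, and the degenerate boundary cases $\theta\in\{0,\pi/2\}$, where $G$ has a single eigenphase and the estimate is trivially exact. I do not expect any real obstacle beyond these standard sanity checks, since the result is essentially the amplitude-formulated variant of BHMT combined with the Lipschitz constant $\max|\cos|\le 1$ supplied by Lemma~\ref{lem: qomp error sin}.
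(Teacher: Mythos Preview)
Your proposal is correct and follows essentially the same route as the paper: run the BHMT amplitude-estimation procedure to obtain an angle estimate $\overline{\theta}$ with $|\overline{\theta}-\theta|\le \pi/t$, then output $\overline{a}=\sin(\overline{\theta})$ and apply Lemma~\ref{lem: qomp error sin}. The paper's proof simply defers the first two steps to \cite[Theorem~12]{brassard2002quantum} rather than spelling out the two-dimensional reduction and phase-estimation bookkeeping you sketch, but the argument is the same.
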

\begin{proof}
    The proof follows \citet[Theorem 12]{brassard2002quantum}, until the estimation of the angle $\theta$: $\abs{\overline{\theta} - \theta} \leq \frac{\pi}{t}$. Then, using the bound of Lemma~\ref{lem: qomp error sin}, we conclude the algorithm by outputting $\overline{a} = \sin(\overline{\theta})$.
\end{proof}


\subsubsection{Inner product estimation}
Throughout the paper we will need to perform inner products between amplitude encoded vectors. 
We report a result from \citet{kerenidis2019qmeans} and tailor it to our needs.

\begin{theorem}[Inner product estimation \cite{kerenidis2019qmeans}]
\label{thm:innerproductestimation}
Let there be quantum access to the matrices $V \in \mathbb{R}^{n \times m}$ and $C \in \mathbb{R}^{k \times m}$ through the unitaries
$U_v: \ket{i}\ket{0} \rightarrow \ket{i}\ket{\vec{v}_{i, \cdot}}$ and $U_c: \ket{j}\ket{0} \rightarrow \ket{j}\ket{\vec{c}_{j,\cdot}}$, that run in time $T_v$, $T_c$, respectively.
Then, for any $\epsilon>0$, there exists a quantum algorithm that computes
$\ket{i}\ket{j}\ket{0}  \rightarrow   \ket{i}\ket{j} |\overline{\braket{\vec{v}_{i,\cdot}}{\vec{c}_{j,\cdot}}} \rangle$,
such that
$| \overline{\braket{\vec{v}_{i,\cdot}}{\vec{c}_{j,\cdot}}}-\braket{\vec{v}_{i,\cdot}}{\vec{c}_{j,\cdot}} | \leq  \epsilon$,
with high probability
in time $\widetilde{O}\left(\frac{1}{ \epsilon}(T_v + T_c)\right)$.
\end{theorem}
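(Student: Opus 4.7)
The plan is to reduce coherent inner-product estimation to amplitude estimation via a Hadamard test. First, for each basis input $\ket{i}\ket{j}$, I would build a unitary $U_{ij}$ (controlled by $\ket{i}\ket{j}$ and acting on an ancilla flag and a data register) that prepares
\begin{equation*}
\tfrac{1}{2}\bigl(\ket{0}_{\text{flag}}(\ket{\vec{v}_{i,\cdot}} + \ket{\vec{c}_{j,\cdot}}) + \ket{1}_{\text{flag}}(\ket{\vec{v}_{i,\cdot}} - \ket{\vec{c}_{j,\cdot}})\bigr),
\end{equation*}
by applying a Hadamard on the flag, one controlled call each to $U_v$ and $U_c$ into the data register conditioned on the flag, and a final Hadamard. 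Each application of $U_{ij}$ costs $O(T_v + T_c)$, and because the rows of $V$ and $C$ are real, the flag-$\ket{0}$ acceptance probability equals $p_{ij} = (1 + \braket{\vec{v}_{i,\cdot}}{\vec{c}_{j,\cdot}})/2$. Hence the (signed) inner product is an affine function of this acceptance probability, and crucially --- unlike a SWAP test, which would only yield $|\braket{\vec{v}_{i,\cdot}}{\vec{c}_{j,\cdot}}|^2$ --- we recover the sign directly.

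Second, I would feed $U_{ij}$ into standard amplitude estimation with target additive accuracy $\epsilon/2$ on $p_{ij}$, which requires $t = \Theta(1/\epsilon)$ iterations of the Grover operator built from $U_{ij}$, $U_{ij}^\dagger$, and the flag projector. The resulting estimate $\overline{p}_{ij}$ satisfies $|\overline{p}_{ij} - p_{ij}| \leq \epsilon/2$ with constant success probability, and setting $\overline{x}_{ij} := 2\overline{p}_{ij} - 1$ yields an estimate of $\braket{\vec{v}_{i,\cdot}}{\vec{c}_{j,\cdot}}$ with additive error at most $\epsilon$ at a total cost of $O((T_v + T_c)/\epsilon)$ queries. (Equivalently, Theorem~\ref{thm: amplitude estimation amplitude} applied to the same $U_{ij}$ yields $\overline{a} \approx \sqrt{p_{ij}}$ to additive $\epsilon$, after which $|\overline{a}^2 - p_{ij}| \leq 2\epsilon$ by the identity $\overline{a}^2 - a^2 = (\overline{a}-a)(\overline{a}+a)$, giving the same asymptotic cost.)

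Third, to realize the coherent mapping $\ket{i}\ket{j}\ket{0} \to \ket{i}\ket{j}\ket{\overline{x}_{ij}}$ with high probability, I would boost the constant success probability using the standard coherent median-of-means amplification: run $r = O(\log(1/\delta))$ independent amplitude-estimation copies in parallel on disjoint register tuples, coherently compute their median into the designated output register, and then uncompute the $r$ copies together with their QPE/Grover ancillae and the data register that held $\ket{\vec{v}_{i,\cdot}}$ and $\ket{\vec{c}_{j,\cdot}}$. This adds only a $\polylog(1/\delta)$ multiplicative overhead while leaving all auxiliary registers in $\ket{0}$, delivering the clean unitary with total complexity $\widetilde{O}((T_v+T_c)/\epsilon)$.

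The main obstacle is not the query counting but the coherent implementation: a naive amplitude-estimation run leaves QPE ancillae and the data register entangled with both $(i,j)$ and the probability estimate, which would spoil downstream use of this primitive in superposition (e.g., inside QOMP's quantum maximum-finding step, where the inner-product oracle must be invoked coherently across many dictionary indices). Careful uncomputation after the median boosting is what turns a single probabilistic amplitude-estimation call into the deterministic-looking coherent oracle demanded by the statement, without inflating the query cost beyond polylogarithmic factors.
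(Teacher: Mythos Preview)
Your proposal is correct and matches the paper's approach. The paper does not prove this theorem itself but cites it from \cite{kerenidis2019qmeans} and summarizes the method as ``a modified version of the Hadamard test followed by amplitude estimation,'' which is exactly what you describe; the circuit in Figure~\ref{fig: IPE estimation circuit} is precisely your $U_{ij}$, with the flag measurement probability $(1-\mathrm{Re}[\braket{\vec{v}_i}{\vec{c}_j}])/2$ (you chose the complementary outcome $(1+\braket{\vec{v}_i}{\vec{c}_j})/2$, an immaterial difference). Your added discussion of coherent median amplification and ancilla uncomputation is a valid way to fill in the details of obtaining the clean coherent mapping, and the paper implicitly relies on the same machinery when invoking this theorem inside superposition (e.g., in Corollary~\ref{coro: finding approximate maximum}).
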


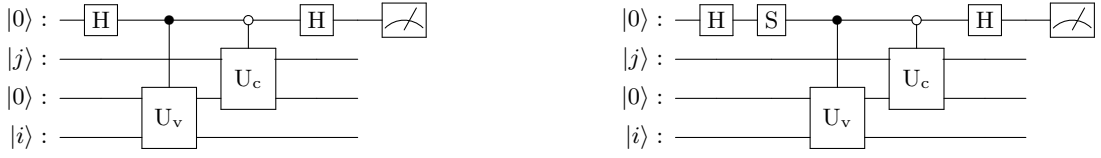
\begin{figure}[t]
    \begin{subfigure}{0.47\linewidth}
    \scalebox{1}{ 
        \Qcircuit @C=1.0em @R=0.2em @!R { \\
            \nghost{{\ket{0}} :  } & \lstick{{\ket{0}} :  } & \gate{\mathrm{H}} & \ctrl{2} & \ctrlo{1} & \gate{\mathrm{H}} & \qw & \meter\\
            \nghost{{\ket{j}} :  } & \lstick{{\ket{j}} :  } & \qw & \qw & \multigate{1}{\mathrm{U_{c}}} & \qw & \qw  \\
            \nghost{{\ket{0}} :  } & \lstick{{\ket{0}} :  } & \qw & \multigate{1}{\mathrm{U_{v}}} & \ghost{\mathrm{U_{b}}} & \qw & \qw  \\
            \nghost{{\ket{i}} :  } & \lstick{{\ket{i}} :  } & \qw & \ghost{\mathrm{U_{v}}} & \qw & \qw & \qw \\
        \\ }
    }
    \caption{The probability of measuring the auxiliary qubit in the state $\ket{1}$ is $P=\frac{1 - \mathrm{Re}[\langle \Vec{v}_i | \Vec{c}_j \rangle ]}{2}$.}
    \label{subfig: real part}
    \end{subfigure}
    \begin{subfigure}{0.47\linewidth}
    \scalebox{1}{ 
        \Qcircuit @C=1.0em @R=0.2em @!R { \\
            \nghost{{\ket{0}} :  } & \lstick{{\ket{0}} :  } & \gate{\mathrm{H}} & \gate{\mathrm{S}} & \ctrl{2} & \ctrlo{1} & \gate{\mathrm{H}} & \qw & \meter\\
            \nghost{{\ket{j}} :  } & \lstick{{\ket{j}} :  } & \qw & \qw & \qw & \multigate{1}{\mathrm{U_{c}}} & \qw & \qw  \\
            \nghost{{\ket{0}} :  } & \lstick{{\ket{0}} :  } & \qw & \qw & \multigate{1}{\mathrm{U_{v}}} & \ghost{\mathrm{U_{b}}} & \qw & \qw  \\
            \nghost{{\ket{i}} :  } & \lstick{{\ket{i}} :  } & \qw & \qw & \ghost{\mathrm{U_{v}}} & \qw & \qw & \qw \\
        \\ }
    }
    \caption{The probability of measuring the auxiliary qubit in the state $\ket{1}$ is $P=\frac{1 - \mathrm{Im}[\langle \Vec{v}_i | \Vec{c}_j \rangle]}{2}$.}
    \label{subfig: imaginary part}
    \end{subfigure}
    \caption{Quantum circuit to estimate $\langle \Vec{v}_i| \Vec{c}_j \rangle$. 
    Here $U_v\ket{i}\ket{0} = \ket{i}\ket{\Vec{v}_i}$ and $U_c\ket{j}\ket{0} = \ket{j} |\Vec{c}_j \rangle$. }
    \label{fig: IPE estimation circuit}
\end{figure}

This theorem uses a modified version of the Hadamard test followed by amplitude estimation, and in the current form only deals with real valued amplitudes.
However, when used with complex amplitudes, the same procedure estimates the real part of the inner product (see Figure~\ref{subfig: real part}). 
Modifying the circuit with an $S$ gate $\left(S = \begin{bmatrix}
        1 & 0 \\
        0 & i
\end{bmatrix} \right)$ 
enables estimating the imaginary part as well. 
The two resulting circuits are shown in Figure~\ref{fig: IPE estimation circuit}. 
Since we only implement inner products between a specific vector and the columns of a matrix, we also simplify the required quantum access.
These modifications lead us to formulate the following corollary.

\begin{corollary}
[Complex inner products estimation] 
\label{coro: complex inner}
    Let there be quantum access to a matrix $V \in \C^{n \times d}$ and a vector $\vec{c} \in \C^d$ through the following unitaries
    $U_v: \ket{i}\ket{0} \rightarrow \ket{i}\ket{\Vec{v}_i}$, and $U_c:\ket{0} \rightarrow |\Vec{c}\rangle$ and inverse, that
    can be controlled and executed in times $T_v$ and $T_c$ respectively. Let the norm $\|\vec{c}\|$ be known and let there be quantum access to the norms of $V$ via $\ket{i}\ket{0} \rightarrow \ket{i}\ket{\|\Vec{v}_i\|}$ in time $T_N$. For any $\delta > 0$ and $\epsilon>0$, there exist quantum algorithms that compute:
    \begin{itemize}
        \item $\ket{i}\ket{0}  \rightarrow   \ket{i} | \mathrm{Re}[\overline{(\Vec{v}_i,\Vec{c})}] \rangle$ where  $|\mathrm{Re}[\overline{(\Vec{v}_i,\Vec{c})}]-\mathrm{Re}[(\Vec{v}_i, \Vec{c})]| \leq  \epsilon$ w.p. $\geq  1-\delta$
        \item $\ket{i}\ket{0}  \rightarrow   \ket{i} | \mathrm{Im}[\overline{(\Vec{v}_i,\Vec{c})}] \rangle$ where  $|\mathrm{Im}[\overline{(\Vec{v}_i,\Vec{c})}]-\mathrm{Im}[(\Vec{v}_i, \Vec{c})]| \leq  \epsilon$ w.p. $\geq  1-\delta$
    \end{itemize}
    in time $\widetilde{O}\left(\frac{ \norm{\Vec{v}_i} \|\Vec{c}\|}{\epsilon}(T_v+T_c) \log(1/\delta) + T_N\right)$.
\end{corollary}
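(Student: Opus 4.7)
The plan is to build on the Hadamard-test circuits of Figure~\ref{fig: IPE estimation circuit}, boost their single-shot bias by amplitude estimation, and then rescale by the coherently loaded norms. The first circuit prepares a state whose probability of flagging the ancilla with $\ket{1}$ is $P_R = (1-\mathrm{Re}\bracket{\vec{v}_i}{\vec{c}})/2$, while the second (obtained by inserting an $S$ gate on the ancilla) yields $P_I = (1-\mathrm{Im}\bracket{\vec{v}_i}{\vec{c}})/2$. Each circuit invokes $U_v, U_c$ and their inverses a constant number of times, costing $O(T_v+T_c)$; crucially, because $U_v$ is controlled by $\ket{i}$, the output remains coherent in $\ket{i}$.

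First, I apply Theorem~\ref{thm: amplitude estimation amplitude} with projector $\Pi=\ketbra{1}{1}$ on the test ancilla and $t$ queries, obtaining an estimate of $\sqrt{P_R}$ with additive error $\pi/t$ and success probability $8/\pi^2$. Squaring (which preserves $O(1/t)$ accuracy since $\sqrt{P_R}\le 1$) and applying $\mathrm{Re}\bracket{\vec{v}_i}{\vec{c}} = 1-2P_R$ yields an estimate of the cosine similarity with error $O(1/t)$; the $S$-gate variant handles the imaginary part analogously. Next, the norm oracle is queried once at cost $T_N$ to write $\|\vec{v}_i\|$ into an ancilla, and a reversible arithmetic multiplier converts the cosine-similarity estimate into an estimate of $\mathrm{Re}\inner{\vec{v}_i}{\vec{c}} = \|\vec{v}_i\|\|\vec{c}\|\,\mathrm{Re}\bracket{\vec{v}_i}{\vec{c}}$, with propagated error $O(\|\vec{v}_i\|\|\vec{c}\|/t)$. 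Choosing $t = \Theta(\|\vec{v}_i\|\|\vec{c}\|/\epsilon)$ secures the required additive accuracy $\epsilon$.

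Second, I amplify the constant success probability to $1-\delta$ via the standard median-of-$O(\log(1/\delta))$ trick (Section~\ref{section: amplification of success probabilities}), applied at the level of the $\sqrt{P_R}$ estimates and computed coherently over the $\ket{i}$ register so that the norm look-up and multiplier need only be invoked on the single high-confidence output. Summing contributions yields the claimed complexity $\widetilde{O}((\|\vec{v}_i\|\|\vec{c}\|/\epsilon)(T_v+T_c)\log(1/\delta) + T_N)$, with the $+T_N$ term reflecting the one-shot norm look-up outside the amplification loop. The main subtlety I anticipate is preserving index-coherence through the norm look-up and reversible multiplication: this is routine but requires carefully uncomputing the norm ancilla after the multiplier so that the final state is exactly $\ket{i}|\overline{\mathrm{Re}(\vec{v}_i,\vec{c})}\rangle$ (and analogously for the imaginary part), matching the input/output behavior specified by the corollary.
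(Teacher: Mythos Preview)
Your proposal is correct and follows essentially the same route the paper sketches: run the Hadamard-test circuit of Figure~\ref{fig: IPE estimation circuit} (with the $S$-gate variant for the imaginary part), apply amplitude estimation to extract the cosine similarity to the appropriate precision, multiply by the coherently loaded norms, and boost the success probability via the powering lemma. The paper presents this corollary as a direct modification of Theorem~\ref{thm:innerproductestimation} with exactly these ingredients, so there is no substantive difference between your argument and theirs.
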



\subsubsection{Finding the minimum/maximum}
Finally, the last quantum subroutine is an algorithm to find the index of the minimum value in a list of real numbers.

\begin{theorem}[Finding the minimum {\cite{durr1996findingminimum}}] 
\label{thm:finding minimum}
Let there be quantum access to a vector $\Vec{u} \in \R^n$ via the operation $\ket j \ket{0} \to \ket j \ket{ u_j}$ in time $T$.
Then, we can find the minimum $u_{\min} = \min_{j\in[n]} |u_j|$ and its index $j_{\min} = \argmin_{j\in[n]} u_j$ w.p. greater than $1-\delta$ in time $O\left(T \sqrt n \log \left (\frac{1}{\delta}\right) \right)$.
\end{theorem}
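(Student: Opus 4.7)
The plan is to reproduce the Dürr--Høyer minimum finding strategy, adapted to output the index of the minimum absolute value, and then to boost the success probability by sequential repetition. First, I would pick a uniformly random index $j_0 \in [n]$, query the oracle to obtain $u_{j_0}$, and store $y := |u_{j_0}|$ as the current threshold together with $j_{\mathrm{cur}} := j_0$. The algorithm then proceeds in rounds: in each round, use the oracle to implement the marking predicate $f_y(j) := [|u_j| < y]$ in time $O(T)$ (a single oracle call followed by a reversible comparison), and run Grover-style search to produce an index $j'$ with $f_y(j')=1$; upon success, update $j_{\mathrm{cur}} \leftarrow j'$ and $y \leftarrow |u_{j'}|$, then repeat.

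Since the number of indices below the current threshold is unknown, the search step must be implemented via the standard ``exponential search'' variant of Grover/amplitude amplification (or fixed-point amplitude amplification, Theorem~\ref{thm: qomp  fixed point amp amp}), which, given an unknown number $k$ of marked items in an $n$-element list, finds a marked item in expected $O(T\sqrt{n/k})$ time. The core combinatorial lemma of Dürr--Høyer is that if the thresholds are updated by picking a uniformly random index in the current ``good'' set, then the expected total work summed over all rounds telescopes to $O(T\sqrt{n})$; this hinges on the observation that, because every intermediate index is a uniform sample of the set of indices whose values lie strictly below the previous threshold, the rank of the current threshold is a uniformly random element of its own level set, which controls the expected speed at which the number of surviving candidates contracts.

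By Markov's inequality, a single run of the above procedure returns the true minimizer $j_{\min} = \argmin_{j\in[n]}|u_j|$ within time $O(T\sqrt{n})$ with constant success probability, say at least $2/3$. To reach confidence $1-\delta$, I would run the procedure $O(\log(1/\delta))$ independent times, each with a cutoff on its running time (to convert the Las Vegas expectation bound into a Monte Carlo bound; see the amplification discussion referenced in Section~\ref{section: amplification of success probabilities}), and return the index achieving the smallest observed $|u_j|$ among the repetitions. A final oracle call confirms $u_{j_{\min}}$ so that the returned value and index are consistent, yielding total time $O(T\sqrt{n}\log(1/\delta))$ as claimed.

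The main obstacle is the expected-time analysis of the inner loop, not the boosting: one must justify that the ``random restart from a uniform good index'' property is preserved by Grover search (which outputs a uniformly random marked item upon success) and then carry out the harmonic-sum bound showing the telescoping $\sum_{k} \sqrt{n/k} = O(\sqrt{n})$ in expectation. Once this inner analysis is in place, amplification to probability $1-\delta$ is entirely standard.
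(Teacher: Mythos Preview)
The paper does not actually prove this theorem; it is imported directly from D\"urr--H{\o}yer with the attribution ``This result is due to \citet{durr1996findingminimum}.'' Your proposal goes further than the paper by sketching the standard D\"urr--H{\o}yer argument (random threshold, Grover search for sub-threshold indices, the telescoping $\sum_k \sqrt{n/k}=O(\sqrt{n})$ expected-cost analysis, Markov cutoff, and repetition), which is indeed the content of the cited reference. So your approach is correct and is precisely what the citation points to; the paper itself offers no independent proof to compare against.

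One minor remark: for the boosting step you take the best index across $O(\log(1/\delta))$ independent runs, which is valid for minimum finding (any run that succeeds forces the final answer to be correct). The paper's general toolkit (Lemma~\ref{lemma: amplification lemma (majority)}) would instead suggest a majority vote; your variant is slightly cleaner here since it exploits the monotone structure of the problem.
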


This result is due to \citet{durr1996findingminimum}.
It builds on Grover's search and queries the list quadratically less than its classical counterpart whenever the search is unstructured.
This routine requires access to the list's state preparation unitary, its inverse, and their controlled versions.
However, we often build an approximation of the state preparation unitary that we need.
Instead of having a map $\ket j \ket{0} \to \ket j \ket{ u_j}$, we have an approximation that produces $\ket j \ket{0} \to \sqrt{1-\delta_1}\ket j \ket{ \vec{u}_j} + \sqrt{\delta_1}\ket{j} |\vec{G}\rangle$, where $|\vec{G}\rangle$ is a garbage state orthogonal to $\ket{\vec{u}_j}$ and $\ket{\vec{u}_j}$ is a state-vector that upon measurement yields an output $\overline{u}_j$ such that $\abs{\overline{u}_j - u_j} \leq \epsilon$.
This is the case, for instance, when the entries of $u_j$ are computed using another quantum algorithm with an amplitude estimation routine, like with the \emph{Complex inner product estimation} from Corollary~\ref{coro: complex inner}.
\citet{wiebeKS15nearestneighbor} and \citet{chen2021quantum} show that we can still find the minimum using the approximated unitary, in time $\widetilde{O}(T\sqrt{n})$ with polylogarithmic overhead.

\begin{theorem}[Finding the minimum with an approximate unitary {\cite{chen2021quantum}}]
\label{thm: finding approximate version}
    Let $\delta, \epsilon \in (0,1)$. 
    Let there be quantum access to a vector $\vec{u} \in \R^n$ via a unitary that computes $\ket{j}\ket{0} \rightarrow \ket{j}\ket{u_j}$ in time $T$ and such that for every $j \in [n]$, after measuring the state $\ket{u_j}$, with high probability the measurement outcome $\overline{u}_j$ satisfies $\abs{\overline{u}_j - u_j} \leq \epsilon$.
    There exists a quantum algorithm that finds an index $j$ such that $u_j \leq \min_{k \in [n]} u_k + 2\epsilon$ with probability $\geq 1 - \delta$ in time $O(T\sqrt{n}~\polylog(n,\delta^{-1}))$.
\end{theorem}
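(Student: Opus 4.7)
The plan is to reduce to the exact minimum-finding result (Theorem~\ref{thm:finding minimum}) by first amplifying the given approximate unitary into a near-deterministic oracle for $\epsilon$-approximations of the entries of $\vec{u}$, and then arguing that the Grover-based search inside minimum finding is robust to the residual error in that oracle.

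First, I would boost reliability via median amplification. Running the given unitary $R = \Theta(\log(n/\delta))$ times in parallel on independent ancilla registers and applying a reversible classical circuit that extracts the median of the $R$ outcomes produces a unitary $\tilde U$ of cost $O(T\log(n/\delta))$ with action
\begin{align}
    \tilde U \ket{j}\ket{0} = \ket{j}\bigl(\alpha_j \ket{\tilde u_j}\ket{0}_{\mathrm{flag}} + \beta_j \ket{\mathrm{garb}_j}\ket{1}_{\mathrm{flag}}\bigr),
\end{align}
where a Chernoff bound on the original ``high probability'' guarantee yields $|\beta_j|^2 \le \delta^2/(c\, n\log^2(1/\delta))$ uniformly in $j$, and the value $\tilde u_j$ in the good branch is a deterministic real number with $|\tilde u_j - u_j|\le \epsilon$.

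Next, I would invoke the minimum-finding routine of Theorem~\ref{thm:finding minimum} on $\tilde U$ with target failure probability $\delta/2$. Conditioned on the flag being $\ket{0}_{\mathrm{flag}}$ throughout the $Q = O(\sqrt n \log(1/\delta))$ internal oracle calls, the algorithm is literally exact minimum finding applied to the deterministic list $(\tilde u_j)_j$, and it returns $j^\star = \argmin_j \tilde u_j$ with high probability. Using $|\tilde u_j - u_j|\le \epsilon$ for every $j$, I would then conclude
\begin{align}
u_{j^\star} - \epsilon \;\le\; \tilde u_{j^\star} \;\le\; \tilde u_k \;\le\; u_k + \epsilon \quad\text{for every } k\in[n],
\end{align}
i.e., $u_{j^\star} \le \min_k u_k + 2\epsilon$, which is the claimed approximation guarantee.

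The main obstacle is controlling the perturbation induced by the garbage amplitudes $\beta_j$ across the nested Grover iterations inside minimum finding. The cleanest way is a standard hybrid argument: treat $\tilde U$ as an $\eta$-approximation of the ideal oracle $\ket{j}\ket{0}\mapsto \ket{j}\ket{\tilde u_j}\ket{0}_{\mathrm{flag}}$ with $\eta = \max_j |\beta_j|$, and use that $Q$ successive oracle calls corrupt the global state by at most $Q\eta$ in Euclidean norm. Choosing the amplification level so that $Q\eta \le \delta/2$ requires exactly the uniform bound on $|\beta_j|^2$ above, and a final union bound against the algorithm's intrinsic failure mode gives total success probability $\ge 1-\delta$. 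The median-amplification step contributes only a $\log(n/\delta)$ overhead, so the total runtime is $O\bigl(T\sqrt n\,\polylog(n,\delta^{-1})\bigr)$, as claimed.
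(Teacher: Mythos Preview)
The paper does not prove this theorem; it is quoted from Chen and de Wolf (with an earlier variant attributed to Wiebe, Kapoor, and Svore) and used as a black box. I therefore assess your sketch on its own merits.

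Your plan---median-boost the oracle, run D\"urr--H{\o}yer, and control the garbage by a hybrid bound---is the right shape, but there is a genuine gap where you assert that ``the value $\tilde u_j$ in the good branch is a deterministic real number.'' The hypothesis only promises that \emph{measuring} $\ket{u_j}$ yields an $\epsilon$-close outcome with high probability; it does not say all good outcomes coincide. After $R$-fold parallel repetition and a coherent median, the good branch is a superposition over possibly many $\epsilon$-close values, still entangled with the $R$ work registers, not a single basis state $\ket{\tilde u_j}$. Hence there is no well-defined ideal oracle $\ket{j}\ket{0}\mapsto\ket{j}\ket{\tilde u_j}\ket{0}_{\mathrm{flag}}$ against which your hybrid argument can run, and the claimed $Q\eta$ perturbation bound does not follow as written.

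What actually needs to be argued is how the threshold-comparison oracle inside D\"urr--H{\o}yer behaves on this superposition. For a threshold $y$, indices with $u_j<y-\epsilon$ or $u_j>y+\epsilon$ are marked consistently across the entire good branch and the uncomputation is clean; only indices with $|u_j-y|\le\epsilon$ behave ambiguously. One then shows that each Grover stage either finds an index at least $\epsilon$ below the current threshold or certifies that none exists, which is exactly where the additive $2\epsilon$ slack in the final guarantee originates. This is the substance of the cited references; your sketch sidesteps it by positing a deterministic $\tilde u_j$ that need not exist.
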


In our QOMP algorithm, we are interested in finding the index of the \emph{maximum} value in a list, rather than the minimum.
However, if we have access to a unitary (or an approximation) $U: \ket{j}\ket{0}\to \ket{j}\ket{u_j}$, we can implement $U: \ket{j}\ket{0}\to \ket{j}\ket{-u_j}$ with arithmetic operations in $\widetilde{O}(1)$ time and leverage $\argmin(x) = \argmax(-x)$ to turn this routine into what we need.
Another observation is that we can modify the initial state to find the minimum (or maximum) among a subset of the indices of the list. 
The following corollary incorporates these two observations.

\begin{corollary}[Finding the maximum with an approximate unitary] 
\label{coro: finding approximate maximum}
    Let $\delta, \epsilon \in (0,1)$. 
    Let there be quantum access to a vector $\vec{u} \in \R^n$ via a unitary that computes $\ket{j}\ket{0} \rightarrow \ket{j}\ket{u_j}$ in time $T_u$ and such that for every $j \in [n]$, after measuring the state $\ket{u_j}$, with high probability the measurement outcome $\overline{u}_j$ satisfies $\abs{\overline{u}_j - u_j} \leq \epsilon$.    
    Let there also be access to a unitary $U_S$ that prepares a uniform superposition of a subset $S\subseteq [n]$ of indices, of size $d$, such that it can be implemented, inverted, and controlled in time $T_S$.
    There exists a quantum algorithm that finds an index $j$ such that $u_j \geq \max_{k \in S} (u_k - 2\epsilon)$ with probability $\geq 1 - \delta$ in time $O((T_u + T_S)\sqrt{d}~\polylog(n,\delta^{-1}))$.
\end{corollary}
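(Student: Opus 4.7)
The plan is to reduce this corollary to Theorem~\ref{thm: finding approximate version} by combining two standard manipulations: (i) flipping $\argmax$ into $\argmin$ via negation, and (ii) restricting the Grover-based search to the subset $S$ by swapping the initial uniform superposition. Since both modifications preserve the approximation guarantees of the underlying minimum-finding routine, the only thing to verify is that the running time scales with $\sqrt{d}$ (not $\sqrt{n}$) and inherits the extra $T_S$ cost.

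First I would build an approximate unitary $U_{-u}$ for the vector $-\vec{u}$. Given the access $U_u:\ket{j}\ket{0}\mapsto \ket{j}\ket{u_j}$ from the hypothesis, apply $U_u$ into a work register and then, reversibly, compute the two's-complement negation of the stored value into a fresh register. This uses $O(1)$ arithmetic gates on top of one call to $U_u$, so it runs in time $T_u + \widetilde{O}(1) = \widetilde{O}(T_u)$ and, upon measurement of the output register, still yields an outcome $\overline{w}_j$ with $|\overline{w}_j - (-u_j)| \leq \epsilon$ with high probability. Thus $U_{-u}$ satisfies the hypotheses of Theorem~\ref{thm: finding approximate version} for the vector $-\vec{u}$.

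Next I would replace the uniform-superposition initial state used inside the minimum-finding routine of Theorem~\ref{thm: finding approximate version}. The Dürr–Høyer algorithm (and its approximate-oracle version in~\cite{chen2021quantum}) is built on amplitude-amplification-style searches that start from a uniform superposition over the search domain; substituting $U_S\ket{0} = \frac{1}{\sqrt{d}}\sum_{i\in S}\ket{i}$ for the $n$-wide Hadamard preparation restricts the whole algorithm to the domain $S$, and the Grover-style iteration then incurs $O(\sqrt{d})$ (rather than $O(\sqrt{n})$) oracle calls, with each iteration paying $T_u + T_S$ for one call to $U_{-u}$ and to $U_S,U_S^\dagger$. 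Applying Theorem~\ref{thm: finding approximate version} to $-\vec{u}$ over $S$ then returns, with probability $\ge 1-\delta$, an index $j\in S$ with $-u_j \leq \min_{k\in S}(-u_k) + 2\epsilon$, which is exactly $u_j \geq \max_{k\in S}(u_k - 2\epsilon)$, in total time $O((T_u+T_S)\sqrt{d}\,\polylog(n,\delta^{-1}))$.

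The only nontrivial point is the subset restriction: one has to check that the correctness analysis of~\cite{chen2021quantum} carries over when the uniform distribution over $[n]$ is replaced by the uniform distribution over $S$. This is essentially transparent, since the analysis only uses that the initial state is a uniform superposition over the marked/unmarked partition of the search domain, and $U_S$ provides this over $S$. No other part of the argument depends on the specific domain; hence the guarantee of Theorem~\ref{thm: finding approximate version} lifts with $n \mapsto d$, and the quoted running time follows.
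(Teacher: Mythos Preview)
Your proof is correct and follows essentially the same approach the paper takes: the paper's justification (given in the paragraph preceding the corollary rather than as a formal proof) consists precisely of your two observations---negate the values with $\widetilde{O}(1)$ arithmetic to turn $\argmax$ into $\argmin$, and swap the initial uniform superposition for $U_S$ to restrict the search to $S$---after which Theorem~\ref{thm: finding approximate version} applies with $n$ replaced by $d$.
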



\subsubsection{Block-encodings, singular value transformation, and linear systems}
\label{section: block encodings}
A central tool in modern quantum algorithms is the ability to represent a matrix as a block of a larger unitary, known as a \emph{block-encoding}. 
Block-encodings allow us to simulate matrix transformations using quantum singular value transformation (QSVT), which in turn enables algorithmic primitives such as projections, pseudoinverse computation, and linear system solving. 
We summarize the main definitions and results that we will require.

\begin{definition} [Block-encoding \cite{chakraborty2019power, gilyen2019quantum}] 
\label{def: block-encoding}
Suppose that $A$ is an s-qubit operator, $\alpha, \epsilon \in \mathbb{R}^+$ and $q \in N$. We say that the $(s + q)$-qubit unitary $U_A$ is an $(\alpha, q, \epsilon)$ block-encoding of $A$, if
\begin{equation}
\label{eq: block encoding error}
    \left\|A-\alpha(\bra{0}^{\otimes q}\otimes \mathbb{I})U_A(\ket{0}^{\otimes q}\otimes \mathbb{I})\right\| \leq \epsilon,
\end{equation}
where $\|\cdot\|$ is the operator norm. 
\end{definition}

In words, a block-encoding embeds a (generally non-unitary, non-Hermitian) matrix $A$ into the top-left block of a larger unitary $U_A$, up to a normalization factor $\alpha$. 
This normalization satisfies $\alpha \geq \|A\|$, and can be tuned depending on the access model.
Note that a block-encoding of $A$ is roughly equivalent to a block-encoding of $A/\alpha$.

\begin{lemma}
\label{lemma: mini block encoding}
    Let $U_A$ be an $(\alpha, a, \epsilon)$ block-encoding of a matrix $A$. Then, $U_A$ is a $(1, a, \epsilon/\alpha)$ block-encoding of $A/\alpha$.
\end{lemma}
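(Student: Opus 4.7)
The plan is to derive the claim directly from the block-encoding definition by exploiting the positive homogeneity of the operator norm. First I would write out the defining inequality of Definition~\ref{def: block-encoding} applied to $U_A$, namely
\begin{align}
    \left\|A - \alpha(\bra{0}^{\otimes a}\otimes \mathbb{I})U_A(\ket{0}^{\otimes a}\otimes \mathbb{I})\right\| \leq \epsilon,
\end{align}
and then aim to rewrite it as the block-encoding condition for the matrix $A/\alpha$ with prefactor $1$ and error $\epsilon/\alpha$.

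The single algebraic manipulation is to divide both sides by $\alpha$ and absorb the scalar into the norm via $\|cM\|=|c|\|M\|$. This yields
\begin{align}
    \left\|\tfrac{A}{\alpha} - (\bra{0}^{\otimes a}\otimes \mathbb{I})U_A(\ket{0}^{\otimes a}\otimes \mathbb{I})\right\| \leq \tfrac{\epsilon}{\alpha},
\end{align}
which is precisely the condition, from Definition~\ref{def: block-encoding}, that $U_A$ is a $(1,a,\epsilon/\alpha)$ block-encoding of $A/\alpha$. There is essentially no obstacle; one only needs to observe that $\alpha\in\mathbb{R}^+$ is positive by definition, so the division is well defined and the inequality direction is preserved.
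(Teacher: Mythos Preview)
Your proposal is correct and takes essentially the same approach as the paper, which simply observes the block-encoding definition and divides the defining inequality by $\alpha$.
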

\begin{proof}
    Observe the definition of block-encoding and divide Eq.~(\ref{eq: block encoding error}) by $\alpha$.
\end{proof}

With quantum access to a matrix $A$ as in Def.~\ref{def: efficient quantum access matrix}, block-encodings can be obtained essentially with negligible overhead.
We report a result whose proof can be found in \citet[proof of Lemma 25, arxiv version]{chakraborty2019power}.

\begin{theorem}[Block-encoding from quantum access {\cite{chakraborty2019power}}]
\label{thm: qomp  block encoding from qa}
    Let there be quantum access to a matrix $A \in \C^{n \times m}$ as per Def~\ref{def: efficient quantum access matrix} in times $T_U$, $T_V$.
    Then there exist unitaries $U_R, U_L$ that can be implemented in time $\Ot(T_U + T_V)$ such that $U_R^\dagger U_L$ is a $(\norm{A}_F, \ceil{\log(n+m)}, \epsilon)$-block-encoding of $A$.
\end{theorem}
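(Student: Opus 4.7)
The plan is to construct $U_L$ and $U_R$ explicitly from the oracle unitaries $U$ and $V$ of Definition~\ref{def: efficient quantum access matrix}, and then verify the block-encoding identity (Eq.~(\ref{eq: block encoding error})) by a direct overlap calculation. The starting point is the factorization
\begin{align}
\frac{A_{ij}}{\|A\|_F} = \frac{\|\vec{a}_j\|}{\|A\|_F} \cdot \frac{A_{ij}}{\|\vec{a}_j\|},
\end{align}
in which the second factor is the amplitude on $\ket{i}$ of the column state $\ket{\vec{a}_j}$ prepared by $U$, while the first factor is the amplitude on $\ket{j}$ of the ``norms'' state prepared by $V$. Thus $A_{ij}/\|A\|_F$ is already latent in the two oracles, and it only remains to arrange them so that an overlap exhibits the product.

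Working with an $n$-dimensional register $A$ and an $m$-dimensional register $B$, I would set
\begin{align}
U_L &: \ket{0}_A \ket{j}_B \mapsto \ket{\vec{a}_j}_A \ket{j}_B \quad (\text{one call to } U), \\
U_R &: \ket{i}_A \ket{0}_B \mapsto \ket{i}_A \otimes (V\ket{0})_B \quad (\text{i.e., } I_A \otimes V),
\end{align}
implemented in times $O(T_U)$ and $O(T_V)$ respectively. A direct computation then yields
\begin{align}
\bra{i}_A \bra{0}_B \, U_R^\dagger\, U_L \, \ket{0}_A \ket{j}_B = \frac{\|\vec{a}_j\|}{\|A\|_F} \cdot \frac{A_{ij}}{\|\vec{a}_j\|} = \frac{A_{ij}}{\|A\|_F},
\end{align}
which is exactly the block-encoding identity with normalization $\alpha=\|A\|_F$.

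Merging the two zero-prepared registers into a single ancilla block, and embedding the non-square $A$ into the square $\tilde{A}=\begin{pmatrix}0&A\\0&0\end{pmatrix}\in\C^{(n+m)\times(n+m)}$ as required by Definition~\ref{def: block-encoding}, the ancilla register has size $q=\ceil{\log(n+m)}$, matching the theorem statement; the full gate complexity of $U_R^\dagger U_L$ is $\Ot(T_U+T_V)$. The error parameter $\epsilon$ covers approximate implementations of the oracles: if $U$ and $V$ are exact then $\epsilon=0$, and otherwise a triangle-inequality argument bounds the block-encoding error by the sum of the implementation errors of $U$ and $V$.

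The main obstacle is not the algebraic identity, which is a single line, but the bookkeeping required to fit the construction into Definition~\ref{def: block-encoding} for a non-square $A$: the input zero lives in register $A$ while the output zero lives in register $B$, so presenting $U_R^\dagger U_L$ as a block-encoding in the standard sense requires padding both registers to a common size and introducing a flag qubit that marks the zero sub-blocks of the embedding $\tilde{A}$. Once this padding is in place, the overlap identity carries over verbatim and the claimed $(\|A\|_F,\ceil{\log(n+m)},\epsilon)$-block-encoding parameters follow.
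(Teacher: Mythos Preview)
The paper does not supply its own proof of this statement; it simply cites the proof of Lemma~25 in the arXiv version of \citet{chakraborty2019power}. Your construction is precisely the standard one found there: the factorization $A_{ij}/\|A\|_F=(\|\vec{a}_j\|/\|A\|_F)\cdot(A_{ij}/\|\vec{a}_j\|)$, the choice of $U_L$ built from the column oracle $U$ and $U_R=I\otimes V$, the one-line overlap computation, and the embedding of the rectangular $A$ into a square matrix of size $n+m$ (accounting for the $\lceil\log(n+m)\rceil$ ancilla count) all match that reference. Your proposal is correct and coincides with the cited source.
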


We can also create block-encodings of subdictionaries. 
Indeed, given access to the dictionary $D$ and a set $\Lambda$, we can use $U=U_D$ and $V=U_\Lambda$ to obtain a block-encoding of the restricted dictionary $D_\Lambda$.

If the matrix is stored in a QRAM-based data structure (Sec.~\ref{sec: qram}), we obtain analogous guarantees with normalization factor $\|A\|_F$ or the refined $\mu_p(A)$ parameter (Def.~\ref{def:mu}). 

\begin{theorem}[Implementing block-encodings from quantum data structures {\cite[Theorem 4]{chakraborty2019power}}] 
\label{lem: structure to block encoding}
Let $A \in \C^{n\times m}$.
\begin{enumerate}
    \item Fix $p \in [0,1]$. If $A^{(p)}$, and $(A^{(1-p)})^T$ are stored in quantum accessible data structures, then there exist unitaries $U_R$ and $U_L$ that can be implemented in time $O(\polylog(nm/\epsilon))$ such that $U_R^\dagger U_L$ is a $(\mu_p(A), \ceil{log(n+m+1)}, \epsilon)$-block-encoding of $\overline{A}$.
    \item On the other hand, if $A$ is stored in quantum accessible data structure, then there exist unitaries $U_R$ and $U_L$ that can be implemented in time $O(\polylog(nm/\epsilon))$ such that $U_R^\dagger U_L$ is a $(\norm{A}_F, \ceil{log(n+m+1)}, \epsilon)$-block-encoding of $\overline{A}$.
\end{enumerate}
\end{theorem}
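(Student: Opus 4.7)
The plan is to construct, in each of the two cases, a pair of unitaries $U_L$ and $U_R$ that act on an ancilla register of $\ceil{\log(n+m+1)}$ qubits plus the working register, such that
\begin{align}
    (\bra{0}^{\otimes q}\otimes \I)\, U_R^\dagger U_L \,(\ket{0}^{\otimes q}\otimes \I)
    \;=\; \frac{A}{\alpha}\qquad(\text{up to $\epsilon/\alpha$ in operator norm}),
\end{align}
with $\alpha=\|A\|_F$ in case (2) and $\alpha=\mu_p(A)$ in case (1). By Definition~\ref{def: block-encoding} (and Lemma~\ref{lemma: mini block encoding}), this exhibits $U_R^\dagger U_L$ as the required block-encoding. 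The key enabling ingredient is that the KP-tree data structures (Fig.~\ref{fig:column tree}) allow coherent preparation, in time $O(\polylog(nm/\epsilon))$, of any superposition whose amplitudes are prefix-sums of squared norms stored in the tree.

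For the Frobenius-norm case, the construction I would use is the classical one of factoring $A/\|A\|_F$ as the product of an isometry of normalized columns with a diagonal matrix of column weights. Concretely, take $U_L$ to map $\ket{0}_a\ket{j}_b \mapsto |\vec{a}_j\rangle_a \ket{j}_b$, implemented via the per-column KP-trees of Theorem~\ref{thm:efficient data structure}, and take $U_R$ to map $\ket{i}_a\ket{0}_b \mapsto \ket{i}_a \cdot \tfrac{1}{\|A\|_F}\sum_j \|\vec{a}_j\|\ket{j}_b$, implemented via the global norm tree. A direct computation gives
\begin{align}
    (\bra{i}_a\bra{0}_b)\,U_R^\dagger U_L\,(\ket{0}_a\ket{j}_b)
    \;=\; \langle i|\vec{a}_j\rangle \cdot \frac{\|\vec{a}_j\|}{\|A\|_F}
    \;=\; \frac{A_{ij}}{\|A\|_F},
\end{align}
which is precisely the required matrix element (complex conjugation simply reflects which of $U_L,U_R$ carries the column amplitudes, matching the $\overline{A}$ in the statement).

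For the $\mu_p$ case, I would replace this factorization by the entrywise decomposition $A/\mu_p(A) = P \circ Q$ used in the definition of $\mu_p$ (Def.~\ref{def:mu}), where rows of $P$ have $\ell_{2p}$ norm bounded by $\sqrt{s_{2p}(A)}$ and columns of $Q^T$ have $\ell_{2(1-p)}$ norm bounded by $\sqrt{s_{2(1-p)}(A^T)}$. Assuming the hypotheses of case (1), each of these matrices is stored in its own KP-tree, and one constructs $U_L$ from the column tree of $Q^T$ and $U_R$ from the row tree of $P$, again so that the overlap realizes $A_{ij}/\mu_p(A)$. Both constructions require only $O(\polylog(nm/\epsilon))$ KP-tree queries plus a handful of arithmetic gates.

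The routine-but-delicate step will be the error analysis: KP-tree state preparation is only $\epsilon'$-accurate in $\ell_2$ for $\epsilon'$ chosen with $O(\log(1/\epsilon'))$ bits of amplitude precision. I would track this through the overlap computation by a standard triangle-inequality argument, setting $\epsilon' = \Theta(\epsilon/\alpha)$ so that the resulting operator-norm deviation in the top-left block is at most $\epsilon$. The minor bookkeeping obstacles — zero-norm rows/columns, the correct placement of complex conjugates to land on $\overline{A}$ rather than $A$, and ensuring the ancilla register width matches $\ceil{\log(n+m+1)}$ after padding — are all handled by the same KP-tree subtleties already addressed in Theorem~\ref{thm:efficient data structure} and in the Chakraborty et al. construction we are adapting.
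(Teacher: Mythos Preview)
The paper does not supply its own proof of this statement: it is quoted verbatim as \cite[Theorem~4]{chakraborty2019power} and used as a black box. Your proposal correctly reconstructs the standard Chakraborty--Gily\'en--Jeffery argument (column/row state preparation from KP-trees for the Frobenius case, and the entrywise $P\circ Q$ factorization for the $\mu_p$ case), which is exactly the construction the paper is invoking by citation; so there is nothing further to compare.
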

Even in this case, we can efficiently implement block-encodings of subdictionaries $D_\Lambda$.

Our main reason to consider block-encodings is that they can be combined with polynomial transformations of singular values to apply approximate matrix functions. 
\begin{definition}[Singular value transformation]
    Let $A \in \C^{n\times m}$ be a matrix with singular value decomposition $A = \sum_i \sigma_{i}\ket{u_i}\langle{v_i}|$.
    We define singular value transformation by a polynomial $P \in \C[x]$ as
    \begin{equation}
        P^{(SV)}(A) = \begin{cases}
            \sum_i P(\sigma_{i})\ket{u_i}\bra{v_i} & \text{if $P$ is odd}\\
            \sum_i P(\sigma_{i})\ket{v_i}\bra{v_i} & \text{if $P$ is even}.
        \end{cases}
    \end{equation}
    $P$ is odd if all coefficients of even powers of $x$ are $0$ and even if all coefficients of odd powers of $x$ are $0$.
\end{definition}

In practice, SVT is implemented through QSVT~\cite{gilyen2019quantum}, which applies polynomial approximations of desired functions of $A$ by composing controlled block-encodings. 
The circuit complexity scales linearly with the polynomial degree, enabling approximations of spectral projectors, inverses, and more.
We include more details about QSVT circuits, polynomial approximations, and projections in Appendix~\ref{appendix: singular value transformation}.

As a central application, QSVT enables efficient quantum linear system solvers.  
We use the following result from~\citet{chakraborty2022quantum}, which refines the HHL~\cite{HHL2009} approach using block-encodings and variable-time amplitude amplification~\cite{ambainis2012variable}.
We adapt the formulation using our Lemma~\ref{lemma: mini block encoding} to our convenience.

\begin{theorem}[Quantum Linear Systems via QSVT~{\cite[Theorem 28]{chakraborty2022quantum}}]
\label{theorem: QSVT linear systems}
    Let $\epsilon, \delta >0 $. Let $A$ be a matrix such that its non-zero singular values lie in $[\gamma, \alpha]$. Suppose that for $\epsilon = o\left(\frac{\gamma^3\delta}{\alpha^2 \log^2 (\frac{\alpha}{\gamma\delta})}\right)$, we have access to $U_A$ which is an $(\alpha, a, \epsilon)$-block-encoding of $A$, implemented with cost $T_A$. Let there be quantum access to $|\vec{b}\rangle$ in cost $T_b$. Then there exists a quantum algorithm that outputs a state $\overline{\ket{\vec{x}}}$ such that  $\| \overline{\ket{\vec{x}}} - \frac{A^+|\vec{b}\rangle}{\|A^+|\vec{b}\rangle\|}\| \leq \delta$ at a cost of 
    \begin{align}
        O\left(\frac{\alpha}{\gamma} \log(\frac{\alpha}{\gamma}) \left(T_A \log\left(\frac{\alpha}{\gamma\delta}\right) + T_b\right)\right)
    \end{align}
    using $O(\log{}(\frac{\alpha}{\gamma}))$ additional qubits.
\end{theorem}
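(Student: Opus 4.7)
The plan is to construct a block-encoding of (a suitably normalized multiple of) the pseudoinverse $A^+$ via quantum singular value transformation applied to $U_A$, apply it to $|\vec{b}\rangle$, and amplify the success branch using variable-time amplitude amplification~\cite{ambainis2012variable} so as to pay only linearly in the condition number $\kappa := \alpha/\gamma$ rather than quadratically.

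First, I would invoke a polynomial approximation of $1/x$ on $[1/\kappa, 1]$. Using the standard constructions of bounded odd polynomials approximating $1/x$ (see \citet{gilyen2019quantum}), there is an odd polynomial $P$ of degree $d = O\!\bigl(\kappa \log(\kappa/\delta')\bigr)$ satisfying $\bigl|P(x) - c/x\bigr| \le \delta'$ on $[1/\kappa, 1]$, with $c = \Theta(1/\kappa)$ and $|P(x)| \le 1$ on $[-1,1]$. By Lemma~\ref{lemma: mini block encoding}, $U_A$ is a $(1, a, \epsilon/\alpha)$-block-encoding of $A/\alpha$, whose nonzero singular values lie in $[1/\kappa, 1]$. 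Applying QSVT with $P$ yields a $(1, a+O(1), \delta'')$-block-encoding of $P^{(SV)}(A/\alpha) \approx c\alpha A^+$, with $\delta'' \le \delta' + O(d\,\epsilon/\alpha)$. Choosing $\delta' = \Theta(\gamma\delta/\alpha)$ up to logarithmic factors and propagating the induced error through is precisely what forces the smallness condition $\epsilon = o(\gamma^3\delta/(\alpha^2\log^2(\alpha/(\gamma\delta))))$ stated in the hypothesis on the input block-encoding accuracy.

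Second, I would prepare $|\vec{b}\rangle$ with $U_b$, apply this QSVT circuit, and postselect the ancilla register on the all-zero state. The resulting (unnormalized) branch is proportional to $A^+|\vec{b}\rangle$ with success amplitude $\|c\alpha A^+|\vec{b}\rangle\| = \Omega(1/\kappa)$; naive fixed-point amplification (Theorem~\ref{thm: qomp  fixed point amp amp}) would therefore require $O(\kappa)$ rounds of the subcircuit, each of cost $O(d\,T_A + T_b)$, yielding the undesirable $\widetilde O(\kappa^2 T_A + \kappa T_b)$ scaling. To shave one factor of $\kappa$, I would decompose the spectral filter into a logarithmic sequence of nested polynomial filters of geometrically increasing degrees $2, 4, \dots, d$, each isolating singular values in a dyadic band $[2^{-\ell}, 2^{-\ell+1}]$, and combine them via VTAA. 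Because singular directions well-separated from the small end of the spectrum are already resolved by short circuits, the average effective runtime is $\widetilde O(\kappa)$ calls to $U_A$ instead of $\widetilde O(\kappa^2)$.

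The main obstacle is the VTAA step: one must verify that the QSVT filter admits a clean decomposition into a variable-stopping-time algorithm with well-controlled intermediate block-encoding norms, bound the success probability at each intermediate stage, and propagate the per-segment QSVT approximation errors through the VTAA amplification so that the final state error remains at most $\delta$. Once this accounting is complete, straightforward bookkeeping yields the claimed $O\bigl((\alpha/\gamma)\log(\alpha/\gamma)\bigl(T_A \log(\alpha/(\gamma\delta)) + T_b\bigr)\bigr)$ query cost and $O(\log(\alpha/\gamma))$ extra qubits.
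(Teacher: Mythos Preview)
The paper does not prove this statement: it is quoted verbatim as a result from \citet{chakraborty2022quantum} (with only the cosmetic reparametrization via Lemma~\ref{lemma: mini block encoding}), so there is no ``paper's own proof'' to compare against. Your sketch is nonetheless an accurate outline of the argument in the cited source---QSVT with an odd polynomial approximation of $1/x$ on $[1/\kappa,1]$, followed by variable-time amplitude amplification to convert the naive $\widetilde O(\kappa^2)$ scaling into $\widetilde O(\kappa)$---and correctly identifies the VTAA bookkeeping as the nontrivial step. One small technical slip: the block-encoding error after QSVT propagates as $O(d\sqrt{\epsilon/\alpha})$ (per Lemma~\ref{lemma: qomp robustness of singular value transformation}), not $O(d\,\epsilon/\alpha)$ as you wrote; this square-root dependence is what actually produces the stated smallness condition on $\epsilon$ once you back-substitute $d=\Theta(\kappa\log(\kappa/\delta))$ and account for the amplification of state errors by a factor $\kappa$.
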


In summary, block-encodings provide a unifying interface for linear-algebraic primitives in quantum algorithms. 
Whether obtained from oracle-based access (Theorem~\ref{thm: qomp  block encoding from qa}) or from QRAM-based data structures (Theorem~\ref{lem: structure to block encoding}), they allow QSVT to implement functions of $A$, including pseudoinverses as in Theorem~\ref{theorem: QSVT linear systems}. 
This will be the key tool enabling projections in QOMP and coefficient recovery in tomography.

\subsubsection{Sparse tomography in an orthogonal basis}
We recall a useful result from \citet{van2022quantum}, which addresses sparse tomography in the computational basis when an upper bound on the sparsity $k$ is known.
Intuitively, in this regime, tomography should depend only on the precision $\epsilon$ and the number of significant coefficients $k$ rather than on the full dimension $N$.

\begin{theorem}[Orthogonal sparse tomography~{\cite{van2022quantum}}]
\label{theorem: sparse tomography - orthogonal}
    Let $\ket{\varphi} = \sum_{j \in [d]} \alpha_j\ \ket{j}$ be a quantum state, and $U\ket{0} = \ket{\varphi}$. Let $0 < \delta < 1$, and let $k$ be such that $|\{j \in [d]: |\alpha_j| \geq \epsilon\sqrt{\frac{k}{N}}\}| \leq k$.
    There is a quantum algorithm that, with probability at least $1-\delta$, outputs a $O(k \log(k)\log(1/\delta))$-sparse $\overline{\vec{\alpha}}$ such that $\|\overline{\vec{\alpha}} - \vec{\alpha}\| \leq \epsilon$ using $\Ot(\frac{k}{\epsilon}\mathrm{polylog}(1/\delta))$ applications of $U$ and its inverse, and polynomially many additional gates.
\end{theorem}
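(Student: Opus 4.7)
The strategy is to split the algorithm into a \textbf{support identification} stage producing a short list $S \subseteq [d]$ that, with high probability, contains every \emph{heavy} index $j$ with $|\alpha_j| \ge \epsilon\sqrt{k/N}$, and a \textbf{coefficient estimation} stage that estimates $\alpha_j$ for $j \in S$. The returned vector $\overline{\vec{\alpha}}$ is zero off $S$ and equal to the estimates on $S$, hence automatically $O(|S|)$-sparse.

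For support identification, I would iteratively extract one new heavy index at a time. Letting $H = \{j : |\alpha_j| \ge \epsilon\sqrt{k/N}\}$ (so $|H| \le k$ by hypothesis), each extraction applies fixed-point amplitude amplification (Theorem~\ref{thm: qomp  fixed point amp amp}) to $U$ composed with the projector onto $H \setminus S_{\text{current}}$, where the set-membership check is implemented by classical bookkeeping on the measurement register. Since each target index carries probability at least $\epsilon^2 k / N$, one extraction costs $\Ot(\sqrt{N/(\epsilon^2 k)})$ queries, and iterating $O(k \log k \log(1/\delta))$ times, with per-round failure amplified to $\delta/\mathrm{poly}(k)$, fills $S$. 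For coefficient estimation, each $\alpha_j = \langle j|\varphi\rangle$ is recovered in its real and imaginary parts via the Hadamard-test construction behind Corollary~\ref{coro: complex inner} (with $U_v = \I$ and $U_c = U$) composed with the absolute-value amplitude estimation of Theorem~\ref{thm: amplitude estimation amplitude}. Setting the per-entry target precision to $\epsilon_0 = \epsilon/(2\sqrt{|S|})$ and amplifying each run to success probability $1 - \delta/(2|S|)$ yields $|\overline{\alpha}_j - \alpha_j| \le \epsilon_0$ simultaneously over $S$ by a union bound.

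The error then decomposes as
\begin{align}
    \|\overline{\vec{\alpha}} - \vec{\alpha}\|_2^2 \le \sum_{j \in S}|\overline{\alpha}_j - \alpha_j|^2 + \sum_{j \notin S}|\alpha_j|^2,
\end{align}
where the first sum is at most $|S|\epsilon_0^2 = \epsilon^2/4$ and the second is controlled by the sparsity hypothesis once $H \subseteq S$. The main obstacle is matching the advertised $\Ot(k/\epsilon)$ query complexity: sequentially composing the two stages only yields $\Ot(\sqrt{Nk}/\epsilon + k^{3/2}/\epsilon)$, because heavy-hitter search in an $N$-dimensional distribution with per-item mass as small as $\epsilon^2 k / N$ is expensive, and a $\sqrt{|S|}$ union-bound overhead is paid on the coefficient side. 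I expect the proof to close this gap by (i) harvesting multiple heavy indices per amplification round via a block-structured amplitude-amplification scheme rather than one at a time, and (ii) running the $|S|$ amplitude estimations as a single coherent, block-encoded pure-state tomography circuit in the spirit of~\cite{van2023quantum}, so that the final $\ell_2$-error bound is paid directly and not through a lossy $\ell_\infty \to \ell_2$ conversion.
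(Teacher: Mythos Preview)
The paper does not prove this statement. Theorem~\ref{theorem: sparse tomography - orthogonal} is imported as a background primitive, attributed to~\cite{van2022quantum} (van Apeldoorn, Cornelissen, Gily\'en, Nannicini), and is later invoked only as a black box inside the proof of Corollary~\ref{corollary: sparse coefficient tomography}. There is no ``paper's own proof'' to compare your attempt against; the intended argument lives entirely in the cited reference.

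Two remarks on your sketch nonetheless. First, you already acknowledge that your sequential scheme reaches only $\Ot(\sqrt{Nk}/\epsilon + k^{3/2}/\epsilon)$ rather than the stated $\Ot(k/\epsilon)$; your speculation that the gap is closed by a coherent, block-structured treatment in the spirit of~\cite{van2023quantum} is pointing at exactly the right paper (it is the same work), but the mechanism there is not iterated fixed-point amplification followed by per-coordinate Hadamard tests. Second, your tail bound does not follow from the stated hypothesis: knowing that $|\alpha_j| < \epsilon\sqrt{k/N}$ for every $j \notin H$ only yields
\begin{align}
    \sum_{j \notin S} |\alpha_j|^2 \;\le\; N \cdot \frac{\epsilon^2 k}{N} \;=\; \epsilon^2 k,
\end{align}
which is a factor $k$ away from the $\epsilon^2$ you need. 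So even setting the query-complexity issue aside, zeroing out the light coordinates is not by itself sufficient to reach $\|\overline{\vec{\alpha}} - \vec{\alpha}\| \le \epsilon$ under this hypothesis alone.
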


The threshold condition guarantees that at most $k$ coefficients of $\ket{\varphi}$ are significantly larger than $\epsilon \sqrt{k/N}$.
In particular, if $\ket{\varphi}$ is exactly $k$-sparse, then the condition holds automatically: precisely $k$ amplitudes are nonzero, and all others vanish.
This theorem establishes that in an orthogonal basis, sparse tomography requires query complexity nearly linear in $k$ and $1/\epsilon$, independent of the ambient dimension $N$.
We will leverage this primitive in our analysis of sparse tomography with non-orthogonal dictionaries, to recover the coefficients once we learn the sparse support.


\subsubsection{Las Vegas, Monte Carlo, and success probability}
\label{section: amplification of success probabilities}
To conclude this background section, we report some useful tools to tame randomized algorithms.
Las Vegas and Monte Carlo are two terms that indicate two different families of randomized algorithms.
Las Vegas algorithms are algorithms that always output the correct answer, but whose running time is a random variable.
On the other hand, Monte Carlo algorithms have a bounded running time, but their outputs are correct with a certain probability.
We first show how to turn Las Vegas algorithms of known expected time into Monte Carlo, and then discuss how to boost the success probability of Monte Carlo algorithms.

The main tool to turn a Las Vegas algorithm into a Monte Carlo is a famous result in probability.
\begin{theorem}[Markov's inequality]
\label{theorem: markovs inequality}
    Let $X$ be a non-negative random variable and $a > 0$, then $\mathrm{Pr}[X \geq a] \leq \frac{E(X)}{a}$.
\end{theorem}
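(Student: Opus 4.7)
The plan is to give the standard textbook proof via an indicator-function bound, which is the cleanest route and works uniformly for discrete, continuous, or mixed distributions. The key inequality to establish is the pointwise relation $X \geq a \cdot \mathbf{1}_{\{X \geq a\}}$, from which the claim follows by taking expectations and dividing by $a > 0$.

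Concretely, I would proceed in three short steps. First, I would define the indicator random variable $Y = \mathbf{1}_{\{X \geq a\}}$, which takes the value $1$ on the event $\{X \geq a\}$ and $0$ otherwise, and observe that $\mathrm{E}[Y] = \mathrm{Pr}[X \geq a]$. Second, I would verify the pointwise inequality $X \geq a Y$: on the event $\{X \geq a\}$ we have $a Y = a \leq X$ by definition of the event, and on the complementary event we have $a Y = 0 \leq X$ by the non-negativity hypothesis on $X$. Third, I would apply monotonicity of expectation to conclude $\mathrm{E}[X] \geq a \, \mathrm{E}[Y] = a \, \mathrm{Pr}[X \geq a]$, and rearrange using $a > 0$ to obtain $\mathrm{Pr}[X \geq a] \leq \mathrm{E}[X]/a$.

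There is essentially no obstacle here: the only subtlety is ensuring that the non-negativity of $X$ is used correctly in the step $a Y \leq X$ on the event $\{X < a\}$, where without $X \geq 0$ one could have $X$ negative and the inequality would still hold trivially, but non-negativity is what makes the bound tight and avoids sign issues on the other side. In the paper's context, this result will be applied to running times of Las Vegas algorithms, which are manifestly non-negative, so the hypothesis is automatically satisfied. I would therefore keep the proof to a few lines and move directly to the intended application of converting a Las Vegas algorithm with bounded expected runtime into a Monte Carlo algorithm by halting after $a = c \cdot \mathrm{E}[X]$ steps for a suitable constant $c > 1$.
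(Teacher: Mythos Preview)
Your proof is correct and is the standard textbook argument. The paper does not actually prove this theorem at all; it simply states Markov's inequality as a well-known result and immediately proceeds to its application (converting Las Vegas to Monte Carlo by truncating at $4E(X)$). One small quibble: your commentary on the role of non-negativity is slightly garbled---on the event $\{X < a\}$, if $X$ were allowed to be negative the pointwise inequality $X \geq aY = 0$ would \emph{fail}, not hold trivially---but your actual proof steps use the hypothesis correctly, so this does not affect validity.
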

Indeed, consider a Las Vegas algorithm.
Let $X$ be a random variable expressing the its running time, and $E(X)$ its expected value. 
Terminating the algorithm when the running time exceeds $4E(X)$ returns the correct solution with probability $\geq 2/3$.
In fact, $\mathrm{Pr}[X \geq 4E(X)] \leq 1/4$.

In our work, we prefer to deal with Monte Carlo algorithms, but we will encounter algorithms that have both a random running time and a certain probability of success.
One can turn them into worst-case bounded time algorithm using Markov's inequality at the expense of the success probability.
However, we always find ways to boost success probabilities and still obtain an algorithm with a deterministic worst-case time.
Throughout the work, we will require that an algorithm terminates \emph{with high probability} (e.g., with probability $\geq 2/3$). 
The exact success probability is not relevant for the asymptotic complexity.
In fact, for any algorithm succeeding with probability sufficiently higher than $1/2$, we can efficiently boost the success probability to an arbitrary value $\geq 1-\delta$ by running the routine $O(\log(1/\delta))$ and processing the outputs.
We use two amplification methods, depending on the algorithm's output range.

The first method arbitrarily boosts the success probability of any randomized approximation algorithm which outputs an $\epsilon$-estimate of a real value with high probability by taking the median of the outputs across several runs.
This result is known as powering lemma or median lemma, and we report it using the formulation of \citet{montanaro2015quantum}.

\begin{lemma}[Powering lemma {\cite{jerrum1986random}}]
    \label{lemma: powering lemma (median)}
    Let $\mathcal{A}$ be a (classical or quantum) algorithm which aims to estimate some quantity $\mu$, and whose output $\overline{\mu}$ satisfies $\abs{\mu - \overline{\mu}} \leq \epsilon$ except with probability $\gamma$, for some fixed $\gamma < 1/2$. Then, for any $\delta > 0$, it suffices to repeat $\mathcal{A}$ $O(\log(1/\delta))$ times and take the median to obtain an estimate which is accurate to within $\epsilon$ with probability at least $1 - \delta$.
\end{lemma}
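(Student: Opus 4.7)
The plan is to use the standard \emph{median-of-estimates} argument combined with a Chernoff-type concentration bound. I would run $\mathcal{A}$ independently $T$ times (for a $T$ to be chosen as a function of the fixed gap $1/2 - \gamma$ and of $\delta$), obtaining estimates $\overline{\mu}_1, \ldots, \overline{\mu}_T$, and output their median $\widehat{\mu}$.

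First I would introduce the indicator random variables $X_i = \mathbf{1}[\,|\overline{\mu}_i - \mu| > \epsilon\,]$. By hypothesis each $X_i$ is independent and $\Pr[X_i = 1] \leq \gamma$. The key deterministic observation is that if $|\widehat{\mu} - \mu| > \epsilon$, then either $\widehat{\mu} > \mu + \epsilon$ or $\widehat{\mu} < \mu - \epsilon$; in the first case at least half of the $\overline{\mu}_i$ exceed $\mu + \epsilon$, in the second case at least half fall below $\mu - \epsilon$, and in either case at least $\lceil T/2 \rceil$ of the $X_i$ equal $1$. Consequently,
\[
\Pr\bigl[\,|\widehat{\mu} - \mu| > \epsilon\,\bigr] \;\leq\; \Pr\!\left[\sum_{i=1}^{T} X_i \;\geq\; T/2\right].
\]

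Next I would bound the right-hand side using a Chernoff/Hoeffding inequality. Since $\mathbb{E}[\sum_i X_i] \leq \gamma T$ and $\gamma < 1/2$ is fixed, the event $\{\sum_i X_i \geq T/2\}$ is a deviation above the mean by the constant margin $(1/2 - \gamma)T$, so
\[
\Pr\!\left[\sum_{i=1}^{T} X_i \geq T/2\right] \;\leq\; \exp\!\bigl(-2(1/2 - \gamma)^2 T\bigr).
\]
Choosing $T = \bigl\lceil \tfrac{\ln(1/\delta)}{2(1/2-\gamma)^2} \bigr\rceil = O(\log(1/\delta))$, where the hidden constant depends only on the fixed gap $1/2 - \gamma$, makes this at most $\delta$, which is exactly the claimed guarantee. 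The total overhead is $T$ independent runs of $\mathcal{A}$ plus an $O(T)$-time median computation, matching the stated $O(\log(1/\delta))$ repetition bound.

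The argument is essentially routine; the only genuine subtlety is the deterministic step translating a bad median into a majority of bad samples, which requires the small case split on whether the median overshoots or undershoots $\mu$. Everything else reduces to a direct Chernoff application, and independence across repetitions of $\mathcal{A}$ is used exactly once, at that step.
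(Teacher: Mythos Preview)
Your proof is correct and is precisely the standard median-trick argument. The paper does not actually prove this lemma; it is stated with a citation to \cite{jerrum1986random} (in the formulation of \cite{montanaro2015quantum}) and used as a black box, so there is no paper-proof to compare against. Your write-up would serve perfectly well as the omitted proof.
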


Similarly, if an algorithm ranges over a finite set, we can boost its success probability by majority vote.

\begin{lemma}[Discrete amplification lemma]
    \label{lemma: amplification lemma (majority)}
    Let $\mathcal{A}$ be a (classical or quantum) algorithm whose outputs lie in a finite set $X$.
    On every input, $\mathcal{A}$ returns the correct value except with probability $\gamma$, for some fixed $\gamma < 1/2$. Then, for any $\delta > 0$, it suffices to repeat $\mathcal{A}$ $O(\log(1/\delta))$ times and return the element that appears most often to obtain the correct result with probability at least $1 - \delta$.
\end{lemma}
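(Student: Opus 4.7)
The plan is to prove this via a standard Chernoff/Hoeffding concentration argument applied to the indicator variables for correctness of each independent run. First I would fix an input, let $x^* \in X$ denote its (unique) correct output, and run the algorithm $T$ times independently, obtaining outputs $Y_1,\dots,Y_T$. Define the indicators $Z_i = \mathbf{1}[Y_i = x^*]$, so by assumption $\Pr[Z_i = 1] \ge 1 - \gamma =: p > 1/2$, and the $Z_i$ are mutually independent.

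The key observation is that if the correct answer $x^*$ appears in strictly more than $T/2$ of the runs, then $x^*$ is the unique mode and the majority-vote procedure returns it; otherwise the procedure may return something else. Hence the failure event is contained in $\{\sum_{i=1}^T Z_i \le T/2\}$. Since $E\bigl[\sum_i Z_i\bigr] \ge pT$ with $p > 1/2$, I would apply a multiplicative Chernoff bound (or equivalently Hoeffding's inequality) to get
\begin{align}
\Pr\Bigl[\sum_{i=1}^T Z_i \le T/2\Bigr] \;\le\; \exp\!\bigl(-c_\gamma\, T\bigr),
\end{align}
where $c_\gamma > 0$ is a constant depending only on the gap $p - 1/2 = 1/2 - \gamma$ (for instance one may take $c_\gamma = 2(1/2-\gamma)^2$ via Hoeffding).

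Setting $T = \lceil c_\gamma^{-1} \ln(1/\delta) \rceil = O(\log(1/\delta))$ then makes the failure probability at most $\delta$, which is exactly the desired guarantee. The number of invocations of $\mathcal{A}$ is $T = O(\log(1/\delta))$, and taking a majority vote over a list of $T$ elements of $X$ costs only $O(T)$ additional comparisons, so the overhead is negligible compared to running $\mathcal{A}$.

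There is essentially no hard step here; the only mild subtlety is confirming that "more than half correct" implies "most frequent", which holds because no other element of $X$ can then occupy more than $T/2$ of the slots. This is why the argument works even though $X$ may be large and not just binary. The same template is what underlies Lemma~\ref{lemma: powering lemma (median)} in the real-valued case, with the median replaced by the mode.
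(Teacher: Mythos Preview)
Your proposal is correct and takes essentially the same approach as the paper: observe that the expected number of correct outputs is $(1-\gamma)T > T/2$ and then apply a Chernoff-type bound to show that with $T = O(\log(1/\delta))$ repetitions the correct answer is the majority with probability at least $1-\delta$. The paper gives only a one-line sketch of this, whereas you supply the details (the indicator variables, the explicit Hoeffding constant $c_\gamma = 2(1/2-\gamma)^2$, and the observation that ``more than half correct'' forces $x^*$ to be the mode even when $|X|>2$).
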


One way to prove this result is to observe that the expected number of times that we obtain the correct value over $t$ repetitions is $E[Success] = (1-\gamma)t > t/2$ and continue with Chernoff's bound.

Finally, one last useful result is the union bound, or Boole's inequality, which helps us bound the failure probability of a process using the failure probability of its subprocesses.

\begin{theorem}[Union bound]
\label{theorem: union bound}
    Let $X_1, X_2, \dots, X_n$ be a family of events. Then, $\mathrm{Pr}[\cup_{i \in[n]}X_i] \leq \sum_{i \in [n]} \mathrm{Pr}[X_i]$.
\end{theorem}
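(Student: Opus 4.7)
The plan is to prove the union bound by induction on the number of events $n$, with the base case being the two-event inequality $\Pr[A \cup B] \leq \Pr[A] + \Pr[B]$. This two-event form follows directly from finite additivity of probability measures: since $A \cup B = A \sqcup (B \setminus A)$ is a disjoint union, we have $\Pr[A \cup B] = \Pr[A] + \Pr[B \setminus A] \leq \Pr[A] + \Pr[B]$, where the inequality uses monotonicity, $\Pr[B \setminus A] \leq \Pr[B]$, which itself follows from nonnegativity of $\Pr$ on the disjoint decomposition $B = (B \setminus A) \sqcup (B \cap A)$.

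For the inductive step, assume the bound holds for any family of $n-1$ events. Given $X_1, \dots, X_n$, set $Y = \bigcup_{i=1}^{n-1} X_i$ and apply the two-event case to $Y$ and $X_n$:
\begin{align}
\Pr\left[\bigcup_{i=1}^{n} X_i\right] = \Pr[Y \cup X_n] \leq \Pr[Y] + \Pr[X_n] \leq \sum_{i=1}^{n-1} \Pr[X_i] + \Pr[X_n] = \sum_{i=1}^{n} \Pr[X_i],
\end{align}
where the second inequality uses the inductive hypothesis.

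An alternative, non-inductive presentation is to write the union as a disjoint union directly: define $A_1 = X_1$ and $A_i = X_i \setminus \bigcup_{j<i} X_j$ for $i \geq 2$. Then $\bigsqcup_i A_i = \bigcup_i X_i$ and $A_i \subseteq X_i$, so by countable (here finite) additivity and monotonicity, $\Pr[\bigcup_i X_i] = \sum_i \Pr[A_i] \leq \sum_i \Pr[X_i]$. Either presentation is essentially one line; I would use the inductive version for clarity in a textbook-style background section.

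There is essentially no obstacle: the statement is a standard consequence of the axioms of a probability measure, and the only subtlety worth flagging is that the argument uses only nonnegativity, monotonicity, and finite additivity, so it holds for arbitrary (not necessarily independent or disjoint) events. Since the result is invoked only as a black-box tool for bounding total failure probabilities across finitely many randomized subroutines, the finite form stated here is sufficient for all downstream uses.
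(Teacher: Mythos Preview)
Your proof is correct and entirely standard. The paper does not actually supply a proof of this theorem; it is stated as a well-known background tool in Section~\ref{section: amplification of success probabilities} and used without proof, so there is nothing to compare against beyond noting that your inductive argument (and the alternative disjoint-decomposition version you mention) are both textbook derivations that more than suffice here.
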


As an example, imagine an iterative algorithm with failure probability bounded by $1/3$ at each iteration. 
In this case, the overall failure probability of the algorithm is given by the probability that one or more of these failures happen, meaning the union of these events. 
If the algorithm has $K$ iterations, then the union bound helps us bound the total probability of failure by $K/3$.
In general, if an iteration has a failure probability bounded by some $\delta$, then the total failure probability is bounded by $K\delta$.
If we want the overall procedure to succeed with probability $\geq 2/3$ we need to require $\delta < 1/(3K)$, and we can use one of the amplification bounds above to make the overall algorithm terminate with high probability using $O(\log(K))$ overhead per iteration.

In conclusion, we can carry out our algorithms' analysis by considering the success instances and then bound the failure probability using a combination of the two amplification results above plus the union bound.
Furthermore, whenever we have a routine that terminates in random time and we have a classical estimate for expectated time, we can always turn it into an algorithm with a deterministic worst-case running time by terminating it after a certain number of iterations, thanks to Markov's inequality.


\section{The Quantum Orthogonal Matching Pursuit (QOMP) algorithm}
\label{sec:qomp}
Orthogonal Matching Pursuit (OMP) is one of the most widely used classical algorithms for sparse approximation. 
It reconstructs a signal iteratively, building its support one element at a time while maintaining the residual orthogonal to the selected dictionary vectors.  
In this section we first recall the structure of classical OMP, emphasizing its distinction from the earlier Matching Pursuit algorithm, and then present our quantum analogue, QOMP. The quantum version inherits the greedy spirit of OMP while addressing the unique challenges of the quantum setting, such as the inability to store or directly update residuals across iterations.  
We will later analyze the cost of each quantum iteration in both the Oracular-Circuit and QRAM models.


\subsection{The classical Orthogonal Matching Pursuit}
Orthogonal Matching Pursuit (OMP)~\cite{pati1993orthogonal} is a classical greedy algorithm for sparse approximation.  
It operates iteratively: starting from the full signal as an initial residual, at each step it selects one dictionary element (also called atom) to add to the support, updates the approximation, and redefines the residual as the part of the signal not yet explained by the span of the selected atoms.  

OMP improves on the earlier Matching Pursuit algorithm of \citet{mallat1993matching}, where atoms may be reselected because the residual is not fully re-optimized at each step.
In contrast, OMP recomputes the orthogonal projection of the signal onto the span of the active atoms after every update.
This guarantees that no atom is chosen twice, keeps the residual orthogonal to the current support, and underlies OMP’s stronger recovery guarantees under incoherence assumptions.

We present two equivalent formulations in Algorithms~\ref{alg:omp}--\ref{alg:omp v2}.  
The first emphasizes the least-squares update of the coefficients, while the second makes explicit the projection-based residual:
\begin{align}
    \vec{r}^{(k)} = \vec{s} - D_{\Lambda^{(k)}}D_{\Lambda^{(k)}}^+\vec{s},
\end{align}
where $\Lambda^{(k)}$ is the support set after $k$ iterations and $D_{\Lambda^{(k)}}$ the associated subdictionary.  
From this point on, we omit the superscript $k$ and treat $\Lambda$ as the current support, with equalities interpreted as assignment when the context is iterative.
This projection-based formulation is the one we will adopt in the quantum setting, as it enables an error-resetting strategy: the residual is always recomputed directly from the signal and the support, rather than accumulated across steps.

The computational cost of OMP is dominated by two tasks: (i) the \emph{sweep stage}, computing inner products of the residual with all dictionary atoms to select the next index, and (ii) the orthogonal projection onto the active set.  
In a naive implementation, one iteration costs
\begin{align}
    O(nm + nk^2 + k^3),
\end{align}
where $n$ is the signal dimension, $m$ the dictionary size, and $k$ the iteration count.  
Using more advanced techniques such as the Matrix Inversion Lemma~\cite{sturm2012comparison}, the cost can be reduced to
\begin{align}
    O(nk + mk),
\end{align}
at the expense of additional memory.  
Despite algorithmic optimizations, each iteration remains dominated by the sweep stage (computing correlations with all atoms and selecting the greatest) and the orthogonal projection onto the active set.  
These are precisely the operations we target for quantum acceleration.

\begin{figure}[t]
\centering
    \begin{minipage}[t]{0.48\linewidth}
        \begin{algorithm}[H]
            \caption{Orthogonal Matching Pursuit (OMP)}
            \label{alg:omp}
            \begin{algorithmic}[1]
            
                \Statex \textbf{Input} Signal $\vec{s} \in \mathbb{C}^{n}$, dictionary $D \in \mathbb{C}^{n \times m}$, sparsity threshold $L \in \mathbb{N}$, residual threshold $\epsilon \in \mathbb{R}_{>0}$.
                \Statex \textbf{Output} Vector $\vec{x} \in \mathbb{C}^m$ s.t. $\|\vec{s}-D\vec{x}\| \leq \epsilon$ and $\|\vec{x}\|_0 \leq L$ or FAIL if exceeding $L$ iterations.
                \vspace{5pt}
                \hrule
                \vspace{5pt}
                
                \State Initialize $\vec{r} = \vec{s}$,\, $\vec{x} = 0^{\otimes m},\, k=0, \, \Lambda=\emptyset$ 
                \While {not ($k>L \;\mathrm{or} \; \|\vec{r}\|_2 \leq \epsilon$)}
                    \State $j^* = \argmax_{j \in [m] \setminus \Lambda} ( |\langle \vec{d}_j, \vec{r} \rangle| )$
                    \State $\Lambda= \Lambda \cup \, j^*$
                    \State $\vec{x} = \argmin_{\vec{x}} \norm{\vec{s}-D_\Lambda \vec{x}}_2^2$
                    \State $\vec{r}=\vec{s}-D_\Lambda\vec{x}$
                    \State $k = k + 1$
                \EndWhile
                \State Output $\vec{x}$ if $k \leq L$; Else FAIL.
        \end{algorithmic}
        \end{algorithm}
    \end{minipage}%
    \hfill
    \begin{minipage}[t]{0.48\linewidth}
        \begin{algorithm}[H]
        \caption{Alternative OMP formulation}
        \label{alg:omp v2}
        \begin{algorithmic}[1]
        
            \Statex \textbf{Input} Signal $\vec{s} \in \mathbb{C}^{n}$, dictionary $D \in \mathbb{C}^{n \times m}$, sparsity threshold $L \in \mathbb{N}$, residual threshold $\epsilon \in \mathbb{R}_{>0}$.
            \Statex \textbf{Output} Vector $\vec{x} \in \mathbb{C}^m$ s.t. $\|\vec{s}-D\vec{x}\| \leq \epsilon$ and $\|\vec{x}\|_0 \leq L$ or FAIL if exceeding $L$ iterations.
            \vspace{5pt}
            \hrule
            \vspace{5pt}
            
            \State Initialize $\|\vec{r}\|_2 = \|\vec{s}\|_2,\, k=0, \, \Lambda=\emptyset$ 
            \While {not ($k>L \;\mathrm{or} \; \|\vec{r}\|_2 \leq \epsilon$)}
                \For {\textbf{all} $j \in [m] \setminus \Lambda$}
                    \If{$k == 0$}
                        \State $z_j = |\inner{\vec{d}_j}{\vec{s}}|$
                    \Else
                        \State $z_j = |\inner{\vec{d}_j}{\vec{s}-D_\Lambda D_\Lambda^+\vec{s}}|$
                    \EndIf
                \EndFor
                \State $j^* = \argmax_j ( z_j )$. 
                \State $\Lambda= \Lambda \cup \, j^*$
                \State $\|\vec{r}\|_2 = \|\vec{s}-D_\Lambda D_\Lambda^+\vec{s}\|_2$
                \State $k = k + 1$
            \EndWhile
            \State Output $\vec{x} = \argmin_{\vec{x}} \norm{\vec{s}-A\vec{x}}_2^2$ if $k \leq L$; Else FAIL.
        \end{algorithmic}
        \end{algorithm}
    \end{minipage}
\end{figure}


\subsection{Quantum Orthogonal Matching Pursuit}
\label{section: quantum orthogonal matching pursuit}
QOMP is the quantum analogue of OMP, built on the projection-based formulation of Algorithm~\ref{alg:omp v2}.
In this view, the residual at each step is defined by
\begin{align}
    \vec{r} = \vec{s} - D_{\Lambda}D_{\Lambda}^+\vec{s},
\end{align}
where $\Lambda$ is the current support.

This formulation is central to our quantum design: it enables an \emph{error-resetting strategy}, where each residual is recomputed as an exact projection depending only on the input state and the active support.

QOMP preserves the greedy structure of OMP, but re-engineers its iteration body with quantum subroutines, enabling handling quantum signals and dictionaries.
A classical controller orchestrates the algorithm, updating the support set and managing iteration counts, while the quantum device executes the expensive primitives: inner product estimation, maximum-finding, block-encoded projections, and residual norm estimation.
The result is a hybrid scheme that preserves the spirit of OMP while leveraging quantum resources to accelerate its computational bottlenecks.

\smallskip
\textbf{1. Initialization.} The classical computer initializes two variables, an iteration counter and the set of selected atoms $k=0; ~ \Lambda = \emptyset$.

\smallskip
\textbf{2. Atom selection.} This step is the main body of an iteration and requires executing multiple quantum circuits.
The task consists of computing multiple inner products and extracting the index of the one basis vector having the highest overlap with the residual, in absolute value.

The main difficulty is to prepare access to an oracle that allows querying the absolute values of the inner products
\begin{align}
\label{eq: Oi}
    O_i: \ket{j}\ket{0} \rightarrow \ket{j}\ket{z_j},
\end{align}
where $z_j$ approximates $|\inner{\vec{d}_j}{\vec{r}}|$ to error $\epsilon_i$ (i.e., $|z_j - |\inner{\vec{d}_j}{\vec{r}}|| \leq \epsilon_i$) with high probability.
Using this oracle and the access to the complement set $\overline{\Lambda} = [m]\setminus \Lambda$ (Def.~\ref{def: qomp  quantum access sets}), one can use the \emph{Finding the maximum with an approximate unitary} algorithm of Corollary~\ref{coro: finding approximate maximum} to identify the index $j$ of the best basis vector.

To prepare access to $O_i$ we leverage the following equation
\begin{align}
    z_j \simeq |\inner{\vec{d}_j}{\vec{r}}| = |\inner{\vec{d}_j}{\vec{s}} - \inner{\vec{d}_j}{\vec{\phi}}|,
\end{align}
where $\vec{\phi} = D_\Lambda D_\Lambda^+ \vec{s}$.

The strategy is to prepare the real and imaginary part of the two inner products in four registers and combine them with in a fifth register through arithmetic expressions, to reproduce the formula
\begin{align}
    |\inner{\vec{d}_j}{\vec{r}}| = (\|\vec{s}\|\Re[\bracket{\vec{d}_j}{\vec{s}}] - \|\vec{\phi}\|\Re[\bracket{\vec{d}_j}{\vec{\phi}}])^2 + (\|\vec{s}\|\Im[\bracket{\vec{d}_j}{\vec{s}}] - {\|\vec{\phi}\|}\Im[\bracket{\vec{d}_j}{\vec{\phi}}])^2.
\end{align}

First, we can compute $\bracket{\vec{d}_j}{\vec{s}}$ using the \emph{Complex inner product estimation} of Corollary~\ref{coro: complex inner} with quantum access to the dictionary via $U_D$, and to the signal via $U_s$.
This way, we can implement the mappings
\begin{align}
    \ket{j}\ket{0} &\rightarrow \ket{j}\ket{\Re[z_{1j}]},\\
    \ket{j}\ket{0} &\rightarrow \ket{j}\ket{\Im[z_{1j}]},
\end{align}
where $\Re[z_{1j}]$ approximates the real part of $\bracket{\vec{d}_j}{\vec{s}}$ and $\Im[z_{1j}]$ the imaginary part.
Then, we use the same method on different registers to compute $\bracket {\vec{d}_j}{\vec{\phi}}$ using quantum access to the dictionary $U_D$ and to an approximation of $\vec{\phi}$ via a unitary $U_\phi$, which we will discuss in a moment.
Again, using Corollary~\ref{coro: complex inner}, we can implement the mappings
\begin{align}
    \ket{j}\ket{0} &\rightarrow \ket{j}\ket{\Re[z_{2j}]},\\
    \ket{j}\ket{0} &\rightarrow \ket{j}\ket{\Im[z_{2j}]},
\end{align}
where $\Re[z_{2j}]$ approximates the real part of $\bracket{\vec{d}_j}{ \vec{\phi}}$ and $\Im[z_{2j}]$ the imaginary part.
Finally, through these mappings, and access to the classical norm of $\|\vec{s}\|$ and to an approximation $\overline{\|\vec{\phi}\|}$ of the norm of $\vec{\phi}$, we can implement 
\begin{align}
    z_j = (\|\vec{s}\|\Re[z_{1j}] - \overline{\|\vec{\phi}\|}\Re[z_{2j}])^2 + (\|\vec{s}\|\Im[z_{1j}] - \overline{\|\vec{\phi}\|}\Im[z_{2j}])^2
\end{align}
with some arithmetic.
This whole procedure effectively implements the oracle $O_i$ of Eq. (\ref{eq: Oi}) coherently.
To regulate the probability of failure, we can use the \emph{Powering Lemma} of Lemma~\ref{lemma: powering lemma (median)}.

To conclude the implementation of $O_i$ and the atom selection step, we need to discuss how to create access to $\vec{\phi} = D_\Lambda D_\Lambda^+ \vec{s}$ and estimate its norm, which is necessary for each iteration following the first one.
Using quantum access to the dictionary via the unitary $U_D$ and to the set $\Lambda$ via the unitary $U_\Lambda$, we can create quantum access to the matrix $D_\Lambda$ (Def.~\ref{def: efficient quantum access matrix}), and consequently, a block-encoding of $D_\Lambda$ (Theorem~\ref{thm: qomp  block encoding from qa}).
With access to the unitary block-encoding and to the signal via $U_s$, we can use the following result.

\begin{theorem}[Column space projection]
\label{theorem: column space projection}
    Let $\epsilon > 0$ be a precision parameter. 
    Let $U_A$ be a $(\alpha, q, \epsilon_A)$-block-encoding of a matrix $A \in \C ^{n\times m}$, implementable in time $T_A$, and let a lower bound $\gamma \leq \sigma_{\min}(A)$ be known. 
    Let there be quantum access to a vector $\Vec{x} \in \C^{n}$ of known norm $\norm{\vec{x}}_2$ in time $T_x$ via a unitary $U_x$. 
    Then, there exists a constant $c \in \R^+$ such that if $\epsilon_A \leq \frac{\norm{AA^+\vec{x}}^2\gamma^2 \epsilon^2}{c\norm{\vec{x}}^2\alpha\log^2(\norm{\vec{x}}/(\norm{AA^+\vec{x}}\epsilon))}$ there are quantum algorithms that:
    \begin{enumerate}
        \item Create a quantum state $|\vec{\phi}\rangle$ such that $\norm{|\vec{\phi}\rangle - \ket{AA^+ \vec{x}}}_2 \leq \epsilon$ in expected time $\widetilde{O}\left(\frac{\norm{\vec{x}}}{\norm{AA^+ \vec{x}}}( \frac{\alpha}{\gamma} T_A + T_x)\right)$ if $\norm{AA^+\vec{x}} \neq 0$ and otherwise runs forever. 
        \item Produce an estimate $t$ such that $\abs{t - \norm{AA^+\vec{x}}_2}\leq \epsilon$ with high probability  in time $\widetilde{O}\left(\frac{1}{\epsilon}(\frac{\alpha}{\gamma} T_A + T_x)\right)$;
        \item Produce an estimate $t$ such that $\abs{t - \norm{AA^+\vec{x}}_2}\leq \epsilon\norm{AA^+ \vec{x}}_2$ with high probability \\in expected time $\widetilde{O}\left(\frac{1}{\epsilon}\frac{\norm{\vec{x}}}{\norm{AA^+ \vec{x}}}(\frac{\alpha}{\gamma} T_A + T_x)\right)$.
    \end{enumerate}
\end{theorem}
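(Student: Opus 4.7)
The strategy is to implement the orthogonal column-space projector $AA^+$ as a single quantum primitive via singular value transformation, and then apply amplitude amplification (for Part~1) or amplitude estimation (for Parts~2 and 3) to that primitive acting on $\ket{\vec{x}}$. The key algebraic observation is that $AA^+=\sum_{\sigma_i>0}\ket{u_i}\bra{u_i}$ is an even singular-value transform of $A$: if $p$ is an even polynomial with $|p(x)-1|\le\epsilon'$ on $[\gamma/\alpha,1]$, $p(0)=0$, and $|p|\le 1$ on $[-1,1]$, then $p^{(SV)}(A^\dagger)$ approximates $AA^+$. Standard polynomial constructions (as used in the proof of Theorem~\ref{theorem: QSVT linear systems} and in Appendix~\ref{appendix: singular value transformation}) produce such $p$ of degree $d=\widetilde{O}(\alpha/\gamma)$. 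Applying QSVT with $p$ to the $(\alpha,q,\epsilon_A)$-block-encoding $U_A$ yields, at cost $\widetilde{O}(\alpha T_A/\gamma)$, a $(1,q+O(1),\eta)$-block-encoding $U_\Pi$ of $AA^+$, where $\eta$ compounds the polynomial approximation error $\epsilon'$ with the degree-$d$ amplification of the input error $\epsilon_A$. The explicit upper bound on $\epsilon_A$ in the hypothesis is precisely what is needed to drive $\eta$ below the final target accuracy $\epsilon$ after all subsequent layers.

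For Part~1 I would compose $U_\Pi$ with $U_x$ to prepare $U_\Pi(\ket{0^{q+O(1)}}\otimes\ket{\vec{x}})$. Projecting on the ``good'' flag $\ket{0^{q+O(1)}}$ gives a state $\eta$-close to $AA^+\vec{x}/\|\vec{x}\|$ with success amplitude $a=\|AA^+\vec{x}\|/\|\vec{x}\|$. Since $a$ is unknown but lower-bounded by the (unknown) ratio $\|AA^+\vec{x}\|/\|\vec{x}\|$, I would invoke fixed-point amplitude amplification (Theorem~\ref{thm: qomp  fixed point amp amp}) using as the ``good'' projector the indicator on that ancilla, obtaining a state $\epsilon$-close to $\ket{AA^+\vec{x}}$ in $O(1/a)=O(\|\vec{x}\|/\|AA^+\vec{x}\|)$ amplification rounds. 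Each round uses $U_\Pi$ and $U_x$ once, costing $\widetilde{O}(\alpha T_A/\gamma+T_x)$, which matches the expected time in the statement; when $AA^+\vec{x}=0$ the amplitude vanishes and the algorithm provably cannot succeed, so it runs forever as allowed.

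For Parts~2 and 3 I would replace amplification by the absolute-value amplitude estimation of Theorem~\ref{thm: amplitude estimation amplitude} applied to the same good-flag event, and return $\overline{t}=\overline{a}\,\|\vec{x}\|$. For the multiplicative guarantee (Part~3), setting the estimation precision to $\epsilon\,a$ requires $\widetilde{O}(1/(\epsilon a))=\widetilde{O}(\|\vec{x}\|/(\epsilon\|AA^+\vec{x}\|))$ calls to $U_\Pi$ and $U_x$, exactly the claimed expected cost. For the additive guarantee (Part~2), Theorem~\ref{thm: amplitude estimation amplitude} combined with Lemma~\ref{lem: qomp error sin} converts $t$ iterations into additive error $\pi/t$ on $a$; one then tunes $t$ and exploits $\|AA^+\vec{x}\|\le\|\vec{x}\|$ together with Markov-style early termination (Theorem~\ref{theorem: markovs inequality}) and median-based boosting (Lemma~\ref{lemma: powering lemma (median)}) to obtain the worst-case $\widetilde{O}(1/\epsilon)$ bound with high probability. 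The success probabilities of the QSVT block-encoding, of each amplitude-estimation/amplification stage, and of the Markov termination are combined via the union bound (Theorem~\ref{theorem: union bound}) at constant overhead.

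The hard part is the composite error analysis rather than any single step. One must track how $\epsilon_A$ is amplified by the degree-$\widetilde{O}(\alpha/\gamma)$ QSVT into $\eta$, how $\eta$ in turn propagates through amplification/estimation that itself runs $\widetilde{O}(\|\vec{x}\|/\|AA^+\vec{x}\|)$ or $\widetilde{O}(1/\epsilon)$ rounds, and how the final $\ell_2$-error on the state or additive error on the norm picks up quadratic losses through the $\sin$-to-probability conversion of Lemma~\ref{lem: qomp error sin}. Solving these compounded bounds backwards for the largest admissible $\epsilon_A$ reproduces exactly the stated condition $\epsilon_A = O\!\left(\|AA^+\vec{x}\|^2\gamma^2\epsilon^2/(\|\vec{x}\|^2\alpha\log^2(\|\vec{x}\|/(\|AA^+\vec{x}\|\epsilon)))\right)$, which is the technical crux of the proof.
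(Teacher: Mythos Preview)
Your overall architecture matches the paper's: block-encode the projector $AA^+=UU^\dagger$ via QSVT with a degree-$\widetilde O(\alpha/\gamma)$ polynomial, then run amplitude amplification or estimation on the result applied to $\ket{\vec{x}}$. The paper packages the first step slightly differently---it builds separate block-encodings of $\mathrm{sign}^{(SV)}(A)$ and $\mathrm{sign}^{(SV)}(A^\dagger)$ using the odd sign approximation (Lemma~\ref{lemma: qomp Polynomial approximation of the sign function}, Corollary~\ref{corollary:QSVT by real and odd polynomial}) and then takes their product (Lemma~\ref{lemma: qomp Product of block-encoded matrices}, Lemma~\ref{lemma:Block encoding of UUT})---whereas you propose a single even QSVT on $A^\dagger$. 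These are equivalent: the product of two odd sign-QSVTs implements the even polynomial $\mathrm{sign}^2$, so the circuits and the degree/error accounting coincide.

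There is, however, a real gap in your handling of Parts~1 and~3. Fixed-point amplitude amplification (Theorem~\ref{thm: qomp  fixed point amp amp}) requires a \emph{known} lower bound $\delta<a$ to choose the number of rounds; your sentence ``$a$ is unknown but lower-bounded by the (unknown) ratio $\|AA^+\vec{x}\|/\|\vec{x}\|$'' is circular, since that ratio \emph{is} $a$. Likewise, setting the estimation precision to $\epsilon a$ in Part~3 presupposes knowledge of $a$. The paper resolves both via a bootstrapping step inside Theorem~\ref{thm: qomp matrix mul}: it first calls a relative-error amplitude-estimation routine (from \cite{chowdhury2021computing}) that returns a constant-factor estimate of $a$ in \emph{expected} time $O(1/a)$ without any prior knowledge of $a$, and never terminates when $a=0$; that estimate then supplies the lower bound $\delta$ for fixed-point amplification. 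This bootstrapping is precisely what makes the running times ``expected'' and produces the ``runs forever'' clause---it is not a consequence of Markov termination or median boosting. For Part~2, by contrast, no bootstrapping is needed: plain amplitude estimation (Theorem~\ref{thm: amplitude estimation amplitude}) with a fixed iteration count already yields the additive guarantee in worst-case time, so your appeal to Markov-style early termination there is unnecessary.
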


This theorem allows us to provide access to $U_\phi$ and to estimate the norm $\|\vec{\phi}\|$, concluding the atom selection process.
The main intuition behind this result is that $D_\Lambda D_\Lambda^+ = UU^\dagger$, where $D_\Lambda = U\Sigma V^\dagger$ and $D_\Lambda^+ = V\Sigma^{-1}U^\dagger$ are singular value decompositions.
We can then apply a polynomial approximation of a constant function $f(x) = 1$ in the interval $[\frac{\gamma}{\alpha}, 1]$ to the singular values of $D_\Lambda$ and $D_\Lambda$ using Quantum Singular Value Transformation (QSVT)~\cite{gilyen2019quantum, chakraborty2019power} on their block encodings. 
We finally apply the block-encoding of $UU^\dagger$ to the state $\ket{\vec{s}}$ and estimate the norm using the \emph{amplitude estimation} routine from Theorem~\ref{thm: amplitude estimation amplitude} or amplify the relevant quantum state $|\vec{\phi}\rangle$ with \emph{Fixed point amplitude amplification} from Theorem~\ref{thm: qomp  fixed point amp amp}.
We defer the full proof of Theorem~\ref{theorem: column space projection} to Appendix~\ref{appendix: column space}.

Once we obtain the index of the best atom for the current iteration, the classical computer can proceed to update the set of chosen atoms $\Lambda = \Lambda \cup j^*$, update $U_{\Lambda}$ and $U_{\overline{\Lambda}}$, and increment the iteration counter $k = k+1$.

\smallskip
\textbf{3. Exit condition.}
The exit condition is $(k > L \text{ or } \|\vec{r}\|_2 \leq \epsilon)$.
The classical computer can easily evaluate the first inequality, as it stores both the iteration counter and the threshold. 
On the other hand, it will require the execution of quantum circuits to estimate $\|\vec{r}\|$.

The computation of the norm is based on
\begin{align}
\label{eq: norm estimation}
    \|\vec{r}\| = \|\vec{s} - \vec{\phi}\| = \| \|\vec{s}\| \ket{\vec{s}} - \|\vec{\phi}\| |\vec{\phi}\rangle  \|.
\end{align}

We have access to $\ket{\vec{s}}$ through $U_s$ and we have a classical value for $\|\vec{s}\|$.
Moreover, using \emph{Column space projection} from Theorem~\ref{theorem: column space projection}, we have access to an approximation of $\ket{\phi}$ via $U_\phi$, and to a classical estimate of $\|\vec{\phi}\|$. 
Using these tools, we can compute the residual's norm through the following result.

\begin{figure}[t]
    \centering
    \scalebox{1}{ 
        \Qcircuit @C=1.0em @R=0.2em @!R { \\
            \nghost{{\ket{0}} :  } & \lstick{{\ket{0}} :  } & \qw & \gate{\mathrm{U_{v}}} & \gate{\mathrm{U_{c}}} & \qw & \qw & \qw & \qw & \qw \\
            \nghost{{\ket{0}} :  } & \lstick{{\ket{0}} :  } & \gate{\mathrm{H}} & \ctrl{-1} & \ctrlo{-1} & \ctrl{1} & \ctrlo{1} & \gate{\mathrm{H}} & \qw & \qw\\
            \nghost{{\ket{0}} :  } & \lstick{{\ket{0}} :  } & \qw & \qw & \qw & \gate{\mathrm{R_v}} & \gate{\mathrm{R_c}} & \qw & \qw & \qw \\
        \\ }
    }
    \caption{State preparation circuit for estimating $\norm{\vec{v}-\vec{c}}$, when $\norm{\vec{v}}, \norm{\vec{c}}$ are classically known. The most significant qubit is the one at the top.
    The gate $\mathrm{R_v}$ performs the rotation $\ket{0} \rightarrow \sqrt{1-\frac{1}{\norm{\vec{c}}^2}}\ket{0} + \frac{1}{\norm{\vec{c}}}\ket{1}$, and similarly $\mathrm{R_c}$ performs $\ket{0} \rightarrow \sqrt{1-\frac{1}{\norm{\vec{v}}^2}}\ket{0} + \frac{1}{\norm{\vec{v}}}\ket{1}$.
    At the end of the circuit, the amplitude of the two least significant qubits in the state $\ket{1}\ket{1}$ is $\frac{\norm{\vec{v}-\vec{c}}}{2\norm{\vec{v}}\norm{\vec{c}}}$.}
    \label{fig: weighted norm estimation circuit}
\end{figure}
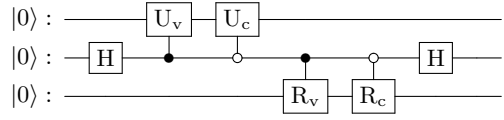

\begin{theorem}[Weighted Euclidean distance estimation]
\label{theorem: weighted euclidean distance estimation}
    Let there be quantum access to two unit vectors $\vec{v} \in \mathbb{C}^{n}$ and $\vec{c} \in \mathbb{C}^{n}$, through unitaries
    $U_v: \ket{0} \rightarrow \ket{\vec{v}}$ and $U_c: \ket{0} \rightarrow \ket{\vec{c}}$ that run in time $T_v$ and $T_c$.
    Let $\alpha, \beta \in \C$ be two weights.
    Then, for any $\delta > 0$ and $\epsilon>0$, there exists a quantum algorithm that computes an estimate of
    $d = \norm{\alpha \vec{v} - \beta \vec{c}}$,
    such that
    $\abs{\overline{d}-d} \leq  \epsilon$
    with probability greater than $1-\delta$,
    in time $O\left((T_a + T_b)\frac{|\alpha||\beta|}{\epsilon}\log(1/\delta)\right)$.    
\end{theorem}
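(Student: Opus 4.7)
The plan is to build a Hadamard-test-style state-preparation circuit whose flag amplitude equals $d/(2M)$ for a suitably chosen normalizer $M$, and then read off $d$ via absolute-value amplitude estimation. The construction is a weighted generalization of the distance-estimation circuit in Figure~\ref{fig: weighted norm estimation circuit}: I introduce a middle ancilla that selects between the $\vec{v}$ and $\vec{c}$ branches, use controlled $U_v, U_c$ to prepare the two states, and then use controlled single-qubit rotations on a bottom ancilla to encode the weights.

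Concretely, I would fix $M$ with $M\geq\max(|\alpha|,|\beta|,1)$ (for the stated cost the convenient choice is $M=|\alpha||\beta|$, assuming without loss of generality $|\alpha|,|\beta|\geq 1$; otherwise $M=\max(|\alpha|,|\beta|,1)$ gives a better bound). On three registers (signal of $n$ qubits, middle ancilla, bottom ancilla) initialized to $\ket{0}\ket{0}\ket{0}$, I run: $H$ on the middle qubit; controlled $U_c$ on the signal register when middle$=0$ and controlled $U_v$ when middle$=1$; controlled single-qubit rotations on the bottom ancilla putting the complex amplitude $-\beta/M$ on $\ket{1}$ when middle$=0$ and $-\alpha/M$ on $\ket{1}$ when middle$=1$; and finally a second $H$ on the middle qubit. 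A direct calculation, analogous to Figure~\ref{fig: weighted norm estimation circuit} but carrying complex weights through the rotations, shows that the component supported on the flag $\ket{11}$ of the two ancillas is exactly $\tfrac{1}{2M}(\alpha\vec{v}-\beta\vec{c})$, so the amplitude of the flag subspace is $a = d/(2M)$.

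Next, I would apply absolute-value amplitude estimation (Theorem~\ref{thm: amplitude estimation amplitude}) to this preparation with projector $P=\mathbb{I}\otimes\ketbra{11}{11}$, taking $t = O(M/\epsilon)$ so that $|\overline{a}-a|\leq \epsilon/(2M)$, and hence the rescaled estimate $\overline{d} := 2M\,\overline{a}$ satisfies $|\overline{d} - d|\leq \epsilon$ with probability at least $8/\pi^2$. One call to the preparation costs $O(T_v+T_c)$ plus $O(1)$ additional single-qubit gates, so the estimation costs $O((T_v+T_c)\,M/\epsilon)$. Finally, the powering lemma (Lemma~\ref{lemma: powering lemma (median)}) applied to $O(\log 1/\delta)$ independent repetitions, with the median as output, amplifies the success probability to $\geq 1-\delta$, yielding the claimed total cost $O((T_v+T_c)\,|\alpha||\beta|/\epsilon\,\log(1/\delta))$.

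The main subtlety, and where I would spend the most care, is choosing $M$ consistently across the two rotations so that the flag amplitude factors cleanly as $(\alpha\vec{v}-\beta\vec{c})/(2M)$ while each rotation remains a valid single-qubit unitary (i.e., $|\alpha|/M,|\beta|/M\leq 1$). Once this bookkeeping is settled, handling complex $\alpha,\beta$ reduces to placing their phases inside the off-diagonal entries of the controlled rotations, and the remainder is a routine combination of amplitude estimation with median amplification.
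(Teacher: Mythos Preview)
Your proposal is correct and follows essentially the same approach as the paper's proof in Appendix~\ref{appendix: euclidean distance estimation}: build the three-register weighted Hadamard-test circuit of Figure~\ref{fig: weighted norm estimation circuit} so that the $\ket{11}$ flag amplitude equals $d/(2|\alpha||\beta|)$, apply absolute-value amplitude estimation (Theorem~\ref{thm: amplitude estimation amplitude}) to precision $\epsilon/(2|\alpha||\beta|)$, rescale, and boost with the powering lemma. The only cosmetic difference is that the paper parametrizes the controlled rotations with amplitudes $1/\beta$ and $1/\alpha$ on the bottom ancilla (implicitly assuming $|\alpha|,|\beta|\geq 1$) rather than your $\alpha/M,\beta/M$, but with $M=|\alpha||\beta|$ both yield the same flag amplitude and identical costs.
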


The algorithm consists of executing amplitude estimation on the circuit described in Figure~\ref{fig: weighted norm estimation circuit}. 
Appendix~\ref{appendix: euclidean distance estimation} details the analysis of the routine.
This algorithm allows us to obtain an estimate of $\|\vec{r}\|$ and conclude the evaluation of the error condition.

\smallskip
\textbf{4. Output.} When the exit condition is met, QOMP outputs the set of chosen atoms $\Lambda$, or FAIL if the number of iterations exceeded $L$.

\subsubsection{Iteration cost in the Oracular-Circuit model}
Chaining together these steps, we can bound the expected cost of a single QOMP iteration in the Oracular-Circuit model.
We denote by $T_s$ the cost of accessing the signal through $U_s$, by $T_D$ the cost of accessing the dictionary through $U_D$, by $T_\Lambda$ and $T_{\overline{\Lambda}}$ the costs of accessing the active set $\Lambda$ and its complement, and by $T_U$ the classical time required to update the circuits for $U_\Lambda$ and $U_{\overline{\Lambda}}$ when inserting a new element into $\Lambda$.

\begin{theorem}[QOMP Iteration's Cost]
\label{theorem: QOMP iteration cost}
    Let there be quantum access to the dictionary $D \in \C^{n\times m}$ (Def.~\ref{def: qomp  quantum access dictionary}), the signal $\vec{s} \in \C^n$ (Def.~\ref{Def:efficient quantum access vector}) and the sets $\Lambda$, $\overline{\Lambda}$ (Def.~\ref{def: qomp  quantum access sets}). 
    Let $\norm{\vec{s}} \geq 1$, let $\epsilon_i, \epsilon_f > 0 $ be precision parameters, and $\gamma \leq \sigma_{\min}(D_\Lambda)$ a lower bound on the smallest singular value of the current matrix $D_\Lambda$, whose columns are the chosen atoms in $\Lambda$.
    With high probability, at the $k^{\mathrm{th}}$ iteration, the QOMP algorithm selects the atom
    \begin{align}
        j^* = \argmax_{j \in {\overline{\Lambda}}} \left(\abs{\inner{\vec{d}_j}{\vec{r}}} - 2\epsilon_i \right) \quad \text{ s.t. } \quad \forall j \in \overline{\Lambda}: \abs{\overline{\inner{\vec{d}_j}{\vec{r}}} - \inner{\vec{d}_j}{\vec{r}}} \leq \epsilon_i,
    \end{align}
    and evaluates the exit condition on an estimate $\overline{\norm{\vec{r}}}_2$ such that $\abs{\overline{\norm{\vec{r}}}_2 - \norm{\vec{r}}_2} \leq \epsilon_f$, all in expected time
    \begin{align}
        \Ot \left( \sqrt{m} T_{\overline{\Lambda}} +
        \norm{\vec{s}}^2 \left(\frac{\sqrt{m}}{\epsilon_i} + \frac{1}{\epsilon_f}\right)
        \left(T_s + 
        (T_D + T_{\Lambda})\frac{\sqrt{k}}{\gamma}\right)
        \right)
    \end{align}
    plus additional classical time $T_U$ to update quantum access to $\Lambda$ and $\overline{\Lambda}$ (Def.~\ref{def: qomp  quantum access sets}).
\end{theorem}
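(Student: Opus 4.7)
\emph{Plan overview.} The plan is to implement each sub-routine identified in Section~\ref{section: quantum orthogonal matching pursuit} and stack their costs, tracking errors with the amplification tools of Section~\ref{section: amplification of success probabilities}. At the $k$-th iteration the quantum work splits into four tasks: (a) build coherent access to $\vec{\phi}=D_\Lambda D_\Lambda^+\vec{s}$ and a classical estimate of $\|\vec{\phi}\|$; (b) implement an oracle $O_i:\ket{j}\ket{0}\mapsto\ket{j}\ket{z_j}$ whose output satisfies $|z_j-|\inner{\vec{d}_j}{\vec{r}}||\le\epsilon_i$ for every $j\in\overline\Lambda$; (c) run an approximate-unitary maximum finder on $O_i$ restricted to $\overline\Lambda$; and (d) estimate $\|\vec{r}\|$ to precision $\epsilon_f$.

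\emph{Projection access (task a).} Composing $U_D$ with $U_\Lambda$ gives quantum access to $D_\Lambda$ in the sense of Def.~\ref{def: efficient quantum access matrix}; since the $k$ atoms in $\Lambda$ are unit vectors, $\|D_\Lambda\|_F=\sqrt{k}$, so Theorem~\ref{thm: qomp  block encoding from qa} produces a $(\sqrt{k},\cdot,\cdot)$-block-encoding of $D_\Lambda$ at cost $\Ot(T_D+T_\Lambda)$. Feeding this block-encoding, the singular-value lower bound $\gamma$, and $U_s$ into Theorem~\ref{theorem: column space projection} yields (i) an approximate preparation unitary $U_\phi$ for $|\vec{\phi}\rangle$ of expected cost $T_\phi=\Ot\!\left(\tfrac{\|\vec{s}\|}{\|\vec{\phi}\|}\!\left(\tfrac{\sqrt{k}}{\gamma}(T_D+T_\Lambda)+T_s\right)\right)$ and (ii) a classical estimate $\overline{\|\vec{\phi}\|}$ of additive precision $\epsilon_f$ obtained in time $\Ot\!\left(\tfrac{1}{\epsilon_f}\!\left(\tfrac{\sqrt{k}}{\gamma}(T_D+T_\Lambda)+T_s\right)\right)$.

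\emph{Oracle, maximum finding, and exit (tasks b--d).} Applying Corollary~\ref{coro: complex inner} twice, coherently in $j$, estimates the un-normalized inner products $\inner{\vec{d}_j}{\vec{s}}$ (using $U_D,U_s$) and $\inner{\vec{d}_j}{\vec{\phi}}$ (using $U_D,U_\phi$ with $\overline{\|\vec{\phi}\|}$ as the known norm); a reversible subtraction, squared-magnitude assembly and square root then produces $z_j\approx|\inner{\vec{d}_j}{\vec{r}}|$. Tuning each Re/Im sub-estimate to precision $\Theta(\epsilon_i)$ and boosting via Lemma~\ref{lemma: powering lemma (median)} makes the bound hold with probability $\ge 1-1/\mathrm{poly}(m)$, so that every call to $O_i$ costs $\Ot\!\left(\tfrac{\|\vec{s}\|}{\epsilon_i}\!\left(T_s+(T_D+T_\Lambda)\tfrac{\sqrt{k}}{\gamma}\right)\right)$; the $\|\vec{s}\|/\|\vec{\phi}\|$ blow-up of $U_\phi$ cancels against the $\|\vec{\phi}\|$ prefactor that Corollary~\ref{coro: complex inner} introduces on the $\vec{\phi}$-side. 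Feeding this oracle together with $U_{\overline\Lambda}$ into Corollary~\ref{coro: finding approximate maximum} returns the claimed $j^\star$ in $\Ot(\sqrt{m}\,T_{\overline\Lambda}+\sqrt{m}\,\|\vec{s}\|(T_s+(T_D+T_\Lambda)\sqrt{k}/\gamma)/\epsilon_i)$. For the exit test, writing $\vec{r}=\|\vec{s}\|\ket{\vec{s}}-\|\vec{\phi}\||\vec{\phi}\rangle$ and invoking Theorem~\ref{theorem: weighted euclidean distance estimation} with weights $\alpha=\|\vec{s}\|,\beta=\overline{\|\vec{\phi}\|}$ estimates $\|\vec{r}\|$ to precision $\epsilon_f$ in $\Ot(\|\vec{s}\|\|\vec{\phi}\|(T_s+T_\phi)/\epsilon_f)=\Ot(\|\vec{s}\|^2(T_s+(T_D+T_\Lambda)\sqrt{k}/\gamma)/\epsilon_f)$. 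Summing the three contributions and using $\|\vec{s}\|\ge 1$ to replace the bare $\|\vec{s}\|$ by $\|\vec{s}\|^2$ collapses everything to the theorem's expression; the additive classical term $T_U$ accounts for updating the set-loading circuits once $j^\star$ is appended.

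\emph{Main obstacle.} The delicate part is the joint error-propagation and success-amplification argument for the atom-selection step. Four approximate inner products, the classical estimate of $\|\vec{\phi}\|$, and the squared-magnitude/square-root composition must be tuned so that the final additive error on $|\inner{\vec{d}_j}{\vec{r}}|$ stays below $\epsilon_i$ \emph{uniformly} over every $j\in\overline\Lambda$; this forces each sub-routine's failure probability down to $1/\mathrm{poly}(m)$ via the median lemma, followed by a union bound on the $\Ot(\sqrt{m})$ queries issued by Corollary~\ref{coro: finding approximate maximum}. A secondary subtlety is that $U_\phi$ runs only in \emph{expected} time, which is why the iteration bound is stated in expectation and why the composite oracle must be invoked through the approximate-unitary variant of maximum finding rather than Theorem~\ref{thm:finding minimum}.
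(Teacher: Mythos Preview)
Your proposal is correct and follows essentially the same route as the paper's proof in Appendix~\ref{appendix: QOMP error and running time analysis}: build a $(\sqrt{k},\cdot,\cdot)$-block-encoding of $D_\Lambda$ from $U_D$ and $U_\Lambda$, invoke Theorem~\ref{theorem: column space projection} for $U_\phi$ and $\overline{\|\vec{\phi}\|}$, assemble the oracle $O_i$ via Corollary~\ref{coro: complex inner}, feed it to Corollary~\ref{coro: finding approximate maximum}, and handle the exit condition through Theorem~\ref{theorem: weighted euclidean distance estimation}, with the Powering Lemma and Union Bound controlling success probabilities. One minor difference worth noting: the paper's error analysis routes the magnitude $|\inner{\vec{d}_j}{\vec{r}}|$ through the explicit squared-sum formula and picks up a native $\|\vec{s}\|^2/\epsilon_i$ factor from the square, whereas your Lipschitz argument on $(x,y)\mapsto\sqrt{x^2+y^2}$ yields only $\|\vec{s}\|/\epsilon_i$, which you then loosen to $\|\vec{s}\|^2$ via $\|\vec{s}\|\ge 1$; both are valid, and yours is in fact a hair tighter before the final relaxation.
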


A detailed derivation of this bound, including the handling of approximation errors, is given in Appendix~\ref{appendix: QOMP error and running time analysis}.
Here we emphasize two features that are central to the efficiency of QOMP.
First, errors from approximate subroutines do not propagate across iterations: the residual is always recomputed as a projection depending only on the input signal and the current support. 
\emph{The only way an error carries forward is through the unlikely event of selecting an incorrect atom.}
Second, the complexity scales with the conditioning of the subdictionary. 
Since $D_\Lambda$ consists of columns of $D$, one may always take $\gamma = \sigma_{\min}(D)$ as an iteration-independent bound, ensuring uniform guarantees across iterations. 
Tighter bounds on $\sigma_{\min}(D_\Lambda)$ can be assumed or computed if desired, at the expense of additional classical or quantum computation.


\subsubsection{Iteration cost in the QRAM model}
We now analyze the iteration cost of QOMP in the QRAM model.
The starting point is Theorem~\ref{theorem: QOMP iteration cost}, which bounds the runtime in the Oracular-Circuit setting in terms of the access costs $T_s$, $T_D$, $T_\Lambda$, $T_{\overline{\Lambda}}$, and $T_U$ together with the block-encoding normalization factor.
In the QRAM model, these access costs are polylogarithmic.

Moreover, QRAM access enables block-encodings with improved normalization.
In the Oracular-Circuit model, the normalization factor is $|A|_F$, but in the QRAM model it can be reduced to $\mu(A)$ (Definition~\ref{def:mu}) using the decomposition of Theorem~\ref{lem: structure to block encoding}.
This refinement lowers the dependence of the projection step on the size of $A$, since the QSVT polynomial approximations now scale with $\mu(D_\Lambda)$ rather than the Frobenius norm.

\begin{corollary}[QOMP Iteration's Cost in the QRAM model]
\label{corollary: iteration cost QRAM}
    In the QRAM cost model, the $k^\mathrm{th}$ iteration of QOMP, with the same guarantees as above, takes expected time
    \begin{align}
        \Ot\left(\norm{\vec{s}}^2\frac{\mu(D_\Lambda)}{\gamma} \left( \frac{\sqrt{m}}{\epsilon_i} + \frac{1}{\epsilon_f}\right)\right).
    \end{align}
\end{corollary}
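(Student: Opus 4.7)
The plan is to revisit the proof of Theorem~\ref{theorem: QOMP iteration cost} and track only the two places where the QRAM model differs from the oracular setting: first, every oracle access cost collapses to polylogarithmic in $n$ and $m$; second, the block-encoding normalization of the subdictionary $D_\Lambda$ improves from $\|D_\Lambda\|_F \le \sqrt{k}$ to $\mu(D_\Lambda)$.

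First, I would appeal to Theorem~\ref{thm:efficient data structure} for QRAM access to $\ket{\vec{s}}$, to $D$, and to the auxiliary trees maintaining $\Lambda$ and $\overline{\Lambda}$ as described in Section~\ref{sec: qram}. This sets $T_s, T_D, T_\Lambda, T_{\overline{\Lambda}}, T_U \in \Ot(1)$ and directly absorbs both the additive $\sqrt{m}\, T_{\overline{\Lambda}}$ term and every $T_s$ contribution of Theorem~\ref{theorem: QOMP iteration cost} into the $\Ot$. What remains is the conditioning factor multiplying $T_D + T_\Lambda$, which in the oracular bound reads $\sqrt{k}/\gamma$ because the column-space projection of Theorem~\ref{theorem: column space projection} was instantiated with $\alpha = \|D_\Lambda\|_F \le \sqrt{k}$ via Theorem~\ref{thm: qomp  block encoding from qa}.

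Second, I would replace that ingredient by part~1 of Theorem~\ref{lem: structure to block encoding}, which provides a $(\mu(D_\Lambda), O(\log(nm)), \epsilon)$-block-encoding of $D_\Lambda$ in polylogarithmic time. Plugging $\alpha = \mu(D_\Lambda)$ into the projection step turns the $\sqrt{k}/\gamma$ ratio into $\mu(D_\Lambda)/\gamma$, and this refined factor then propagates unchanged through the approximate inner-product oracle $O_i$ built on Corollary~\ref{coro: complex inner}, through the maximum-finding step (Corollary~\ref{coro: finding approximate maximum}), and through the residual-norm estimate (Theorem~\ref{theorem: weighted euclidean distance estimation}), exactly as in Theorem~\ref{theorem: QOMP iteration cost}. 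Since $\mu(D_\Lambda) \ge \|D_\Lambda\| \ge \gamma$, the additive constant in $(1 + \mu(D_\Lambda)/\gamma)$ is absorbed into the ratio.

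The main obstacle I expect is bookkeeping rather than analysis: one must verify that the KP-tree realizing the $\mu_p$-normalized block-encoding of $D_\Lambda$ can be maintained in polylogarithmic time at each iteration, without a costly preprocessing pass over the dictionary. This is handled by the auxiliary column-selection tree described in Section~\ref{sec: qram}, whose internal nodes can be patched in $O(\log m)$ per insertion into $\Lambda$. Once this is in place, collecting the surviving terms yields the stated expected time $\Ot(\|\vec{s}\|^2 (\mu(D_\Lambda)/\gamma)(\sqrt{m}/\epsilon_i + 1/\epsilon_f))$.
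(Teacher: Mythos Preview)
Your proposal is correct and follows essentially the same approach as the paper: substitute $T_s, T_D, T_\Lambda, T_{\overline{\Lambda}}, T_U \in \Ot(1)$ from the QRAM data structures, then replace the block-encoding normalization $\alpha = \|D_\Lambda\|_F$ by $\alpha = \mu(D_\Lambda)$ via Theorem~\ref{lem: structure to block encoding}. Your treatment is in fact more careful than the paper's short proof, explicitly noting the absorption of the additive constant via $\mu(D_\Lambda)\ge\gamma$ and the polylogarithmic maintenance of the column-selection tree.
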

\begin{proof}
    Substituting $T_s, T_D, T_\Lambda, T_{\overline{\Lambda}}, T_U \in \widetilde{O}(1)$ into the bound of Theorem~\ref{theorem: QOMP iteration cost} eliminates the explicit dependence on data-access costs.
    The remaining dependence comes from the block-encoding normalization factor.
    By Theorem~\ref{lem: structure to block encoding}, QRAM-based block-encodings of $D_\Lambda$ admit normalization $\alpha = \mu(D_\Lambda)$ rather than $\alpha = \|D_\Lambda\|_F$, yielding the stated complexity. 
    The normalization $\mu(D_\Lambda)$ can be retrieved by classically stored data structures.
\end{proof}

\begin{table}[ht]
    \centering
        \begin{tabular}{ |c|c|c| }
         \hline
         Algorithm & Time complexity & Memory\\
         \hline
         Naive & $nm + nk + nk^2 + k^3$ & $nm$\\
         \hline
         Chol-1 & $nm + nk + k^2$ & $m^2 + nm + k + k^2$\\
         \hline
         Chol-2 & $mk + k^2$ & $m^2 + nm + k + k^2$\\
         \hline
         QR-1 & $nm + nk$ & $nm + nk + k^2$\\
         \hline
         QR-2 & $nk + mk + k^2$ & $m^2 + nm + nk + k^2$\\
         \hline
         MIL & $nk+mk$ & $m^2 + nm + nk$\\
         \hline
         \cellcolor{green!25} QOMP (This work) & \cellcolor{green!25} $\norm{\vec{s}}^2 \frac{\mu(D_\Lambda)}{\gamma}\left( \frac{\sqrt{m}}{\epsilon_i} + \frac{1}{\epsilon_f}\right)$ & \cellcolor{green!25} $nm\log(nm)$\\
        \hline
        \end{tabular}
    \caption{Asymptotic iteration costs of different classical implementations of OMP~\cite{sturm2012comparison} vs QOMP.
    The memory cost of $\mathrm{QOMP}$ is expressed in number of QRAM cells.}
    \label{table: iteration cost quantum}
\end{table}

This result highlights the power of quantum-accessible data structures.
Table~\ref{table: iteration cost quantum} compares the resulting bound against several classical implementations of OMP reported by \citet{sturm2012comparison}. Naive methods scale as $O(nm)$ per iteration, while optimized variants such as those using the Matrix Inversion Lemma achieve $O(nk+mk)$. By contrast, QOMP achieves sublinear scaling in $m$ through its $\sqrt{m}$ dependence, at the price of a QRAM memory requirement of $O(nm\log(nm))$ cells.

The normalization parameter $\mu(D_\Lambda)$ is always upper bounded by $\|D_\Lambda\|_F = \sqrt{k}$, while the conditioning parameter can be set to $\gamma = \sigma_{\min}(D)$ for a fixed dictionary, or estimated more carefully at additional cost.
Approximation errors scale with the signal norm, so rescaling the input simply rescales the tolerated precision, and the two effects typically balance.
Under reasonable error tolerances, and provided the dictionary is reasonably well-conditioned, the iteration cost of QOMP reduces to roughly $\Ot(\sqrt{km})$. This represents a genuine polynomial improvement over naive $O(nm)$ methods and nearly quadratic savings compared with the fastest classical implementations.
In the high-dimensional regime where $m$ is large, this positions QOMP as a genuine acceleration over classical algorithms, contingent on QRAM access times approaching those of classical RAM - a regime unlikely in the near term but conceivable in the longer horizon of scalable fault-tolerant quantum architectures.


\section{Exact Sparse Recovery with QOMP}
\label{sec:guarantees}
In this section, we analyze the ability of Orthogonal Matching Pursuit and its quantum analogue, QOMP, to recover the exact sparse representation of a signal.
We begin by reviewing the classical theory based on mutual incoherence, which provides clean and widely adopted guarantees. These results set the stage for our quantum extension.

In the exact recovery problem, we are given a dictionary $D \in \C^{n \times m}$ and a signal $\vec{s} \in \C^n$, and we seek the sparsest coefficient vector $\vec{x} \in \C^m$ such that
\begin{align}
\label{eq:p0}
    \vec{x}^* = \argmin_{\vec{x} \in \C^m} \|\vec{x}\|_0
    \quad \text{subject to} \quad D\vec{x} = \vec{s}.
\end{align}
Let $\Lambda_{\mathrm{opt}} \subset [m]$ denote the support of $\vec{x}^*$, i.e., the indices of the atoms used in the unique optimal representation.
We write $A_{\mathrm{opt}}$ for the submatrix of $D$ containing the columns indexed by $\Lambda_{\mathrm{opt}}$ (with zeros elsewhere), so that $A_{\mathrm{opt}}\vec{x} = \vec{s}$, and $B_{\mathrm{opt}}$ for the complementary submatrix (i.e., $D = A_{\mathrm{opt}} + B_{\mathrm{opt}}$).


\subsection{Classical recovery guarantees and mutual incoherence}
Sparse recovery has been studied extensively in the last two decades, both in compressed sensing and in approximation theory.
A central line of work has characterized the conditions under which greedy methods such as OMP provably recover the optimal support in polynomial time.
The first such guarantee is the Exact Recovery Condition (ERC) of \citet{tropp2004greed}, which formalizes the requirement that OMP selects a correct atom at every iteration.

\begin{theorem}[Exact Recovery for OMP]
\label{thm: qomp  exact recovery omp}
    A sufficient condition for OMP to recover the sparsest representation of the input signal is that $\max_{\vec{\psi}} \|A_{\mathrm{opt}}^+\vec{\psi}\|_1 < 1$, where $\psi$ ranges over the columns of $B_{\mathrm{opt}}$.
\end{theorem}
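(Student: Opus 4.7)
The plan is a classical inductive argument on the iteration counter $k$, with inductive hypothesis that the active set $\Lambda^{(k)}$ chosen by OMP after $k$ steps satisfies $\Lambda^{(k)}\subseteq\Lambda_{\mathrm{opt}}$. If this inclusion is preserved at every step, then after at most $K=|\Lambda_{\mathrm{opt}}|$ iterations OMP must reach $\Lambda^{(K)}=\Lambda_{\mathrm{opt}}$, the residual vanishes since $\vec{s}\in\mathrm{range}(A_{\mathrm{opt}})$, and the unique sparsest representation $\vec{x}^*$ is recovered by the final least-squares step. The base case $\Lambda^{(0)}=\emptyset$ is trivial.

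The main step is to show that, under the ERC, the greedy selection at iteration $k$ picks an index from $\Lambda_{\mathrm{opt}}\setminus\Lambda^{(k)}$. For this I would introduce the \emph{greedy selection ratio}
\begin{align}
    \rho(\vec{r}^{(k)}) \;:=\; \frac{\|B_{\mathrm{opt}}^{\dagger}\vec{r}^{(k)}\|_\infty}{\|A_{\mathrm{opt}}^{\dagger}\vec{r}^{(k)}\|_\infty},
\end{align}
and observe that $\rho(\vec{r}^{(k)})<1$ is precisely the condition guaranteeing that some atom in $\Lambda_{\mathrm{opt}}$ has strictly larger correlation with the residual than any atom outside it. Because $\Lambda^{(k)}\subseteq\Lambda_{\mathrm{opt}}$ and $\vec{r}^{(k)}$ is orthogonal to atoms already selected, we have $\langle \vec{d}_j,\vec{r}^{(k)}\rangle=0$ for every $j\in\Lambda^{(k)}$, so the $\arg\max$ of $|\langle \vec{d}_j,\vec{r}^{(k)}\rangle|$ over $[m]\setminus\Lambda^{(k)}$ falls inside $\Lambda_{\mathrm{opt}}\setminus\Lambda^{(k)}$, as desired.

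To bound $\rho(\vec{r}^{(k)})$, the key geometric fact is that $\vec{r}^{(k)}\in\mathrm{range}(A_{\mathrm{opt}})$, since both $\vec{s}=A_{\mathrm{opt}}\vec{x}^*$ and $D_{\Lambda^{(k)}}D_{\Lambda^{(k)}}^{+}\vec{s}$ lie there. Therefore $\vec{r}^{(k)}=A_{\mathrm{opt}}A_{\mathrm{opt}}^{+}\vec{r}^{(k)}$, yielding the algebraic identity
\begin{align}
    B_{\mathrm{opt}}^{\dagger}\vec{r}^{(k)} \;=\; B_{\mathrm{opt}}^{\dagger}A_{\mathrm{opt}}A_{\mathrm{opt}}^{+}\vec{r}^{(k)} \;=\; (A_{\mathrm{opt}}^{+}B_{\mathrm{opt}})^{\dagger}\,A_{\mathrm{opt}}^{\dagger}\vec{r}^{(k)}.
\end{align}
Combined with the elementary operator-norm bound $\|M^{\dagger}\vec{x}\|_\infty\leq\|M\|_{1\to 1}\|\vec{x}\|_\infty$, and the identity $\|M\|_{1\to 1}=\max_{j}\|M\vec{e}_j\|_1$, this gives
\begin{align}
    \rho(\vec{r}^{(k)}) \;\leq\; \|A_{\mathrm{opt}}^{+}B_{\mathrm{opt}}\|_{1\to 1} \;=\; \max_{\vec{\psi}}\|A_{\mathrm{opt}}^{+}\vec{\psi}\|_1 \;<\; 1,
\end{align}
with $\vec{\psi}$ ranging over the columns of $B_{\mathrm{opt}}$, exactly by hypothesis. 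This closes the induction.

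The only technical point requiring care -- and the main obstacle to a sloppy write-up -- is well-definedness of the ratio: one must verify $\|A_{\mathrm{opt}}^{\dagger}\vec{r}^{(k)}\|_\infty>0$ whenever $\vec{r}^{(k)}\neq 0$, so that the bound is meaningful and OMP does not pick from $\overline{\Lambda_{\mathrm{opt}}}$ by default. This holds because the columns of $A_{\mathrm{opt}}$ indexed by $\Lambda_{\mathrm{opt}}$ are linearly independent (otherwise $\vec{x}^*$ would not be the unique sparsest representation), so $A_{\mathrm{opt}}^{\dagger}$ is injective on $\mathrm{range}(A_{\mathrm{opt}})$ and cannot annihilate a nonzero residual. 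Once this is handled, the proof reduces to one linear-algebraic identity and one operator-norm inequality, which is exactly why the ERC has become the canonical clean sufficient condition for greedy recovery.
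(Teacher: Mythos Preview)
Your proof is correct and follows exactly the classical argument of Tropp~\cite{tropp2004greed}, which the paper cites rather than reproves. The paper itself does not give a proof of this theorem; it only sketches the key inequality $\rho(\vec{r})\leq\max_{\vec{\psi}}\|A_{\mathrm{opt}}^{+}\vec{\psi}\|_1$ inside the proof of the QOMP analogue (Theorem~\ref{thm: qomp  exact recovery qomp}), and your derivation of that bound via $\vec{r}\in\mathrm{range}(A_{\mathrm{opt}})$, the identity $B_{\mathrm{opt}}^{\dagger}\vec{r}=(A_{\mathrm{opt}}^{+}B_{\mathrm{opt}})^{\dagger}A_{\mathrm{opt}}^{\dagger}\vec{r}$, and the $\ell_1\!\to\!\ell_1$ operator-norm duality is precisely Tropp's route.
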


Intuitively, this condition ensures that the atoms in the optimal support dominate the correlations with the residual, so that OMP will not be misled into selecting an atom outside $\Lambda_{\mathrm{opt}}$.

Since the optimal support is unknown a priori, the ERC is often specialized to dictionary-wide properties.
The most common is \emph{mutual incoherence}, which measures the largest normalized correlation between distinct atoms.

\begin{definition} [Mutual Incoherence]
For a set of vectors $\vec{x}_i \in \mathbb{C}^m$, $i \in [n]$, the mutual incoherence $\mu \in \R^+$ is the largest absolute value of normalized correlation between these vectors: $\mu= \mathrm{max}_{i,j \in [n], i \neq j} \frac{\abs{\inner{\vec{x}_i}{\vec{x}_j}}}{\norm{\vec{x}_i}_2\norm{\vec{x}_j}_2}$.
\end{definition}

When $\mu$ is small, atoms are nearly orthogonal, which makes them easier to distinguish. The following corollary gives a clean incoherence-based recovery condition.

\begin{corollary}[MI condition for OMP]
\label{coro: qomp incoherence condition omp 2}
    OMP recovers every superposition of $K$ atoms from $D$ in $K$ iterations if  
    \begin{align}
        K < \frac{1}{2}(\mu^{-1} + 1).
    \end{align}
\end{corollary}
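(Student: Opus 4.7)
The plan is to derive the incoherence bound from the Exact Recovery Condition (ERC) of Theorem~\ref{thm: qomp  exact recovery omp}. That theorem says OMP succeeds whenever $\max_{\vec{\psi}}\|A_{\mathrm{opt}}^{+}\vec{\psi}\|_{1}<1$, with $\vec{\psi}$ ranging over the columns of $B_{\mathrm{opt}}$ (the atoms outside the true support). So my goal is simply to upper bound this quantity in terms of $\mu$ and $K=|\Lambda_{\mathrm{opt}}|$, and then solve the resulting inequality for $K$.

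First, I would write $A_{\mathrm{opt}}^{+}=(A_{\mathrm{opt}}^{*}A_{\mathrm{opt}})^{-1}A_{\mathrm{opt}}^{*}$ and set $G=A_{\mathrm{opt}}^{*}A_{\mathrm{opt}}$ (the Gram matrix of the $K$ selected atoms) and $\vec{w}=A_{\mathrm{opt}}^{*}\vec{\psi}$. Because all atoms are unit-norm and $\vec{\psi}$ is distinct from every column of $A_{\mathrm{opt}}$, each entry of $\vec{w}$ is an off-diagonal inner product between distinct atoms, hence $\|\vec{w}\|_{\infty}\leq \mu$. Similarly, $G$ has $1$'s on the diagonal and entries bounded by $\mu$ in modulus off-diagonal, so each column of $I-G$ has $\ell_{1}$-norm at most $\mu(K-1)$, giving $\|I-G\|_{1\to 1}\leq \mu(K-1)$.

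Second, assuming $\mu(K-1)<1$, a Neumann-series argument yields $\|G^{-1}\|_{1\to 1}\leq \frac{1}{1-\mu(K-1)}$, and the same bound holds for $\|G^{-1}\|_{\infty\to\infty}$ since $G$ is Hermitian. Combining,
\begin{align}
\|A_{\mathrm{opt}}^{+}\vec{\psi}\|_{1}
= \|G^{-1}\vec{w}\|_{1}
\leq K\,\|G^{-1}\vec{w}\|_{\infty}
\leq K\,\|G^{-1}\|_{\infty\to\infty}\|\vec{w}\|_{\infty}
\leq \frac{K\mu}{1-\mu(K-1)}.
\end{align}

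Third, I would impose the ERC by demanding $\frac{K\mu}{1-\mu(K-1)}<1$. Simple algebra gives $2\mu K<1+\mu$, i.e. $K<\tfrac{1}{2}(\mu^{-1}+1)$, exactly the hypothesis of the corollary. Since this bound also ensures $\mu(K-1)<1$, the Neumann series used above is justified \emph{a posteriori}, so the argument is self-contained. By Theorem~\ref{thm: qomp  exact recovery omp}, OMP then selects a correct atom at every step and therefore recovers the unique $K$-sparse representation in exactly $K$ iterations.

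The steps are largely routine once the ERC is in hand; the only subtle point is making sure that the bound used for $\|G^{-1}\|$ is tight enough that the final threshold is $\tfrac{1}{2}(\mu^{-1}+1)$ and not a looser constant. Using the $\ell_{\infty}\to\ell_{\infty}$ (or $\ell_{1}\to\ell_{1}$) induced norm rather than the spectral norm is what produces the sharp constant; a naive spectral-norm bound would lose a factor and yield a weaker threshold. That choice of norm is the main thing to get right.
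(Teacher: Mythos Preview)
Your proposal is correct and follows essentially the same approach as the paper. The paper itself does not prove this corollary explicitly but cites Tropp's bound $\max_{\vec{\psi}}\|A_{\mathrm{opt}}^{+}\vec{\psi}\|_{1}\leq \frac{K\mu}{1-(K-1)\mu}$ (used in the proof of the QOMP version, Corollary~\ref{coro: qomp incoherence condition omp}); you have effectively reconstructed Tropp's derivation of that inequality via the Gram matrix and Neumann series, and then solved for $K$ exactly as the paper does.
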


This recovery condition is sharp in the general case, as it would fail for any $\ceil{\frac{1}{2}(\mu^{-1} + 1)}$ atoms from an equiangular tight frame with $m=n+1$ vectors~\cite{tropp2004greed}.
Moreover, this condition also guarantees uniqueness of the recovered solution.


\subsection{Quantum recovery guarantees}
The recovery analysis of QOMP builds on the classical theory of OMP, but its adaptation to the quantum setting requires new ingredients.
In the classical case, the Exact Recovery Condition (Theorem~\ref{thm: qomp exact recovery omp}) and its incoherence-based corollary (Corollary~\ref{coro: qomp incoherence condition omp 2}) ensure that OMP selects a correct atom at every iteration, relying on exact evaluations of inner products between the residual and the dictionary atoms.

QOMP, in contrast, can only access approximate inner products, obtained through quantum estimation routines.
The central technical issue is therefore to prove that these approximation errors do not accumulate across iterations, and that the greedy selection rule continues to succeed under bounded quantum error.
This is made possible by the algorithm’s \emph{error-resetting strategy}: rather than updating the residual incrementally, QOMP defines it afresh at each iteration as the orthogonal projection of the signal onto the complement of the chosen support.
As a consequence, no error carries over from earlier steps; the only approximation that matters at iteration $k$ is the precision of the oracle used to compare candidate atoms.

Formally, the atom selection oracle $O_i$ of Eq.~(\ref{eq: Oi}) returns an estimate of the correlations $|\bracket{\vec{d}_j}{\vec{r}}|$ up to error $\epsilon_i$.
Exact recovery is guaranteed provided that, despite this slack, the optimal atoms remain distinguishable from the rest.
The following theorem makes this requirement precise by introducing a parameter $\eta \in (0,1)$ that quantifies the tolerated estimation error relative to the signal.

\begin{theorem}[Exact Recovery for QOMP]
\label{thm: qomp  exact recovery qomp}
    Let $\eta \in (0,1)$.
    Let the error on the inner product estimation be $\epsilon_i \leq \eta\min_{k \in [K]}(\|A^\dagger_{\mathrm{opt}} \vec{r}\|_\infty)/2$.
    A sufficient condition for QOMP to recover the sparsest representation of the input signal is that
    \begin{align}
        \max_{\vec{\psi}} \|A_{\mathrm{opt}}^+\vec{\psi}\|_1 < 1 - \eta
    \end{align}
    where $\vec{\psi}$ ranges over the columns of $B_{\mathrm{opt}}$.
\end{theorem}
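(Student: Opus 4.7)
The plan is to adapt Tropp's classical Exact Recovery argument to account for the $\epsilon_i$-slack introduced by the approximate max-finding subroutine used in QOMP's atom selection step. I would argue by induction on the iteration count $k$: assume that every atom QOMP has selected so far lies in $\Lambda_{\mathrm{opt}}$, so that $\Lambda \subseteq \Lambda_{\mathrm{opt}}$. Because the residual is defined by the error-resetting rule $\vec{r}=\vec{s}-D_\Lambda D_\Lambda^+\vec{s}=(I-P_\Lambda)\vec{s}$ with $P_\Lambda$ the orthogonal projector onto $\mathrm{range}(D_\Lambda)\subseteq\mathrm{range}(A_{\mathrm{opt}})$, and since $\vec{s}\in\mathrm{range}(A_{\mathrm{opt}})$, we get $\vec{r}\in\mathrm{range}(A_{\mathrm{opt}})$. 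The inductive step then reduces to showing that the next index $j^*$ picked by QOMP must also lie in $\Lambda_{\mathrm{opt}}$.

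Next, I would translate the requirement ``$j^*\in\Lambda_{\mathrm{opt}}$'' into a quantitative gap condition on $\vec{r}$. By Corollary~\ref{coro: finding approximate maximum}, together with the $\epsilon_i$-accuracy of the inner-product oracle $O_i$, the chosen index satisfies $|\inner{\vec{d}_{j^*}}{\vec{r}}|\ge \max_j|\inner{\vec{d}_j}{\vec{r}}|-2\epsilon_i$. To force this maximizer to sit inside $\Lambda_{\mathrm{opt}}$, it suffices to have
\begin{align}
\|B_{\mathrm{opt}}^\dagger \vec{r}\|_\infty + 2\epsilon_i \;<\; \|A_{\mathrm{opt}}^\dagger \vec{r}\|_\infty.
\end{align}
Dividing by $\|A_{\mathrm{opt}}^\dagger\vec{r}\|_\infty$ and defining the classical greedy-selection ratio $\rho(\vec{r})\defeq \|B_{\mathrm{opt}}^\dagger\vec{r}\|_\infty/\|A_{\mathrm{opt}}^\dagger\vec{r}\|_\infty$, the condition becomes $\rho(\vec{r})< 1-\frac{2\epsilon_i}{\|A_{\mathrm{opt}}^\dagger\vec{r}\|_\infty}$. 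Under the hypothesis $\epsilon_i\le \eta\min_k\|A_{\mathrm{opt}}^\dagger\vec{r}^{(k)}\|_\infty/2$, the slack is bounded uniformly by $\eta$, so the requirement collapses to $\rho(\vec{r})<1-\eta$.

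The last step is the Tropp-style bound on $\rho(\vec{r})$. Since $\vec{r}\in\mathrm{range}(A_{\mathrm{opt}})$, we have $\vec{r}=A_{\mathrm{opt}}A_{\mathrm{opt}}^+\vec{r}$, whence
\begin{align}
B_{\mathrm{opt}}^\dagger \vec{r} \;=\; B_{\mathrm{opt}}^\dagger A_{\mathrm{opt}} (A_{\mathrm{opt}}^\dagger A_{\mathrm{opt}})^{-1} A_{\mathrm{opt}}^\dagger \vec{r} \;=\; (A_{\mathrm{opt}}^+B_{\mathrm{opt}})^\dagger\, A_{\mathrm{opt}}^\dagger \vec{r}.
\end{align}
Applying the induced $\ell_\infty\to\ell_\infty$ operator norm gives $\|B_{\mathrm{opt}}^\dagger\vec{r}\|_\infty\le \|A_{\mathrm{opt}}^+B_{\mathrm{opt}}\|_{1,1}\|A_{\mathrm{opt}}^\dagger\vec{r}\|_\infty$, and the induced $(1,1)$-norm equals the largest $\ell_1$-norm of a column, so $\rho(\vec{r})\le \max_{\vec{\psi}}\|A_{\mathrm{opt}}^+\vec{\psi}\|_1$ with $\vec{\psi}$ ranging over the columns of $B_{\mathrm{opt}}$. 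Combined with the theorem's hypothesis, this yields $\rho(\vec{r})<1-\eta$, closing the induction. After at most $K=|\Lambda_{\mathrm{opt}}|$ iterations the residual vanishes (because $D_\Lambda D_\Lambda^+\vec{s}=\vec{s}$ once $\Lambda\supseteq\Lambda_{\mathrm{opt}}$), and QOMP terminates on the full optimal support.

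The main obstacle is the somewhat circular look of the precision assumption: the bound $\epsilon_i\le \eta\min_k\|A_{\mathrm{opt}}^\dagger\vec{r}^{(k)}\|_\infty/2$ depends on a quantity that is itself a function of the algorithmic trajectory. The resolution is to treat the condition as a per-iteration invariant that survives the inductive step: as long as the induction has preserved $\Lambda^{(k-1)}\subseteq\Lambda_{\mathrm{opt}}$, the residual $\vec{r}^{(k)}$ is the one appearing in the minimum, so the bound $\epsilon_i\le \eta\|A_{\mathrm{opt}}^\dagger\vec{r}^{(k)}\|_\infty/2$ is automatically valid at that iteration. A secondary subtlety is the high-probability qualifier on the approximate max-finding and on each inner product estimation; these can be made to hold jointly across all $K$ iterations by union-bounding the failure events and amplifying each subroutine, as discussed in Section~\ref{section: amplification of success probabilities}, at the cost of only polylogarithmic overhead that does not affect the recovery criterion.
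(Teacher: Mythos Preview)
Your proposal is correct and follows essentially the same approach as the paper's proof: both argue by induction, translate the approximate-max guarantee into the gap condition $\|B_{\mathrm{opt}}^\dagger\vec{r}\|_\infty+2\epsilon_i<\|A_{\mathrm{opt}}^\dagger\vec{r}\|_\infty$, use the hypothesis on $\epsilon_i$ to reduce this to $\rho(\vec{r})<1-\eta$, and then invoke Tropp's bound $\rho(\vec{r})\le\max_{\vec{\psi}}\|A_{\mathrm{opt}}^+\vec{\psi}\|_1$. You spell out more detail than the paper (the range argument for $\vec{r}$, the derivation of Tropp's inequality via induced norms, the discussion of the circular-looking precision assumption and of amplification), whereas the paper simply cites Tropp for Eq.~\eqref{eq: qomp rho ineq}; one small caveat is that with the paper's zero-padding convention $A_{\mathrm{opt}}^\dagger A_{\mathrm{opt}}$ is singular, so the formula $(A_{\mathrm{opt}}^\dagger A_{\mathrm{opt}})^{-1}A_{\mathrm{opt}}^\dagger$ should be read as the pseudoinverse restricted to the support $\Lambda_{\mathrm{opt}}$, but this does not affect the argument.
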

\begin{proof}
    In the original proof of Theorem~\ref{thm: qomp  exact recovery omp}, \citet[Theorem 3.1]{tropp2004greed} makes sure that OMP selects an atom from the optimal set at each iteration by imposing that $\rho(\vec{r}) \defeq \frac{\|B^\dagger_{\mathrm{opt}}\vec{r}\|_\infty}{\|A^\dagger_{\mathrm{opt}}\vec{r}\|_\infty} < 1$, meaning that the inner products with the optimal atoms is always greater than the suboptimal ones.
    Then, the crucial step is to show that
    \begin{align}
    \label{eq: qomp  rho ineq}
        \rho(\vec{r}) \leq \max_{\vec{\psi}} \|A^+_{\mathrm{opt}} \vec{\psi}\|_1
    \end{align}
    where $\vec{\psi}$ ranges over the columns of $B_{\mathrm{opt}}$. 
    Their proof of Theorem~\ref{thm: qomp  exact recovery omp} follows directly from this equation.
    
    We approach our proof similarly, and make use of the equation above.
    To recover the optimal subset of atoms, QOMP needs to succeed at each iteration. 
    Assume that the first $k-1$ iterations succeeded. 
    At the $k^{\mathrm{th}}$ iteration, QOMP selects the atom $j^* = \argmax_{j \in \overline{\Lambda}} |\bracket{\vec{d}_j}{\vec{r}}| - 2\epsilon_i$, where $\epsilon_i$ is the error on the inner products $|\langle\vec{d}_j, \vec{r}\rangle - z_j| \leq \epsilon_i$, as by the approximate oracle $O_i$ (\ref{eq: Oi}) and Corollary~\ref{coro: finding approximate maximum}.
    Thus, requiring that QOMP selects an atom from $\Lambda_{\mathrm{opt}}$ is equivalent to asking for $\|A^\dagger_{\mathrm{opt}}\vec{r}\|_\infty - 2\epsilon_i> \|B^\dagger_{\mathrm{opt}}\vec{r}\|_\infty$.
    This leads to the inequality $\frac{\|B^\dagger_{\mathrm{opt}}\vec{r}\|_\infty}{\|A^\dagger_{\mathrm{opt}}\vec{r}\|_\infty} < 1 - \frac{2\epsilon_i}{\|A^\dagger_{\mathrm{opt}}\vec{r}\|_\infty}$.
    Defining $\frac{2\epsilon_i}{\|A^\dagger_{\mathrm{opt}}\vec{r}\|_\infty} = \eta$ (hence, asking $\epsilon_i \leq \eta \frac{\|A^\dagger_{\mathrm{opt}} \vec{r}\|_\infty}{2}$) and using Eq.~(\ref{eq: qomp  rho ineq}), we derive the sufficient condition $\max_{\vec{\psi}} \|A^+_{\mathrm{opt}} \vec{\psi}\|_1 < 1 - \eta$.
    To select the best atom in all the iterations, we need $\epsilon_i \leq \eta \min_{k \in [K]}(\|A^\dagger_{\mathrm{opt}} \vec{r}\|_\infty)/2$.
\end{proof}

This theorem should be read as a genuine strengthening of the classical analysis: Tropp’s ERC~\cite{tropp2004greed} ensures success when $\max_{\vec{\psi}}|A_{\mathrm{opt}}^+\vec{\psi}|_1 < 1$, while in QOMP the bound becomes $< 1-\eta$.
The parameter $\eta$ directly quantifies robustness: smaller $\epsilon_i$ (more accurate inner product oracles) allow recovery under weaker conditions, while larger $\epsilon_i$ require stronger incoherence.
Specializing to mutual incoherence yields the following corollary, which extends the classical incoherence condition to the quantum domain.

\begin{corollary}[Incoherence condition for QOMP]
\label{coro: qomp incoherence condition omp}
    Let $\eta \in (0,1)$.
    Let the error on the inner product estimation be $\epsilon_i \leq \eta\min_{k \in [K]}(\|A^\dagger_{\mathrm{opt}} \vec{r}^{(k)}\|_\infty)/2$.
    Then, QOMP selects an atom from $\Lambda_\mathrm{opt}$ at each iteration for any superposition of $K$ atoms from $D$ if
    \begin{align}
        K < \frac{(1-\eta)}{(2-\eta)}(\mu^{-1} + 1).
    \end{align}
\end{corollary}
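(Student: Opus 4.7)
The plan is to reduce this to the Exact Recovery condition for QOMP (Theorem~\ref{thm: qomp  exact recovery qomp}) and then apply the standard Tropp-style mutual-incoherence bound on $\|A_{\mathrm{opt}}^+\vec{\psi}\|_1$. The statement of Theorem~\ref{thm: qomp  exact recovery qomp} gives a clean sufficient condition: as long as the hypothesis on $\epsilon_i$ is met and
\begin{align}
\max_{\vec{\psi}} \|A_{\mathrm{opt}}^+ \vec{\psi}\|_1 < 1 - \eta,
\end{align}
where $\vec{\psi}$ ranges over the columns of $B_{\mathrm{opt}}$, QOMP selects an atom from $\Lambda_{\mathrm{opt}}$ at every iteration. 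So the corollary reduces to showing that $K < \frac{1-\eta}{2-\eta}(\mu^{-1}+1)$ implies this inequality in terms only of the mutual incoherence $\mu$ of the dictionary.

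The key intermediate estimate I would prove is the Tropp bound
\begin{align}
\max_{\vec{\psi}} \|A_{\mathrm{opt}}^+ \vec{\psi}\|_1 \leq \frac{K\mu}{1 - (K-1)\mu},
\end{align}
assuming $(K-1)\mu < 1$. The derivation has two ingredients. First, writing the Gram matrix $G = A_{\mathrm{opt}}^\dagger A_{\mathrm{opt}} = I + H$, the atoms being unit-norm forces a unit diagonal, and mutual incoherence bounds $|H_{ij}| \leq \mu$ for $i \neq j$, so the induced $\ell_1 \to \ell_1$ norm satisfies $\|H\|_{1\to1} \leq (K-1)\mu$. A Neumann series expansion then yields $\|G^{-1}\|_{1\to1} \leq 1/(1-(K-1)\mu)$. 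Second, for any column $\vec{\psi}$ of $B_{\mathrm{opt}}$ (a unit atom outside the optimal support), each entry of $A_{\mathrm{opt}}^\dagger \vec{\psi}$ is bounded in absolute value by $\mu$, so $\|A_{\mathrm{opt}}^\dagger \vec{\psi}\|_1 \leq K\mu$. Combining these two bounds through $A_{\mathrm{opt}}^+ = G^{-1} A_{\mathrm{opt}}^\dagger$ gives the displayed inequality.

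Once this bound is in hand, the corollary follows by a short algebraic manipulation. The sufficient condition of Theorem~\ref{thm: qomp  exact recovery qomp} becomes
\begin{align}
\frac{K\mu}{1-(K-1)\mu} < 1 - \eta,
\end{align}
which I rearrange as $K\mu + (1-\eta)(K-1)\mu < 1-\eta$, i.e., $K(2-\eta)\mu < (1-\eta)(1+\mu)$, and finally
\begin{align}
K < \frac{1-\eta}{2-\eta}\left(\mu^{-1} + 1\right).
\end{align}
As a sanity check, setting $\eta = 0$ recovers the classical Corollary~\ref{coro: qomp incoherence condition omp 2}, and any $\eta \in (0,1)$ strictly tightens the required sparsity, which matches the intuition that approximate inner-product estimation must be paid for by a stronger incoherence/sparsity trade-off.

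I do not expect a real obstacle here: the nontrivial content is packaged in Theorem~\ref{thm: qomp  exact recovery qomp}, and the remaining work is the classical Gram-matrix argument of Tropp translated verbatim to the $(1-\eta)$ right-hand side. The only point worth being careful about is the condition $(K-1)\mu < 1$ under which the Neumann series converges; but this is automatically implied by the conclusion $K < \frac{1-\eta}{2-\eta}(\mu^{-1}+1) < \mu^{-1}+1$, so no extra hypothesis is needed.
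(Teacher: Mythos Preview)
Your proposal is correct and follows essentially the same route as the paper: invoke Theorem~\ref{thm: qomp  exact recovery qomp}, apply Tropp's bound $\max_{\vec{\psi}}\|A_{\mathrm{opt}}^+\vec{\psi}\|_1 \leq K\mu/(1-(K-1)\mu)$, and solve the resulting inequality for $K$. The paper simply cites the Tropp bound from \cite{tropp2004greed} rather than sketching the Gram-matrix/Neumann-series argument, but the logic is identical.
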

\begin{proof}
    \citet[Proof of Theorem 3.5]{tropp2004greed} shows $\max_{\vec{\psi}} \|A^+_{\mathrm{opt}}\vec{\psi}\|_1 \leq \frac{K\mu}{\ 1 - (K-1)\mu}$.
    We leverage this equation to prove our result.   
    Theorem~\ref{thm: qomp  exact recovery qomp} states that QOMP performs exact recovery in $K$ steps if $\max_{\vec{\psi}} \|A^+_{\mathrm{opt}}\vec{\psi}\|_1 < 1 - \eta$ and $\epsilon_i \leq \eta\min_{k \in [K]}(\|A^\dagger_{\mathrm{opt}} \vec{r}^{(k)}\|_\infty)/2$.
    Hence, imposing $\frac{K\mu}{\ 1 - (K-1)\mu} < 1 - \eta$ and solving for $K$, we obtain $K < \frac{(1-\eta)}{(2-\eta)}(\mu^{-1} + 1)$.
\end{proof}

Compared with the classical incoherence bound $K < \tfrac{1}{2}(\mu^{-1}+1)$, the quantum condition includes the multiplicative factor $\tfrac{1-\eta}{2-\eta}$, which smoothly interpolates between the classical threshold (as $\eta \to 0$) and stricter requirements under finite oracle error.
This reflects the fact that QOMP must guard against approximate comparisons while still following the greedy atom-selection rule.

Overall, these results show that QOMP inherits the same structural recovery guarantees as OMP, up to an explicit slack that reflects the accuracy of the quantum estimation procedures.
In this sense, the classical theory of exact recovery carries over essentially unchanged, provided the precision of the oracles is calibrated appropriately.
This observation clarifies that the introduction of quantum subroutines, while substantially reducing the iteration cost, does not compromise the conditions under which greedy sparse recovery succeeds.

The guarantees proved above allow us to go beyond algorithmic analysis and apply QOMP to the concrete task quantum sparse recovery and tomography with respect to arbitrary dictionaries.


\section{Learning sparse quantum states}
\label{sec: learning sparse quantum states}
We now leverage QOMP to address the problem of \emph{exact quantum sparse recovery} and \emph{sparse quantum tomography}. 
In this task,  one is given quantum access to a target pure state $\ket{\vec{s}}$ and to a dictionary $D=\{\vec{d}_1,\ldots,\vec{d}_m\}$, with the promise that $\ket{\vec{s}}$ admits an exact $K$-sparse representation in $D$. 
The goal is to recover, up to error $\epsilon$, a concise classical description of $\ket{\vec{s}}$ in terms of a small subset of dictionary vectors.  
This is the natural analogue of compressed sensing in quantum information, and it provides a concrete setting in which the structural guarantees of QOMP translate into provable improvements for tomography.

The learning problem separates naturally into two stages.  
First, one must identify the \emph{support}; i.e., the subset $\Lambda_{\mathrm{opt}}$ of at most $K$ atoms whose span contains $\ket{\vec{s}}$, or a subset $\Lambda \subseteq \Lambda_{\mathrm{opt}}$ containing an $\epsilon$-approximation of $\ket{\vec{s}}$.  
Second, once $\Lambda$ has been recovered, one must estimate the coefficients of the expansion of $\ket{\vec{s}}$ in that subdictionary.  
We address each of these stages in turn.

\subsection{Recovering the support}
Support recovery is the combinatorial core of sparse tomography. 
Classically, algorithms such as OMP succeed under incoherence assumptions guaranteeing that an atom from the optimal support is identified at every iteration.  
Our analysis in the previous section shows that QOMP inherits these guarantees in the quantum setting, provided the inner product oracle is accurate to within a slack factor $\eta$.  
The challenge is to convert these structural guarantees into query-complexity bounds when $\ket{\vec{s}}$ and $D$ are accessible only via state-preparation unitaries.

The following theorem establishes such a guarantee: if $\ket{\vec{s}}$ admits a $K$-sparse representation in $D$ and the dictionary obeys the usual incoherence bounds, then QOMP identifies a support $\Lambda \subseteq \Lambda_{\mathrm{opt}}$ of size at most $K$ such that $\ket{\vec{s}}$ lies within $\epsilon$ of $\mathrm{span}\{\vec{d}_j : j \in \Lambda\}$, with high probability.  
The query complexity is $\widetilde{O}(\tfrac{K^{3/2}}{\gamma \eta}\tfrac{\sqrt{m}}{\epsilon})$ to the state-preparation oracles and $\widetilde{O}(\tfrac{K^2}{\gamma \eta}\tfrac{\sqrt{m}}{\epsilon})$ to the dictionary oracles, together with polynomially many additional quantum and classical resources.

\begin{theorem}[Sparse recovery with QOMP]
\label{theorem: sparse recovery with QOMP}
    Let $\epsilon, \eta \in (0,1)$. 
    Let there be quantum access to $\ket{\vec{s}} \in \C^n$ and $D \in \C^{n \times m}$ via state preparation unitaries $U_s$, $U_D$, inverses, and controlled versions. 
    Suppose that $\ket{\vec{s}}$ admits an exact $K$-sparse representation in $D$, where $K$ is a known upper bound on the sparsity.
    That is, there exists a subset $\Lambda_{\mathrm{opt}} \subseteq [m]$ with $|\Lambda_{\mathrm{opt}}|\leq K$ such that $\ket{\vec{s}} \in \mathrm{span}\{d_j : j \in \Lambda_{\mathrm{opt}}\}$.
    Let $\mu = \max_{i \neq j} |\bracket{d_i}{d_j}|$ denote the mutual incoherence of $D$.
    If 
    \begin{align}
        K < \frac{1 - \eta}{2 - \eta}\left(\frac{1}{\mu} + 1\right),
    \end{align} 
    then the QOMP algorithm, run for at most $K$ iterations or until the estimated residual norm is $\leq \epsilon/2$, with parameters $\epsilon_i \leq \eta\frac{1}{\sqrt{K}}\gamma \epsilon$, $\epsilon_f = \epsilon/2$, where $\gamma$ is a lower bound on $\sigma_{\min}(D_{\Lambda_\mathrm{opt}})$, satisfies the following:
    \begin{enumerate}
        \item It uses a total of $\widetilde{O}(\frac{K^{3/2}}{\gamma \eta}\frac{\sqrt{m}}{\epsilon})$ queries to $U_s$, $U_s^\dagger$, and their controlled versions.
        \item It uses a total of $\widetilde{O}(\frac{K^{2}}{\gamma \eta}\frac{\sqrt{m}}{\epsilon})$ queries to $U_D$, $U_D^\dagger$, and their controlled versions.
        \item It uses polynomially many other quantum and classical resources.
        \item It outputs a support $\Lambda \subseteq \Lambda_{\mathrm{opt}}$ of size at most $K$ whose span contains a vector within $\epsilon$ of $\ket{\vec{s}}$, with high probability.
    \end{enumerate}
\end{theorem}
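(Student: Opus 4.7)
The plan is to combine the structural support-recovery guarantee of Corollary~\ref{coro: qomp incoherence condition omp} with the per-iteration cost of Theorem~\ref{theorem: QOMP iteration cost}, then sum over at most $K$ iterations and apply a union bound to control failure probabilities. The central inductive invariant, maintained throughout the run, is that the current selected support $\Lambda$ satisfies $\Lambda \subseteq \Lambda_{\mathrm{opt}}$.

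Under this invariant, $D_\Lambda$ is a column-submatrix of $A_{\mathrm{opt}}$, so singular-value interlacing for column removal gives $\sigma_{\min}(D_\Lambda) \geq \sigma_{\min}(A_{\mathrm{opt}}) \geq \gamma$ at every iteration, which keeps the conditioning assumption of Theorem~\ref{theorem: QOMP iteration cost} uniformly satisfied. Moreover, the residual $\vec{r}^{(k)} = \vec{s} - D_\Lambda D_\Lambda^+ \vec{s}$ lies in the column space of $A_{\mathrm{opt}}$ since $\vec{s}$ does. Writing $\vec{r}^{(k)} = A_{\mathrm{opt}} \vec{y}$ and combining $\|\cdot\|_\infty \geq \|\cdot\|_2/\sqrt{K}$ with the SVD lower bound $\|A_{\mathrm{opt}}^\dagger \vec{r}^{(k)}\|_2 \geq \sigma_{\min}(A_{\mathrm{opt}})\|\vec{r}^{(k)}\|_2$ yields
\begin{align}
\|A_{\mathrm{opt}}^\dagger \vec{r}^{(k)}\|_\infty \geq \frac{\gamma \|\vec{r}^{(k)}\|_2}{\sqrt{K}}.
\end{align}
Hence whenever the true residual norm exceeds $\epsilon$, the precondition $\epsilon_i \leq \eta\|A_{\mathrm{opt}}^\dagger \vec{r}^{(k)}\|_\infty/2$ of Corollary~\ref{coro: qomp incoherence condition omp} is satisfied (modulo the absorbable constant $1/2$), and the incoherence hypothesis $K < \tfrac{1-\eta}{2-\eta}(1/\mu+1)$ forces QOMP to pick a new atom inside $\Lambda_{\mathrm{opt}}$, closing the induction.

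The algorithm halts either at iteration $K$, in which case $\Lambda = \Lambda_{\mathrm{opt}}$ and $\ket{\vec{s}}$ lies exactly in the span of the selected atoms, or when the estimated residual norm falls below $\epsilon/2$, which by $\epsilon_f = \epsilon/2$ implies the true residual norm is at most $\epsilon$. Either case delivers claim (4). For the query counts, the plan is to substitute $\|\vec{s}\|=1$, $\epsilon_i=\eta\gamma\epsilon/\sqrt{K}$, $\epsilon_f=\epsilon/2$, and the uniform upper bound $\sqrt{k}/\gamma \leq \sqrt{K}/\gamma$ into Theorem~\ref{theorem: QOMP iteration cost}, and then sum over at most $K$ iterations. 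The $U_s$ queries total $\widetilde{O}(K \cdot \sqrt{mK}/(\gamma\eta\epsilon)) = \widetilde{O}(K^{3/2}\sqrt{m}/(\gamma\eta\epsilon))$, while the $U_D$ queries pick up the extra $\sqrt{K}/\gamma$ factor. Amplifying the per-iteration success probability to $1 - O(1/K)$ via the powering and discrete-amplification lemmas (Lemmas~\ref{lemma: powering lemma (median)} and~\ref{lemma: amplification lemma (majority)}) and applying the union bound (Theorem~\ref{theorem: union bound}) preserves the overall high-probability guarantee with only a polylogarithmic overhead absorbed into the tilde.

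The main obstacle is reconciling the iteration-dependent tolerance $\epsilon_i \leq \eta\|A_{\mathrm{opt}}^\dagger \vec{r}^{(k)}\|_\infty/2$ demanded by Corollary~\ref{coro: qomp incoherence condition omp} with the single fixed $\epsilon_i$ chosen by the algorithm. The resolution is precisely the inductive coupling between the exit test on $\|\vec{r}\|$ and the lower bound on $\|A_{\mathrm{opt}}^\dagger \vec{r}\|_\infty$: the algorithm can only commit a wrong selection when the residual is already small enough that the exit condition would (approximately) have triggered, in which case the returned subset is already $\epsilon$-good. Carefully bookkeeping the interplay of $\epsilon_i$, $\epsilon_f$, and the accumulated failure probability across the $K$ rounds is the main technical burden; once set up, the asymptotic bounds follow directly from Theorem~\ref{theorem: QOMP iteration cost}.
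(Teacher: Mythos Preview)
Your proposal is correct and follows essentially the same route as the paper: invoke Corollary~\ref{coro: qomp incoherence condition omp}, lower-bound $\|A_{\mathrm{opt}}^\dagger \vec{r}\|_\infty$ via $\|\vec{r}\|_2/\sqrt{K}$ and $\sigma_{\min}(A_{\mathrm{opt}})$ so that the fixed $\epsilon_i$ suffices while the exit test has not triggered, then substitute into Theorem~\ref{theorem: QOMP iteration cost} and sum over $K$ iterations with a union bound. The only ingredient the paper makes explicit that you leave implicit is the use of Markov's inequality (Theorem~\ref{theorem: markovs inequality}) to convert the \emph{expected} per-iteration cost of Theorem~\ref{theorem: QOMP iteration cost} into a worst-case bound, since the parameters $(\gamma,\epsilon,\eta,K)$ are known; this is the remaining piece of the ``careful bookkeeping'' you allude to.
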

\begin{proof}
    We first bound $\|D_{\Lambda_\mathrm{opt}}^\dagger \vec{r}\|_\infty$ and the running time, and then establish the approximation guarantee.
    \smallskip
    
    1, 2) Corollary~\ref{coro: qomp incoherence condition omp} ensures that if $K < \frac{1 - \eta}{2 - \eta}\left(\frac{1}{\mu} + 1\right)$ and $\epsilon_i \leq \eta\|D_{\Lambda_\mathrm{opt}}^\dagger \vec{r}\|_\infty/2$, then, at each iteration, QOMP selects an atom from the optimal set $\Lambda_{\mathrm{opt}}$ with high probability. 
    While $\overline{\|\vec{r}\|}>\epsilon/2$ we have $\|\vec{r}\|_2>\epsilon$, and therefore
    \begin{align}
        \|D_{\Lambda_\mathrm{opt}}^\dagger \vec{r}\|_\infty \geq \frac{1}{\sqrt{K}}\|D_{\Lambda_\mathrm{opt}}^\dagger \vec{r}\|_2 \geq \frac{1}{\sqrt{K}}\sigma_{\min}(D_{\Lambda_\mathrm{opt}})\|\vec{r}\|_2 > \frac{1}{\sqrt{K}}\sigma_{\min}(D_{\Lambda_\mathrm{opt}})\epsilon.
    \end{align}
    Hence, $\epsilon_i \leq \frac{1}{\sqrt{K}}\gamma \epsilon$, with $\gamma \leq \sigma_{\min}(D_{\Lambda_\mathrm{opt}})$, suffices for correct selection at each iteration, with high probability.
    Conditioning on success of all iterations, the procedure selects only elements of $\Lambda_{\mathrm{opt}}$, so the output support $\Lambda$ satisfies $\Lambda\subseteq\Lambda_{\mathrm{opt}}$ and $|\Lambda|\le K$.
    By the \emph{Union bound} and the \emph{Discrete amplification lemma} (Sec.~\ref{section: amplification of success probabilities}), this holds with high probability with only $\Ot(1)$ overhead.

    Using \emph{QOMP iteration cost} (Theorem~\ref{theorem: QOMP iteration cost}) and that the algorithm runs for at most $K$ iterations, the expected number of queries to $U_s$, $U_s^\dagger$, $U_D$, $U_D^\dagger$, and their controlled versions are $\widetilde{O}\left(\frac{K^{1.5}}{\gamma\eta}\frac{\sqrt{m}}{\epsilon}\right)$ and $\widetilde{O}\left(\frac{K^{2}}{\gamma\eta}\frac{\sqrt{m}}{\epsilon}\right)$.
    Since this expectation is expressed in terms of known parameters $(\gamma,\epsilon,\eta,K)$, \emph{Markov’s inequality} yields a worst-case bound with the same scaling (see Theorem~\ref{theorem: markovs inequality} and Sec.~\ref{section: amplification of success probabilities}).
    \smallskip

    3) Accessing and updating $\Lambda$ and its complement can be implemented in $O(\mathrm{poly}(m))$ time without QRAM (and in fact $O(\mathrm{poly}(K,\log m))$ suffices, though we do not rely on this refinement).
    The remaining classical routines and the $1$- and $2$-qubit gates used in QOMP’s subroutines are polynomial in the problem parameters.
    \smallskip
    
    4) The exit rule guarantees the stated approximation. 
    The algorithm halts only when the estimated residual obeys $\overline{\|\vec{r}\|}\le \epsilon/2$; since the estimator has additive error at most $\epsilon/2$, this implies $\|\vec{r}\|\le\epsilon$ at termination.
    Because each successful iteration adds an atom from $\Lambda_{\mathrm{opt}}$, the final support $\Lambda\subseteq\Lambda_{\mathrm{opt}}$ has $|\Lambda|\leq K$, and there exists $\ket{\tilde s} = D_\Lambda D_\Lambda^+ \ket{\vec{s}} \in\mathrm{span}\{d_j:j\in\Lambda\}$ with $\|\ket{\tilde s}-\ket{\vec{s}}\|_2\le\epsilon$.
    This occurs with high probability by the amplification argument above.
\end{proof}

Here, a central point is that all approximation errors in the iteration analysis can be rewritten in terms of controlled quantities, such as $\epsilon$, $K$, and $\sigma_{\min}(D_\Lambda)$.  
Because the running time is expressed in terms of these parameters, the expected query complexity can be lifted to a worst-case bound via \emph{Markov’s inequality}.  
Moreover, one can always conservatively replace the instance-dependent $\sigma_{\min}(D_\Lambda)$ by the global bound $\sigma_{\min}(D)$, which is fixed and iteration-independent.

This result should be contrasted with the $\Theta(N/\epsilon)$ lower and upper bounds for general tomography of $N$-dimensional pure states~\cite{van2023quantum}.  
Without structural assumptions, $\Omega(N/\epsilon)$ queries to the state-preparation unitary are unavoidable to approximate an arbitrary dense state up to $\ell_2$-error $\epsilon$.  
By exploiting sparsity in incoherent dictionaries, QOMP reduces this scaling to $\widetilde{O}(\sqrt{N}/\epsilon)$ queries when $m=O(N)$ and $K=\widetilde{O}(1)$ with well-conditioned support ($\sigma_{\min}\in\widetilde{\Omega}(\mathrm{polylog}(N)^{-1})$).  

Finally, while Theorem~\ref{theorem: sparse recovery with QOMP} guarantees recovery of a subset $\Lambda \subseteq \Lambda_{\mathrm{opt}}$, it is natural to ask whether the full optimal support can also be recovered.  
This is possible under a mild \emph{identifiability assumption}, namely that no smaller support yields an $\epsilon$-approximation to $\ket{\vec{s}}$.
In that case, the algorithm cannot terminate early, and the exact support is recovered.

\begin{corollary}[Exact sparse recovery with QOMP] 
\label{corollary: exact sparse recovery with QOMP}
    Suppose the assumptions of Theorem~\ref{theorem: sparse recovery with QOMP} hold. If, in addition, every vector $\vec{y}$ supported on fewer than $|\Lambda_{\mathrm{opt}}|$ columns satisfies $\|\ket{\vec{s}} - D\vec{y}\|_2 > \epsilon$, then the procedure from Theorem~\ref{theorem: sparse recovery with QOMP} recovers the full optimal support $\Lambda_{\mathrm{opt}}$ with high probability, solving problem $\mathcal{QP}_0$ in polynomial time.
\end{corollary}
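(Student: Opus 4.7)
The plan is to invoke Theorem~\ref{theorem: sparse recovery with QOMP} essentially as a black box and to show that the identifiability hypothesis forces the algorithm to run for exactly $|\Lambda_{\mathrm{opt}}|$ iterations, thereby returning $\Lambda_{\mathrm{opt}}$ in full rather than some proper subset. The argument splits naturally into two parts: (i) the algorithm cannot terminate early via the residual exit rule; (ii) once all of $\Lambda_{\mathrm{opt}}$ has been collected, it does terminate and outputs exactly $\Lambda_{\mathrm{opt}}$.

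For part (i), I would condition on the high-probability event, provided by Theorem~\ref{theorem: sparse recovery with QOMP}, that every iteration of QOMP selects a fresh atom belonging to $\Lambda_{\mathrm{opt}}$. Under this event, after $k<|\Lambda_{\mathrm{opt}}|$ iterations, the current support $\Lambda$ is a strict subset of $\Lambda_{\mathrm{opt}}$ of size $k$, and the true residual is $\vec{r} = \vec{s} - D_\Lambda D_\Lambda^+\vec{s} = \vec{s} - D\vec{y}$ for some $\vec{y}$ with $\|\vec{y}\|_0 = k < |\Lambda_{\mathrm{opt}}|$. The identifiability hypothesis then gives $\|\vec{r}\|_2 > \epsilon$, so the norm estimate $\overline{\|\vec{r}\|}$ with additive error $\epsilon_f = \epsilon/2$ satisfies $\overline{\|\vec{r}\|} > \epsilon/2$ and the residual exit condition is not triggered. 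The iteration-count exit condition is likewise inactive because $L = K \ge |\Lambda_{\mathrm{opt}}| > k$.

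For part (ii), once $\Lambda = \Lambda_{\mathrm{opt}}$, the exact sparsity hypothesis gives $D_\Lambda D_\Lambda^+ \ket{\vec{s}} = \ket{\vec{s}}$, so the true residual vanishes and the estimator returns $\overline{\|\vec{r}\|} \leq \epsilon/2$ with high probability, causing termination. Combining both parts on the intersection of the relevant good events (successful atom selection at each of the $\leq K$ iterations, plus the final norm estimation) shows that the output is $\Lambda_{\mathrm{opt}}$. A union bound over the at most $K$ iterations, boosted using the discrete amplification lemma (Lemma~\ref{lemma: amplification lemma (majority)}), costs only a polylogarithmic overhead, so the query and gate complexity inherit directly from Theorem~\ref{theorem: sparse recovery with QOMP} and remain polynomial, giving the claimed solution to $\mathcal{QP}_0$.

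There is no real technical obstacle here; the entire statement is essentially a contrapositive applied to the conclusion of Theorem~\ref{theorem: sparse recovery with QOMP}. The one point that requires care is the interaction between the identifiability threshold $\epsilon$ and the estimator slack $\epsilon_f$: choosing $\epsilon_f = \epsilon/2$ is exactly what guarantees that the estimated residual norm cannot dip below the exit threshold while the true residual is still above the identifiability level, which is why early termination is ruled out rather than merely improbable.
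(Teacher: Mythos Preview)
Your proposal is correct and follows essentially the same route as the paper. The paper's proof is even more compact: it simply invokes item~4 of Theorem~\ref{theorem: sparse recovery with QOMP} (the output $\Lambda\subseteq\Lambda_{\mathrm{opt}}$ has span containing an $\epsilon$-approximation of $\ket{\vec{s}}$) and observes that the identifiability hypothesis then forces $|\Lambda|\geq|\Lambda_{\mathrm{opt}}|$, hence $\Lambda=\Lambda_{\mathrm{opt}}$; your argument unpacks the same conclusion by tracing the residual exit rule across iterations, which is a valid but slightly less black-box use of the theorem.
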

\begin{proof}
    Theorem~\ref{theorem: sparse recovery with QOMP} ensures that the algorithm outputs $\Lambda \subseteq \Lambda_{\mathrm{opt}}$ with $|\Lambda|\le K$ such that $\ket{\vec{s}}$ is $\epsilon$-approximated from $\mathrm{span}\{d_j:j\in\Lambda\}$, with high probability. 
    The additional assumption rules out any $\epsilon$-approximation with support smaller than $|\Lambda_{\mathrm{opt}}|$, so the algorithm cannot stop early. 
    Hence $\Lambda=\Lambda_{\mathrm{opt}}$, with high probability.
\end{proof}

Thus, under standard incoherence assumptions and a natural identifiability condition, QOMP recovers the full optimal support $\Lambda_{\mathrm{opt}}$ with high probability, \emph{solving $\mathcal{QP}_0$ in polynomial time}.

\subsection{Recovering the coefficients}
Once the support has been identified, the remaining task is to recover the coefficients of $\ket{\vec{s}}$ in the subdictionary $D_\Lambda$.  
At this stage the problem reduces to solving a sparse quantum linear system: we seek $\vec{x}$ supported on $\Lambda$ such that $D_\Lambda \vec{x} \approx \ket{\vec{s}}$.  
This formulation highlights the role of QOMP as a reduction: it converts the combinatorial search over $\binom{m}{K}$ supports into a well-posed estimation problem of dimension $K$.  

From the perspective of tomography, this reduction is significant.  
In the absence of further structure, learning an arbitrary dense $N$-dimensional pure state requires $\Theta(N/\epsilon)$ queries even with access to the state-preparation unitary~\cite{van2023quantum}.  
By contrast, under sparsity assumptions in an incoherent dictionary, once the support has been identified, tomography requires only estimating $K$ coefficients, where $K \ll N$.  
The query complexity is therefore governed by $K$ and the conditioning parameter $\sigma_{\min}(D_\Lambda)$, rather than the ambient dimension $N$.
While support recovery remains the dominant cost in the overall procedure, this reduction is what makes sparse tomography feasible: one pays a $\widetilde{O}(\frac{K^{3/2}}{\gamma \eta}\frac{\sqrt{m}}{\epsilon})$ overhead to identify the correct subspace, but subsequent coefficient recovery adds only polynomial dependence in $K$ and $\gamma$.
Under suitable conditions, we remember that the overhead can drop to $\widetilde{O}(\sqrt{N}/\epsilon)$, enabling substantial query savings in large Hilbert spaces.

The next lemma shows that one can efficiently prepare a normalized quantum state proportional to the optimal coefficient vector $D_\Lambda^+\ket{\vec{s}}$.

\begin{lemma}(Coefficients state preparation)
\label{lemma: intermediate tomography}
        Assume the hypotheses of Theorem~\ref{theorem: sparse recovery with QOMP} and let $\Lambda$ be the output support upon success.
        There exists an algorithm that prepares a state $\ket{\vec{x}}$ such that $\|\ket{\vec{x}} - \frac{D_{\lambda}^+\ket{\vec{s}}}{\|D_{\lambda}^+\ket{\vec{s}}\|}\| \leq \epsilon$ using $\Ot(\frac{\sqrt{K}}{\gamma}\mathrm{polylog}(1/\epsilon))$ queries to $U_s$, $U_D$, their inverses and controlled versions, and polynomially many other $1$- and $2$-qubit gates.
\end{lemma}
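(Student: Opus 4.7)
The plan is to recognize the task as a quantum linear system: we need to prepare the normalized state $D_\Lambda^{+}\ket{\vec{s}} / \|D_\Lambda^{+}\ket{\vec{s}}\|$, which is precisely what the QSVT-based linear solver of Theorem~\ref{theorem: QSVT linear systems} is designed to output, provided we can build a block-encoding of $D_\Lambda$ and supply quantum access to $\ket{\vec{s}}$. The remaining work is to verify the hypotheses and trace through the resulting cost.

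First I would construct the block-encoding of $D_\Lambda$. Since we have $U_D$ from Def.~\ref{def: qomp  quantum access dictionary} and $U_\Lambda$ from Def.~\ref{def: qomp  quantum access sets}, together they implement the $U$ and $V$ unitaries of Def.~\ref{def: efficient quantum access matrix} for the matrix whose columns are exactly $\{\vec{d}_j : j \in \Lambda\}$. Invoking Theorem~\ref{thm: qomp  block encoding from qa} yields a $(\|D_\Lambda\|_F, O(\log(N+K)), \epsilon_A)$-block-encoding with cost $\Ot(T_D+T_\Lambda)$, where $\epsilon_A$ is tunable at polylogarithmic overhead. Because every atom is a unit vector and $|\Lambda|\le K$, we have $\|D_\Lambda\|_F=\sqrt{|\Lambda|}\le \sqrt{K}$, so the normalization factor is $\alpha=\sqrt{K}$.

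Second, I need to verify the singular-value condition required by Theorem~\ref{theorem: QSVT linear systems}. The lower bound is inherited from QOMP: Theorem~\ref{theorem: sparse recovery with QOMP} ensures $\Lambda\subseteq\Lambda_{\mathrm{opt}}$ with $\sigma_{\min}(D_{\Lambda_{\mathrm{opt}}})\ge\gamma$, and since the minimization defining $\sigma_{\min}$ is restricted to a smaller domain when we drop columns, $\sigma_{\min}(D_\Lambda)\ge \sigma_{\min}(D_{\Lambda_{\mathrm{opt}}})\ge\gamma$ as well. The upper bound $\sigma_{\max}(D_\Lambda)\le\|D_\Lambda\|_F=\sqrt{K}$ is immediate. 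Choosing $\epsilon_A$ small enough to satisfy the $o(\gamma^3\epsilon/\alpha^2\log^2(\alpha/(\gamma\epsilon)))$ precondition costs only $\mathrm{polylog}(\sqrt{K}/(\gamma\epsilon))$ extra factors in the block-encoding construction. Applying Theorem~\ref{theorem: QSVT linear systems} with $\vec{b}=\ket{\vec{s}}$ prepared via $U_s$ then yields $\overline{\ket{\vec{x}}}$ with $\|\overline{\ket{\vec{x}}} - D_\Lambda^+\ket{\vec{s}}/\|D_\Lambda^+\ket{\vec{s}}\|\|\le\epsilon$ at total cost
\begin{align}
O\!\left(\tfrac{\sqrt{K}}{\gamma}\log\!\tfrac{\sqrt{K}}{\gamma}\Big(T_A\log\!\tfrac{\sqrt{K}}{\gamma\epsilon}+T_s\Big)\right),
\end{align}
which translates to the claimed $\Ot(\sqrt{K}/\gamma\cdot\mathrm{polylog}(1/\epsilon))$ queries to $U_s,U_D$ and their inverses/controlled versions, plus polynomially many one- and two-qubit gates from the QSVT circuit.

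The main obstacle is mild: guaranteeing that the conditioning parameter $\gamma$ associated with the optimal support is inherited by the recovered subset $\Lambda$. This is resolved by the monotonicity of $\sigma_{\min}$ under column removal noted above. A secondary concern is matching the block-encoding precision $\epsilon_A$ to the precondition of the linear-systems theorem, but this is a standard polylogarithmic accounting that is absorbed into the $\Ot$ notation.
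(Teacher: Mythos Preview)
Your proposal is correct and follows essentially the same approach as the paper: build a $(\sqrt{K},O(\log(n+K)),\epsilon_A)$-block-encoding of $D_\Lambda$ from $U_D$ and $U_\Lambda$ via Theorem~\ref{thm: qomp  block encoding from qa}, note that the singular values lie in $[\gamma,\sqrt{K}]$, and invoke the QSVT linear-systems solver of Theorem~\ref{theorem: QSVT linear systems}. Your explicit justification that $\sigma_{\min}(D_\Lambda)\ge\sigma_{\min}(D_{\Lambda_{\mathrm{opt}}})\ge\gamma$ by monotonicity under column removal is a detail the paper's proof leaves implicit.
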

\begin{proof}
    Using $U_D$ and polynomially many gates to create $U_\Lambda$, we can create a $(\sqrt{K}, \ceil{\log(n+K)},\epsilon_0)$ block-encoding of $D_\Lambda$ in time $\Ot(T_D + T_\Lambda)$ (Theorem~\ref{thm: qomp  block encoding from qa}). $D_\Lambda$ has singular values in $[\gamma, \sqrt{K}]$.
    Choosing $\epsilon_{0}$ to satisfy Theorem~\ref{theorem: QSVT linear systems} (absorbed in polylog factors) we run the \emph{Quantum linear systems via QSVT} routine and conclude the proof. 
\end{proof}

Building on this, one can obtain a sparse classical description of the coefficients, with guarantees both on approximation quality and on query complexity.

\begin{corollary}(Sparse coefficients tomography)
\label{corollary: sparse coefficient tomography}
    Suppose the assumptions of Theorem~\ref{theorem: sparse recovery with QOMP} hold, and let $\Lambda$ be the support returned by QOMP upon success when run to residual $\epsilon/4$.
    There exists an algorithm that, with probability $\geq 1-\delta$, outputs a $O(K\log(K)\log(1/\delta))$-sparse classical vector $\vec{y} \in \C^m$ such that $\|\ket{\vec{s}} - \frac{D_\Lambda \vec{y}}{\|D_\Lambda \vec{y}\|}\| \leq \epsilon$ using 
    \begin{align}
        \Ot\left(\kappa(D_\Lambda)\frac{K^{3/2}}{\gamma}\frac{1}{\epsilon}\mathrm{polylog}(1/\delta)\right) \quad \left(\text{i.e., } \Ot\left(\frac{K^2}{\gamma^2}\frac{1}{\epsilon}\mathrm{polylog}(1/\delta)\right) \right)
    \end{align} queries to $U_s$, $U_D$, their inverses and controlled versions, and polynomially many other $1$- and $2$-qubit gates. Here $\kappa(D_\Lambda)$ upper bounds $\frac{\|D_\Lambda\|}{\sigma_{\min}(D_\Lambda)}$.
\end{corollary}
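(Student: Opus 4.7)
The plan is to chain together three ingredients already established in the paper: the support-recovery guarantee of Theorem~\ref{theorem: sparse recovery with QOMP} (run to residual $\epsilon/4$ rather than $\epsilon/2$), the coefficient-state preparation of Lemma~\ref{lemma: intermediate tomography}, and the orthogonal sparse tomography of Theorem~\ref{theorem: sparse tomography - orthogonal}. The final step is a triangle-inequality argument that tracks how errors propagate through $D_\Lambda$ and the normalization step.

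First, invoke Theorem~\ref{theorem: sparse recovery with QOMP} with residual threshold $\epsilon/4$. On success, it returns $\Lambda \subseteq \Lambda_{\mathrm{opt}}$ with $|\Lambda|\leq K$ such that $\|\ket{\vec{s}} - D_\Lambda D_\Lambda^+\ket{\vec{s}}\|\leq \epsilon/4$. A short calculation using the reverse triangle inequality shows that the corresponding normalized vector satisfies $\|\ket{\vec{s}} - D_\Lambda D_\Lambda^+\ket{\vec{s}}/\|D_\Lambda D_\Lambda^+\ket{\vec{s}}\| \| \leq \epsilon/2$. This fixes a target: I need $D_\Lambda\vec{y}/\|D_\Lambda\vec{y}\|$ to be within $\epsilon/2$ of this projected, normalized state.

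Next, apply Lemma~\ref{lemma: intermediate tomography} to prepare (a subroutine for) a state $\ket{\vec{x}}$ with $\|\,\ket{\vec{x}} - D_\Lambda^+\ket{\vec{s}}/\|D_\Lambda^+\ket{\vec{s}}\|\,\| \leq \epsilon_1$, at a cost of $\widetilde{O}(\sqrt{K}/\gamma)$ queries per preparation. Since $\ket{\vec{x}}$ is supported on $\Lambda$, it is automatically $K$-sparse in the computational basis, so the hypothesis of Theorem~\ref{theorem: sparse tomography - orthogonal} holds with sparsity parameter $K$. Running orthogonal sparse tomography on $\ket{\vec{x}}$ with accuracy $\epsilon_2$ and confidence $\delta$ yields, with probability $\geq 1-\delta$, an $O(K\log(K)\log(1/\delta))$-sparse classical $\vec{y}\in\mathbb{C}^m$ (supported on $\Lambda$) with $\|\vec{y}-\vec{x}\|\leq \epsilon_2$, using $\widetilde{O}((K/\epsilon_2)\mathrm{polylog}(1/\delta))$ applications of the state preparation unitary. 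Combining with the per-application cost gives $\widetilde{O}((\sqrt{K}/\gamma)(K/\epsilon_2)\mathrm{polylog}(1/\delta))$ total queries to $U_s,U_D$.

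The hard part is the error amplification in the final step, where one reconstructs $D_\Lambda\vec{y}/\|D_\Lambda\vec{y}\|$. Using $\|D_\Lambda\|\leq \sqrt{K}$ (unit-norm columns) and $\|D_\Lambda \vec{z}\|\geq \sigma_{\min}(D_\Lambda)\geq \gamma$ for any unit $\vec{z}$ supported on $\Lambda$, the map $\vec{z}\mapsto D_\Lambda\vec{z}/\|D_\Lambda\vec{z}\|$ is Lipschitz with constant $O(\kappa(D_\Lambda))=O(\sqrt{K}/\gamma)$ on the unit sphere (the standard perturbation bound for normalized linear maps gives two terms, one from the operator norm of $D_\Lambda$ and one from $1/\|D_\Lambda\vec{z}\|$, which together account for the $\kappa(D_\Lambda)$ factor). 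Hence
\begin{align}
    \Bigl\|\tfrac{D_\Lambda\vec{y}}{\|D_\Lambda\vec{y}\|}-\tfrac{D_\Lambda D_\Lambda^+\ket{\vec{s}}}{\|D_\Lambda D_\Lambda^+\ket{\vec{s}}\|}\Bigr\|\leq C\,\kappa(D_\Lambda)(\epsilon_1+\epsilon_2)
\end{align}
for an absolute constant $C$. Choosing $\epsilon_1=\epsilon_2=\Theta(\gamma\epsilon/\sqrt{K})$ makes this at most $\epsilon/2$, which combines with the $\epsilon/2$ bound from support recovery via the triangle inequality to yield $\|\ket{\vec{s}}-D_\Lambda\vec{y}/\|D_\Lambda\vec{y}\|\|\leq \epsilon$.

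Finally, substituting the calibrated $\epsilon_2=\Theta(\gamma\epsilon/\sqrt{K})$ into the total query count gives $\widetilde{O}((\sqrt{K}/\gamma)\cdot K\sqrt{K}/(\gamma\epsilon)\cdot\mathrm{polylog}(1/\delta))=\widetilde{O}(\kappa(D_\Lambda)\,K^{3/2}/(\gamma\epsilon)\cdot\mathrm{polylog}(1/\delta))$, matching both stated expressions since $\kappa(D_\Lambda)\leq \sqrt{K}/\gamma$. Success probability is handled by the standard amplification apparatus from Section~\ref{section: amplification of success probabilities} (union bound over the three stages plus Powering/majority amplification), absorbing only polylogarithmic overhead. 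The main technical obstacle is thus isolating and proving the Lipschitz bound with the correct $\kappa(D_\Lambda)$ dependence; the rest is calibration and bookkeeping against previously stated lemmas.
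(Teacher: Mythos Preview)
Your approach is essentially the same as the paper's: QOMP for support recovery, Lemma~\ref{lemma: intermediate tomography} for coefficient-state preparation, Theorem~\ref{theorem: sparse tomography - orthogonal} for classical readout, and a $\kappa(D_\Lambda)$-Lipschitz perturbation bound for the normalized map $\vec{z}\mapsto D_\Lambda\vec{z}/\|D_\Lambda\vec{z}\|$ to control the final error. Your parameter calibration and the resulting query count match the paper's.

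One technical point deserves attention. The state $\ket{\vec{x}}$ produced by Lemma~\ref{lemma: intermediate tomography} is only $\epsilon_1$-close to the exactly $K$-sparse vector $D_\Lambda^+\ket{\vec{s}}/\|D_\Lambda^+\ket{\vec{s}}\|$; it is \emph{not} itself supported on $\Lambda$, so your sentence ``since $\ket{\vec{x}}$ is supported on $\Lambda$, it is automatically $K$-sparse'' is not literally correct. To invoke Theorem~\ref{theorem: sparse tomography - orthogonal} you need its relaxed hypothesis, namely that at most $K$ entries of $\ket{\vec{x}}$ exceed the threshold $\epsilon_2\sqrt{K/N}$. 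This holds provided every off-support entry is below that threshold, i.e.\ $\epsilon_1 < \epsilon_2\sqrt{K/N}$, which your choice $\epsilon_1=\epsilon_2$ does not guarantee. The paper patches this by taking $\epsilon_1 < \min(\epsilon_2\sqrt{K/N},\epsilon_2/2)$. Since the cost in Lemma~\ref{lemma: intermediate tomography} depends on $\epsilon_1$ only through $\mathrm{polylog}(1/\epsilon_1)$, this repair is free in $\widetilde{O}$-notation and your complexity claim survives unchanged.
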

\begin{proof}
    By success of QOMP with residual parameter $\epsilon_0$, there exists $\vec{x} \in \C^{m}$ with support $|\Lambda| \leq K$ such that 
    \begin{align}
        \|\ket{\vec{s}} - D_\Lambda \vec{x}\| \leq \epsilon_0 \quad\quad\quad \text{and we may take }\vec{x} = D_\Lambda^+ \ket{\vec{s}}.
    \end{align}
    Set the unit vector $\ket{\vec{x}} \eqdef \vec{x}/\|\vec{x}\|.$

    \emph{Proxy coefficients preparation. } 
    By Lemma~\ref{lemma: intermediate tomography}, for any $\epsilon_1 > 0$ we can produce a state $\overline{\ket{\vec{x}}}$ such that $\|\overline{\ket{\vec{x}}} - \ket{\vec{x}}\| \leq \epsilon_1$ using $\Ot(\frac{\sqrt{K}}{\gamma}\mathrm{polylog}(1/\epsilon_1))$ queries to $U_s$, $U_D$, their inverses and controlled versions, and polynomially many other $1$- and $2$-qubit gates.
    Choose $\epsilon_1 < \min(\epsilon_t\sqrt{k/N}, \epsilon_t/2)$, where $\epsilon_t > 0$ will be set later. 
    The second inequality ensures that no off-support entry of $\overline{\ket{\vec{x}}}$ can exceed the threshold $\epsilon_t\sqrt{K/N}$, hence at most $K$ entries of $\ket{\overline{\vec{x}}}$ are $\geq \epsilon_t\sqrt{K/N}$.

    \emph{Sparse coefficients tomography.} 
    Apply \emph{Orthogonal sparse tomography} from Theorem~\ref{theorem: sparse tomography - orthogonal} to $|\overline{\vec{x}}\rangle$.
    With probability greater than $1-\delta$, this returns a $O(K\log(K)\log(1/\delta))$-sparse classical vector $\vec{y} \in \C^{m}$ with $\|\vec{y} - |\overline{\vec{x}}\rangle\| \leq \epsilon_t$ using a total of $\Ot(\frac{K^{3/2}}{\gamma\epsilon_t}\polylog(1/\delta))$ queries to $U_s$, $U_D$, their inverses and controlled versions, and polynomially many other $1$- and $2$-qubit gates.
    By the triangle inequality and our choice of $\epsilon_1$, $\|\vec{y} - |\vec{x}\rangle\| \leq \epsilon_t+\epsilon_1 \leq \frac{3}{2}\epsilon_t$.

    \emph{Error propagation.} 
    Then, $\left\| \ket{\vec{s}} - \frac{D_\Lambda \vec{y}}{\|D_\Lambda\vec{y}\|}\right\| \leq \underbrace{\left\|\ket{\vec{s}} - \frac{D_\Lambda \ket{\vec{x}}}{\|D_\Lambda \ket{\vec{x}}\|}\right\|}_{\text{(I)}} + \underbrace{\left\| \frac{D_\Lambda \ket{\vec{x}}}{\|D_\Lambda \ket{\vec{x}}\|} - \frac{D_\Lambda \vec{y}}{\|D_\Lambda\vec{y}\|}\right\|}_{\text{(II)}}$.
    
    For (I), use the triangle inequality and the colinear difference (two vectors on the same line differ by the difference of their norms): $\left\|\ket{\vec{s}} - \frac{D_\Lambda \ket{\vec{x}}}{\|D_\Lambda \ket{\vec{x}}\|}\right\| \leq \|\ket{\vec{s}} - D_\Lambda\vec{x}\| + \|\|D_\Lambda \vec{x}\| - 1\| \leq \epsilon_0 + \underbrace{\|\|D_\Lambda \vec{x}\| - \|\ket{\vec{s}}\|\|}_{\text{use reverse triangular}} \leq 2\epsilon_0.$

    For (II), we use $\|D_\Lambda \ket{\vec{x}}\| \geq \sigma_{\min}(D_\Lambda)$ and obtain
    $\left\| \frac{D_\Lambda \ket{\vec{x}}}{\|D_\Lambda \ket{\vec{x}}\|} - \frac{D_\Lambda \vec{y}}{\|D_\Lambda\vec{y}\|}\right\| \leq \left\| \frac{D_\Lambda (\ket{\vec{x}} - \vec{y}) }{\|D_\Lambda \ket{\vec{x}}\|}\right\| + \left\| D_\Lambda \vec{y}\left(\frac{1}{\|D_\Lambda \ket{\vec{x}}\|} - \frac{1}{\|D_\Lambda\vec{y}\|}\right)\right\| \leq 2\left\| \frac{D_\Lambda (\ket{\vec{x}} - \vec{y}) }{\|D_\Lambda \ket{\vec{x}}\|}\right\| \leq 3\frac{\|D_\Lambda\|}{\sigma_{\min}(D_\Lambda)}\epsilon_t$.

    Combining, we have $\| \ket{\vec{s}} - \frac{D_\Lambda \vec{y}}{\|D_\Lambda\vec{y}\|}\| \leq 2 \epsilon_0 + 3\kappa(D_\Lambda)\epsilon_t.$
    Choosing parameters $\epsilon_0 \leq \epsilon/4$, $\epsilon_T \leq \epsilon/(6\kappa(D_\Lambda))$, $\epsilon_1 < \min(\epsilon_t\sqrt{K/N}, \epsilon_t/2)$, we bound $\|\ket{\vec{s}} - \frac{D_\Lambda \vec{y}}{\|D_\Lambda\vec{y}\|}\| \leq \epsilon.$
    Substituting these in the time complexity concludes the proof.

    As a final remark, since the columns of $D_\Lambda$ are unit-norm, then $\frac{\|D_\Lambda\|}{\sigma_{\min}(D_\Lambda)} \leq \frac{\sqrt{K}}{\gamma}$, where $\gamma$ lower bounds $\sigma_{\min}(D_\Lambda)$.
\end{proof}

The complexity of coefficient recovery scales as $\Ot \left(\frac{K^2}{\gamma^2\epsilon}\right)$ queries, where $\gamma$ lower bounds $\sigma_{\min}(D_\Lambda)$.  
Since $K=\widetilde{O}(1)$ in the sparse regime of primary interest, this overhead is negligible compared to support recovery.  

Furthermore, in scenarios where QRAM access is available, the overall complexity can be further reduced to
\begin{align}
    \widetilde{O}\left(\frac{K}{\epsilon} \kappa(D_\Lambda) \mu(D_{\Lambda})\right),
\end{align}
where $\mu(D_\Lambda)$ is the normalization parameter (Def.~\ref{def:mu}).
We view this as a refinement under stronger architectural assumptions, rather than a prerequisite for our main guarantees.  

In summary, these results establish the first framework systematic for efficient \emph{sparse quantum tomography in non-orthogonal, overcomplete dictionaries}.  
They complement prior work on low-rank compressed sensing for quantum states~\cite{gross2010quantum}, demonstrating that sparsity in incoherent dictionaries also enables provable polynomial improvements over dense tomography.
Conceptually, QOMP shows that structural promises beyond rank can be leveraged for pure states in a fully quantum setting, and that approximate quantum subroutines can be orchestrated to yield rigorous end-to-end recovery guarantees.  

Beyond their theoretical significance, these guarantees suggest several directions for practical use.  
First, the recovered coefficients enable \emph{approximate state preparation}: an $\epsilon$-close copy of the target state can be reconstructed from only a handful of dictionary vectors, potentially yielding simpler unitaries than those that originally generated the state.  
Second, parties who agree on a dictionary could in principle communicate only the sparse coefficient vector rather than the full state, reminiscent of how the JPEG compression format uses the discrete cosine transform to transmit compressed images.  
This analogy highlights the possibility of compact, structured, and interpretable representations of quantum states tailored to specific tasks.  
Finally, sparse coefficient vectors also serve as low-dimensional features for downstream quantum or classical learning tasks.  
These applications remain speculative, but they illustrate how sparse tomography may serve not only as a tool for efficient reconstruction, but also as a bridge between quantum algorithms, information theory, and the modeling of physical systems.  

After proving the main results of this work, we now turn to a meta-task: estimating the incoherence parameter itself, which underlies the guarantees.


\section{Quantum estimation of the mutual incoherence}
The guarantees for QOMP and sparse tomography rely on structural properties of the dictionary, most notably the mutual incoherence parameter $\mu$.
In practice, one may not know $\mu$ a priori, especially when dealing with large or data-driven dictionaries, and being able to estimate it efficiently is therefore a useful primitive.
This motivates a final problem: \emph{given quantum access to the dictionary $D$, can we estimate its mutual incoherence faster than classically?}

Recall that we can assume the columns of $D$ have unit $\ell_2$ norm, without loss of generality, and define $\mu = \max_{i, \in [m], i\neq j}  |\langle \vec{d}_i | \vec{d}_j \rangle |$.

\begin{theorem}[Estimating the mutual incoherence]
\label{theorem: Estimating the mutual incoherence}
    Let there be quantum access to a dictionary $D\in \C^{n \times m}$ with $\ell_2$ unit norm columns in time $T_D$.
    There exists a quantum algorithm that estimates the mutual incoherence $\mu = \max_{i, \in [m], i\neq j} | \langle \vec{d}_i | \vec{d}_j \rangle |$  of $D$ to absolute error $\epsilon$ with high probability in $\Ot(T_D(m/\epsilon))$ time.
\end{theorem}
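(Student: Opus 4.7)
The natural plan is to reduce mutual-incoherence estimation to an approximate max-finding problem over the $m^2$ ordered pairs $(i,j)$, with the diagonal trivialized. I would first build an approximate unitary
\begin{align*}
    U_f: \ket{i}\ket{j}\ket{0} \;\mapsto\; \ket{i}\ket{j}\ket{\overline{f(i,j)}},
\end{align*}
where $f(i,j) = |\langle \vec{d}_i | \vec{d}_j \rangle|$ for $i \neq j$ and $f(i,i) = 0$ (hardwired classically inside $U_f$ via a comparator), such that measuring the last register returns a value within $\epsilon'$ of $f(i,j)$ with high probability. Because the columns are unit-norm, the norm register of Corollary~\ref{coro: complex inner} is trivial, and two invocations of the complex inner-product routine produce estimates of $\mathrm{Re}[\langle \vec{d}_i | \vec{d}_j \rangle]$ and $\mathrm{Im}[\langle \vec{d}_i | \vec{d}_j \rangle]$ with additive error $\epsilon'/\sqrt{2}$ each, at cost $\widetilde{O}(T_D/\epsilon')$. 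Coherent arithmetic combines them into $\sqrt{\mathrm{Re}^2 + \mathrm{Im}^2}$; the reverse triangle inequality for the Euclidean norm in $\mathbb{R}^2$ immediately bounds $|\overline{f(i,j)} - f(i,j)| \le \epsilon'$, sidestepping any Lipschitz pathology of the square root near zero.

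Next, I would feed $U_f$ into the approximate-maximum routine (Corollary~\ref{coro: finding approximate maximum}), taking $U_S$ to be the uniform superposition over $[m]^2$ (prepared in $O(\log m)$ depth by Hadamards) and $T_u = \widetilde{O}(T_D/\epsilon')$. Since the diagonal has been zeroed out, the true maximizer is an off-diagonal pair $(i^*, j^*)$ with $f(i^*, j^*) \ge \mu - 2\epsilon'$, returned in time $\widetilde{O}(T_u \sqrt{m^2}) = \widetilde{O}(T_D m/\epsilon')$. A final call to the inner-product estimator on the identified pair produces $\overline{\mu}$ with $|\overline{\mu} - f(i^*, j^*)| \le \epsilon'$, so by the triangle inequality $|\overline{\mu} - \mu| \le 3\epsilon'$. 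Choosing $\epsilon' = \epsilon/3$ achieves the target accuracy, and amplifying each randomized subroutine via the powering lemma (Lemma~\ref{lemma: powering lemma (median)}) for the real-valued estimates and the discrete amplification lemma (Lemma~\ref{lemma: amplification lemma (majority)}) for the max-finding step boosts the overall success probability to a constant with only polylogarithmic overhead. A union bound over the three randomized components closes the analysis and delivers the advertised $\widetilde{O}(T_D m/\epsilon)$ runtime.

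The main technical point to verify is that the coherent composition inside $U_f$ really meets the measurement-accuracy hypothesis of Corollary~\ref{coro: finding approximate maximum}: that hypothesis requires \emph{every} index to return an $\epsilon'$-accurate value with high probability, yet a Grover-type search queries $U_f$ in superposition over all $m^2$ pairs. The standard fix is to pre-amplify the inner-product estimates to per-query failure probability $\delta \in \widetilde{O}(1/m^2)$ via median boosting \emph{before} plugging them into the max-finding routine, which adds only polylogarithmic overhead and lets a union bound validate correctness across the entire search. Once this is in place, the remaining error propagation is linear in $\epsilon'$, and the claimed complexity follows directly.
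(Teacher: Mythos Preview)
Your proposal is correct and follows essentially the same approach as the paper: build an approximate inner-product oracle and feed it into the approximate-maximum routine (Corollary~\ref{coro: finding approximate maximum}) over the $O(m^2)$ pairs, yielding $\widetilde{O}(T_D m/\epsilon)$. The only cosmetic difference is that the paper restricts the search superposition to the $m(m-1)$ off-diagonal pairs directly, whereas you search over all of $[m]^2$ with the diagonal hardwired to zero; both are fine, though note that your ``final call to the inner-product estimator'' should also zero the diagonal (or you should branch on $i^*=j^*$), since otherwise a diagonal pair---which the max-finder could in principle return when $\mu \le 2\epsilon'$---would yield $|\langle \vec d_i|\vec d_i\rangle|=1$ rather than $f(i,i)=0$.
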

\begin{proof}
    We can use quantum access to $D$ to perform inner products in superposition, creating an oracle that performs
    \begin{align}
        O_{ij}: \ket{i}\ket{j}\ket{0} &\rightarrow \ket{i}\ket{j}| \overline{|\langle \vec{d}_i | \vec{d}_j \rangle | }\rangle
    \end{align}
    where $\abs{\overline{|\langle \vec{d}_i | \vec{d}_j \rangle | } - \langle \vec{d}_i | \vec{d}_j \rangle} \leq \epsilon$, in time $O(1/\epsilon)$ (using the same considerations and resources of Sec.~\ref{section: quantum orthogonal matching pursuit} for the absolute value).
    Then, we can create access to a state $|\vec{\phi}\rangle = \frac{1}{\sqrt{m(m-1)}} \sum_{i=0}^{m-1}\sum_{j=0, j \neq i}^{m-1} \ket{i}\ket{j}\ket{0}$ in $\Ot(\polylog(m))$ time.
    Finally, we use \emph{Finding the maximum with an approximate unitary} from Corollary~\ref{coro: finding approximate maximum} with the oracle $O_{ij}$ that approximates the inner products to extract the index and the value of the mutual incoherence, with a total cost of $\Ot(T_D(m/\epsilon))$.
\end{proof}

From a classical perspective, estimating $\mu$ is straightforward but expensive: one must compute $O(m^2)$ inner products of $n$-dimensional vectors, for a total cost of $O(m^2 n)$.
More sophisticated classical algorithms based on $\ell_1$-sampling (see, e.g., Lemma 3 in \cite{rebentrost2021quantumising}) could reduce this to $O(m^2/\epsilon^2)$ for $\epsilon$-accurate estimation.
In contrast, the proposed quantum routine achieves the same accuracy in $\widetilde{O}(m/\epsilon)$ dictionary queries in the Oracular-Circuit model (or total time in the QRAM model), providing a quadratic improvement in the dictionary size.


\section{Conclusion}
\label{sec:conclusions}

In this work we introduced and studied \emph{quantum sparse recovery}, the problem of reconstructing a quantum state that admits a sparse representation in an \emph{overcomplete, non-orthogonal dictionary}.  
Our results delineate both the limitations and the opportunities of this problem.  
On the negative side, we showed that quantum sparse recovery is NP-hard in full generality, even with access to state-preparation and dictionary oracles and inverses.  
On the positive side, we designed Quantum Orthogonal Matching Pursuit (QOMP), the first greedy quantum sparse recovery algorithm that operates directly on quantum states, faithfully mirroring the classical OMP while remaining stable under iteration.  
QOMP achieves provable recovery guarantees under standard incoherence assumptions and yields the first framework for \emph{sparse tomography in non-orthogonal dictionaries}, reducing the query complexity below the tight bounds known for general pure-state tomography.  
In particular, in sparse regimes with $K=\widetilde{O}(1)$, $m=O(N)$, and well conditioned support (i.e., $\sigma_{\min} (D_\Lambda) \geq \gamma \in \Omega(\mathrm{polylog}(N)$), QOMP achieves query complexity $\widetilde{O}(\sqrt{N}/\epsilon)$, improving polynomially over the tight $\Theta(N/\epsilon)$ bound for general pure-state tomography~\cite{van2023quantum}.  

\smallskip
Beyond these core contributions, we also analyzed QOMP in the QRAM model, where it offers per-iteration polynomial speedups, and developed a quantum procedure to estimate the mutual incoherence of a dictionary, a key parameter in sparse recovery.  
Together, these results identify the boundary between hardness and tractability, and open the door to structured regimes where sparsity can be harnessed for quantum speedups.  

\smallskip
Our findings raise several directions for future work.  
First, while QOMP inherits the guarantees of OMP under incoherence, it remains an open question whether alternative quantum algorithms (possibly inspired by convex relaxations such as $\ell_1$ minimization) can achieve stronger guarantees in different regimes or further improve query and time efficiency.  
Second, the application of sparse tomography to physically motivated dictionaries deserves further exploration: can incoherent dictionaries derived from physical symmetries, tensor networks theory, or variational ans\"atze yield practical speedups in learning and simulation?  
Moreover, classical sparse recovery is routinely used as a building block for dictionary learning problems.
Our setting suggests an analogous quantum task: learn a dictionary of quantum states that yields sparse representations for states drawn from a given process, algorithmic family, or probability distribution. 
How can we learn quantum dictionaries efficiently?
Third, the role of sparsity in quantum machine learning remains largely unexplored: sparse coefficients may serve as interpretable features, much like in classical data science.  
Finally, the hardness result invites a deeper complexity-theoretic study of which dictionary or states structural promises make quantum sparse recovery efficient, and how this connects to the broader landscape of quantum learning theory.

\smallskip
We also note that QOMP does not recover the optimal query complexity known for orthogonal dictionaries, somewhat similarly to how quantum $\ell_1$-regularization methods~\cite{chen2021quantum} fail to match known lower bounds.  
It would be interesting to investigate optimal query- and time-efficient algorithms for general incoherent dictionaries.  

\smallskip
In summary, quantum sparse recovery provides a new lens on one of the most fundamental primitives in quantum information.  
It bridges ideas from compressed sensing, learning theory, and quantum algorithms, and shows that sparsity in non-orthogonal dictionaries - a useful resource in classical signal processing - can also enable genuine quantum advantages.  
We hope that this work will stimulate further research at the intersection of these fields, bringing both conceptual insights and practical tools for the efficient characterization and use of quantum states.  

\begin{acknowledgments}
A.B. and S.Z. would like to thank Professors Ferruccio Resta and Donatella Sciuto for their support.  
A.B. thanks Ignacio Cirac for his support at MPQ and for many insightful discussions.  
He is also particularly grateful to Prof. Giacomo Boracchi for his inspiring lectures on sparse representations, to Alessandro Luongo, Rolando Somma, and Ronald de Wolf for valuable discussions on the quantum preliminaries, to Marten Folkertsma for discussions on the NP-hardness proof, and to Patrick Rebentrost for hosting him at CQT during part of this project.  
A.B. would also like to thank Andrea Bonvini for his help with Figure~\ref{fig: sparse cube}.  
This work originated with the supervision of the M.Sc.\ thesis of S.V.~\cite{vanerio2021quantum}, was developed further in Part I of the Ph.D.\ thesis of A.B.~\cite{armando2024quantum}, and reached completion during the time at MPQ.  
A.B.'s research was partially funded by THEQUCO as part of the Munich Quantum Valley, supported by the Bavarian State Government through the Hightech Agenda Bayern Plus. 
Additional financial support was provided by ICSC - \quoted{National Research Centre in High Performance Computing, Big Data and Quantum Computing,} Spoke 10, funded by the European Union - NextGenerationEU, under grant \emph{PNRR-CN00000013-HPC}.
\end{acknowledgments}

\providecommand{\noopsort}[1]{}\providecommand{\singleletter}[1]{#1}%

\onecolumngrid
\appendix


\section{Weighted Euclidean distance estimation}
\label{appendix: euclidean distance estimation}

We discuss the implementation of Theorem~\ref{theorem: weighted euclidean distance estimation}.
Previous work have already described routines to estimate the squared Euclidean distance between two quantum states to which we have quantum access~\cite{kerenidis2019qmeans}.
The basic circuit is represented in Figure~\ref{fig:norm estimation circuit}, with the proof concluding through amplitude amplification and powering lemma, to boost the success probability.

\begin{figure}[h]
    \centering
    \scalebox{1}{ 
        \Qcircuit @C=1.0em @R=0.2em @!R { \\
            \nghost{{\ket{0}} :  } & \lstick{{\ket{0}} :  } & \qw & \gate{\mathrm{U_{v}}} & \gate{\mathrm{U_{c}}} & \qw & \qw & \qw \\
            \nghost{{\ket{0}} :  } & \lstick{{\ket{0}} :  } & \gate{\mathrm{H}} & \ctrl{-1} & \ctrlo{-1} & \gate{\mathrm{H}} & \qw & \qw\\
        \\ }
    }
    \caption{Circuit estimating $\norm{\ket{\vec{v}}-\ket{\vec{c}}}$.
    The absolute value amplitude of $\ket{1}$ in the auxiliary qubit (at the bottom) after the circuit is $\frac{\norm{\ket{\vec{v}}-\ket{\vec{c}}}}{2}$.}
    \label{fig:norm estimation circuit}
\end{figure}
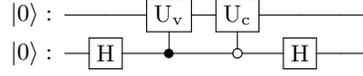

In our QOMP, we want to compute the Euclidean distance between two classical vectors $\|\vec{v} - \vec{c}\|$, represented through access to two quantum states and a classical representation of the vector norms.
To estimate $\|\|\vec{v}\|\ket{\vec{v}} - \|\vec{c}\|\ket{\vec{c}}\|$, we develop a generic routine that allows us to estimate $\|\alpha\ket{\vec{v}} - \beta\ket{\vec{c}}\|$, for some generic weights $\alpha$ and $\beta$.
The main building block of our routine is the circuit in Figure~\ref{fig: weighted norm estimation circuit}, which is a modification of the one in Figure~\ref{fig:norm estimation circuit} inspired by the state preparation routine of \citet[Section III, Method 1]{shao2018linear}.

This circuit prepares a state $\frac{1}{2}\left(\frac{1}{\beta}\ket{\vec{v}} - \frac{1}{\alpha}\ket{\vec{c}} \right)\ket{1,1} + \ket{\psi^\perp}$ where $\ket{\psi^\perp}$ is supported entirely on states whose last two qubits are orthogonal to $\ket{1,1}$.
Then, we can observe that $\mathrm{Pr}[\ket{1,1}] = \frac{1}{|2\alpha \beta|^2} \sum_{i=0}^{n-1}  \abs{\alpha \frac{\vec{v}_i}{\|\vec{v}\|} - \beta\frac{\vec{c}_i}{\|\vec{c}\|}}^2 = \frac{\|\alpha\ket{\vec{v}} - \beta\ket{\vec{c}}\|^2}{\abs{2\alpha \beta}^2}$.
Using \emph{Absolute value amplitude estimation} (Theorem~\ref{thm: amplitude estimation amplitude}) we can estimate $\sqrt{\mathrm{Pr}[\ket{1,1}]} = \frac{\|\alpha\ket{\vec{v}} - \beta\ket{\vec{c}}\|}{2|\alpha||\beta|}$ to precision $\epsilon_1 \leq \frac{\epsilon}{2|\alpha||\beta|}$ with probability greater than $\pi^2/8$ in time $O((T_v + T_c)\frac{|\alpha||\beta|}{\epsilon})$. 
Multiplying the estimate $\overline{a}$ by $2|\alpha||\beta|$, we obtain $2|\alpha||\beta|(|\overline{a} - \sqrt{\mathrm{Pr}[\ket{1,1}]}|) \leq \epsilon$, which was our goal.
Finally, with the \emph{Powering lemma} (Lemma~\ref{lemma: powering lemma (median)}) we can arbitrarily increase the success probability to $1-\delta$ with a multiplicative overhead of $O(\log(1/\delta))$.

Although beyond the interests of this work, we can adapt this algorithm to compute 
\begin{align}
    \ket{i}\ket{j}\ket{0} \rightarrow \ket{i}\ket{j}|\overline{\|\alpha_i\ket{\vec{v}_i} - \beta_j\ket{\vec{c}_j}\|}\rangle,    
\end{align}
provided that we have unitaries implementing both $\ket{i} \rightarrow \ket{\vec{v}_i}$, $\ket{j} \rightarrow \ket{\vec{c}_j}$ and $\ket{i} \rightarrow \ket{\alpha_i}$, $\ket{j} \rightarrow \ket{\beta_j}$.


\section{Column space projection with block-encodings and QSVT}
\label{appendix: column space}

In this section, we prove Theorem~\ref{theorem: column space projection}.
Given a block-encoding of a matrix $A$ and quantum access to a state $\ket{\vec{x}}$, our goal is to prepare access to a quantum state that approximates $\frac{AA^+\ket{\vec{x}}}{\|AA^+\ket{\vec{x}}\|}$ and to estimate the norm $\|AA^+\ket{\vec{x}}\|$.

Let $A=U\Sigma V^\dagger$ be the singular value decomposition of $A$, then we can observe that $A^+ = V\Sigma^{-1} U^\dagger$ and $AA^+\ket{\vec{x}} = UU^\dagger \ket{x}$, which is a projection on the column space of $A$.
More importantly, we will use that $UU^\dagger \ket{x}=f(A)f(A^\dagger)\ket{\vec{x}}$, where $f(A)$ is a function mapping the singular values of $A$ to the constant value $1$ ($f(A)=UV^\dagger$ and $f(A^\dagger)=f(A)^\dagger$).

We use block-encodings, singular value transformation, and amplitude amplification and estimation to prove our theorem.
This Appendix is structured as follows.
The first section discusses how to approximate access to a state $\frac{A\ket{\vec{x}}}{\|A\ket{\vec{x}}\|}$ and estimate $\|A\ket{\vec{x}}\|$ using a block-encoding of a matrix $A$ and quantum access to $\ket{\vec{x}}$.
The second section describes Quantum Singular Value Transformation (QSVT) and discusses how to implement an approximate block-encoding of $f(A)$.
The last section puts everything together and concludes the proof.

\subsection{Matrix-vector multiplication and norm estimation}
Given a block-encoding $U$ of a matrix $A$, and quantum access to a vector $\vec{x}$, our goal is to produce a state $\ket{A\vec{x}} = \frac{A\Vec{x}}{\norm{A\vec{x}}}$ and to be able to estimate $\norm{A\vec{x}}$.
The algorithms from this section apply the block-encoding onto the quantum state and perform amplitude amplification to produce the desired state or estimation to estimate the norm.
An earlier version of the result that we are going to use appeared in \citet{bellante2023quantum}.
We state the differences in the proof.

\begin{theorem}[Matrix multiplication and norm estimation]
\label{thm: qomp matrix mul} 
    Let $U_A$ be a $(\alpha, q, \epsilon_0)$-block-encoding of a matrix $A \in \C ^{n\times m}$, implementable in time $T_A$. Let there be quantum access to a vector $\Vec{x} \in \C^{m}$ in time $T_x$. 
    Let $\epsilon > 0$.
    There exist quantum algorithms that output:
    \begin{enumerate}
        \item A classical estimate $\overline{t}$ of $t = \frac{\norm{A\Vec{x}}}{\norm{\Vec{x}}}$ such that $\abs{t - \overline{t}} \leq \epsilon$ with high probability in time $O\left((T_A + T_U)\frac{\alpha}{\epsilon}\right)$, provided $\epsilon_0 \leq \frac{\epsilon}{c}$, for any known constant $c$.
        \item A classical estimate $t$ of $\norm{A\Vec{x}}$ such that $\abs{\norm{A\Vec{x}} - t} \leq \epsilon\norm{A\Vec{x}}$ with high probability \\
        in expected time $\widetilde{O}\left((T_A + T_X)\frac{\alpha}{\epsilon} \frac{\norm{\Vec{x}}}{\norm{A\Vec{x}}}\right)$ if $\frac{\norm{A\Vec{x}}}{\norm{\Vec{x}}} \neq 0$ and otherwise runs forever, \\
        provided $\epsilon_0 \leq \frac{\epsilon}{c}$, for any known constant $c$.
        \item A quantum state $\ket{\Vec{z}}$ such that $\norm{\ket{\Vec{z}} - \frac{A\Vec{x}}{\norm{A\Vec{x}}}} \leq \epsilon$ in time $\widetilde{O}\left((T_A + T_X)\frac{\alpha}{\gamma}\right)$, provided that we know some lower bound $\gamma \leq \frac{\norm{A\Vec{x}}}{\norm{\Vec{x}}}$ and that $\epsilon_0 \leq \frac{\epsilon\gamma}{3}$.
        \item A quantum state $\ket{\Vec{z}}$ such that $\norm{\ket{\Vec{z}} - \frac{A\Vec{x}}{\norm{A\Vec{x}}}} \leq \epsilon$ in expected time ${\scriptstyle \widetilde{O}\left((T_A + T_X)\alpha\frac{\norm{\Vec{x}}}{\norm{A\Vec{x}}}\right)}$ if $\frac{\norm{A\Vec{x}}}{\norm{\Vec{x}}} \neq 0$ and otherwise runs forever, provided that $\epsilon_0 \leq \frac{\epsilon \norm{A\Vec{x}}}{3\norm{\Vec{x}}}$.
    \end{enumerate}
\end{theorem}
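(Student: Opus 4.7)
The plan is to reduce all four statements to standard amplitude estimation / amplitude amplification applied to the compound unitary $W = U_A(\mathbb{I}\otimes U_x)$, where $U_x$ denotes the state-preparation unitary for $\vec{x}$. By Definition~\ref{def: block-encoding}, $W\ket{0}^{\otimes q}\ket{0} = \frac{1}{\alpha}\ket{0}^{\otimes q}\tfrac{A\vec{x}}{\|\vec{x}\|} + |G\rangle$ up to a perturbation of norm $\epsilon_0/\alpha$ coming from the block-encoding error, where $|G\rangle$ is orthogonal to the ancilla subspace $\ket{0}^{\otimes q}$. The ``good'' amplitude is therefore $a = \|A\vec{x}\|/(\alpha\|\vec{x}\|)$, up to the same additive perturbation. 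Let $\Pi = \ketbra{0}{0}^{\otimes q}\otimes \mathbb{I}$ be the projector flagging the good subspace.

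For (1), I would run absolute-value amplitude estimation (Theorem~\ref{thm: amplitude estimation amplitude}) on $W$ with $\Pi$, using $t = O(\alpha/\epsilon)$ iterations, producing $\overline{a}$ with $|\overline{a}-a|\leq \pi/t$. Rescaling by $\alpha$ gives $|\alpha\overline{a} - \|A\vec{x}\|/\|\vec{x}\||\leq \epsilon/2$; the additional additive error from $\epsilon_0$ contributes at most $O(\epsilon_0)$ after rescaling, absorbed by requiring $\epsilon_0 \leq \epsilon/c$ for a fixed constant $c$. High success probability follows by the powering lemma (Lemma~\ref{lemma: powering lemma (median)}) with $O(\log(1/\delta))$ repetitions and median post-processing. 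The total running time matches the claim since each call to $W$ costs $T_A+T_x$.

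For (2), the relative-error guarantee requires adapting the number of iterations to the unknown magnitude of $\|A\vec{x}\|$. I would use a geometric doubling schedule $t_k = 2^k \cdot O(\alpha)$, running an amplitude estimation at precision $2^{-k}$ per round and halting as soon as the current estimate $\overline{a}_k$ satisfies $\pi/t_k \leq \epsilon \overline{a}_k/2$. The expected number of rounds is $O(\log(\|\vec{x}\|/\|A\vec{x}\|))$, and the cost of the final round dominates, producing the stated expected time $\widetilde{O}((T_A+T_x)\tfrac{\alpha}{\epsilon}\tfrac{\|\vec{x}\|}{\|A\vec{x}\|})$. The block-encoding error $\epsilon_0$ again contributes a small absolute perturbation to $a$; setting $\epsilon_0 \leq \epsilon/c$ relative to the detected scale keeps the overall relative error below $\epsilon$.

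For (3), I would invoke fixed-point amplitude amplification (Theorem~\ref{thm: qomp fixed point amp amp}) on $W$ with projector $\Pi$ and lower bound $\delta = \gamma/\alpha \leq a$, which requires $O(\tfrac{\alpha}{\gamma}\log(1/\epsilon))$ applications of $W$ to prepare a state $\epsilon/2$-close (in $\ell_2$) to $\ket{0}^{\otimes q}\ket{A\vec{x}/\|A\vec{x}\|}$. Discarding the ancilla yields the target encoding; the block-encoding error propagates to a perturbation of order $\epsilon_0\alpha/\|A\vec{x}\|\le \epsilon_0/\gamma$ on the prepared vector, absorbed by the assumption $\epsilon_0 \leq \epsilon\gamma/3$. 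For (4), no lower bound is available, so I would first run (2) at constant relative precision (say $1/3$) to obtain $\tilde t$ with $\tilde t \in [\tfrac12,\tfrac43]\|A\vec{x}\|/\|\vec{x}\|$ in expected time $\widetilde{O}((T_A+T_x)\alpha\tfrac{\|\vec{x}\|}{\|A\vec{x}\|})$, then invoke (3) with $\gamma = \tilde t/2$, giving the claimed expected running time once the two stages are combined. The main obstacle is careful bookkeeping of how $\epsilon_0$ flows through the amplitude estimation and fixed-point amplification so that the final additive/relative errors and the $O(\alpha/\gamma)$-style factors come out as stated; nothing is conceptually new, but it requires invoking reverse triangle inequalities at each step where we transition between estimates of amplitudes and estimates of norms.
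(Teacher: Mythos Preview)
Your proposal is correct and follows essentially the same approach as the paper: both set up the compound unitary $U_A(\mathbb{I}\otimes U_x)$, use absolute-value amplitude estimation for (1), a relative-error estimation scheme for (2) (the paper cites \cite{chowdhury2021computing}, which is morally your geometric doubling), fixed-point amplitude amplification with the known lower bound for (3), and bootstrap a lower bound via (2) at constant relative precision before invoking (3) for part (4). One minor bookkeeping slip: in (3) the block-encoding perturbation on the normalized output is of order $\epsilon_0/\|A\vec{x}\|$ (the $\alpha$'s cancel, as the paper's triangle-inequality chain shows), not $\epsilon_0\alpha/\|A\vec{x}\|$; with $\epsilon_0\le \epsilon\gamma/3$ this still gives the claimed $\epsilon$ bound.
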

\begin{proof}
From the definition of block-encoding (Def.~\ref{def: block-encoding}), we have
$\left\|A-\alpha(\bra{0}^{\otimes q}\otimes \mathbb{I})U_A(\ket{0}^{\otimes q}\otimes \mathbb{I})\right\| \leq \epsilon_0.$

Let us define $A' = (\bra{0}^{\otimes q}\otimes \mathbb{I})U_A(\ket{0}^{\otimes q}\otimes \mathbb{I})$ as the matrix on the top-left corner of $U_A$, such that $\norm{\frac{A}{\alpha} - A'} \leq \frac{\epsilon_0}{\alpha}$, considering the possible zero-padding that makes $A$ a square matrix with size equal to a power of two.
Then,
\begin{align}
    U_A (I^{\otimes q} \otimes U_x) \ket{0} &= U_A \ket{0\cdots 0}\ket{\Vec{x}}\\
    &= \begin{bmatrix} A' & \cdot \\
    \cdot & \cdot
    \end{bmatrix}\begin{bmatrix} \Vec{x} \\
      \vec{0}
    \end{bmatrix} = \ket{0}^{q}A'\ket{\Vec{x}} + |0^{\perp}\rangle,
    \label{eq: qomp  matrix vector mul}
\end{align}
where $|0^{\perp}\rangle$ is unnormalized, with the first $q$ qubits orthogonal to the all-zero state $\ket{0}^{q}$.
The probability of measuring the first $q$ qubits in the state $\ket{0}^q$ is $ \mathrm{Pr}[\ket{0}^q] = \norm{A'\ket{\Vec{x}}}^2$.
\smallskip

1, 2) The two proofs proceed as in \citet[Appendix A, Theorem IV.6]{bellante2023quantum}.
Both use \emph{Absolute value amplitude estimation} (Theorem~\ref{thm: amplitude estimation amplitude}) on $\ket{0}^{q}$ and the second relies on \citet[Appendix D]{chowdhury2021computing} to obtain a multiplicative error bound of $\frac{\|A\vec{x}\|}{\|\vec{x}\|}$ and multiply the resulting estimate by $\|\vec{x}\|$. 
The estimation routine from \citet{chowdhury2021computing} is the reason why the second algorithm might not terminate.
\smallskip

3) Let $\gamma \leq \frac{\norm{A\Vec{x}}}{\norm{\Vec{x}}}$ be a lower bound.
We can run \emph{Fixed-point amplitude amplification} from Theorem~\ref{thm: qomp  fixed point amp amp} on the state of Eq.~(\ref{eq: qomp  matrix vector mul}), instead of amplitude estimation.
In our case, $\ket{\psi_0} = \ket{\vec{x}}$, $U=U_A$, $\Pi = \ket{0}^q\bra{0}^q$ and $a\ket{\psi'_G} = \norm{A'\ket{\vec{x}}} \frac{A'\ket{\vec{x}}}{\norm{A'\ket{\vec{x}}}}$. 
Since $\norm{A'\ket{\vec{x}}} \geq \frac{\norm{A\ket{\vec{x}}}}{\alpha} - \frac{\epsilon_0}{\alpha} \geq  \frac{\gamma - \epsilon_0}{\alpha}$, assuming $\epsilon_0 < \gamma$ we can run the fixed-point amplitude amplification routine with target precision $\epsilon/3$ for $O(\frac{\alpha}{\gamma - \epsilon_0}\log(1/\epsilon))$ rounds to obtain a quantum state $\ket{\psi''_G}$ that is $\epsilon/3$ close to $\ket{\psi'_G} = \frac{A'\vec{x}}{\norm{A'\vec{x}}}$. 

We proceed by studying how far $\ket{\psi''_G}$ is from $\frac{A\vec{x}}{\norm{A\vec{x}}}$.
Recall that $\norm{\norm{A\ket{\vec{x}}} - \alpha\norm{A'\ket{\vec{x}}}} \leq \norm{A\ket{\vec{x}} - \alpha A'\ket{\vec{x}}} \leq \epsilon_0$.
Then,
\begin{align}
    \norm{\frac{A\vec{x}}{\norm{A\vec{x}}} - |\vec{\psi}''_G\rangle } &\leq 
    \norm{\frac{A\vec{x}}{\norm{A\vec{x}}} - |\vec{\psi}'_G\rangle} + \frac{\epsilon}{3}\\
    &\leq 
    \norm{\frac{A\vec{x}}{\norm{A\vec{x}}} - \alpha\frac{A'\vec{x}}{\norm{A\vec{x}}}}
    + \norm{\alpha\frac{A'\vec{x}}{\norm{A\vec{x}}} - |\vec{\psi}'_G\rangle} + \frac{\epsilon}{3}\\ 
    &\leq \frac{\epsilon_0}{\norm{A\vec{x}}} + \norm{A'\vec{x}}\abs{\frac{\alpha\norm{A'\vec{x}} - \norm{A\vec{x}}}{\norm{A'\vec{x}}\norm{A\vec{x}}}} + \frac{\epsilon}{3}\\
    &\leq 2\frac{\epsilon_0}{\norm{A\vec{x}}} + \frac{\epsilon}{3}.
\end{align}
Choosing $\epsilon_0 \leq \frac{\epsilon\gamma}{3}$, we bound the above by $\epsilon$. 
Since we are bounding a norm between two quantum states, the reasonable range for $\epsilon$ should be $(0,2]$,
For any $\epsilon \in (0,2)$, the rounds of amplitude estimation become $O(\frac{\alpha}{\gamma}\log(1/\epsilon))$.
For any $\epsilon \geq 2$, outputting the $\ket{0}$ state would do.
\smallskip

4) The proof is similar to the above, but we need a routine to determine the lower bound $\gamma$. 
We can use the second result of this Theorem to obtain a relative-error estimate of $\mu = \frac{\norm{A\vec{x}}}{\norm{\vec{x}}}$.
We can run the relative error estimation routine with error $1/2$ to obtain an estimate $\overline{\mu}$ such that $\frac{1}{2}\mu \leq \overline{\mu} \leq \frac{3}{2}\mu$, in expected time $\widetilde{O}\left((T_A + T_X)\alpha \frac{\norm{\Vec{x}}}{\norm{A\Vec{x}}}\right)$.
Then, we set our lower bound to $\gamma = \frac{2}{3}\overline{\mu}$, obtaining $\frac{1}{3}\mu \leq \gamma \leq \mu$, and run the fixed-point amplitude amplification routine as in the proof above.
The randomness of the running time is due to the relative error estimation of the lower bound and the success probability (upon termination) can be adjusted through the \emph{Powering lemma} (Lemma~\ref{lemma: powering lemma (median)}) and t.
\end{proof}

If $A$ and $\vec{x}$ are stored in a quantum data structure in QRAM, then we obtain the following corollary.

\begin{corollary}[Matrix-vector multiplication with quantum data structures]
\label{coro:matrix mul}
Let $A \in \C^{n\times m}$ and $\Vec{x} \in \C^{m}$ stored in a quantum data structure.
There exist quantum algorithms that output:
    \begin{enumerate}
        \item A classical estimate $\overline{t}$ of $t = \frac{\norm{A\Vec{x}}}{\norm{\Vec{x}}}$ such that $\abs{t - \overline{t}} \leq \epsilon$ with high probability in time $\widetilde{O}\left(\frac{\mu(A)}{\epsilon}\right)$.
        \item A classical estimate $t$ of $\norm{A\Vec{x}}$ such that $\abs{\norm{A\Vec{x}} - t} \leq \eta\norm{A\Vec{x}}$ with high probability in expected time $\widetilde{O}\left(\frac{\mu(A)}{\epsilon} \frac{\norm{\Vec{x}}}{\norm{A\Vec{x}}}\right)$ if $\frac{\norm{A\Vec{x}}}{\norm{\Vec{x}}} \neq 0$ and otherwise runs forever.
        \item A quantum state $\ket{\Vec{z}}$ such that $\norm{\ket{\Vec{z}} - \frac{A\Vec{x}}{\norm{A\Vec{x}}}} \leq \epsilon$ in time $\widetilde{O}\left(\frac{\mu(A)}{\gamma}\right)$, provided that we know some bound $\gamma \leq \frac{\norm{A\Vec{x}}}{\norm{\Vec{x}}}$.
        \item A quantum state $\ket{\Vec{z}}$ such that $\norm{\ket{\Vec{z}} - \frac{A\Vec{x}}{\norm{A\Vec{x}}}} \leq \epsilon$ in expected time $\widetilde{O}\left(\mu(A)\frac{\norm{\Vec{x}}}{\norm{A\Vec{x}}}\right)$ if $\frac{\norm{A\Vec{x}}}{\norm{\Vec{x}}} \neq 0$ and otherwise runs forever. 
    \end{enumerate}
\end{corollary}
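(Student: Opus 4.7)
The plan is to derive the corollary as a direct specialization of Theorem~\ref{thm: qomp matrix mul} under the QRAM access model, with the normalization parameter and the data-access costs sharpened to reflect the power of KP-tree data structures.

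First, I would invoke Theorem~\ref{lem: structure to block encoding}: since $A$ is stored in a quantum-accessible data structure (either natively or via the $P\circ Q$ decomposition), there exist unitaries implementing a $(\mu(A), \lceil\log(n+m+1)\rceil, \epsilon_0)$-block-encoding $U_A$ of $A$ in time $T_A \in O(\polylog(nm/\epsilon_0))$. Here I pick the best value of $p \in [0,1]$, so the normalization is $\mu(A) \defeq \min_p \mu_p(A)$ (Def.~\ref{def:mu}); equivalently, one may choose $\mu(A) = \|A\|_F$ if the unfactored tree is preferred. Second, by Theorem~\ref{thm:efficient data structure}, the QRAM storage of $\vec{x}$ grants quantum access via a unitary $U_x:\ket{0}\to\ket{\vec{x}}$ with cost $T_x \in O(\polylog(m))$.

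With these two ingredients in place, each of the four claims reduces to an instantiation of the corresponding item of Theorem~\ref{thm: qomp matrix mul}. The bounds for items (1)–(4) have the shape $\widetilde{O}\!\left((T_A+T_x)\cdot(\cdot)\right)$, and substituting $T_A, T_x \in \widetilde{O}(1)$ together with $\alpha = \mu(A)$ immediately yields the claimed runtimes. The only subtlety is that each item of Theorem~\ref{thm: qomp matrix mul} requires $\epsilon_0$ to be small enough — of the form $\epsilon_0 \le \epsilon/c$, $\epsilon_0 \le \epsilon\gamma/3$, or $\epsilon_0 \le \epsilon\|A\vec{x}\|/(3\|\vec{x}\|)$ — but since $T_A$ depends only polylogarithmically on $1/\epsilon_0$, this tightening adds at most a $\polylog(1/\epsilon, 1/\gamma, nm)$ factor, which is absorbed by the $\widetilde{O}$ notation.

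The main (modest) care point will be item~(4), where the required $\epsilon_0$ depends on the unknown ratio $\|A\vec{x}\|/\|\vec{x}\|$: as in the proof of Theorem~\ref{thm: qomp matrix mul}, I would first obtain a constant-factor estimate of this ratio using item~(2) with precision $1/2$ and cost $\widetilde{O}(\mu(A)\|\vec{x}\|/\|A\vec{x}\|)$, and only then rerun the block-encoding at the appropriately refined precision $\epsilon_0 \le \epsilon\gamma/3$, where $\gamma$ is the resulting lower bound on $\|A\vec{x}\|/\|\vec{x}\|$. Both phases have cost of the claimed order, and combining them via the \emph{Powering lemma} (Lemma~\ref{lemma: powering lemma (median)}) boosts the success probability, preserving the stated expected-time bound while handling the degenerate case $\|A\vec{x}\|=0$ exactly as in the parent theorem. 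No step requires genuinely new analysis; the proof is an accounting exercise that rewrites Theorem~\ref{thm: qomp matrix mul} with the QRAM-enabled parameter settings.
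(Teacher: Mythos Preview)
Your proposal is correct and matches the paper's approach exactly: the paper's proof simply states that one creates a block-encoding of $A$ (via Theorem~\ref{lem: structure to block encoding}) and applies Theorem~\ref{thm: qomp matrix mul}, deferring the details to \cite[Appendix~A, Corollary~IV.7]{bellante2023quantum}. Your write-up is in fact more explicit than the paper's own, spelling out the $\epsilon_0$ tracking and the two-phase handling of item~(4), but the underlying argument is identical.
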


The proof requires creating a block-encoding of $A$ and using Theorem~\ref{thm: qomp matrix mul}.
It follows closely the one of \citet[Appendix A, Corollary IV.7]{bellante2023quantum}.

\subsection{Quantum singular value transformation and polynomial approximations}
\label{appendix: singular value transformation}

We revisit Quantum Singular Value Transformation (QSVT) and state a handy corollary for QSVT by odd real polynomials.
The following theorem shows how to implement polynomial QSVT on a block-encoded matrix $A$, combining Corollary 18, Lemma 19, and Definition 15 of the arxiv version of \citet{gilyen2019quantum} in one statement.

\begin{theorem}[Quantum singular value transformation by real polynomials{~\cite{gilyen2019quantum}}]
\label{theorem:Singular value transformation by real polynomials}
    Let $U \in \C^{n\times n}$ be a unitary matrix and $\Pi, \widetilde{\Pi} \in \C^{n \times n}$ be two orthogonal projectors.
    Suppose that $P \in \R[x]$ is an either even or odd degree-$d$ polynomial such that $\forall x \in [-1,1]:\abs{P(x)}\leq 1$.

    Then, there exist $\vec{\Phi} \in \R^d$, such that
    \begin{equation}
        P^{(SV)}(\widetilde{\Pi}U\Pi) = \begin{cases}
            (\bra{+} \otimes \widetilde{\Pi})(\ketbra{0}{0}\otimes U_\Phi + \ketbra{1}{1} \otimes U_{-\Phi})(\ket{+}\otimes \Pi) & \text{if $d$ is odd}\\
            (\bra{+} \otimes \Pi)(\ketbra{0}{0}\otimes U_\Phi + \ketbra{1}{1} \otimes U_{-\Phi})(\ket{+}\otimes \Pi) & \text{if $d$ is even}.
        \end{cases}
    \end{equation}
    The unitary
    \begin{equation}
        U_\Phi = \begin{cases}
            e^{i\phi_1(2\widetilde{\Pi} - I)}U\prod_{j=1}^{(d-1)/2}\left(e^{i\phi_{2j}(2\Pi - I)}U^\dagger e^{i\phi_{2j+1}(2\widetilde{\Pi} - I)}U\right) & \text{if $d$ is odd}\\
            \prod_{j=1}^{d/2}\left(e^{i\phi_{2j-1}(2\Pi - I)}U^\dagger e^{i\phi_{2j}(2\widetilde{\Pi} - I)}U\right) & \text{if $d$ is even}
        \end{cases}
    \end{equation}
    can be implemented using a single ancilla qubit and $O(d)$ uses of $U$, $U^\dagger$, $C_{\Pi}NOT$, $C_{\widetilde{\Pi}}NOT$ and single qubit gates. Similarly, for its controlled versions.
\end{theorem}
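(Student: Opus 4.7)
The plan is to reduce the singular value transformation statement to the well-known one-qubit Quantum Signal Processing (QSP) theorem of Low and Chuang, and then to lift QSP to the block-structure $\widetilde{\Pi}U\Pi$ via a cosine-sine / SVD decomposition of the projected unitary. Concretely, the first step is to invoke the QSP characterization: for any real polynomial $P$ of definite parity and degree $d$ with $|P(x)|\leq 1$ on $[-1,1]$, there exist phases $\phi_1,\dots,\phi_d\in\R$ such that the standard QSP product of rotations and reflections produces a $2\times 2$ matrix whose top-left entry is a complex polynomial $Q(x)$ whose real part equals $P(x)$. This is the engine that drives everything; it will be used unchanged on each invariant subspace.

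Next, I would analyze the geometry of $U$ restricted to the images of $\Pi$ and $\widetilde{\Pi}$. Writing the SVD of $\widetilde{\Pi}U\Pi = \sum_i \sigma_i \ket{\widetilde v_i}\bra{v_i}$ and completing bases, one shows that for each singular triple, the two-dimensional subspace $\mathrm{span}\{\ket{v_i},U^\dagger\ket{\widetilde v_i}\}$ (and the twin $\mathrm{span}\{\ket{\widetilde v_i},U\ket{v_i}\}$) is invariant under the alternating application of $U$, $U^\dagger$, and the reflections $R_\Pi=2\Pi-I$, $R_{\widetilde\Pi}=2\widetilde\Pi-I$. In these coordinates, $U$ acts essentially as a $2\times 2$ rotation by an angle $\theta_i$ with $\cos\theta_i=\sigma_i$, so that the alternating product $e^{i\phi_{2j-1} R_\Pi}U^\dagger e^{i\phi_{2j} R_{\widetilde\Pi}}U$ reproduces, block-by-block, precisely the QSP iterate at signal parameter $\sigma_i$. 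Stringing together $d$ such factors (with the parity-matched ordering given in the statement) applies $Q(\sigma_i)$ on the invariant $2$-dimensional block of singular value $\sigma_i$, i.e.\ realizes $Q^{(SV)}(\widetilde\Pi U\Pi)$ in the $(\bra{\widetilde\Pi},\ket{\Pi})$ (odd $d$) or $(\bra{\Pi},\ket{\Pi})$ (even $d$) block.

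The third step is to pass from the complex polynomial $Q$ that QSP naturally delivers to the real target $P$. Here the $\ket{+}$-controlled combination in the statement does exactly the job: $U_\Phi$ implements $Q^{(SV)}$, while $U_{-\Phi}$ (obtained by negating every phase) implements $\overline{Q}^{(SV)}$; sandwiching between $\ket{+}$ and $\bra{+}$ takes the average, yielding $\tfrac12(Q+\overline{Q})^{(SV)}=\Re(Q)^{(SV)}=P^{(SV)}$. One then verifies the parity bookkeeping: for odd $d$ the left projector is $\widetilde\Pi$ because QSP has an unpaired $U$ at one end, whereas for even $d$ the left projector is $\Pi$. Finally, the gate count is read off directly: each of the $d$ factors uses one call to $U$ or $U^\dagger$ and one $C_\Pi\mathrm{NOT}$ or $C_{\widetilde\Pi}\mathrm{NOT}$-controlled phase (to synthesize $e^{i\phi R_\Pi}$ and $e^{i\phi R_{\widetilde\Pi}}$ from one ancilla), the ancilla count is one for QSP plus one for the $\ket{+}$ combination, and controlled versions follow by controlling the ancilla preparations and the $U$'s.

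The main obstacle I expect is the invariant-subspace argument: it is routine in principle but notationally heavy, and one must take care with singular values $\sigma_i\in\{0,1\}$, which correspond to degenerate $2\times 2$ blocks (degenerate to $1$-dimensional) and must be treated separately to ensure the reflections act correctly; this is also where the parity of $P$ matters, since $P$ must vanish at $0$ (for odd $d$) or be well-defined at $\pm 1$, mirroring the known achievability conditions for QSP. Once these degenerate cases are handled, everything else is an assembly of QSP plus the linear-combination-of-unitaries trick for real parts.
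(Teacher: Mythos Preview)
The paper does not prove this theorem; it is stated as a citation, with the preamble explicitly noting that it ``combin[es] Corollary 18, Lemma 19, and Definition 15 of the arxiv version of \cite{gilyen2019quantum} in one statement.'' Your proposal correctly sketches the standard proof from that reference: QSP on each two-dimensional invariant block produced by the cosine--sine decomposition of $\widetilde{\Pi}U\Pi$, followed by the $\ket{+}$-controlled average of $U_\Phi$ and $U_{-\Phi}$ to extract the real part, with the parity of $P$ dictating which projector appears on the left.
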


Here, a $C_{\Pi}NOT$ for a projector $\Pi$ is the controlled operation $C_{\Pi}NOT = \Pi \otimes X + (I-\Pi) \otimes I$ and the block-encoded matrix is $A=\widetilde{\Pi} U \Pi$.
Moreover, \citet[Lemma 19, arxiv version]{gilyen2019quantum} shows how to efficiently implement $e^{i\phi (2\Pi - I)}$ using a single auxiliary qubit as $e^{i\phi (2\Pi - I)} = C_{\Pi}NOT(I \otimes e^{-i\phi \sigma_z})C_{\Pi}NOT$, leading to an efficient $U_\Phi$.

In this paper, we focus on the application of real and odd polynomials.
Before stating our main corollary, we include a lemma that relates the error in the block-encoding to the resulting one on the polynomial SVT.
This lemma is a simplification of \citet[Lemma 22, arxiv version]{gilyen2019quantum} for real and odd polynomials.

\begin{lemma}[Robustness of singular value transformation~{\cite{gilyen2019quantum}}]
\label{lemma: qomp robustness of singular value transformation}
If $P \in \R[x]$ is an even or odd degree-$d$ polynomial such that $\forall x\in[-1,1]: \abs{P(x)\leq 1}$, moreover $A, \widetilde{A} \in \C^{N \times N}$ are matrices of operator norm at most $1$, then we have that
\begin{equation}
    \norm{P^{(SV)}(A) - P^{(SV)}(\widetilde{A})} \leq 4d\sqrt{\norm{A - \widetilde{A}}}.
\end{equation}
\end{lemma}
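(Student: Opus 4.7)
The plan is to reduce the singular-value statement to one about eigenvalue transformation of Hermitian matrices, and then split that Hermitian claim into a universal bound and a matrix-Lipschitz bound whose geometric mean produces the square-root scaling.

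First I would introduce the Hermitian dilation $\mathcal{H}(M) := \begin{pmatrix} 0 & M \\ M^\dagger & 0 \end{pmatrix}$, whose spectrum consists of $\pm\sigma_i(M)$ together with zeros. Because $P$ is either even or odd, $P(\mathcal{H}(M))$ has a block structure whose off-diagonal blocks (odd case) or diagonal blocks (even case) reproduce $P^{(SV)}(M)$ up to identifications that preserve the operator norm; this is the same dilation that underlies the definition of $P^{(SV)}$ via QSVT in Theorem~\ref{theorem:Singular value transformation by real polynomials}. Since $\|\mathcal{H}(A)\| = \|A\| \leq 1$ and $\|\mathcal{H}(A) - \mathcal{H}(\widetilde{A})\| = \|A - \widetilde{A}\|$, the target inequality reduces to showing that, for Hermitian $H,\widetilde{H}$ of norm at most $1$ and any real degree-$d$ polynomial $P$ with $\|P\|_{[-1,1]} \leq 1$,
\begin{equation}
\|P(H) - P(\widetilde{H})\| \leq 4d\sqrt{\|H - \widetilde{H}\|}.
\end{equation}

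For the Hermitian version I would combine two elementary estimates by geometric mean. The universal bound $\|P(H) - P(\widetilde{H})\| \leq \|P(H)\| + \|P(\widetilde{H})\| \leq 2\|P\|_\infty \leq 2$ is always available via the functional calculus. A perturbative Lipschitz bound $\|P(H) - P(\widetilde H)\| \leq L\|H - \widetilde H\|$ with $L = O(d^2)$ comes from expanding $P$ in the Chebyshev basis $P = \sum_k a_k T_k$ (where the coefficients are controlled because $\|P\|_\infty \leq 1$) and using the telescoping identity $H^k - \widetilde H^k = \sum_{i+j=k-1} H^i(H-\widetilde H)\widetilde H^j$ to obtain a per-term operator bound that matches the scalar Markov estimate $|T_k'| \leq k^2$ on $[-1,1]$. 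Taking the minimum of the universal bound $2$ and the linear bound $L\|H - \widetilde H\|$ is controlled by their geometric mean, yielding $\|P(H) - P(\widetilde H)\| \leq \sqrt{2L\|H - \widetilde H\|} = O(d)\sqrt{\|H - \widetilde H\|}$; a careful accounting of constants absorbs everything into the advertised factor of $4d$.

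The main obstacle is establishing the linear Lipschitz bound with the sharp $L = O(d^2)$ dependence for non-commuting Hermitian matrices. A direct monomial expansion of $P$ is doomed, because the monomial coefficients of a polynomial bounded by $1$ on $[-1,1]$ can be exponentially large in $d$; hence the expansion must be in Chebyshev polynomials, where coefficients are uniformly $O(1)$. Furthermore, it is not a formal consequence of the scalar Lipschitz property of $T_k$ that $\|T_k(H) - T_k(\widetilde H)\| \leq k^2\|H-\widetilde H\|$ for non-commuting Hermitian matrices (this is the failure of Kato's conjecture in full generality), but the specific polynomial structure combined with the telescoping identity on each $T_k$ does yield the bound. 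Once this technical step is secured, summing over $k \leq d$ gives the linear Hermitian bound, the geometric-mean interpolation supplies the $\sqrt{\cdot}$ factor, and the Hermitian-dilation reduction of Step~1 mechanically converts the estimate back into the singular-value inequality claimed in the lemma.
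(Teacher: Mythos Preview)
Your approach differs substantially from the paper's and contains a real gap in the key Lipschitz step.

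The paper (following Gily\'en et al.) never attempts an operator-Lipschitz bound for $P$. Instead it constructs an explicit unitary dilation $U$ of $A$ (with off-diagonal blocks involving $\sqrt{I-AA^\dagger}$ and $\sqrt{I-A^\dagger A}$), so that $\|U-\overline{U}\|\le 2\sqrt{\|A-\widetilde A\|}$ because the matrix square root is $\tfrac12$-H\"older. The QSVT circuit $U_\Phi$ of Theorem~\ref{theorem:Singular value transformation by real polynomials} then realizes $P^{(SV)}(A)$ as a sub-block of a \emph{single product} of $d$ factors of $U$, $U^\dagger$, and phase gates, and a plain telescoping over those $d$ factors gives $\|U_\Phi-\overline{U}_\Phi\|\le d\,\|U-\overline{U}\|$. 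The square-root loss is incurred exactly once, at the level of constructing the dilation, and the factor $d$ comes from the product length; the paper's only extra work is checking that the real-polynomial construction $(\bra{+}\otimes\Pi')(\ketbra{0}{0}\otimes U_\Phi+\ketbra{1}{1}\otimes U_{-\Phi})(\ket{+}\otimes\Pi)$ inherits the same bound, which follows by averaging $\|U_\Phi-\overline{U}_\Phi\|$ and $\|U_{-\Phi}-\overline{U}_{-\Phi}\|$.

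Your geometric-mean strategy would yield $O(d)\sqrt{\epsilon}$ \emph{if} the Lipschitz constant were $L=O(d^2)$, but the argument you sketch does not give that. The monomial telescoping identity you cite controls $\|H^k-\widetilde H^k\|\le k\epsilon$, not $\|T_k(H)-T_k(\widetilde H)\|$; propagating it through the Chebyshev recurrence (whose transfer matrix has entries $U_j(H)$ of norm $\le j{+}1$) does yield $\|T_k(H)-T_k(\widetilde H)\|\le Ck^2\epsilon$, but then summing $\sum_{k\le d}|a_k|\,k^2$ with only the uniform bound $|a_k|\le 2$ available from $\|P\|_\infty\le 1$ gives $L=O(d^3)$, whence $\sqrt{2L\epsilon}=O(d^{3/2})\sqrt{\epsilon}$, not $4d\sqrt{\epsilon}$. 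Reaching $L=O(d^2)$ this way would require $\sum_k|a_k|=O(1)$, which fails for general bounded polynomials. The sharp $O(d)$ factor in the lemma is precisely what the QSVT product representation buys: it expresses $P$ as one length-$d$ word in a unitary, rather than as a \emph{sum} of up to $d$ such words, so the H\"older loss from the dilation is paid once globally rather than once per Chebyshev term.
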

\begin{proof}
    We report the difference from \citet[Lemma 22, arxiv version]{gilyen2019quantum}.
    First, we can always use their Corollary 10 to make our real polynomial satisfy the conditions of their Corollary 8.
    Using the polynomial obtained by Corollary 10, we can prove the correctness by replacing their equation
    \begin{align}
        \norm{P^{(SV)}(A) - P^{(SV)}(\widetilde{A}/(1+\epsilon))} &= \norm{\Pi'U_{\Phi}\Pi - \Pi'\overline{U}_{\Phi}\Pi} \leq \norm{U_{\Phi} - \overline{U}_{\Phi}}
        &\leq d\norm{U - \overline{U}} \leq 2d\sqrt{\norm{A-\widetilde{A}}}
    \end{align}
    with 
    \begin{align}
        & \quad~ \norm{P^{(SV)}(A) - P^{(SV)}(\widetilde{A}/(1+\epsilon))} = \\
        &=\norm{(\bra{+} \otimes \Pi')(\braket{0}{0}\otimes U_\Phi + \braket{1}{1} \otimes U_{-\Phi})(\ket{+}\otimes \Pi) - (\bra{+} \otimes \Pi')(\braket{0}{0}\otimes \overline{U}_\Phi + \braket{1}{1} \otimes \overline{U}_{-\Phi})(\ket{+}\otimes \Pi)} \\
        &\leq \norm{\frac{\Pi'U_{\Phi}\Pi}{2} + \frac{\Pi'U_{-\Phi}\Pi}{2} - \frac{\Pi'\overline{U}_{\Phi}\Pi}{2} - \frac{\Pi'\overline{U}_{-\Phi}\Pi}{2} } \leq 
        \frac{\norm{U_{\Phi} - \overline{U}_{\Phi}}}{2} + \frac{\norm{U_{-\Phi} - \overline{U}_{-\Phi}}}{2}\\
        &\leq d\norm{U - \overline{U}} \leq 2d\sqrt{\norm{A-\widetilde{A}}}.
    \end{align}
    The proof then concludes like theirs.
\end{proof}

We are now ready to state our handy corollary for QSVT by real odd polynomials, which provides us guarantees on the accuracy a block-encoding of $P^{(SV)}\left(\frac{A}{\alpha}\right)$.

\begin{corollary}[QSVT by real and odd polynomial]
\label{corollary:QSVT by real and odd polynomial}
    Let $\delta \in [0,1]$ be a precision parameter. 
    Let $A \in \C^{n\times m}$ be a matrix with singular value decomposition $A = \sum_{i} \sigma_i\ket{u_i}\langle v_i^\dagger|$.
    Let $P \in \R[x]$ be an odd polynomial such that $\forall x \in [-1,1]: \abs{P(x)}\leq 1$. 
    Let $U_A$ be an $(\alpha, q, \epsilon)$-block-encoding of $A$, implementable in time $T_A$, with $\epsilon \leq \frac{\alpha \delta^2}{16d^2}$.
     
    Then, we can implement a $(1, q+2, \delta)$-block-encoding $U_P$ of 
    \begin{equation}
        P^{(SV)}\left(\frac{A}{\alpha}\right) \defeq \sum_{k=1}^r P\left(\frac{\sigma_k}{\alpha}\right)\ket{u}\langle v^\dagger|
    \end{equation}
    in time $O(dT_A)$.
\end{corollary}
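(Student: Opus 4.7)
The plan is to combine three ingredients from the preceding background: the rescaling lemma (Lemma~\ref{lemma: mini block encoding}), the QSVT circuit for real polynomials (Theorem~\ref{theorem:Singular value transformation by real polynomials}), and the robustness bound (Lemma~\ref{lemma: qomp robustness of singular value transformation}). The proof is a straightforward chaining of these results, where the hypothesis $\epsilon \leq \alpha\delta^2/(16d^2)$ is exactly calibrated so that the robustness bound yields the target accuracy $\delta$.

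First, I would invoke Lemma~\ref{lemma: mini block encoding} to reinterpret $U_A$ as a $(1,q,\epsilon/\alpha)$-block-encoding of $A/\alpha$. Concretely, there is a matrix $\widetilde{B}$ with $\|\widetilde{B}\| \leq 1$ such that $\widetilde{B} = (\bra{0}^{\otimes q}\otimes \mathbb{I})U_A(\ket{0}^{\otimes q}\otimes \mathbb{I})$ and $\|\widetilde{B} - A/\alpha\| \leq \epsilon/\alpha$. Writing the projector $\Pi = \ket{0}^{\otimes q}\bra{0}^{\otimes q} \otimes \mathbb{I}$, we have $\widetilde{B} = \Pi U_A \Pi$ in the relevant block.

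Next, I would apply Theorem~\ref{theorem:Singular value transformation by real polynomials} with $U=U_A$ and the projector $\Pi$ as above, using the polynomial $P$ (odd, degree $d$, bounded by $1$ on $[-1,1]$). This produces phase angles $\vec{\Phi}\in\mathbb{R}^d$ and a unitary
\begin{equation}
    W_P \defeq (\bra{+}\otimes \mathbb{I})(\ketbra{0}{0}\otimes U_\Phi + \ketbra{1}{1}\otimes U_{-\Phi})(\ket{+}\otimes \mathbb{I})
\end{equation}
such that the projection of $W_P$ onto the $\ket{0}^{\otimes q}$ block equals $P^{(SV)}(\widetilde{B})$ exactly. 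The implementation uses $O(d)$ calls to $U_A$, $U_A^\dagger$, a single ancilla for the $\ket{+}$ register of the LCU, and one additional ancilla qubit to realize each $e^{i\phi(2\Pi - I)}$ rotation; together with the $q$ ancillas already present in $U_A$, this gives a $(1, q+2, 0)$-block-encoding of $P^{(SV)}(\widetilde{B})$, implemented in time $O(dT_A)$.

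Finally, I would apply Lemma~\ref{lemma: qomp robustness of singular value transformation} to $A/\alpha$ and $\widetilde{B}$, both of operator norm at most $1$:
\begin{equation}
    \bigl\| P^{(SV)}(A/\alpha) - P^{(SV)}(\widetilde{B}) \bigr\| \;\leq\; 4d\sqrt{\|A/\alpha - \widetilde{B}\|} \;\leq\; 4d\sqrt{\epsilon/\alpha}.
\end{equation}
The hypothesis $\epsilon \leq \alpha\delta^2/(16d^2)$ gives $4d\sqrt{\epsilon/\alpha} \leq \delta$, so the constructed circuit is a $(1, q+2, \delta)$-block-encoding of $P^{(SV)}(A/\alpha)$, as claimed. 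The total cost is $O(dT_A)$ plus $O(d)$ single- and two-qubit gates, absorbed in the asymptotic bound. The only subtlety worth flagging is the ancilla bookkeeping: the $+2$ overhead in the number of ancilla qubits comes from one qubit for the LCU combining $U_\Phi$ and $U_{-\Phi}$ and one qubit used to realize the projector-controlled phases $e^{i\phi(2\Pi - I)}$, as described in the discussion following Theorem~\ref{theorem:Singular value transformation by real polynomials}.
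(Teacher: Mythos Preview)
Your proof is correct and follows essentially the same approach as the paper's own proof: rescale the block-encoding via Lemma~\ref{lemma: mini block encoding}, apply the QSVT circuit of Theorem~\ref{theorem:Singular value transformation by real polynomials} to the exact block, and invoke the robustness bound of Lemma~\ref{lemma: qomp robustness of singular value transformation} to transfer accuracy from the block to $A/\alpha$. Your explicit ancilla bookkeeping for the $q+2$ count is a nice addition that the paper leaves more implicit.
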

\begin{proof}
By the definition of block-encoding (Def.~\ref{def: block-encoding}), $U_A$ is a $(1, q, \frac{\epsilon}{\alpha})$-block-encoding of $A'=\frac{A}{\alpha}$. 
Indeed,
    \begin{align}
        \norm{A'- (\langle 0|^{\otimes a}\otimes I ) U_A (|0\rangle^{\otimes a}\otimes I)}
        \leq \epsilon/\alpha.
    \end{align}
    Let $\widetilde{\Pi} = (\langle 0|^{\otimes q}\otimes I )$, $\Pi = (|0\rangle^{\otimes q} \otimes I)$ and $\widetilde{\Pi}U_A\Pi = \widetilde{A}$, so that $\|{A' - \widetilde{A}}\| \leq \epsilon/\alpha$.
    By Corollary~\ref{theorem:Singular value transformation by real polynomials}, we can implement $P^{(SV)}(\widetilde{A})$ in time $O(dT_A)$ using at most other $2$ auxiliary qubits and by Lemma~\ref{lemma: qomp robustness of singular value transformation}, we have $\norm{P^{(SV)}(A/\alpha) - P^{SV}(\widetilde{A})} \leq 4d\sqrt{\epsilon/\alpha}$. 
    To achieve final precision $\delta$, we require $\epsilon \leq \frac{\alpha \delta^2}{16d^2}$.
\end{proof}

In this section, we assumed that $\vec{\Phi}$ - the vector of rotations used in SVT - is available with sufficient (ideal) precision. 
In general, it is possible to classically compute $\vec{\Phi}$ to arbitrary precision $\xi$ in time $O(\mathrm{poly}(d,\log(1/\xi)))$~\cite{gilyen2019quantum}.

\subsubsection{Polynomial approximation of Sign and Step}
\label{appendix: polynomial approximation of sign and step}
We conclude this section by stating a real and odd polynomial approximation of the sign and step functions.
First, we report a result on the sign function.

\begin{lemma}[Polynomial approximation of the sign function~{\cite[Lemma 25, arxiv version]{gilyen2019quantum}}] 
\label{lemma: qomp Polynomial approximation of the sign function}
For all $\delta > 0$,  $\epsilon \in (0, 1/2)$ there exists an efficiently computable odd polynomial $P \in \R[x]$ of degree $n=O\left( \frac{\log(1/\epsilon)}{\delta}\right)$, such that
\begin{itemize}
    \item $\forall x \in [-2, 2]: \abs{P(x)} \leq 1$, and
    \item $\forall x \in [-2, 2]\setminus(-\delta, \delta): \abs{P(x) - \mathrm{sign}(x)} \leq \epsilon$.
\end{itemize}    
\end{lemma}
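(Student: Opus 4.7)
My plan is to build $P$ in two stages: first replace $\mathrm{sign}(x)$ by a smooth odd surrogate $f_k$ that is already $O(\epsilon)$-close to $\mathrm{sign}$ outside a neighborhood of the origin, and then uniformly approximate $f_k$ by a low-degree odd polynomial on all of $[-2,2]$. A final rescaling restores the sup-norm bound.

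\emph{Step 1 (Smoothing).} Set $f_k(x) \defeq \mathrm{erf}(kx) = \tfrac{2}{\sqrt{\pi}}\int_0^{kx} e^{-t^2}\,dt$, which is odd, real, bounded by $1$ in absolute value, and entire. The Gaussian tail bound $1-\mathrm{erf}(y)\le e^{-y^2}$ for $y\ge 0$ gives $|f_k(x)-\mathrm{sign}(x)| \le e^{-k^2 x^2}$ for all $x\neq 0$, so the choice $k = \Theta\bigl(\delta^{-1}\sqrt{\log(1/\epsilon)}\bigr)$ ensures $|f_k(x)-\mathrm{sign}(x)|\le \epsilon/2$ on $[-2,-\delta]\cup[\delta,2]$.

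\emph{Step 2 (Polynomial approximation of $f_k$).} I would construct an odd polynomial $\widetilde{P}$ with $\sup_{[-2,2]}|\widetilde{P}-f_k| \le \epsilon/4$ and $|\widetilde{P}|\le 1+\epsilon/4$ on $[-2,2]$. One convenient route is to write $f_k(x) = \tfrac{2k}{\sqrt{\pi}}\int_0^x e^{-k^2 u^2}\,du$, expand the integrand either in a Chebyshev series or via the Jacobi-Anger identity (whose coefficients are modified Bessel functions with tails $I_n(z)\lesssim(ez/(2n))^n$), truncate at an appropriate order, and integrate term-by-term. Since $f_k$ is odd, only odd Chebyshev polynomials enter the expansion, so $\widetilde{P}$ is automatically odd. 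A Bernstein-ellipse analysis of the Chebyshev coefficients of $f_k$ on $[-2,2]$ --- balancing the off-axis growth $|f_k(z)|\lesssim |z|\,e^{k^2 (\Im z)^2}$ against the geometric decay $\rho^{-n}$ of the coefficients, and choosing the ellipse radius $\rho$ optimally in terms of $k$ --- yields the required degree bound $n = O(\log(1/\epsilon)/\delta)$.

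\emph{Step 3 (Normalization).} Finally set $P \defeq \widetilde{P}/(1+\epsilon/2)$, which enforces $|P(x)|\le 1$ on $[-2,2]$ and introduces at most an additional $O(\epsilon)$ in the approximation error; this loss is absorbed by tightening the constants in the choice of $k$ and in the Chebyshev truncation order. Oddness and the degree bound are preserved.

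\emph{Main obstacle.} The delicate part is obtaining the sharp degree bound in Step 2. A naive truncated Taylor expansion of $e^{-k^2 u^2}$ only yields a polynomial of degree $O(k^2+\log(1/\epsilon))$, which --- combined with $k=\Theta(\delta^{-1}\sqrt{\log(1/\epsilon)})$ --- produces the weaker $1/\delta^2$ rather than $1/\delta$ scaling. Recovering the sharper bound hinges on exploiting that $f_k$ is entire with carefully controlled growth off the real axis, and on selecting the optimal Bernstein ellipse so that the exponential blow-up $e^{k^2(\Im z)^2}$ is beaten by the geometric decay $\rho^{-n}$; this balancing, together with clean bookkeeping of the $\epsilon$-losses across all three steps, is the technically most involved piece of the argument.
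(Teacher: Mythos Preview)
The paper does not prove this lemma at all: it is quoted verbatim as \cite[Lemma~25, arXiv version]{gilyen2019quantum} and used as a black box. So there is no ``paper's own proof'' to compare against.

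That said, your plan is essentially the standard argument from the cited reference: approximate $\mathrm{sign}$ by $\mathrm{erf}(kx)$ with $k=\Theta(\delta^{-1}\sqrt{\log(1/\epsilon)})$, then approximate the Gaussian integrand by a low-degree polynomial and integrate. Your identification of the main obstacle is exactly right --- naive Taylor truncation gives $O(1/\delta^2)$, and one needs the sharper Chebyshev/Bernstein-ellipse analysis (or equivalently the polynomial approximation of the Gaussian from \cite[Corollary~66, arXiv]{gilyen2019quantum}) to recover the $O(1/\delta)$ degree. Your Step~2 sketch is on the right track but would need to be filled in carefully; in practice the cited paper handles this by first proving a separate lemma on polynomial approximation of $e^{-\beta x^2}$ and then integrating, rather than analyzing $\mathrm{erf}$ directly via a Bernstein ellipse.
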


\begin{figure}[t]
    \centering
    \includegraphics[width=0.5\linewidth]{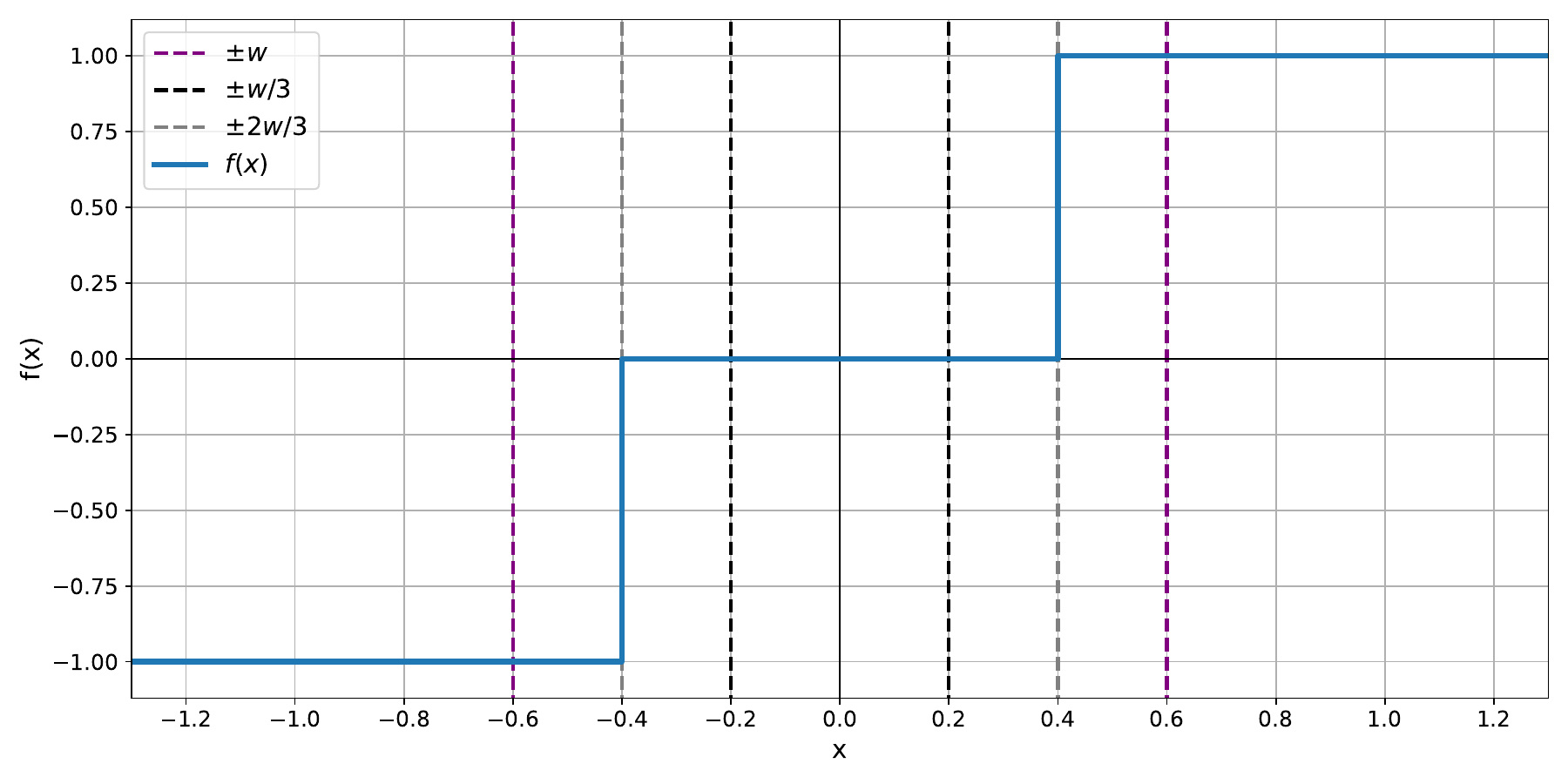}
    \caption{Antisymmetric step function $f(x) = \frac{1}{2}(\mathrm{sign}(x+\frac{2w}{3})+\mathrm{sign}(x-\frac{2w}{3}))$ with $w = 0.6$. It acts as a step function for $x>0$, with the step at $2w/3$.}
    \label{fig:antisymmetric step}
\end{figure}

If we want to make sure that the function is close to $0$ in a small interval around $x = 0$, we can approximate the step function on a positive domain (for $x \geq 0$) with a real odd polynomial, and complete the polynomial on $[-1,0)$ with an antysimmetric step function.
To perform the polynomial approximation we can create an antisymmetric step function (Figure~\ref{fig:antisymmetric step}) by manipulating the sign function and using its real odd polynomial approximation.
While we will use the sign function in our proofs, we expect that in practice it could be better to use a polynomial approximation of the antisymmetric step function, as forcing the function to be close to $0$ might help suppress errors even further.

\begin{lemma}[Polynomial approximation of the antisymmetric step function]
\label{lemma: qomp Polynomial approximation of the antisymmetric step function}
Let $w \in (0,1)$ and $\epsilon \in (0, 1/2)$.
There exists an efficiently computable odd polynomial $P \in \R[x]$ of degree $O\left(\frac{1}{w}\log(1/\epsilon)\right)$, such that 
    \begin{itemize}
        \item $\forall x \in [+w, 1]: \abs{1 - P(x)} \leq \epsilon$ and $\forall x \in [-1, -w]: \abs{-1 - P(x)} \leq \epsilon$.
        \item $\forall x \in [-w/3, +w/3]: \abs{P(x)} \leq \epsilon$,
        \item $\forall x \in [-1, 1]: \abs{P(x)} \leq 1$,
    \end{itemize}    
\end{lemma}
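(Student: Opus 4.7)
The plan is to construct $P$ directly from the sign-function approximation of Lemma~\ref{lemma: qomp Polynomial approximation of the sign function} by averaging two shifted copies, following the decomposition $f(x)=\tfrac12(\mathrm{sign}(x+\tfrac{2w}{3})+\mathrm{sign}(x-\tfrac{2w}{3}))$ suggested by Figure~\ref{fig:antisymmetric step}. Concretely, I would invoke Lemma~\ref{lemma: qomp Polynomial approximation of the sign function} with parameters $\delta:=w/3$ and accuracy $\epsilon$ (up to a constant) to obtain an odd polynomial $Q\in\R[x]$ of degree $O(\log(1/\epsilon)/w)$ satisfying $|Q(y)|\le 1$ on $[-2,2]$ and $|Q(y)-\mathrm{sign}(y)|\le \epsilon$ on $[-2,2]\setminus(-w/3,w/3)$. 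Then I would define
\begin{equation}
P(x)\;:=\;\tfrac{1}{2}\bigl(Q(x+\tfrac{2w}{3})+Q(x-\tfrac{2w}{3})\bigr).
\end{equation}

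Next I would verify the four required properties in turn. Oddness: since $Q$ is odd, $P(-x)=\tfrac12(Q(-x+\tfrac{2w}{3})+Q(-x-\tfrac{2w}{3}))=-\tfrac12(Q(x-\tfrac{2w}{3})+Q(x+\tfrac{2w}{3}))=-P(x)$. Degree: $P$ inherits the degree of $Q$, which is $O(\tfrac1w\log(1/\epsilon))$. Global bound: for $x\in[-1,1]$ and $w\in(0,1)$ the shifted arguments lie in $[-1-\tfrac{2w}{3},\,1+\tfrac{2w}{3}]\subset[-2,2]$, so each term satisfies $|Q|\le1$ and therefore $|P(x)|\le1$. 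For the three regional bounds I would split into cases: if $x\in[w,1]$ then $x\pm\tfrac{2w}{3}\ge w/3$, both in $[-2,2]\setminus(-w/3,w/3)$, hence both $Q$ values are within $\epsilon$ of $+1$, giving $|P(x)-1|\le\epsilon$; analogously for $x\in[-1,-w]$ both values are within $\epsilon$ of $-1$. Finally, if $x\in[-w/3,w/3]$ then $x+\tfrac{2w}{3}\in[w/3,w]$ and $x-\tfrac{2w}{3}\in[-w,-w/3]$, which are outside the bad interval $(-w/3,w/3)$, so $Q(x+\tfrac{2w}{3})=1+O(\epsilon)$ and $Q(x-\tfrac{2w}{3})=-1+O(\epsilon)$, making $|P(x)|\le\epsilon$.

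There is no real obstacle: the construction is a two-line shift-and-average, and all three properties follow from an elementary case analysis on which of the shifted arguments land outside the bad interval $(-w/3,w/3)$ where Lemma~\ref{lemma: qomp Polynomial approximation of the sign function} gives no guarantee. The only bookkeeping point worth tracking is the constant factor absorbed into $\epsilon$ (one may invoke the sign-approximation lemma with accuracy $\epsilon/2$ so that the triangle-inequality bound on $|P(x)\mp 1|$ and $|P(x)|$ comes out to exactly $\epsilon$ rather than $2\epsilon$); this does not affect the asymptotic degree. Efficient computability of $P$ is inherited from that of $Q$, since shifting and averaging can be done in classical polynomial time.
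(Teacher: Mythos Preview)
Your proposal is correct and follows essentially the same route as the paper's own proof: construct $P(x)=\tfrac12\bigl(Q(x+\tfrac{2w}{3})+Q(x-\tfrac{2w}{3})\bigr)$ from the sign approximation $Q$ with $\delta=w/3$, and verify oddness, degree, boundedness, and the three regional estimates by the same case analysis on the shifted arguments. Your bookkeeping remark about possibly invoking the sign lemma with accuracy $\epsilon/2$ is fine but not needed, since the average of two $\epsilon$-errors is still at most $\epsilon$.
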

\begin{proof}
    Let $f(x) = \frac{1}{2}(\mathrm{sign}(x+\frac{2w}{3})+\mathrm{sign}(x-\frac{2w}{3}))$. It is easy to verify that an $\epsilon$-approximation of $f(x)$ might satisfy our needs, as
    \begin{itemize}
        \item $\forall x \in [w, +\infty): f(x) = 1$ and $\forall x \in (-\infty, -w]: f(x) = -1$,
        \item $\forall x \in [-w/3, +w/3]: f(x) = 0$.
    \end{itemize}
    We are going to build a polynomial approximation of $f(x)$ for all $x \in [-1, -w] \cup [-w/3, +w/3] \cup [+w, 1]$ starting from the one of the sign function.    
    We can use Lemma~\ref{lemma: qomp Polynomial approximation of the sign function} to construct a real odd polynomial $Q(x)$ such that
    \begin{equation}
    \label{eq: qomp  sign approx lemma proof}
        \abs{\mathrm{sign}(x) - Q(x)}\leq \epsilon \text{ for all } x \in [-2,2]\setminus\left(-\frac{w}{3}, \frac{w}{3}\right)
    \end{equation}
    and $\forall x \in [-2, 2]: \abs{Q(x)} \leq 1$.
    This requires degree $O\left( \frac{\log(1/\epsilon)}{w}\right)$.
    Then, we approximate $f(x)$ via the polynomial $P(x) = \frac{Q(x+2w/3) + Q(x-2w/3)}{2}$.
    We now show that this approximation satisfy the claims in the lemma.
    \begin{enumerate}
        \item \emph{Parity.}
        By construction, $P(x)$ is an efficiently computable real odd polynomial of degree $O\left( \frac{\log(1/\epsilon)}{w-\gamma}\right)$.
        Indeed, $Q(x) = -Q(-x)$ implies $P(-x) = \frac{Q(-x+2w/3) + Q(-x-2w/3)}{2} = \frac{-Q(x-2w/3) - Q(x+2w/3)}{2} = -P(x).$
        \item \emph{Approximation.} We have $\abs{f(x) - P(x)} \leq \frac{1}{2}(\abs{\mathrm{sign}(x+2w/3) - Q(x+2w/3)}+\abs{\mathrm{sign}(x-2w/3) - Q(x-2w/3)})$.
        Using Eq.~(\ref{eq: qomp  sign approx lemma proof}), we see that the first term is smaller than $\epsilon$ for all $x \in [-2,2]\setminus\left(-w, -w/3\right)$ and so is the second one for $x \in [-2,2]\setminus\left(+w/3, +w\right)$.
        This implies $\abs{f(x) - P(x)} \leq \epsilon$ for all $x \in [-1, -w] \cup [-w/3, +w/3] \cup [+w, 1]$.
        
        \item \emph{Boundedness.} We have $\abs{P(x)} \leq \frac{1}{2}(\abs{Q(x+2w/3)}+\abs{Q(x-2w/3)})$. 
        Using $\forall x \in [-2, 2]: \abs{Q(x)} \leq 1$, we have that the first term is bounded by $1$ for $x \in [-2-2w/3, 2-2w/3]$ and so is the second for $x \in [-2+2w/3, 2+2w/3]$.
        It follows that $\abs{P(x)}\leq 1$ for $[-1,1] \subset [-(2+2w/3), 2+2w/3]$.
    \end{enumerate}    
\end{proof}

\subsection{Column space projection}
\label{appendix: column space projection}
We are finally ready to prove our result. 
We will perform QSVT on the block-encoding of $A$ and $A^\dagger$, combine the block-encodings and use matrix-vector multiplication.
To combine block-encodings, we use the following result.

\begin{lemma}[Product of block-encoded matrices~{\cite[Lemma 53, arxiv]{gilyen2019quantum}}]
\label{lemma: qomp Product of block-encoded matrices}
    If $U$ is an $(\alpha, a, \delta)$-block-encoding of an s-qubit operator $A$, and $V$ is a $(\beta, b, \epsilon)$-block-encoding of an s-qubit operator $B$, then\footnote{Here, the identity operator $\mathbb{I}_b$ should be seen as acting on the ancilla qubits of $V$, and $\mathbb{I}_a$ on those of $U$.} $(I_b \otimes U)(I_a \otimes V)$ is an $(\alpha\beta, a+b, \alpha\epsilon + \beta\delta)$-block-encoding of $AB$.
\end{lemma}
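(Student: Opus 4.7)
The plan is a direct two-step calculation: identify the top-left block of the composed unitary algebraically, then bound its error against $AB$ with the triangle inequality. First, define the matrices actually realized by the block-encodings, $\widetilde{A} := \alpha(\langle 0|^{\otimes a} \otimes I_S) U (|0\rangle^{\otimes a} \otimes I_S)$ and $\widetilde{B} := \beta(\langle 0|^{\otimes b} \otimes I_S) V (|0\rangle^{\otimes b} \otimes I_S)$. By hypothesis, $\|A - \widetilde{A}\| \leq \delta$ and $\|B - \widetilde{B}\| \leq \epsilon$, and because $U$ and $V$ are unitary, their rescaled top-left submatrices satisfy $\|\widetilde{A}\| \leq \alpha$ and $\|\widetilde{B}\| \leq \beta$; adopt the standard block-encoding convention $\|A\| \leq \alpha$, which is implicit in the choice of scaling.

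Next, I would prove the composition identity
\begin{align}
    (\langle 0|^{\otimes (a+b)} \otimes I_S)\, (I_b \otimes U)(I_a \otimes V)\, (|0\rangle^{\otimes (a+b)} \otimes I_S) = \frac{\widetilde{A}\,\widetilde{B}}{\alpha\beta},
\end{align}
by tracking a test system state $|\phi\rangle_S$ through each factor. Applying $I_a \otimes V$ first leaves the U-ancilla untouched and sends $|0\rangle^b|\phi\rangle$ through $V$; expanding in the V-ancilla basis gives $|0\rangle^a \sum_z |z\rangle V_z |\phi\rangle$ with $V_z := (\langle z|^b \otimes I_S) V (|0\rangle^b \otimes I_S)$. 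Then $I_b \otimes U$ fixes each $|z\rangle$ on the V-ancilla and acts with $U$ on the U-ancilla and system, producing $\sum_{y,z}|y\rangle|z\rangle U_y V_z|\phi\rangle$ with $U_y := (\langle y|^a \otimes I_S)U(|0\rangle^a \otimes I_S)$. Projecting on $\langle 0|^{a+b}$ keeps only the $(y,z)=(0,0)$ term, which is precisely $(\widetilde{A}/\alpha)(\widetilde{B}/\beta)|\phi\rangle$.

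Finally, multiplying through by $\alpha\beta$ and applying the triangle inequality together with sub-multiplicativity of the operator norm,
\begin{align}
    \|AB - \widetilde{A}\widetilde{B}\| \leq \|A(B - \widetilde{B})\| + \|(A - \widetilde{A})\widetilde{B}\| \leq \|A\|\,\epsilon + \delta\,\|\widetilde{B}\| \leq \alpha\epsilon + \beta\delta,
\end{align}
which is exactly the claimed error. Together with the ancilla count $a+b$ and the overall scaling $\alpha\beta$, this certifies the composed unitary as an $(\alpha\beta, a+b, \alpha\epsilon + \beta\delta)$-block-encoding of $AB$. The only technical care needed is in the notational juggling of tensor-product orderings in $(I_b \otimes U)(I_a \otimes V)$, but once one fixes the convention that $I_b$ (respectively $I_a$) acts on V's (respectively U's) ancilla register regardless of its position in the tensor factorization, the argument becomes routine.
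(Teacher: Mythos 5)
Your proposal is correct, and it is essentially the standard argument: the paper does not reprove this lemma but cites it from Gily\'en et al.\ (Lemma 53), whose proof is exactly your two steps, namely the identity that the top-left block of $(I_b\otimes U)(I_a\otimes V)$ is the product of the individual top-left blocks, followed by the triangle inequality with submultiplicativity. The only point needing care is your use of $\|A\|\le\alpha$, which is indeed the convention the paper states after its block-encoding definition; if one wants to avoid invoking it, one can instead split as $\|AB-\widetilde{A}\widetilde{B}\|\le\|(A-\widetilde{A})\widetilde{B}\|+\|\widetilde{A}(B-\widetilde{B})\|$ and use only $\|\widetilde{A}\|\le\alpha$, $\|\widetilde{B}\|\le\beta$, which hold automatically because the unscaled blocks are submatrices of unitaries.
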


We can now state the complexity of preparing a block-encoding of $UU^\dagger$.

\begin{lemma}[Block-encoding of $UU^\dagger$]
\label{lemma:Block encoding of UUT}
Let $A \in \C^{n \times m}$ be a matrix with singular values decomposition $A=U\Sigma V^\dagger$ and singular values in $[\sigma_{\min}(A), \norm{A}]$, with a known lower bound $\gamma \leq \sigma_{\min}(A)$.
Let $U_A$ be a $(\alpha, q, \epsilon_A)$-block-encoding of $A$ implementable in time $T_A$, with $\epsilon_A \leq \frac{\gamma^2 \epsilon^2}{c\alpha\log^2(1/\epsilon)}$ for a certain constant $c$.
Then, there exists a quantum algorithm that implements a $(1, 2(q+2), \epsilon)$-block-encoding of $UU^\dagger$ in time $O\left(\frac{\alpha}{\gamma}\log(1/\epsilon)T_A\right)$.
\end{lemma}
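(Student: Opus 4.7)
The plan is to realize $UU^\dagger$ as $(UV^\dagger)(VU^\dagger)$, producing a block-encoding of $UV^\dagger$ via QSVT on $U_A$ with an odd polynomial that approximates the constant $1$ on the singular-value support of $A/\alpha$, and then multiplying this block-encoding with its adjoint.

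First I would select the polynomial. Since all nonzero singular values of $A/\alpha$ lie in $[\gamma/\alpha,\,1]$, I invoke Lemma~\ref{lemma: qomp Polynomial approximation of the sign function} with transition parameter $w=\gamma/\alpha$ and target uniform error $\epsilon_P$, obtaining an odd real polynomial $P$ of degree $d=O\!\left(\frac{\alpha}{\gamma}\log(1/\epsilon_P)\right)$ such that $|P(x)|\le 1$ on $[-1,1]$ and $|P(\sigma_i/\alpha)-1|\le\epsilon_P$ for every nonzero singular value $\sigma_i$. Consequently, on the support of $A$,
\begin{equation}
    \bigl\|P^{(SV)}(A/\alpha) - UV^\dagger\bigr\| \;=\; \max_i |P(\sigma_i/\alpha)-1| \;\le\; \epsilon_P.
\end{equation}

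Next I apply Corollary~\ref{corollary:QSVT by real and odd polynomial} to $U_A$ with this polynomial $P$: provided that the initial block-encoding error satisfies $\epsilon_A \le \frac{\alpha\delta_1^2}{16 d^2}$, this yields a $(1,\,q+2,\,\delta_1)$-block-encoding $W$ of $P^{(SV)}(A/\alpha)$ implementable in time $O(dT_A)$. Combining with the triangle inequality gives that $W$ is a $(1,\,q+2,\,\delta_1+\epsilon_P)$-block-encoding of $UV^\dagger$. Taking the adjoint of the implementing unitary, $W^\dagger$ is, by the definition of block-encoding, a $(1,\,q+2,\,\delta_1+\epsilon_P)$-block-encoding of $(UV^\dagger)^\dagger=VU^\dagger$, and has the same implementation cost.

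Then I use Lemma~\ref{lemma: qomp Product of block-encoded matrices} to multiply the two block-encodings: since $V$ is an isometry, $V^\dagger V = I_{\mathrm{rank}(A)}$, whence $(UV^\dagger)(VU^\dagger)=UU^\dagger$. The product yields a $(1,\,2(q+2),\,2(\delta_1+\epsilon_P))$-block-encoding of $UU^\dagger$, implemented in time $O(dT_A)$. Setting $\epsilon_P=\delta_1=\epsilon/4$ gives final error $\epsilon$, degree $d=O\!\left(\frac{\alpha}{\gamma}\log(1/\epsilon)\right)$, and the precondition becomes $\epsilon_A \le \frac{\gamma^2\epsilon^2}{c\,\alpha\log^2(1/\epsilon)}$ for an appropriate constant $c$, matching the hypothesis and giving total cost $O\!\left(\frac{\alpha}{\gamma}\log(1/\epsilon)\,T_A\right)$.

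The main obstacle is purely bookkeeping rather than conceptual: one must be careful to separate the polynomial truncation error $\epsilon_P$ from the QSVT robustness error induced by imperfect $U_A$ (which scales as $4d\sqrt{\epsilon_A/\alpha}$ by Lemma~\ref{lemma: qomp robustness of singular value transformation}), choose $d$ and $\epsilon_P$ consistently, and verify that after composition via Lemma~\ref{lemma: qomp Product of block-encoded matrices} the two additive errors still sum to at most $\epsilon$. All other steps are direct invocations of the singular value transformation and product lemmas already established.
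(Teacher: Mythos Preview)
Your proposal is correct and follows essentially the same approach as the paper: both realize $UU^\dagger=\mathrm{sign}(A)\,\mathrm{sign}(A^\dagger)=(UV^\dagger)(VU^\dagger)$ by applying the odd sign-approximating polynomial of Lemma~\ref{lemma: qomp Polynomial approximation of the sign function} via QSVT (Corollary~\ref{corollary:QSVT by real and odd polynomial}) and then combining with Lemma~\ref{lemma: qomp Product of block-encoded matrices}. The only cosmetic difference is that you obtain the block-encoding of $VU^\dagger$ by taking the adjoint of the circuit for $P^{(SV)}(A/\alpha)$, whereas the paper applies QSVT to $A^\dagger$ separately; since $(P^{(SV)}(A/\alpha))^\dagger=P^{(SV)}(A^\dagger/\alpha)$ for real $P$, these are the same object, and the error bookkeeping and final parameter choices agree.
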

\begin{proof}
    The plan is to implement a block-encoding \begin{align}
        \mathrm{sign}(A)\mathrm{sign}(A^\dagger) = \sum_i \mathrm{sign}^2(\sigma_i)| \vec{u}_i\rangle \langle \vec{u}_i | = UU^\dagger
    \end{align}
    via \emph{QSVT by real and odd polynomial} (Corollary \ref{corollary:QSVT by real and odd polynomial}) and \emph{Product of block-encoded matrices} (Lemma~\ref{lemma: qomp Product of block-encoded matrices}).
    In the remainder, let $A' = A/\alpha$.

    Let $\mathrm{sign}(x)$ be approximated by $P\in \R[x]$, a degree-$d$ odd polynomial such that $\forall x \in [-1,1]: \abs{P} \leq 1$ (Lemma~\ref{lemma: qomp Polynomial approximation of the sign function}).
    Using Corollary~\ref{corollary:QSVT by real and odd polynomial}, we can implement $(1, q+2, \epsilon/6)$-block-encodings $U_{P(A')}$ and $U_{P(A'^\dagger)}$ of $P^{(SV)}(A')$ and $P^{(SV)}(A'^\dagger)$ in time $O(dT_A)$, provided $\epsilon_A \leq \frac{\alpha \epsilon^2}{16d^2}$.
    The spectrum of $A'$ and $A'^\dagger$ lies in $[\sigma_{\min}(A)/\alpha, \norm{A}/\alpha] \subseteq [\gamma/\alpha, 1]$, therefore we can require $P$ to approximate $\mathrm{sign}$ in $[-1,1] \setminus (-\frac{\gamma}{\alpha}, \frac{\gamma}{\alpha})$, leading to time complexity $O(\frac{\alpha}{\gamma}\log(1/\epsilon)T_A)$ and imposing the requirement $\epsilon_A \leq \frac{\gamma^2 \epsilon^2}{c\alpha^2\log^2(1/\epsilon)}$, for some constant $c$.
    In particular, we can require precision $\epsilon/6$.

    Let $\widetilde{\Pi} = (\langle 0|^{\otimes q+2}\otimes I )$ and $\Pi = (|0\rangle^{\otimes q+2} \otimes I)$. 
    Using Lemma~\ref{lemma: qomp Product of block-encoded matrices}, we can implement a $(1,2(q+2), \epsilon/3)$-block-encoding $U_F$ of the product $P(A')P(A'^\dagger)$ and use it as our approximation of $UU^\dagger$.

    The block-encoding error is proven by the following inequalities
    \begin{align}
        & \quad~ \norm{UU^\dagger - (\langle 0|^{\otimes 2(q+2)}\otimes I) U_F (|0\rangle^{\otimes 2(q+2)} \otimes I)} \leq\\
        &\leq \norm{UU^\dagger - {\widetilde{\Pi}U_{P(A')}\Pi\widetilde{\Pi}U_{P(A'^\dagger )}\Pi}} + \norm{{\widetilde{\Pi}U_{P(A')}\Pi\widetilde{\Pi}U_{P(A'^\dagger )}\Pi} - (\langle 0|^{\otimes 2(q+2)}\otimes I) U_F (|0\rangle^{\otimes 2(q+2)} \otimes I)}
        \\
        &\leq \norm{\mathrm{sign}(A)\mathrm{sign}(A^\dagger) - P(A')P(A'^\dagger)} + \norm{P(A')P(A'^\dagger) - \widetilde{\Pi}U_{P(A')}\Pi\widetilde{\Pi}U_{P(A'^\dagger )}\Pi} + 2\epsilon_1
        \\
        &\leq 2(\epsilon/6) + \norm{P(A') - \widetilde{\Pi}U_{P(A')}\Pi} + \norm{P(A'^\dagger) - \widetilde{\Pi}U_{P(A'^\dagger)}\Pi} + \epsilon/3\\
        & \leq \epsilon/3 + 2(\epsilon/6) + \epsilon/3 \leq \epsilon.
    \end{align}
\end{proof}

We stress once again that in the procedure above, we used the polynomial approximation of the sign function. 
We expect that in practice it could be better to use the antisymmetric step function defined in the previous section, as it might help suppress errors even further.
In any case, we are ready to prepare $\ket{AA^+ \vec{x}}$ and estimate its norm.
We report the statement of Theorem~\ref{theorem: column space projection} and conclude the proof.

\begin{theorem}[Column space projection]
    Let $\epsilon > 0$ be a precision parameter. 
    Let $U_A$ be a $(\alpha, q, \epsilon_A)$-block-encoding of a matrix $A \in \C ^{n\times m}$, implementable in time $T_A$, and let a lower bound $\gamma \leq \sigma_{\min}(A)$ be known. 
    Let there be quantum access to a vector $\Vec{x} \in \C^{n}$ of known norm $\norm{\vec{x}}_2$ in time $T_x$ via a unitary $U_x$. 
    Then, there exists a constant $c \in \R^+$ such that if $\epsilon_A \leq \frac{\norm{AA^+\vec{x}}^2\gamma^2 \epsilon^2}{c\norm{\vec{x}}^2\alpha\log^2(\norm{\vec{x}}/(\norm{AA^+\vec{x}}\epsilon))}$ there are quantum algorithms that:
    \begin{enumerate}
        \item Create a quantum state $|\vec{\phi}\rangle$ such that $\norm{|\vec{\phi}\rangle - \ket{AA^+ \vec{x}}}_2 \leq \epsilon$ in expected time $\widetilde{O}\left(\frac{\norm{\vec{x}}}{\norm{AA^+ \vec{x}}}( \frac{\alpha}{\gamma} T_A + T_x)\right)$ if $\norm{AA^+\vec{x}} \neq 0$ and otherwise runs forever. 
        \item Produce an estimate $t$ such that $\abs{t - \norm{AA^+\vec{x}}_2}\leq \epsilon$ with high probability  in time $\widetilde{O}\left(\frac{1}{\epsilon}(\frac{\alpha}{\gamma} T_A + T_x)\right)$;
        \item Produce an estimate $t$ such that $\abs{t - \norm{AA^+\vec{x}}_2}\leq \epsilon\norm{AA^+ \vec{x}}_2$ with high probability \\in expected time $\widetilde{O}\left(\frac{1}{\epsilon}\frac{\norm{\vec{x}}}{\norm{AA^+ \vec{x}}}(\frac{\alpha}{\gamma} T_A + T_x)\right)$.
    \end{enumerate}
\end{theorem}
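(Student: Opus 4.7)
The plan is to decompose the task into two black-box stages: first build a block-encoding of the column-space projector $AA^+$, then combine it with quantum access to $\vec{x}$ to either prepare the projected state or estimate its norm. Both stages are already packaged as earlier results in the appendix.

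For the first stage, observe that if $A=U\Sigma V^\dagger$ is the compact SVD, then $AA^+ = UU^\dagger$, which corresponds to applying the sign function to each singular value of $A$ (since all nonzero singular values lie in $[\gamma/\alpha,1]$ after rescaling by $\alpha$). This is exactly what Lemma~\ref{lemma:Block encoding of UUT} delivers: from $U_A$ and the lower bound $\gamma$, it produces a $(1,2(q+2),\epsilon_0)$-block-encoding of $UU^\dagger$ in time $O((\alpha/\gamma)\log(1/\epsilon_0)T_A)$, provided $\epsilon_A \lesssim \gamma^2\epsilon_0^2/(\alpha\log^2(1/\epsilon_0))$. For the second stage, I feed this block-encoding (with normalization $1$) into Theorem~\ref{thm: qomp matrix mul}. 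Item~1 of that theorem yields an estimate of $\|UU^\dagger|\vec{x}\rangle\| = \|AA^+\vec{x}\|/\|\vec{x}\|$ to additive precision $\epsilon/\|\vec{x}\|$, which after multiplication by the classically known $\|\vec{x}\|$ gives statement~(2). Item~2 directly delivers a multiplicative-error estimate of $\|AA^+\vec{x}\|$, proving statement~(3). For the state preparation in statement~(1), since no lower bound on $\|AA^+\vec{x}\|$ is assumed, I invoke item~4: it self-bootstraps a coarse multiplicative estimate of $\|AA^+\vec{x}\|$ via item~2, uses that as a lower bound, and then runs fixed-point amplitude amplification (Theorem~\ref{thm: qomp  fixed point amp amp}) on $U_{UU^\dagger}(I\otimes U_x)\ket{0}$, producing a state within $\epsilon$ of $\ket{AA^+\vec{x}}$ in expected time $\widetilde{O}(\|\vec{x}\|/\|AA^+\vec{x}\|\cdot(T_{UU^\dagger}+T_x))$.

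The main technical point, and the reason for the baroque-looking bound on $\epsilon_A$, is error propagation across the three composed approximations: the QSVT polynomial of sign inside Lemma~\ref{lemma:Block encoding of UUT}, the composition of block-encodings yielding $UU^\dagger$, and the matrix-multiplication/amplitude-estimation stage. The matrix-multiplication theorem requires the block-encoding error $\epsilon_0$ of $UU^\dagger$ to be smaller than the target precision times the success amplitude $\|UU^\dagger|\vec{x}\rangle\| = \|AA^+\vec{x}\|/\|\vec{x}\|$, up to constants; substituting this into Lemma~\ref{lemma:Block encoding of UUT}'s requirement $\epsilon_A \lesssim \gamma^2\epsilon_0^2/(\alpha\log^2(\cdot))$ yields the quadratic dependence on $\epsilon\,\|AA^+\vec{x}\|/\|\vec{x}\|$ appearing in the hypothesis. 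Plugging the $T_{UU^\dagger} = \widetilde{O}((\alpha/\gamma)T_A)$ time into the bounds of Theorem~\ref{thm: qomp matrix mul} and absorbing all logarithmic factors (including $\log(1/\epsilon_0)$ and the overhead of powering/amplification to lift success probability to high probability) into $\widetilde{O}$ recovers the stated complexities. The hard part is simply the bookkeeping: keeping the three stages within consistent error budgets and ensuring that the precondition on $\epsilon_A$ is sufficient to validate every approximation along the chain.
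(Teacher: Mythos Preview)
Your proposal is correct and follows essentially the same approach as the paper: build a block-encoding of $UU^\dagger=AA^+$ via Lemma~\ref{lemma:Block encoding of UUT}, then invoke items 1, 2, and 4 of Theorem~\ref{thm: qomp matrix mul} for the additive norm estimate, the relative norm estimate, and the state preparation respectively, with the $\epsilon_A$ precondition arising by substituting the matrix-multiplication error requirement $\epsilon_U\lesssim \epsilon\,\|AA^+\vec{x}\|/\|\vec{x}\|$ back into the QSVT-lemma requirement $\epsilon_A\lesssim \gamma^2\epsilon_U^2/(\alpha\log^2(1/\epsilon_U))$. Your identification of which item of Theorem~\ref{thm: qomp matrix mul} serves which claim, and your explanation of the error-budget chaining, match the paper's proof exactly.
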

\begin{proof}
By Lemma~\ref{lemma:Block encoding of UUT}, we can create a $(1, 2(q+2), \epsilon_U)$-block-encoding of $AA^+=UU^\dagger \in \C^{n \times n}$ in time $T_U=O\left(\frac{\alpha}{\gamma}\log(1/\epsilon_U)T_A\right)$, provided $\epsilon_A \leq \frac{\gamma^2 \epsilon_U^2}{c_0 \alpha\log^2(1/\epsilon_U)}$ for some computable constant $c_0$.
Now, we can use Theorem~\ref{thm: qomp matrix mul} (points 1, 2, and 4) for the three tasks:
\begin{enumerate}
    \item to create $|\vec{\phi}\rangle$ to additive precision $\epsilon$, we need $\epsilon_U \leq \frac{\epsilon\norm{AA^+\vec{x}}}{3\norm{\vec{x}}}$ and expected time $\widetilde{O}((T_U+T_X)\frac{\norm{\vec{x}}}{\norm{AA^+\vec{x}}})$;
    \item to estimate $\norm{AA^+\vec{x}}$ to additive precision $\epsilon$, we need $\epsilon_U \leq \epsilon/c_1$ for some computable constant $c_1$ and time $O((T_U + T_x)/\epsilon)$;
    \item to estimate $\norm{AA^+\vec{x}}$ to relative precision $\epsilon$,  we need $\epsilon_U \leq \epsilon/c_2$ for some computable constant $c_2$ and expected time $O((T_U + T_x)\frac{\norm{\vec{x}}}{\norm{AA^+\vec{x}}\epsilon})$.
\end{enumerate}
The proof follows easily from here.
\end{proof}


\section{QOMP's iteration cost: Errors and running time analysis}
\label{appendix: QOMP error and running time analysis}
This appendix constitutes a proof of \emph{QOMP's Iteration cost} (Theorem~\ref{theorem: QOMP iteration cost}).
We first analyze all the sources of errors in the algorithm, and then discuss the running time.

\subsection{Errors}
At each iteration, QOMP retrieves the index of an atom such that 
\begin{align}
    j = \argmax_{k \in \overline{\Lambda}} |\bracket{\vec{d}_k}{\vec{r}}| - 2\epsilon_i
\end{align}
where $\epsilon_i$ is the error of the inner product oracle $O_i$ (Eq.~(\ref{eq: Oi})). 
Furthermore, it evaluates the stopping condition using an estimate of $\|\vec{r}\|$ to error $\epsilon_f$.
In this section, we study the approximation error sources of QOMP and analyze the required precision of each step as a function of $\epsilon_i$ and $\epsilon_f$.
We will not try to optimize for the constant terms, but to establish the asymptotic scaling of the errors, which is the relevant quantity for our running time analysis.

We consider exact access to the target vector $\ket{\vec{s}}$, its norm $\|\vec{s}\|$, and to the dictionary entries $\{|\vec{d}_j\rangle\}_{j \in [m]}$.
We summarize the other error sources in the following boxes, providing notation for all the individual error terms.

Atom selection:
\begin{equation*}
    \boxed{
\begin{aligned}
    \abs{\Re[z_{1j}] - \Re[\inner{\vec{d}_j}{\vec{s}}]} &\leq \epsilon_{1\Re} \, &\text{(Theorem~\ref{thm:innerproductestimation})}\\
    \abs{\Im[z_{1j}] - \Im[\inner{\vec{d}_j}{\vec{s}}]} &\leq \epsilon_{1\Im} \, &\text{(Theorem~\ref{thm:innerproductestimation})}\\
    \abs{|\overline{\vec{\phi}}\rangle - |\vec{\phi}\rangle} &\leq \epsilon_{1\phi}, \, &\text{(Theorem~\ref{theorem: column space projection})} \\
    \abs{\overline{\|\vec{\phi}\|} - \|\vec{\phi}\|} &\leq \epsilon_{1\norm{\phi}}, \, &\text{(Theorem~\ref{theorem: column space projection})}\\
    \abs{\Re[z_{2j}] - \Re[\inner{\vec{d}_j}{\overline{\vec{\phi}}}]} &\leq \epsilon_{2\Re} \, &\text{(Theorem~\ref{thm:innerproductestimation})}\\
    \abs{\Im[z_{2j}] - \Im[\inner{\vec{d}_j}{\overline{\vec{\phi}}}]} &\leq \epsilon_{2\Im} \, &\text{(Theorem~\ref{thm:innerproductestimation})}
\end{aligned}
    }
\end{equation*}
Exit condition:
\begin{equation*}
    \boxed{
\begin{aligned}
    \abs{|\overline{\vec{\phi}}\rangle - |\vec{\phi}\rangle} &\leq \epsilon_{2\phi}, \, &\text{(Theorem~\ref{theorem: column space projection})} \\
    \abs{\overline{\|\vec{\phi}\|} - \|\vec{\phi}\|} &\leq \epsilon_{2\norm{\phi}}, \, &\text{(Theorem~\ref{theorem: column space projection})}\\
    \abs{z_f - \norm{\norm{s}\ket{\vec{s}} - \overline{\norm{\phi}}\ket{\overline{\phi}}}} &\leq \epsilon_w, \, &\text{(Theorem~\ref{theorem: weighted euclidean distance estimation})}
\end{aligned}
    }
\end{equation*}

\subsubsection{Inner products}
We begin by analyzing the propagation of errors in the inner products at a generic iteration.
We start by recalling
\begin{align}
    z_j \simeq |\inner{\vec{d}_j}{\vec{r}}| = |\inner{\vec{d}_j}{\vec{s}} - \inner{\vec{d}_j}{\vec{\phi}}|.
\end{align}
Hence,  the error on $\inner{\vec{d}_j}{\vec{r}}$ arises from the approximations of both $\inner{\vec{d}_j}{\vec{s}}$ and $\inner{\vec{d}_j}{\vec{\phi}}$.
We assume exact access to $|\vec{d}_j\rangle$, $\ket{\vec{s}}$, and $|\vec{s}|$, while $|\vec{\phi}\rangle$ and $|\vec{\phi}|$ are available only approximately.

Since squared values appear repeatedly in the definition of $z_j$, we begin with a generic bound
\begin{align}
\label{eq: errors and squares}
    \abs{a - \overline{a}} \leq \epsilon \implies \abs{a^2 - \overline{a}^2} \leq (2|a| + \epsilon)\epsilon.
\end{align}
Using this tool, we can proceed to bound many other terms.

First, both the real and imaginary parts of $\inner{\vec{d}_j}{\vec{s}}$ and $\inner{\vec{d}_j}{\vec{\phi}}$ are bounded by $1$ in magnitude.
Hence, considering error terms smaller than one, we have $|\Re[z_{1j}] - \Re[\inner{\vec{d}_j}{\vec{s}}]| \leq \epsilon/4 \implies |\Re[z_{1j}]^2 - \Re[\inner{\vec{d}_j}{\vec{s}}]^2| \leq \epsilon,$ which holds for all the four $\epsilon_{1\Re}, \epsilon_{1\Im}, \epsilon_{2\Re}$, $\epsilon_{2\Im}$.
Similarly, since $\|\vec{\phi}\| \leq \|\vec{s}\|$, we have $|\overline{\|\vec{\phi}\|} - \|\vec{\phi}\|| \leq \frac{\epsilon}{4\|\vec{s}\|} \implies |\overline{\|\vec{\phi}\|}^2 - \|\vec{\phi}\|^2| \leq \epsilon.$
Finally, we decompose the error on $\bracket{\vec{d}_j}{\vec{\phi}}$ as $|\Re[\langle \vec{d}_j | \vec{\phi} \rangle] - \Re[z_{2j}]|
    \leq |\Re[\langle \vec{d}_j | \vec{\phi} \rangle] - \Re[\langle \vec{d}_j | \overline{\vec{\phi}} \rangle|
    + |\Re[\langle \vec{d}_j | \overline{\vec{\phi}} \rangle - \Re[z_{2j}]|.$
This yields $|\Re[\langle \vec{d}_j | \vec{\phi} \rangle] - \Re[z_{2j}]| \leq \epsilon_{1\phi} + \epsilon_{2\Re}$, with an analogous inequality for the imaginary part.

Combining the above estimates, the deviation of $z_j$ from $\inner{\vec{d}_j}{\vec{r}}$ satisfies
\begin{align}
    |z_j - \inner{\vec{d}_j}{ \vec{r}}| \leq
\end{align}
\begin{align}
    \resizebox{\linewidth}{!}{    
        $\norm{\vec{s}}^2\abs{\Re[\langle \vec{d}_j | \vec{s} \rangle]^2 - \Re[z_{1j}]^2} + 
        2\|\vec{s}\|\abs{\|\vec{\phi}\|\Re[\langle \vec{d}_j | \vec{s} \rangle]\Re[\langle \vec{d}_j | \vec{\phi} \rangle] - \overline{\|\vec{\phi}\|}\Re[z_{1j}]\Re[z_{2j}]} +
        \abs{\|\vec{\phi}\|^2\Re[\langle \vec{d}_j | \vec{\phi} \rangle]^2 - \overline{\|\vec{\phi}\|}^2\Re[z_{2j}]^2} + $
    }\\
    \resizebox{\linewidth}{!}{    
        $\norm{\vec{s}}^2\abs{\Im[\langle \vec{d}_j | \vec{s} \rangle]^2 - \Im[z_{1j}]^2} + 
        2\|\vec{s}\|\abs{\|\vec{\phi}\|\Im[\langle \vec{d}_j | \vec{s} \rangle]\Im[\langle \vec{d}_j | \vec{\phi} \rangle] - \overline{\|\vec{\phi}\|}\Im[z_{1j}]\Im[z_{2j}]} +
        \abs{\|\vec{\phi}\|^2\Im[\langle \vec{d}_j | \vec{\phi} \rangle]^2 - \overline{\|\vec{\phi}\|}^2\Im[z_{2j}]^2}$
    }\\
    \leq \norm{\vec{s}}^24\epsilon_{1\Re} + 2 \norm{\vec{s}} (\epsilon_{1\norm{\phi}} + \overline{\|\vec{\phi}\|}(\epsilon_{1\Re} + \epsilon_{1\phi}  + \epsilon_{2\Re})) + 4\norm{\vec{s}}\epsilon_{1\norm{\phi}} + \overline{\|\vec{\phi}\|}4(\epsilon_{1\phi} + \epsilon_{2\Re})\\
    + \norm{\vec{s}}^24\epsilon_{1\Im} + 2 \norm{\vec{s}} (\epsilon_{1\norm{\phi}} + \overline{\|\vec{\phi}\|}(\epsilon_{1\Im} + \epsilon_{1\phi}  + \epsilon_{2\Im})) + 4\norm{\vec{s}}\epsilon_{1\norm{\phi}} + \overline{\|\vec{\phi}\|}4(\epsilon_{1\phi} + \epsilon_{2\Im})\\
    \leq 8\norm{\vec{s}}^2(\epsilon_{1\Re} + \epsilon_{1\Im}) + 8\norm{\vec{s}}\overline{\|\vec{\phi}\|}(\epsilon_{2\Re} + \epsilon_{2\Im}) + 12 \norm{\vec{s}} \epsilon_{1 \norm{\phi}} + 16\norm{\vec{s}}\overline{\|\vec{\phi}\|} \epsilon_{1\phi}.
\end{align}

To guarantee $|z_j - \inner{\vec{d}_j}{ \vec{r}}| \leq \epsilon_i$, it suffices to choose $\epsilon_{1\Re} = \epsilon_{1\Im} \leq \frac{\epsilon_i}{48\norm{s}^2}$, $\epsilon_{2\Re} = \epsilon_{2\Im} \leq \frac{\epsilon_i}{48\norm{\vec{s}} \overline{\|\vec{\phi}\|}}$, $\epsilon_{1\norm{\phi}} \leq \frac{\epsilon_i}{72\norm{\vec{s}}}$, $\epsilon_{1\phi} \leq \frac{\epsilon_i}{96\norm{\vec{s}}\overline{\|\vec{\phi}\|}}$.
As a remark, in the first iteration, where $z_j=\norm{\vec{s}}^2\Re[z_{1j}]^2 + \norm{\vec{s}}^2\Im[z_{1j}]^2$, a weaker condition suffices: $\epsilon_{1\Re} = \epsilon_{1\Im} \leq \epsilon_i/(8\norm{\vec{s}}^2)$.

\subsubsection{Norm estimation}
To estimate the residual's norm we approximate equation (\ref{eq: norm estimation}) using \emph{Weighted Euclidean distance estimation} (Theorem \ref{theorem: weighted euclidean distance estimation}) with $\|\vec{s}\|$, $\ket{\vec{s}}$, and our approximations of $\|\vec{\phi}\|$ and $|\vec{\phi}\rangle$, computed through \emph{Column space projection} (Theorem \ref{theorem: column space projection}).
Let $z_f$ be the output of the weighted Euclidean distance estimation, such that 
\begin{align}
    \norm{\norm{\norm{\vec{s}}\ket{\vec{s}} - \overline{\|\vec{\phi}\|} |\overline{\vec{\phi}}\rangle} - z_f} \leq \epsilon_w.    
\end{align}
Then, using the reverse triangular inequality, 
\begin{align}
    \abs{\norm{\vec{r}} - \overline{\norm{\vec{r}}}} &= 
    \abs{\|\vec{s} - \vec{\phi}\| - z_f}\\
    &\leq \abs{ \| \vec{s} - \vec{\phi} \| - \norm{\norm{\vec{s}}\ket{\vec{s}} - \overline{\| \vec{\phi}\|}| \overline{\vec{\phi}}\rangle }} + \abs{\norm{\norm{\vec{s}}\ket{\vec{s}} - \overline{\| \vec{\phi}\|}| \overline{\vec{\phi}}\rangle} - z_f}\\
    &\leq \abs{(\vec{s}-\vec{\phi}) - \left(\norm{\norm{\vec{s}}\ket{\vec{s}} - \overline{\| \vec{\phi}\|}| \overline{\vec{\phi}}\rangle}\right)}+\epsilon_w\\
    &\leq \epsilon_w + \abs{\|\vec{\phi}\||\vec{\phi}\rangle - \overline{\| \vec{\phi}\|} | \vec{\phi} \rangle} + \abs{\overline{\| \vec{\phi}\|}| \vec{\phi} \rangle - \overline{\| \vec{\phi}\|}| \overline{\vec{\phi}}\rangle}\\
    &\leq \epsilon_w + \epsilon_{2\norm{\phi}} + \overline{\|\vec{\phi}\|}\epsilon_{2\phi}.
\end{align}

Hence, to guarantee an overall error $\leq \epsilon_f$, it suffices to choose $\epsilon_w \leq \frac{\epsilon_f}{3}, \epsilon_{2\norm{\phi}} \leq \frac{\epsilon_f}{3}, \epsilon_{2\phi} \leq \frac{\epsilon_f}{3\overline{\|\vec{\phi}\|}}$.

\subsection{Running time}
After the error analysis, we can study the asymptotic running time of one QOMP algorithm iteration.
In particular, we will choose $\epsilon_{1\Re} = \epsilon_{1\Im} \leq \frac{\epsilon_i}{48\norm{\vec{s}}^2}$, $\epsilon_{2\Re} = \epsilon_{2\Im} \leq \frac{\epsilon_i}{48\norm{\vec{s}} \overline{\|\vec{\phi}\|}}$, $\epsilon_{1\norm{\phi}} \leq \frac{\epsilon_i}{72\norm{\vec{s}}}$, $\epsilon_{1\phi} \leq \frac{\epsilon_i}{96\norm{\vec{s}}\overline{\|\vec{\phi}\|}}$ to compute the inner products $\inner{\vec{d}_j}{\vec{r}}$ to precision $\epsilon_i$ and errors $\epsilon_w \leq \frac{\epsilon_f}{3}, \epsilon_{2\norm{\phi}} \leq \frac{\epsilon_f}{3}, \epsilon_{2\phi} \leq \frac{\epsilon_f}{3\overline{\|\vec{\phi}\|}}$ to evaluate $\norm{\vec{r}}_2$ to precision $\epsilon_f$.

\subsubsection{Atom selection}
Since the cost of the first iteration is lower, we analyze a generic iteration after the first one.

The cost of computing the first inner product $\bracket{\vec{d}_j}{\vec{s}}$, using Theorem~\ref{thm:innerproductestimation} with $U_D$ and $U_s$, is
\begin{align}
    \Ot \left((T_s + T_D)\left(\frac{1}{\epsilon_{1\Re}} +\frac{1}{\epsilon_{1\Im}} \right)\right)
\end{align}
Since we can set $\epsilon_{1\Re} = \epsilon_{1\Im}$ and $\epsilon_{2\Re} = \epsilon_{2\Im}$, we merge these into a single term $1/\epsilon_{1\Re}$. The same simplification applies later for $\epsilon_{2\Re}$ and $\epsilon_{2\Im}$.

The next step is to implement $U_\phi$ and compute the estimate $\overline{\norm{\phi}}$. 
We do so thanks to Theorem~\ref{theorem: column space projection}, considering that we can implement a block-encoding of $D_\Lambda$ in time $T_A$ (we will further detail this cost later on, at the end of our analysis).
The unitary $U_\phi$ requires expected time $\Ot\left( \frac{\norm{\vec{s}}}{\|\vec{\phi}\|}\left(\frac{\alpha}{\gamma}T_A + T_s\right) \right)$, where $\gamma$ is a lower bound on $\sigma_{\min}(D_\Lambda)$ and $\alpha$ is the normalization factor of the block-encoding of $D_\Lambda$.
Using the same theorem, the norm estimation can be performed to precision $\epsilon_{1\|\phi\|}$ in time $\Ot \left(\frac{1}{\epsilon_{1 \norm{\phi}} }\left(\frac{\alpha}{\gamma}T_A + T_s\right)\right)$.
Merging these times with the second inner product estimation and the subtraction, we obtain the cost of implementing the oracle $O_i$ of Eq.~(\ref{eq: Oi})
\begin{align}
    \Ot \left( \frac{1}{\epsilon_{1\norm{\phi}} }\left(\frac{\alpha}{\gamma}T_A + T_s\right)+ (T_s + T_D)\frac{1}{\epsilon_{1\Re}}  + \left(\frac{\norm{\vec{s}}}{\|\vec{\phi}\|}\left(\frac{\alpha}{\gamma}T_A + T_s\right) + T_D\right)\frac{1}{\epsilon_{2\Re}}\right).
\end{align}

Using \emph{Finding the maximum with an approximate unitary} from Corollary~\ref{coro: finding approximate maximum} on the subset of indices created by $U_{\overline{\Lambda}}$, we estimate that the cost of the atom selection procedure is
\begin{align}
    \Ot \left( \frac{1}{\epsilon_{1 \norm{\phi}} }\left(\frac{\alpha}{\gamma}T_A + T_s\right) + \sqrt{m} \left( T_{\overline{\Lambda}} + (T_s + T_D)\frac{1}{\epsilon_{1\Re}} + \left(\frac{\norm{\vec{s}}}{\|\vec{\phi}\|}\left(\frac{\alpha}{\gamma}T_A + T_s\right) + T_D\right)\frac{1}{\epsilon_{2\Im}}\right)\right).
\end{align}

Substituting the errors as a function of $\epsilon_i$, we get
\begin{align}
    \Ot \left( \frac{\norm{\vec{s}}}{\epsilon_i }\left(\frac{\alpha}{\gamma}T_A + T_s\right) + \sqrt{m} \left( T_{\overline{\Lambda}} + (T_s + T_D)\frac{\norm{\vec{s}}^2}{\epsilon_i} + \left(\frac{\norm{\vec{s}}}{\|\vec{\phi}\|}\left(\frac{\alpha}{\gamma}T_A + T_s\right) + T_D\right)\frac{\norm{\vec{s}}\overline{\|\vec{\phi}\|}}{\epsilon_i}\right)\right).
\end{align}
Considering $\norm{\vec{s}} \geq 1$,  $\frac{\overline{\|\vec{\phi}\|}}{\|\vec{\phi}\|} \rightarrow 1$ for $\epsilon_i \rightarrow 0$, we obtain
\begin{align}
\label{eq: qomp  time atom selection}
    \Ot \left( \sqrt{m} T_{\overline{\Lambda}} + \sqrt{m}\frac{\norm{\vec{s}}^2}{\epsilon_i} \left( T_s + \frac{\alpha}{\gamma}T_A + T_D \right) \right).
\end{align}

\subsubsection{Exit condition}
To estimate the residual's norm, we once again build $U_{\phi}$ and compute $\overline{\|\vec{\phi}\|}$, with precision $\epsilon_{2\phi}$ and $\epsilon_{2\norm{\phi}}$, and run the \emph{Weighted Euclidean distance estimation} of Theorem~\ref{theorem: weighted euclidean distance estimation}.
This requires time
\begin{align}
    \Ot \left(\frac{1}{\epsilon_{2 \norm{\phi} } }\left(\frac{\alpha}{\gamma}T_A + T_s\right) + \left(\frac{\norm{\vec{s}}}{\|\vec{\phi}\|}\left(\frac{\alpha}{\gamma}T_A + T_s\right) + T_s\right) \frac{\norm{\vec{s}}\overline{\|\vec{\phi}\|}}{\epsilon_w}\right).
\end{align}
Treating the error terms as a function of $\epsilon_f$ and considering $\norm{\vec{s}} \geq 1$,  $\frac{\overline{\|\vec{\phi}\|}}{\|\vec{\phi}\|} \rightarrow 1$ for $\epsilon_f \rightarrow 0$, we obtain
\begin{align}
\label{eq: qomp time final error}
    \Ot \left( \frac{\norm{\vec{s}}^2}{\epsilon_f}\left(T_s + \frac{\alpha}{\gamma}T_A \right) \right).
\end{align}

\subsubsection{Conclusion}
Considering block-encoding access to $D_\Lambda$ from quantum access to $D$ and $\Lambda$ (Theorem~\ref{thm: qomp  block encoding from qa}), we have $T_A = \Ot(T_D + T_\Lambda)$.
Moreover, we have $\alpha = \|D_\Lambda\|_F = \sqrt{k}$, as the matrix $D_\Lambda$ has $k$ non-zero columns of unit $\ell_2$ norm, one per each iteration.
Using these considerations, we can combine Eq.~(\ref{eq: qomp  time atom selection}) and (\ref{eq: qomp time final error}) to conclude the proof of Theorem~\ref{theorem: QOMP iteration cost}:
\begin{align}
    \Ot \left( \sqrt{m} T_{\overline{\Lambda}} +
    \norm{\vec{s}}^2\left( \frac{\sqrt{m}}{\epsilon_i} + \frac{1}{\epsilon_f}\right)\left(T_s + \frac{\sqrt{k}}{\gamma}(T_D + T_\Lambda) \right) \right).
\end{align}
We considered a scenario where all the subroutines succeed. 
To make the iteration succeed with high probability, we can use the \emph{Powering lemma} (Lemma~\ref{lemma: powering lemma (median)}) and the \emph{Union bound} (Theorem~\ref{theorem: union bound}) at some low overhead cost.

\end{document}